\newtheorem{theorem}{\bf{Theorem}}
\newtheorem{lemma}{\bf{Lemma}}
\newtheorem{corollary}{\bf{Corollary}}
\newtheorem{definition}{\bf{Definition}}
\newtheorem{remark}{\bf{Remark}}
\newtheorem{example}{\bf{Example}}
\begin{document}
%
% paper title
% can use linebreaks \\ within to get better formatting as desired
\title{A Mathematical Theory of Semantic Communication}

% author names and affiliations
% use a multiple column layout for up to three different
% affiliations
\author{Kai Niu, \IEEEmembership{Member,~IEEE}, Ping Zhang,  \IEEEmembership{Fellow,~IEEE}
\thanks
{
This work is supported by the National Natural Science Foundation of China (No. 62293481, 62071058).
\indent K. Niu and P. Zhang are with the State Key Laboratory of Networking and Switching Technology, Beijing University of Posts and Telecommunications,  Beijing, 100876, China (e-mail: \{niukai, pzhang\}@bupt.edu.cn). \protect\\
}}

\maketitle

\begin{abstract}
The year 1948 witnessed the historic moment of the birth of classic information theory (CIT). Guided by CIT, modern communication techniques have approached the theoretic limitations, such as, entropy function $H(U)$, channel capacity $C=\max_{p(x)}I(X;Y)$ and rate-distortion function $R(D)=\min_{p(\hat{x}|x):\mathbb{E}d(x,\hat{x})\leq D} I(X;\hat{X})$. Semantic communication paves a new direction for future communication techniques whereas the guided theory is missed. In this paper, we try to establish a systematic framework of semantic information theory (SIT). We investigate the behavior of semantic communication and find that synonym is the basic feature so we define the synonymous mapping between semantic information and syntactic information. Stemming from this core concept, synonymous mapping $f$, we introduce the measures of semantic information, such as semantic entropy $H_s(\tilde{U})$, up/down semantic mutual information $I^s(\tilde{X};\tilde{Y})$ $(I_s(\tilde{X};\tilde{Y}))$, semantic capacity $C_s=\max_{f_{xy}}\max_{p(x)}I^s(\tilde{X};\tilde{Y})$, and semantic rate-distortion function $R_s(D)=\min_{\{f_x,f_{\hat{x}}\}}\min_{p(\hat{x}|x):\mathbb{E}d_s(\tilde{x},\hat{\tilde{x}})\leq D}I_s(\tilde{X};\hat{\tilde{X}})$. Furthermore, we prove three coding theorems of SIT by using random coding and (jointly) typical decoding/encoding, that is, the semantic source coding theorem, semantic channel coding theorem, and semantic rate-distortion coding theorem. We find that the limits of SIT are extended by using synonymous mapping, that is, $H_s(\tilde{U})\leq H(U)$, $C_s\geq C$ and $R_s(D)\leq R(D)$. All these works composite the basis of semantic information theory. In addition, we discuss the semantic information measures in the continuous case. Especially, for the band-limited Gaussian channel, we obtain a new channel capacity formula, $C_s=B\log\left[S^4\left(1+\frac{P}{N_0B}\right)\right]$, where the average synonymous length $S$ indicates the identification ability of information. In summary, the theoretic framework of SIT proposed in this paper is a natural extension of CIT and may reveal great performance potential for future communication.

\end{abstract}

\begin{IEEEkeywords}
Synonymous mapping, Semantic entropy, Semantic relative entropy, Up/Down semantic mutual information, Semantic channel capacity, Semantic distortion, Semantic rate distortion function, Semantically typical set, Synonymous typical set, Semantically jointly typical set, Jointly typical decoding, Jointly typical encoding, Synonymous length, Maximum likelihood group decoding, Semantic source channel coding.

\end{IEEEkeywords}

\IEEEpeerreviewmaketitle

\section{Introduction}
\label{section_I}

\IEEEPARstart{C}{l}assic information theory (CIT), established by C. E. Shannon \cite{Classicpaper_Shannon} in 1948, was a great achievement in the modern information and communication field. As shown in Fig. \ref{model_classic_comm}, the classic communication system includes source, encoder, channel with noise, decoder and destination. This theory is concerned with the uncertainty of information and introduces four critical measures, such as entropy, mutual information, channel capacity, and rate-distortion function to evaluate the performance of information processing and transmission. Especially, three famous coding theorems, such as, lossless/lossy source coding theorem and channel coding theorem, reveal the fundamental limitation of data compression and information transmission. Over the past 70 years or so, people developed many advanced techniques to approach these theoretical limits. For the lossless source coding, Huffman coding and arithmetic coding are the representative optimal coding methods can achieve the source entropy. Similarly, for the channel coding, polar code, as a great breakthrough \cite{Polarcode_Arikan}, is the first constructive capacity-achieving coding scheme. Correspondingly, for the lossy source coding, some modern coding schemes, such as BPG (Better Portable Graphics) standard and H. 265/266 standard, can approach the rate-distortion lower bounds of image and video sources. It follows that information and communication technologies guided by CIT have approached the theoretical limitation and the performance improvement of modern communication systems encounters a lot of bottlenecks.

\begin{figure*}[htbp]
\setlength{\abovecaptionskip}{0.cm}
\setlength{\belowcaptionskip}{-0.cm}
  \centering{\includegraphics[scale=1]{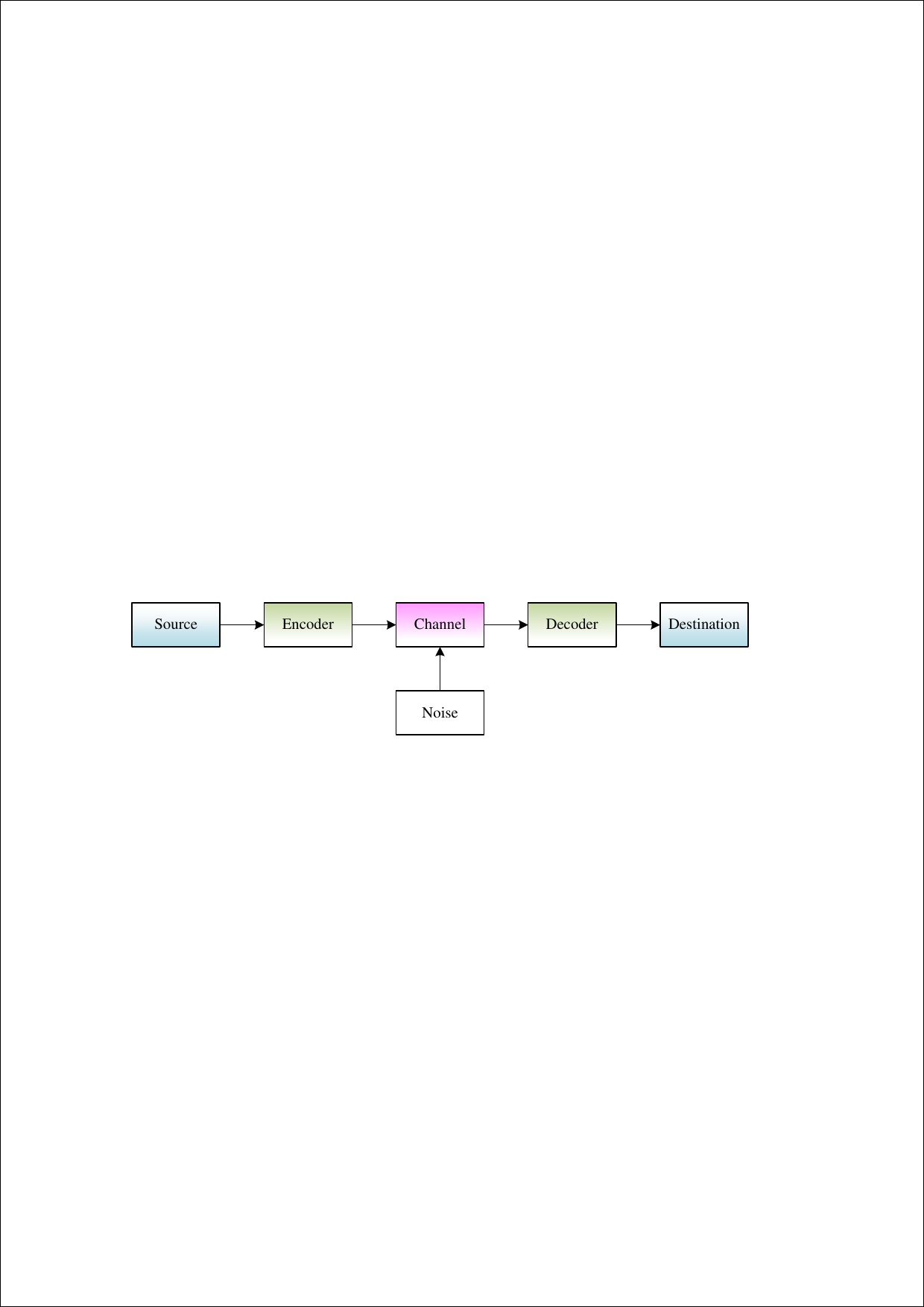}}
  \caption{The block diagram of classic communication system.}\label{model_classic_comm}
\end{figure*}

Essentially, Weaver \cite{Semantic_Weaver}, just one year after Shannon published the seminal paper on information theory, pointed out that communication involves problems at three levels as follows:
\begin{enumerate}[]
  \item  ``\textbf{LEVEL A}. How accurately can the symbols of communication be transmitted? (The technical problem.)
  \item  \textbf{LEVEL B}. How precisely do the transmitted symbols convey the desired meaning? (The semantic problem.)
  \item  \textbf{LEVEL C}. How effectively does the received meaning affect conduct in the desired way? (The effectiveness problem.)"
\end{enumerate}

Shannon \cite{Classicpaper_Shannon} wrote that ``semantic aspects of communication are irrelevant to the engineering problem". Thus, classic information theory only handles LEVEL A (technical) problem of the information. On the contrary, date back to the time of classic information theory birth, many works also focused on LEVEL B problem and the semantic communication theory. Carnap and Bar-Hillel \cite{Semantic_Carnap} and Floridi \cite{Semantic_Floridi} considered using propositional logic sentences to express semantic information. They introduced the semantic information entropy, which is calculated based on logical probability \cite{Logic_Nilsson} rather than statistical probability in CIT. Then Bao \emph{et al.} \cite{Semantic_Bao} extended this theoretical framework and derived the semantic source coding and semantic channel coding theorem based on propositional logic probabilities. On the other hand, De Luca \emph{et al.} \cite{Entropy_Luca} \cite{Fuzzy_Luca} regarded semantic information as fuzzy variable and defined fuzzy entropy to measure the uncertainty of semantic information. Then Wu \cite{Wuweiling} extended this work and introduced general entropy, general conditional entropy, and general mutual information based on fuzzy variable. However, the propositional logic or fuzzy variables are only suitable for the simple processing of text or speech source and can not sufficiently describe semantic information of the complex data, such as image or video source. Furthermore, although some recent works \cite{RateD_Liu, SideInfo_Guo, Theory_Shao, Theory_Tang} investigated the theoretic property of semantic information, e.g. rate-distortion function,  designing semantic communication system still lacks of systematic guiding theory.

Recently, semantic communication systems based on deep learning demonstrate excellent performance than the traditional counterparts. Many works \cite{Semantic_ZhangPing, Survey_Shi, Survey_Qin, Survey_Xie} investigated the design principles and technical challenges of semantic communication. As surveyed in \cite{Survey_Gunduz, Semantic_Niu}, semantic communication techniques become a hot topic in the communication community and provide a promising methodology to break through the Shannon limits. However, semantic communication research faces a dilemma. We can neither precisely answer what is the semantic information or the meaning of information nor provide the fundamental limits to guide the design of the semantic communication system. Thus, there is an urgent necessity to establish a mathematical theory of semantic communication in order to solve these basic problems.

If we want to establish a semantic information theory, we should firstly consider the essence of semantic information, that is, what is the meaning of information. Let's investigate the semantic information from the source side and the destination side respectively. Figure \ref{source_synonym_example} shows some examples of semantic information of text, speech, image and video source. In Fig. \ref{source_synonym_example}(a), the words ``happy, joyful, content, joyous" have the same or similar meaning, that is, all these text data means ``happy". From the viewpoint of linguistics, these words compose the synonym of ``happy". From the understanding of human beings, we can regard these words as having the same semantic information. So we can use synonym alternation to generate the same meaning presentation, as shown in the following sentences
\begin{enumerate}[]
  \item  ``She appeared \textbf{happy} and content after receiving the good news".
  \item  ``She appeared \textbf{joyful} and content after receiving the good news".
\end{enumerate}
Although these two sentences have different presentations, the processing of the language center of the brain, they have the same semantic information. Generally, such synonym phenomenon ubiquitously exists in various language texts. Many different presentations of phrases, words, and sentences have the same or similar meaning and compose the synonymous mapping to indicate the same semantic information. Therefore, we conclude that synonymous mapping is a critical feature of semantic information of text data.

\begin{figure*}[htbp]
\setlength{\abovecaptionskip}{0.cm}
\setlength{\belowcaptionskip}{-0.cm}
  \centering{\includegraphics[scale=0.7]{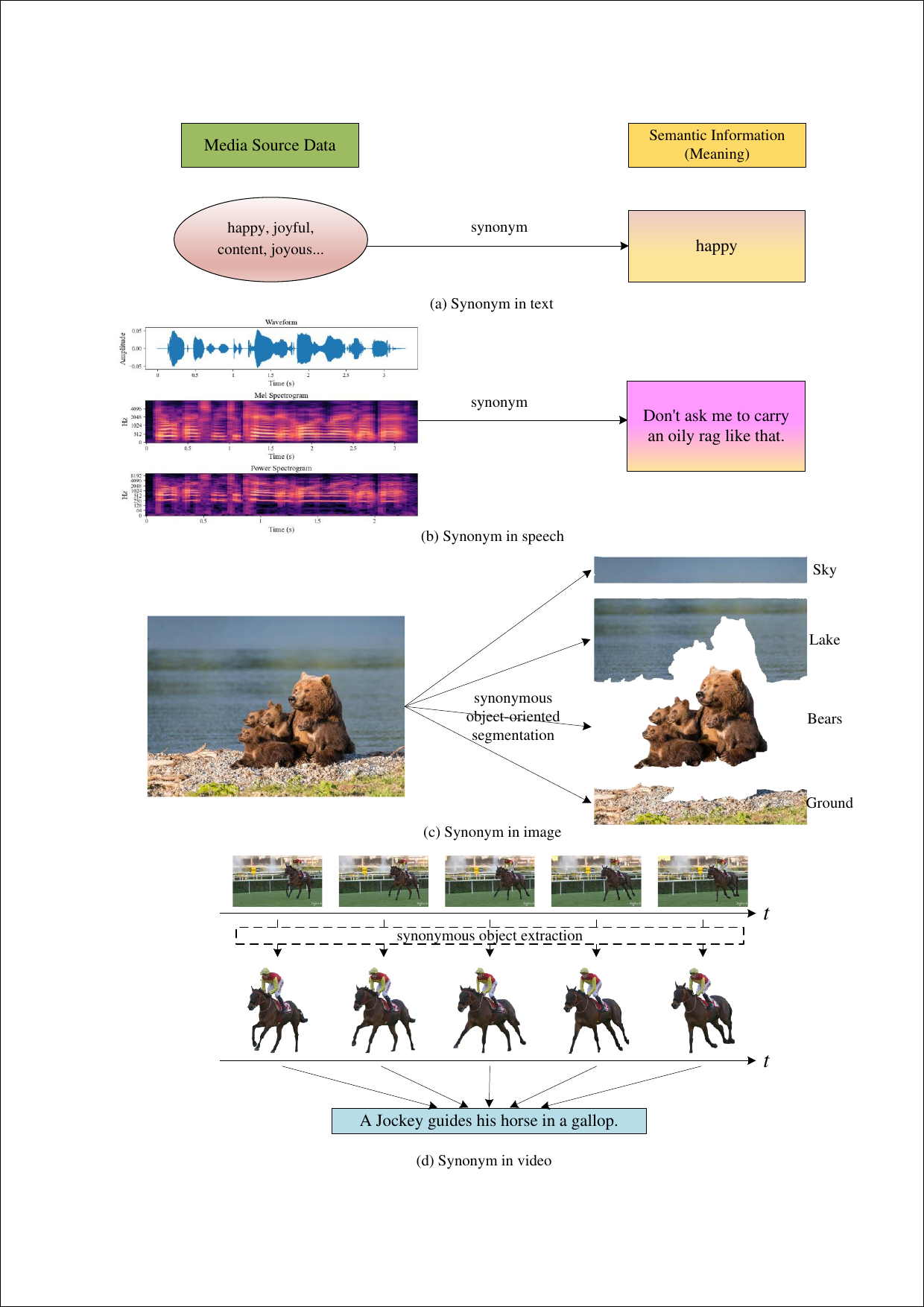}}
  \caption{Semantic examples of source data.}\label{source_synonym_example}
\end{figure*}

Similarly, synonym phenomenon can also be observed in speech source. As shown in Fig. \ref{source_synonym_example}(b), a piece of speech has three presentations, that is, waveform, Mel spectrogram, and power spectrogram. However, all these presentations indicate the same meaning ``Don't ask me to carry an oily rag like that". So we conclude that all these speech waveform or spectrograms compose the synonym presentation and have the same semantic information.

We can also find the synonym objects in image source, as depicted in Fig. \ref{source_synonym_example}(c). After the object-oriented segmentation, this figure is decomposed into four objects, that is, sky, lake, bears, and ground. Intuitively, single pixel in this figure have no meaning whereas a set of many pixels indicates some meaningful object. Furthermore, for the frame sequence shown in Fig. \ref{source_synonym_example}(d), by using the object extraction method, we can obtain the meaning of this video ``A jockey guides his horse in a gallop". From the two examples, it follows that synonymous mapping for the semantic understanding is a popular phenomenon in image and video source.

On the other hand, in the destination side, we also observe the synonym phenomenon for various downstream tasks. Figure \ref{downstreamtask_synonym_example} depicts some representative examples. For the task of character recognition in Fig. \ref{downstreamtask_synonym_example}(a), different shapes and fonts of the images in each row present the same letter. So we easily derive the meaning from these images, that is, ``SEMANTIC". Similarly, for the task of image classification, shown in Fig. \ref{downstreamtask_synonym_example}(b), various images stands for the same entity or object, such as forest, staircase, ocean etc. Furthermore, Fig. \ref{downstreamtask_synonym_example}(c) depicts the task of obstacle detection and the marker boxes presents the pedestrians. Hence, we can conclude that many downstream tasks involve the synonymous mapping and semantic reasoning.

\begin{figure*}[htbp]
\setlength{\abovecaptionskip}{0.cm}
\setlength{\belowcaptionskip}{-0.cm}
  \centering{\includegraphics[scale=0.7]{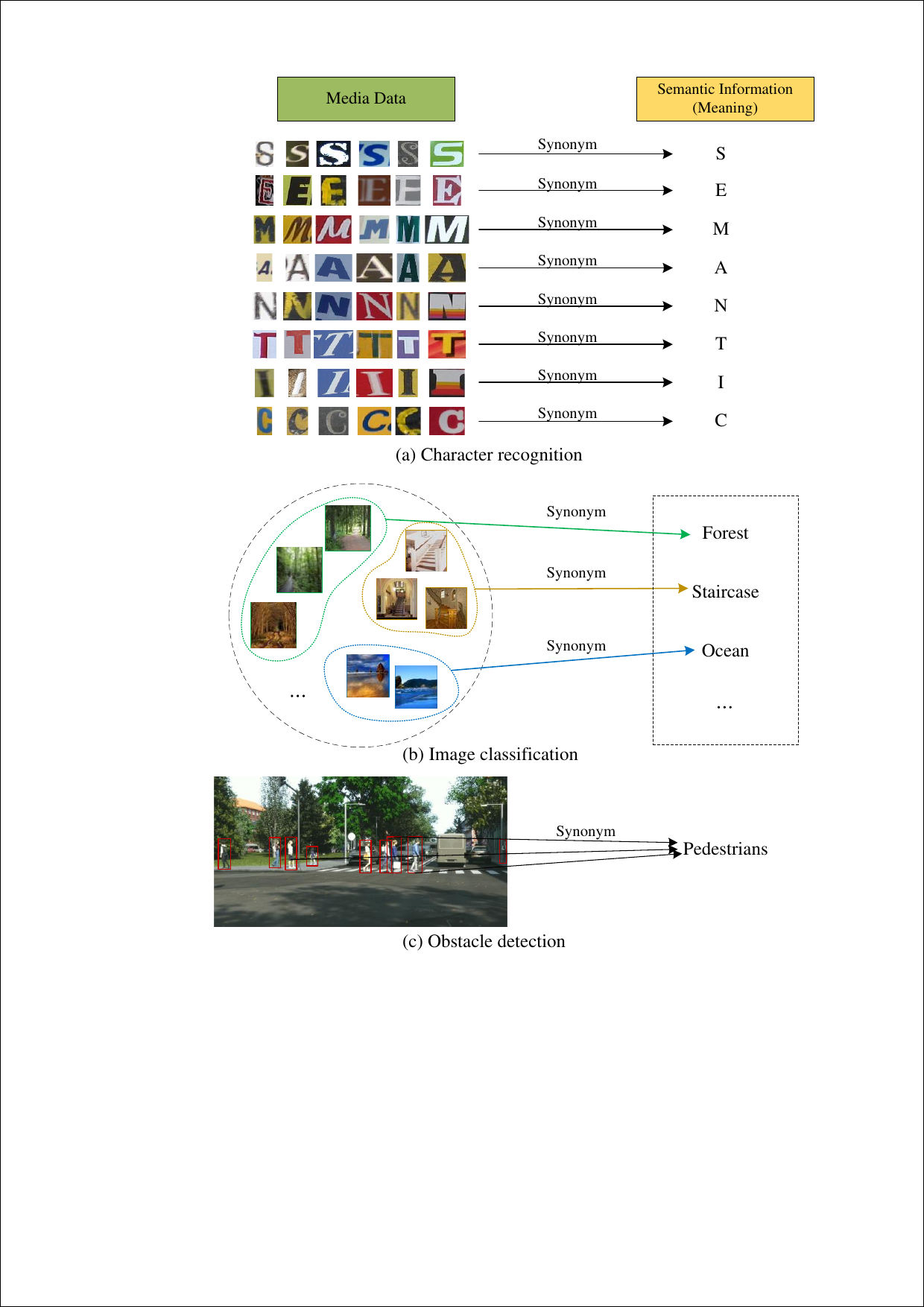}}
  \caption{Semantic examples of downstream task.}\label{downstreamtask_synonym_example}
\end{figure*}

\begin{remark}
In a word, after inspecting various examples of source and destination, we can summarize these two rules for the semantic information processing.

\textbf{(1)} All the perceptible messages, such as text, speech, image, video, and so on, are syntactic information. However, based on these messages, we can derive or reason some semantic information. There are common and stable mappings between the syntactic information and the semantic information. These mappings can be built from the knowledge of human beings or the configuration conditions of the application scene.

\textbf{(2)} Generally, the relationships between syntactic information and semantic information are very complex. However, in most instances, the semantic information has a single meaning whereas the presentations of source data and downstream tasks are myriads. So synonymous mapping is a major relationship and popularly exists in various tasks of semantic reasoning. Certainly, there may exist the ambiguity of semantic information. Nonetheless, such ambiguity is secondary and can be removed by using multiple interactions. So in this paper, we mainly handle the synonym of semantic information and the ambiguity will be left to future works.
\end{remark}

In this paper, we aim to establish a systematic framework of semantic information theory as the natural extension of classic information theory. The outline of the theoretic framework is shown in Fig. \ref{SIT_framework}.

\begin{figure*}[htbp]
\setlength{\abovecaptionskip}{0.cm}
\setlength{\belowcaptionskip}{-0.cm}
  \centering{\includegraphics[scale=0.9]{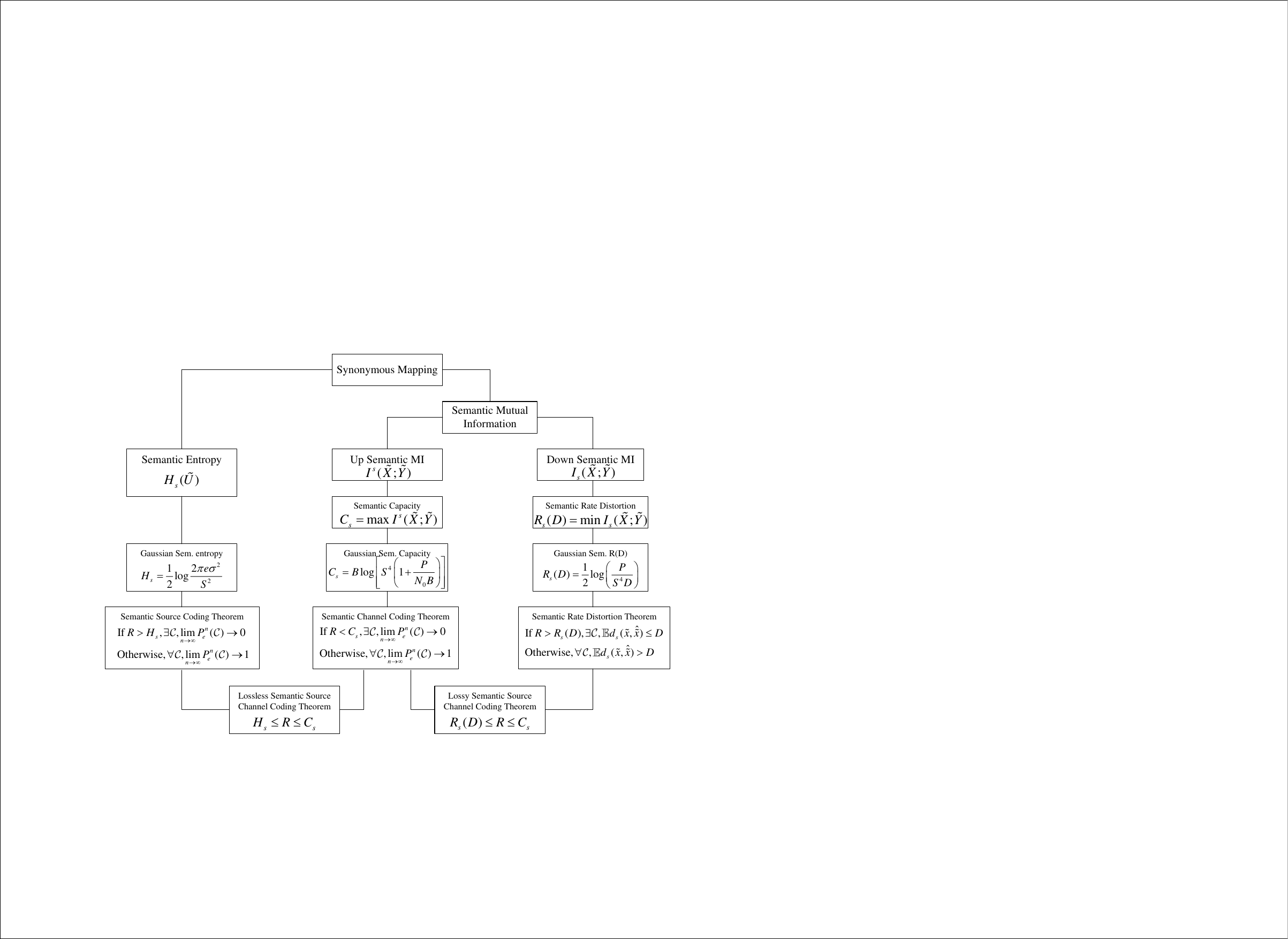}}
  \caption{The theoretic framework of semantic information.}\label{SIT_framework}
\end{figure*}

Therefore, the contributions of this paper can be summarized as follows.
\begin{enumerate}[(1)]
  \item We develop a systematic model for semantic communication with specific design criteria. In this model, semantic information remains invisible but perceptible, characterized by its prevalent synonymous features. Hence, we recognize synonymy as the fundamental aspect of semantic information. To illustrate the connection between semantic and syntactic information, we introduce synonymous mapping which is an one-to-many mapping from the semantic alphabet to syntactic alphabet. Essentially, synonymous mapping constructs an equivalent class relationship between the semantic space and the syntactic space.

  \item Stemming for a synonymous mapping $f_u$, we introduce the semantic variable $\tilde{U}$ associated with a random variable $U$. Essentially, the semantic variable is also a random variable. Hence, we define the semantic entropy $H_s(\tilde{U})$ to measure the uncertainty of semantic information and name the unit as semantic bit (\textbf{sebit}). Analogous to the classic information theory, this measure can also be extended to semantic conditional/joint entropy. In addition, we prove that all the semantic entropies are no more than their classic counterparts.

  \item We introduce the concepts of semantic relative entropy and semantic mutual information. Unlike the classic mutual information, it is noted that we use two measures to indicate semantic mutual information, such as the up semantic mutual information $I^s(\tilde{X};\tilde{Y})=H(X)+H(Y)-H_s(\tilde{X},\tilde{Y})$ and the down semantic mutual information $I_s(\tilde{X};\tilde{Y})=H_s(\tilde{X})+H_s(\tilde{Y})-H(X,Y)$. Furthermore, given a channel with the transition probability $p(y|x)$, we can use the semantic capacity $C_s=\max_{f_{xy}}\max_{p(x)}I^s(\tilde{X};\tilde{Y})$ to indicate the maximum transmission rate of semantic information on the channel. Correspondingly, given a source with the distribution $p(x)$ and the average distortion measure constraint $\mathbb{E}\left[d_s(\tilde{x},\hat{\tilde{x}})\right]\leq D$, the semantic rate distortion $R_s(D)=\min_{\{f_x,f_{\tilde{x}}\}}\min_{p(\hat{x}|x): \mathbb{E}d_s(\tilde{x},\hat{\tilde{x}})\leq D}I_s(\tilde{X};\hat{\tilde{X}})$ reveals the minimum compression rate of semantic information in the case of lossy source coding.

  \item We investigate the asymptotic equipartition property (AEP) in the semantic sense and introduce the semantically typical set $\tilde{A}_{\epsilon}^{(n)}$ which presents all the information of semantic variable $\tilde{U}$. Especially, we prove that, under the synonymous mapping, the syntactically typical set $A_{\epsilon}^{(n)}$ can be equipartitioned into many synonymous typical sets $B_{\epsilon}^{(n)}$. By using random coding and synonymous mapping between a semantic typical sequence to the synonymous typical set $B_{\epsilon}^{(n)}$, we prove the lossless source coding theorem, that is, given a syntactic source $U\sim p(u)$ and the associated semantic variable $\tilde{U}$, if the semantic code rate $R>H_s(\tilde{U})$, there exists a semantic source code satisfies $P_e^{(n)}\to 0$ with a sufficiently large code length $n$, otherwise, any codes cannot satisfy the lossless reconstruction of semantic information. Furthermore, we extend the Kraft inequality to the semantic version and provide an example of semantic Huffman coding. Thanks to the synonymous mapping and coding, the compression rate of semantic lossless source coding can be further lowered and the semantic compressive efficiency can outperform the classic source coding.

  \item Consider the problem of channel transmission, we investigate the jointly asymptotic equipartition property (JAEP) of semantic version and define the corresponding jointly typical set. Particularly, we find that, under the jointly synonymous mapping, the syntactically jointly typical set can be evenly divided into a series of jointly synonymous typical sets. By using random coding on jointly synonymous typical sets and jointly typical decoding, we prove the semantic channel coding theorem. This theorem reveals that given a channel with the transition probability $p(y|x)$, under the jointly synonymous mapping $f_{xy}$, if the code rate is lower than the semantic capacity, i.e., $R<C_s=\max_{f_{xy}}\max_{p(x)} I^s(\tilde{X},\tilde{Y})$, there exists a semantic channel code can satisfy the requirement of asymptotic error-free transmission, that is $P_e^{(n)}\to 0$ with sufficiently large $n$, on the contrary, if $R>C_s$, the error probability of any code can not tend to zero. Since the semantic capacity $C_s$ is no less than the classic capacity $C$, i.e., $C_s\geq C$, we conclude that semantic channel coding can improve the capability of information transmission. Furthermore, inspired by the jointly synonymous mapping, we consider the decoding rule of semantic channel code and propose a maximum likelihood group (MLG) decoding algorithm. Unlike traditional ML decoding, the MLG algorithm calculates all the group likelihood probabilities and selects one synonymous codeword of the group with the maximum likelihood probability as the final result. Hence, we define the group Hamming distance as the construction metric and derive the group-wise error probability to evaluate the performance of the semantic channel code.

  \item Consider the problem of lossy source coding, we define the semantic distortion and the jointly typical set of source and reconstruction sequence. By using jointly typical encoding based on synonymous typical set, we prove the semantic rate distortion coding theorem. This theorem states that given a source $X\sim p(x)$ with the associated semantic source $\tilde{X}$, the synonymous mappings $f_x,f_{\hat{x}}$, and the bounded semantic distortion function $d_s(\tilde{x},\hat{\tilde{x}})$, if the code rate $R>R_s(D)$, there exist a sequences of semantic source codes, the semantic distortion satisfies $\mathbb{E}d_s(\tilde{X},\hat{\tilde{X}})<D$ with sufficiently large $n$. Conversely, if $R<R_s(D)$, then the semantic distortion of any code meets $\mathbb{E}d_s(\tilde{X},\hat{\tilde{X}})>D$. Since the semantic rate distortion $R_s(D)$ is no more than the classic counterpart $R(D)$, that is, $R_s(D)\leq R(D)$, it follows that semantic source coding can further compress the source data and achieve a lower rate than the classic source coding.

  \item We also investigate the measure of semantic information in the continuous case. Given a continuous random variable $U\sim p(u)$ and a synonymous mapping $f$, we define the entropy of the associated semantic variable $\tilde{U}$ as $H_s(\tilde{U})=-\int p(u)\log p(u)du-\log S$, where $S$ is named as the average synonymous length. Furthermore, we extend the semantic conditional/joint entropy and semantic mutual information to the continuous case. Especially, we derive the semantic capacity of Gaussian channel $C_s=\frac{1}{2}\log \left[S^4\left(1+\frac{P}{\sigma^2}\right)\right]$ and a lower bound $\underline{C}_s=\frac{1}{2}\log\left(1+S^4\frac{P}{\sigma^2}\right)$ where $P$ is the signal power and $\sigma^2$ is the variance of Gaussian noise. In addition, we obtain the channel capacity formula of time-limited, band-limited and power-limited Gaussian channel, that is, $C_s=B\log \left[S^4\left(1+\frac{P}{N_0B}\right)\right]$ where $B$ is the bandwidth and $N_0$ is the single-sided power spectral density of white Gaussian noise. In addition, we also obtain the rate distortion function of Gaussian source, that is, $R_s(D)=\log \frac{P}{S^4D}$ where $P$ is the power of signal sample and $D$ is the constraint of semantic distortion.

  \item Finally, we inspect the source and channel coding problem in the semantic sense. Both for lossless and lossy cases, we prove the semantic source channel coding theorem. We find that the code rate of semantic communication system satisfies $R_s(D) (H_s(\tilde{U})) \leq R\leq C_s$. Compared with the classic communication system, the code rate range of sematic communication can be further extended. This point reveals the significant potential of semantic coding.

\end{enumerate}

The remainder of the paper is organized as follows. Section \ref{section_II} presents the systematic model of semantic communication and introduces the concept of synonymous mapping. In Section \ref{section_III}, we define the semantic entropy, the semantic joint and conditional entropy and discuss the basic properties of these measures. In Section \ref{section_IV} we define the semantic relative entropies and the up/down semantic mutual information and discuss the corresponding properties. Then in Section \ref{section_V}, the semantic channel capacity and the semantic rate distortion function are introduced and investigated. Furthermore, we explore the semantically asymptotic equipartition property and prove the semantic source coding theorem in Section \ref{section_VI}. We extend the jointly asymptotic equipartition property in the semantic sense. By using random coding and jointly typical decoding, we prove the semantic channel coding theorem in Section \ref{section_VII}. Correspondingly, By using jointly typical encoding, we prove the semantic rate distortion coding theorem in Section \ref{section_VIII}. In addition, we extend the semantic information measures to the continuous case in Section \ref{section_IX}. Especially, we derive the semantic capacity of Gaussian channel and the rate distortion of Gaussian source. In Section \ref{section_X}, we prove the semantic source channel coding theorem. Finally, Section \ref{section_XI} concludes the paper.

\section{Semantic Communication System and Synonymous mapping}
\label{section_II}
In this section, we first introduce the system model of semantic communication and describe the design criteria of semantic information transmission. Then we clarify the function of synonymous mapping and emphasize its key role in semantic communication.

\subsection{Notation Conventions}
In this paper, calligraphy letters, such as $\mathcal{X}$ and $\mathcal{Y}$, are mainly used to denote sets, and the cardinality of $\mathcal{X}$ is defined as $\left|\mathcal{X}\right|$. The Cartesian product of $\mathcal{X}$ and $\mathcal{Y}$ is written as $\mathcal{X}\times \mathcal{Y}$. Let $\mathcal{X}^n$ denote the $n$-th Cartesian power of $\mathcal{X}$ and $\prod_{k=1}^{n}\mathcal{X}_k$ denote the Cartesian product of $n$ sets $\mathcal{X}_1,\cdots,\mathcal{X}_n$. We write $u^n$ to denote an $n$-dimensional vector $\left(u_1,u_2,\cdots,u_n\right)$. We use the summation convention $N_{[i_1:i_m]}$ to denote the integer summation of $N_{i_1}+\cdots+N_{i_j}+\cdots+N_{i_m}$, where $\forall N_{i_j}\in \mathbb{N}$.

We use $f:\mathcal{X}\to\mathcal{Y}$ to denote a mapping from $\mathcal{X}$ to $\mathcal{Y}$. Furthermore. the extended mapping $f^n:\mathcal{X}^n\to\mathcal{Y}^n$ denotes an element-wise sequential mapping from $\mathcal{X}^n$ to $\mathcal{Y}^n$.

We use $d_{\text{H}}(u^n,v^n)$ to denote the Hamming distance between the binary vector $u^n$ and $v^n$. Given $\forall {a^n}, {b^n}\in \mathbb{R}^n$, let $\left\|a^n-b^n\right\|$ denote the Euclidian distance between the vector $a^n$ and $b^n$.

Throughout this paper, $\log\left(\cdot\right)$ means ``logarithm to base $2$'' and $\ln\left(\cdot\right)$ stands for the natural logarithm. Let $(x)^{+}=\max(x,0)$ be the non-negative part of $x$. Let $(\cdot)^T$ denote the transpose operation of the vector. Let $\mathbb{E}(Z)$ and ${\rm{Var}}(Z)$ denote the expectation and the variance of the random variable $Z$ respectively.

\subsection{Semantic Communication System}
The block diagram of semantic communication system is presented in Fig. \ref{Semantic_communication_system}. Compared with the system model of classic communication in Fig. \ref{model_classic_comm}, the semantic communication system extends the range of information processing and adds extra modules, such as semantic source, semantic destination, synonymous mapping and demapping.

\begin{figure*}[htbp]
\setlength{\abovecaptionskip}{0.cm}
\setlength{\belowcaptionskip}{-0.cm}
  \centering{\includegraphics[scale=1]{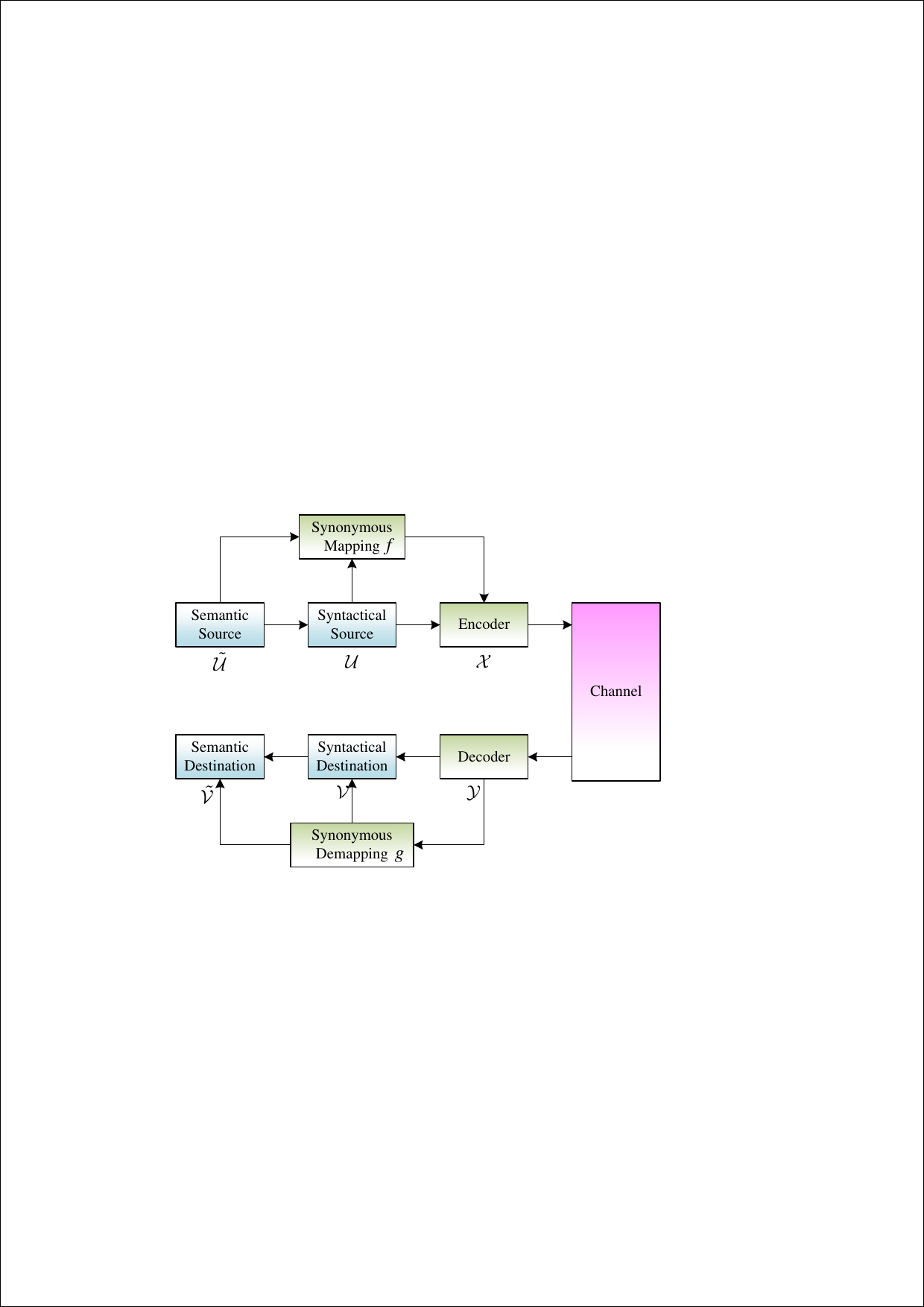}}
  \caption{The block diagram of semantic communication system.}\label{Semantic_communication_system}
\end{figure*}

At the transmitting side, the semantic source $\tilde U$ generates a semantic sequence and drives the syntactical source $U$ produces a message. With the help of synonymous mapping $f$, the encoder transforms the syntactical message into a codeword and sends to the channel. On the other hand, the receiver performs the reverse operations. The decoder recovers the codeword from the received signal and feeds it into the syntactical destination $V$ to reconstruct the message. Then the semantic destination $\tilde V$ obtains the message and reasons the meaning.

According to the general understanding of semantic information, the system design of semantic communication should obey the following criteria.
\begin{enumerate}[(i)]
  \item  \textbf{Invisibility of semantic information}
  Generally, semantic information is implied in syntactic information and cannot be directly observed. We only observe various data, such as text, speech, image or video so as to understand or infer the meaning of the message. Therefore, we highlight that the semantic information is invisible but perceptible. So only the syntactic information can be encoded and indirectly reveals the semantic information.
  \item  \textbf{Synonym of semantic information}
  The relationship between semantic information and syntactic information is diverse and complex. However, whether it is single-modal or multi-modal data, synonym, that is, one meaning has multiple manifestations, is a popular phenomenon. The basic characteristic of synonyms is an one-to-many mapping, which means that we can use a set of synonymous data to indicate the same semantic information. Commonly, this mapping is predefined or deterministic.
\end{enumerate}

By the first criterion, the semantic source/destination is a virtual module and implied behind the syntactic source/destination. Although Shannon stated that semantic information is irrelevant to the engineering problem, it is only suitable for the communication in Level A. If we design the communication system of Level B, the semantic source/dstination will indirectly affect the syntactical encoding and decoding process.

Correspondingly, by the second criterion, the synonymous mapping is deterministic rather than statistic, which is constructed by the common knowledge of transmitter/receiver or the system requirement. For the source coding, this mapping is mainly determined by the understanding of message. Although people have different opinions, they all use the same background knowledge. Therefore, knowledge base in the semantic communication system is a representative example. On the other hand, for the channel coding, this mapping can be created based on the transmission requirement. Unlike the traditional communication system, aided by the synonymous mapping, the reliability requirement of semantic communication can be relaxed from one-bit-no-error to tolerance of some error bits. This point will be further explained in Section \ref{section_VII}.
\subsection{Synonymous mapping}
Now we formally introduce the definition of synonymous mapping as follows.
\begin{definition}
Given a syntactic information set $\mathcal{U}=\left\{u_1,\cdots,u_i,\cdots,u_N\right\}$ and the corresponding semantic information set $\tilde {\mathcal{U}}=\left\{\tilde{u}_1,\cdots,\tilde{u}_{i_s},\cdots,\tilde{u}_{\tilde{N}}\right\}$, the synonymous mapping $f:\tilde{\mathcal{U}}\to\mathcal{U}$ is defined as the one-to-many mapping between $\tilde {\mathcal{U}}$ and $\mathcal{U}$.
\end{definition}

Generally, the size of set $\tilde {\mathcal{U}}$ is no more than $\mathcal{U}=\left\{u_1,u_2,\cdots,u_N\right\}$, that is, $\tilde{N}\leq N$. Furthermore, under the synonymous mapping $f$, $\mathcal{U}$ is partitioned into a group of synonymous sets $\mathcal{U}_{i_s}=\left\{u_{N_{[1:(i_s-1)]}+1},\cdots,u_{N_{[1:(i_s-1)]}+j},\cdots,u_{N_{[1:(i_s-1)]}+N_{i_s}}\right\}$ and $\forall i_s\neq j_s,\mathcal{U}_{i_s}\bigcap\mathcal{U}_{j_s}=\varnothing$. Therefore, we have $\left|\mathcal{U}_{i_s}\right|=N_{i_s}$ and $\mathcal{U}=\bigcup_{i_s=1}^{\tilde{N}}\mathcal{U}_{i_s}$.

Essentially, the synonymous mapping $f$ generates an equivalence class partition of the syntactic set. So we can construct the quotient set $\mathcal{U}/f=\left\{\mathcal{U}_{i_s}\right\}$.

For an arbitrary element $\tilde{u}_i\in \tilde{\mathcal{U}},i_s=1,2,\cdots, \tilde{N}$, we have $f:\tilde{u}_{i_s}\to\mathcal{U}_{i_s}$. In the case of non-confusion, for this mapping, we can drop out the subscript and present the synonymous set as $\mathcal{U}_{\tilde{u}}$,

\begin{figure}[htbp]
\setlength{\abovecaptionskip}{0.cm}
\setlength{\belowcaptionskip}{-0.cm}
  \centering{\includegraphics[scale=0.8]{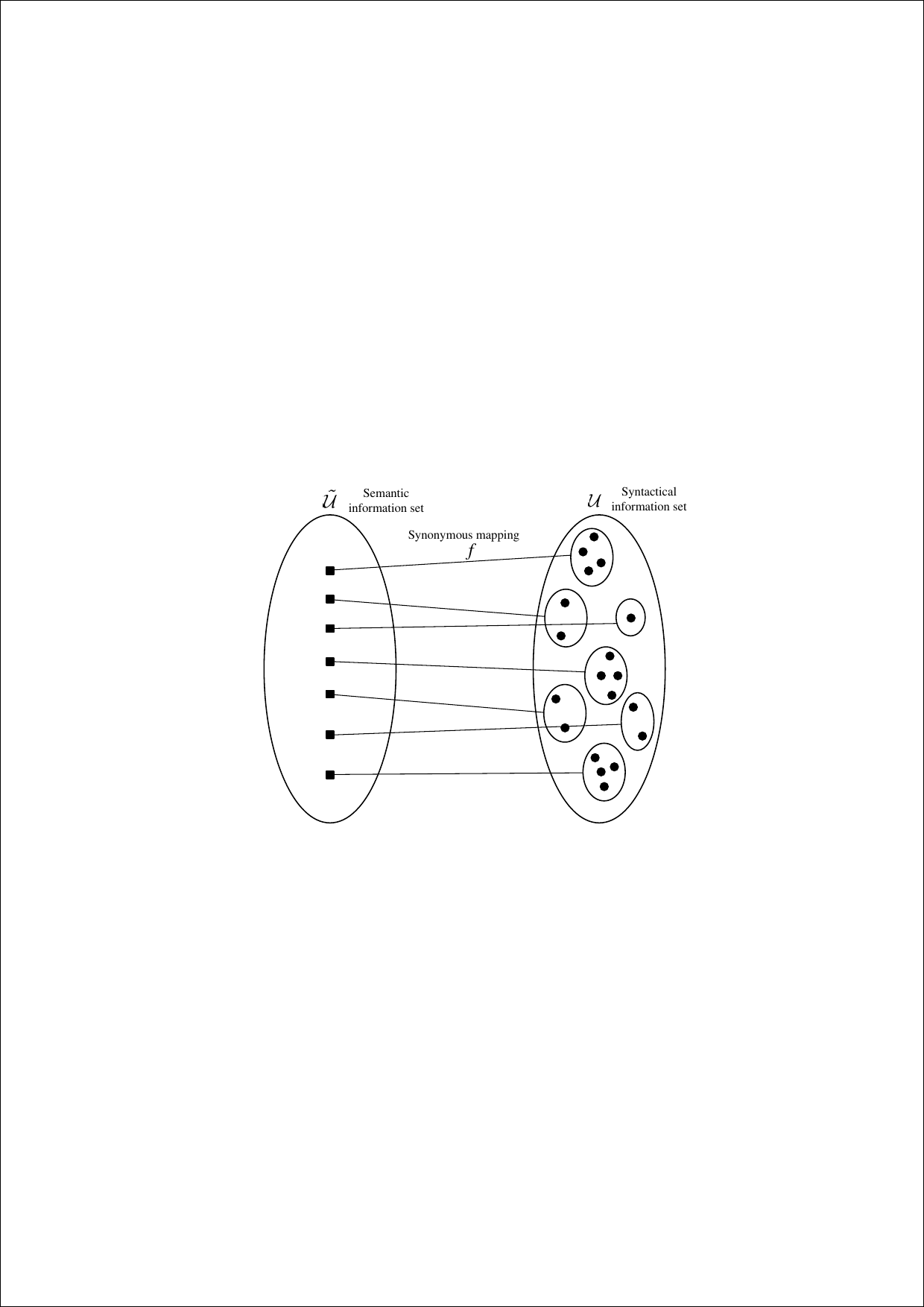}}
  \caption{An example of synonymous mapping between the semantic information set and the syntactic information set.}\label{Synonymous_mapping}
\end{figure}

Figure \ref{Synonymous_mapping} depicts an example of synonymous mapping. Each semantic element can be mapped into an equivalent set of syntactic elements and every set has one or many elements. Further, there is no overlap between any two sets.

\begin{remark}
In fact, the semantic variable $\tilde{U}$ is also a random variable. However, we emphasize that it is implied behind the random variable $U$ and can deduce the syntactic message $u$. In many applications of deep learning, after the nonlinear processing of neural networks, the semantic features of input source data are extracted and mapped into a latent space. Here, the nonlinear mapping can be regarded as an instance of synonymous mapping and the latent space can be treated as sample space of semantic variable.
\end{remark}
\section{Semantic Entropy}
\label{section_III}
In this section, we first give the definition of semantic entropy, that is, the measure of semantic information. Then we discuss the property of semantic joint entropy and semantic conditional entropy.
\subsection{Semantic Information Measures}
Let $\tilde {\mathcal{U}}=\left\{\tilde{u}_{i_s}\right\}_{i_s=1}^{\tilde{N}}$ and $\mathcal{U}=\left\{u_i\right\}_{i=1}^{N}$ be a semantic alphabet and a syntactic alphabet respectively. Let $U$ be a discrete random variable with alphabet $\mathcal{U}$ and probability mass function $p(u)=\Pr\left\{U=u\right\},u\in\mathcal{U}$. Given a synonymous mapping $f: \tilde {\mathcal{U}} \to \mathcal{U}$, the semantic information of a semantic symbol $\tilde{u}_{i_s}\in \tilde {\mathcal{U}}$ can be measured as follows,
\begin{equation}
I_s\left(\tilde{u}_{i_s}\right)=-\log \left(p\left(\mathcal{U}_{i_s}\right)\right)=-\log \left(\sum_{i=N_{[1:(i_s-1)]}+1}^{N_{[1:(i_s-1)]}+N_{i_s}}p\left(u_i\right)\right).
\end{equation}
\begin{definition}
Given a discrete random variable $U$, the corresponding semantic variable $\tilde {U}$, and the synonymous mapping $f: \tilde{\mathcal{U}}\to\mathcal{U}$, the semantic entropy of semantic variable $\tilde {U}$ is defined by
\begin{equation}\label{semantic_entropy}
\begin{aligned}
&H_s(\tilde{U})=-\sum_{i_s=1}^{\tilde{N}}p\left(\mathcal{U}_{i_s}\right)\log p\left(\mathcal{U}_{i_s}\right) \\
&=-\sum_{i_s=1}^{\tilde{N}}\sum_{i=N_{[1:(i_s-1)]}+1}^{N_{[1:(i_s-1)]}+N_{i_s}}p\left(u_i\right) \log \left(\sum_{i=N_{[1:(i_s-1)]}+1}^{N_{[1:(i_s-1)]}+N_{i_s}}p\left(u_i\right)\right)\\
&=-\sum_{i_s=1}^{\tilde{N}}\sum_{i\in\mathcal{N}_{i_s}}p\left(u_i\right) \log \left(\sum_{i\in\mathcal{N}_{i_s}}p\left(u_i\right)\right),
\end{aligned}
\end{equation}
\end{definition}
where $\mathcal{N}_{i_s}=\left\{N_{[1:(i_s-1)]}+1,\cdots,N_{[1:(i_s-1)]}+N_{i_s}\right\}$ is the index set associated with $\mathcal{U}_{i_s}$.
Essentially, semantic entropy is a functional of the distribution of $U$ and the synonymous mapping $f$. Similar to information entropy, semantic entropy also indicates the integrity attribute of the variable $\tilde{U}$ rather than single sample. Furthermore, it depends on the synonymous set partition determined by the mapping $f$. In Eq. (\ref{semantic_entropy}), the log is taken to the base 2 and we name it as the semantic binary digit, that is, \textbf{sebit}. So the unit of semantic information is expressed as sebit.

The semantic entropy of $\tilde{\mathcal{U}}$ can also be interpreted as the expectation of $\log\frac{1}{p\left(\mathcal{U}_{\tilde{u}}\right)}$, that is,
\begin{equation}
H_s(\tilde{U})=\mathbb{E}_p \left[\log \frac{1}{p\left(\mathcal{U}_{\tilde{u}}\right)}\right].
\end{equation}

For the semantic entropy, we have the following consequences by the definition.
\begin{lemma}
\begin{equation}
H_s(\tilde{U})\geq 0.
\end{equation}
\end{lemma}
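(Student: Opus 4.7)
The plan is to mirror the standard non-negativity argument for Shannon entropy, exploiting the fact that the synonymous mapping $f$ induces a partition of $\mathcal{U}$ into disjoint synonymous sets $\{\mathcal{U}_{i_s}\}_{i_s=1}^{\tilde{N}}$, so that the induced weights $p(\mathcal{U}_{i_s})$ form a bona fide probability distribution on $\tilde{\mathcal{U}}$.

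First I would observe that, for each $i_s$, the quantity $p(\mathcal{U}_{i_s}) = \sum_{i \in \mathcal{N}_{i_s}} p(u_i)$ is a sum of nonnegative probabilities, and since the $\mathcal{U}_{i_s}$ are disjoint with $\bigcup_{i_s=1}^{\tilde{N}} \mathcal{U}_{i_s} = \mathcal{U}$, we get $0 \leq p(\mathcal{U}_{i_s}) \leq 1$ together with $\sum_{i_s=1}^{\tilde{N}} p(\mathcal{U}_{i_s}) = 1$. This confirms that the semantic entropy is the Shannon entropy of a legitimate probability mass function on $\tilde{\mathcal{U}}$.

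Next I would apply the termwise bound: since $0 \leq p(\mathcal{U}_{i_s}) \leq 1$ we have $\log p(\mathcal{U}_{i_s}) \leq 0$, hence each summand $-p(\mathcal{U}_{i_s}) \log p(\mathcal{U}_{i_s})$ is nonnegative (with the usual convention $0 \log 0 := 0$, handled by continuity since $x \log x \to 0$ as $x \to 0^{+}$). Summing nonnegative terms yields $H_s(\tilde{U}) \geq 0$, completing the argument.

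There is essentially no obstacle here; the only mild subtlety worth flagging in the write-up is the convention for handling the case $p(\mathcal{U}_{i_s}) = 0$, which can occur if the synonymous set $\mathcal{U}_{i_s}$ consists entirely of syntactic symbols of probability zero. Adopting the standard convention $0 \log 0 := 0$ neutralizes this, and the lemma follows immediately from the definition in (\ref{semantic_entropy}).
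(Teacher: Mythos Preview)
Your proposal is correct and follows essentially the same approach as the paper: both arguments note that the partition induced by $f$ makes each $p(\mathcal{U}_{i_s})\in[0,1]$, so that $-\log p(\mathcal{U}_{i_s})\geq 0$ and every summand in the definition of $H_s(\tilde{U})$ is nonnegative. Your write-up is slightly more careful in explicitly invoking the disjoint-cover property and the $0\log 0$ convention, but the underlying idea is identical to the paper's.
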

\begin{proof}
Due to $0\leq p(u_i) \leq1$ and $\sum_{i=1}^{N}p(u_i)=1$, we have $0\leq p(\mathcal{U}_{i_s})=\sum_{i\in\mathcal{N}_{i_s}}p\left(u_i\right) \leq1$. So it follows that $ \log \frac{1}{p(\mathcal{U}_{i_s})} \geq 0$. \qedhere
\end{proof}

\begin{lemma}\label{lemma2}
The semantic entropy is no more than the associated information entropy, that is,
\begin{equation}
H_s(\tilde{U})\leq H(U).
\end{equation}
\end{lemma}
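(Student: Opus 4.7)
The plan is to exploit the fact that the synonymous mapping $f$ induces a partition of $\mathcal{U}$ into the disjoint classes $\{\mathcal{U}_{i_s}\}_{i_s=1}^{\tilde{N}}$, so that every syntactic symbol $u_i$ lies in exactly one class. This means the semantic variable $\tilde{U}$ is a deterministic function of $U$ (it is the class label of $U$ under $f$), and the inequality $H_s(\tilde{U})\leq H(U)$ is then essentially the classical fact that deterministic coarsening cannot increase entropy.

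Concretely, I would proceed by direct manipulation rather than by invoking the functional entropy inequality as a black box, both for self-containment and to match the definitional style used in the paper. First I would form the difference
\begin{equation}
H(U)-H_s(\tilde{U}) = -\sum_{i_s=1}^{\tilde{N}}\sum_{i\in\mathcal{N}_{i_s}} p(u_i)\log p(u_i) + \sum_{i_s=1}^{\tilde{N}} p(\mathcal{U}_{i_s})\log p(\mathcal{U}_{i_s}),
\end{equation}
using the partition $\mathcal{U}=\bigcup_{i_s}\mathcal{U}_{i_s}$ to rewrite the single sum in $H(U)$ as a double sum over classes. Next I would replace $p(\mathcal{U}_{i_s})$ in the second term by $\sum_{i\in\mathcal{N}_{i_s}}p(u_i)$ so that both sums range over the same index pairs, yielding
\begin{equation}
H(U)-H_s(\tilde{U}) = \sum_{i_s=1}^{\tilde{N}}\sum_{i\in\mathcal{N}_{i_s}} p(u_i)\log\frac{p(\mathcal{U}_{i_s})}{p(u_i)}.
\end{equation}

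The final step is to observe that, since $u_i\in\mathcal{U}_{i_s}$ implies $p(u_i)\leq p(\mathcal{U}_{i_s})$, every logarithm on the right-hand side is non-negative, so every summand is non-negative and the total difference is at least zero. Equality holds precisely when each synonymous class is a singleton, recovering the identity mapping case. I do not anticipate a genuine obstacle here; the only care required is the bookkeeping of the index set $\mathcal{N}_{i_s}$ to make sure the double sum on $i_s$ and $i$ correctly enumerates $\mathcal{U}$ exactly once, which is guaranteed by the disjointness and covering properties of the synonymous partition stated in Section \ref{section_II}. If desired, the same conclusion can be derived more compactly via the log-sum inequality applied class by class, but the elementary expansion above is the cleanest route given the notation already in place.
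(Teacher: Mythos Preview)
Your proposal is correct and follows essentially the same route as the paper: both form the difference of the two entropies, rewrite it as the double sum $\sum_{i_s}\sum_{i\in\mathcal{N}_{i_s}} p(u_i)\log\bigl(p(\mathcal{U}_{i_s})/p(u_i)\bigr)$ (or its negative), and conclude from $p(u_i)\le p(\mathcal{U}_{i_s})$ that every term has the required sign. The only cosmetic difference is that the paper computes $H_s(\tilde{U})-H(U)\le 0$ while you compute $H(U)-H_s(\tilde{U})\ge 0$; your remark on the equality case is a small addition not present in the paper (and strictly speaking equality only requires each class to contain at most one symbol of positive probability, not that every class be a singleton).
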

\begin{proof}
According to the definition of Eq. (\ref{semantic_entropy}), we have
\begin{equation}
\begin{aligned}
H_s(\tilde{U})-H(U)&=-\sum_{i_s=1}^{\tilde{N}}\sum_{i\in\mathcal{N}_{i_s}}p\left(u_i\right) \log \left(\sum_{i\in\mathcal{N}_{i_s}}p\left(u_i\right)\right)\\
     &+\sum_{i=1}^{N} p\left(u_i\right) \log p\left(u_i\right)\\
     &=\sum_{i_s=1}^{\tilde{N}}\sum_{i\in\mathcal{N}_{i_s}}p\left(u_i\right) \log \frac{p\left(u_i\right)}{\sum_{i\in\mathcal{N}_{i_s}}p\left(u_i\right)}\\
     &\leq \sum_{i=1}^{N} p\left(u_i\right) \log 1 =0.
\end{aligned}
\end{equation}
\end{proof}

\begin{lemma}\label{concave_sementropy}
The semantic entropy $H(\tilde{U})$ is a concave function of $p(u)$.
\end{lemma}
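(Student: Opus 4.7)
The plan is to reduce the claim to the well-known concavity of the Shannon entropy. The key observation is that, once the synonymous mapping $f$ (and hence the partition $\{\mathcal{N}_{i_s}\}$) is fixed, the aggregate mass $p(\mathcal{U}_{i_s}) = \sum_{i \in \mathcal{N}_{i_s}} p(u_i)$ is an affine (indeed linear) function of the underlying distribution $p(u)$. Consequently $H_s(\tilde{U})$ coincides with the ordinary Shannon entropy of the coarsened distribution $q$ on $\tilde{\mathcal{U}}$ defined by $q(\tilde{u}_{i_s}) = p(\mathcal{U}_{i_s})$, and concavity in $p$ will follow from concavity of $H$ in $q$ composed with this affine map.

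First I would make the reduction explicit by introducing the aggregation map $T$ that sends the probability simplex on $\mathcal{U}$ into the simplex on $\tilde{\mathcal{U}}$ via $(Tp)_{i_s} = \sum_{i \in \mathcal{N}_{i_s}} p(u_i)$. By the definition in Eq. (\ref{semantic_entropy}) one has $H_s(\tilde{U}) = H(Tp)$, where $H$ denotes the classical Shannon entropy. Since $T$ is linear in $p$, the composition with any concave functional on the output simplex remains concave in $p$.

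Next I would invoke the classical concavity of Shannon entropy, which itself is a direct consequence of concavity of the scalar function $\varphi(t) = -t\log t$ on $[0,1]$, and then appeal to the standard fact that the composition of a concave function with an affine map preserves concavity. Equivalently, one can give a direct verification: for any $p_1, p_2$ on $\mathcal{U}$ and $\lambda \in [0,1]$, set $p_\lambda = \lambda p_1 + (1-\lambda) p_2$; linearity of $T$ gives $p_\lambda(\mathcal{U}_{i_s}) = \lambda p_1(\mathcal{U}_{i_s}) + (1-\lambda) p_2(\mathcal{U}_{i_s})$, and applying $\varphi$ term-by-term and summing over $i_s$ yields $H_s(p_\lambda) \geq \lambda H_s(p_1) + (1-\lambda) H_s(p_2)$.

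There is essentially no serious obstacle in this argument: the only point that requires care is to notice that the synonymous partition $\{\mathcal{N}_{i_s}\}$ is being held fixed when we vary $p$, so that the coarsening map is genuinely affine and Lemma \ref{lemma2} style arguments about regrouping do not interfere. Once this is made explicit, the result is immediate from the classical concavity of $-t\log t$, and no further machinery is needed.
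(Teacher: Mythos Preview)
Your proposal is correct. The paper's own proof proceeds by a direct Jensen-type computation: it writes out $\theta H_s(p_1)+(1-\theta)H_s(p_2)-H_s(p_\theta)$ explicitly, rearranges it into two weighted sums of the form $\sum p_k(\mathcal{U}_{i_s})\log\frac{p_\theta(\mathcal{U}_{i_s})}{p_k(\mathcal{U}_{i_s})}$, and then applies concavity of $\log$ to bound the whole expression by $0$. Your route is the same result packaged more structurally: you observe $H_s = H\circ T$ with $T$ the linear coarsening map, and invoke the known concavity of Shannon entropy together with preservation of concavity under affine precomposition. The two arguments are mathematically equivalent---your term-by-term use of the concavity of $-t\log t$ is exactly what the paper's Jensen step is doing on the aggregated masses---but your framing has the advantage of making clear \emph{why} the result is immediate (nothing new beyond classical entropy concavity is happening), whereas the paper's calculation somewhat obscures this by working at the level of the fine alphabet throughout.
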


The proof is referred to Appendix \ref{proof_concave_sementropy}.

\begin{theorem}
$H_s(\tilde{U})\leq\log |\tilde{\mathcal{U}}|$, where $|\tilde{\mathcal{U}}|=\tilde{N}$ stands for the number of semantic symbols. The equality holds if and only if the synonymous set $\mathcal{U}_{i_s}$ is uniformly distributed over $\mathcal{U}$ and $U$ has arbitrary distribution over $\mathcal{U}_{i_s}$.
\end{theorem}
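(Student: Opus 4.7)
The plan is to recognize that the numbers $q_{i_s} := p(\mathcal{U}_{i_s})$, for $i_s = 1,\ldots,\tilde{N}$, form a legitimate probability distribution on the semantic alphabet $\tilde{\mathcal{U}}$: each is non-negative, and they sum to one because the synonymous sets $\mathcal{U}_{i_s}$ partition $\mathcal{U}$. Consequently $H_s(\tilde{U}) = -\sum_{i_s} q_{i_s}\log q_{i_s}$ is literally the Shannon entropy of the distribution $\{q_{i_s}\}$ on a set of cardinality $\tilde{N}$, and the theorem reduces to the classical ``entropy bounded by log of alphabet size'' inequality applied to the coarsened (quotient) distribution.

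To derive the bound I would write
\begin{equation}
H_s(\tilde{U}) - \log \tilde{N} = \sum_{i_s=1}^{\tilde{N}} q_{i_s} \log \frac{1}{\tilde{N}\, q_{i_s}},
\end{equation}
and then invoke Jensen's inequality on the concave function $\log(\cdot)$ to pull the expectation inside the logarithm, yielding an upper bound of $\log\sum_{i_s} q_{i_s} \cdot \frac{1}{\tilde{N}\, q_{i_s}} = \log 1 = 0$. Equivalently one can recognize the negative of a Kullback--Leibler divergence between $\{q_{i_s}\}$ and the uniform distribution on $\tilde{\mathcal{U}}$ and appeal to non-negativity of relative entropy. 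Either route gives $H_s(\tilde{U}) \leq \log|\tilde{\mathcal{U}}|$ in one line.

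For the equality clause, strict concavity of $\log$ in Jensen's inequality (or the strict positivity of KL divergence away from equal distributions) forces $1/(\tilde{N}\, q_{i_s})$ to be constant over the support, i.e.\ $p(\mathcal{U}_{i_s}) = 1/\tilde{N}$ for every $i_s$. This is precisely the first half of the stated equality condition: the mass assigned to each synonymous set is uniform across $\tilde{\mathcal{U}}$. The key observation to complete the argument is that this condition constrains only the aggregate weight $p(\mathcal{U}_{i_s})$ of each block and imposes no restriction on the conditional law of $U$ \emph{within} $\mathcal{U}_{i_s}$; the definition of $H_s(\tilde{U})$ in \eqref{semantic_entropy} depends on the $p(u_i)$ only through their sums over each $\mathcal{N}_{i_s}$. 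Hence $U$ may follow any distribution inside each synonymous block so long as the per-block totals all equal $1/\tilde{N}$, which recovers the ``arbitrary distribution over $\mathcal{U}_{i_s}$'' half of the statement.

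The argument is essentially routine once the reduction to the coarsened distribution is made; there is no real analytical obstacle. The only point that requires care in the writeup is articulating the equality condition as a two-part statement (uniformity across blocks, freedom within blocks) so that it is clear why intra-block choices of $p(u)$ are invisible to $H_s(\tilde{U})$.
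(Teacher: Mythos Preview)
Your proposal is correct and takes essentially the same approach as the paper: both recognize that $H_s(\tilde{U})$ is literally the Shannon entropy of the coarsened distribution $\{p(\mathcal{U}_{i_s})\}_{i_s=1}^{\tilde N}$ and then appeal to the classical maximum-entropy result, with equality forcing uniform block masses and leaving the intra-block distribution unconstrained. The only difference is that you spell out the Jensen/KL argument explicitly, whereas the paper simply cites ``the maximal entropy theorem of discrete source.''
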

\begin{proof}
By the maximal entropy theorem of discrete source, we can easily obtain the conclusion and further derive that the equality holds when $p\left(\tilde{u}_{i_s}\right)=\frac{1}{\tilde{N}}=\sum_{i\in\mathcal{N}_{i_s}}p\left(u_i\right)$. So the elements in $\mathcal{U}_{i_s}$ can take an arbitrary distribution.
\end{proof}

Note that the maximal semantic entropy $\log |\tilde {\mathcal{U}}|$ is no more than the maximum of information entropy $\log \left| \mathcal{U}\right|$. That means the uncertainty of semantic information can be further reduced with the help of synonymous mapping.

\begin{example}
Table \ref{PDF_Semantic} gives a probability distribution of source $U$ and the associated semantic variable $\tilde{U}$ under a synonymous mapping.
\begin{table*}[tp]
\centering
\caption{Probability distribution of syntactic source $U$ and synonymous mapping of semantic source $\tilde{U}$.} \label{PDF_Semantic}
\begin{tabular}{|c|c|c|c|c|c|c|}
  \hline $U$        & \makebox[0.1\textwidth][c] { $u_1$ }    &     \makebox[0.05\textwidth][c] { $u_2$ }        &  \makebox[0.05\textwidth][c] {  $u_3$  } & \makebox[0.1\textwidth][c] { $u_4$ }  & \makebox[0.05\textwidth][c]{ $u_5$}  & \makebox[0.05\textwidth][c] { $u_6$ }  \\
  \hline $p(u)$                  &       $0.3$            &            $0.15$       &       $0.15$         &         $0.2$         &        $0.1$         &        $0.1$ \\
  \hline $\tilde{U}$  & $\tilde{u}_1\to \{u_1\}$ & \multicolumn{2}{c|}{$\tilde{u}_2\to \{u_2,u_3\}$} & $\tilde{u}_3\to \{u_4\}$ & \multicolumn{2}{c|} {$\tilde{u}_4\to\{u_5,u_6\}$}\\
  \hline $p(\tilde{u})$      &  $0.3$    &  \multicolumn{2}{c|} { $0.3$ }  &   $0.2$   &  \multicolumn{2}{c|}{ $0.2$ } \\
  \hline
\end{tabular}
\end{table*}
The entropy of source $U$ is calculated as $H(U)=-\sum_{i=1}^{6}p(u_i)\log p(u_i)=2.471\text{ bits}$. Correspondingly, the semantic entropy of $\tilde{U}$ is calculated as $H_s(\tilde{U})=-\sum_{i_s=1}^{4}p(\tilde{u}_{i_s})\log p(\tilde{u}_{i_s})=1.971\text{ sebits}$. Evidently, we observe that $H_s(\tilde{U})<H(U)$.
\end{example}

\subsection{Semantic joint entropy and semantic conditional entropy}
The definition of semantic entropy can be further extended to a pair of semantic variables. First, we introduce the jointly/conditionally synonymous mapping as following.
\begin{definition}
Given a pair of discrete semantic variables $(\tilde {U},\tilde {V})$ and the corresponding random variable pairs $\left(U,V\right)$, $f_{u}: \tilde {\mathcal{U}} \to \mathcal{U}$ denotes the synonymous mapping from $\tilde {\mathcal{U}}$ to $\mathcal{U}$ and we have $f_{u}: \tilde{u}_{i_s} \to \mathcal{U}_{i_s}$ where $1\leq i_s\leq |\tilde {\mathcal{U}}|={\tilde{N}}_u$ and $1\leq i \leq\left| {\mathcal{U}}\right|=N_u$. Similarly, we can define the synonymous mapping $f_{v}: \tilde {\mathcal{V}} \to \mathcal{V}$. Furthermore, the jointly synonymous mapping $f_{uv}: \tilde {\mathcal{U}}\times \tilde{\mathcal{V}}\to \mathcal{U}\times{\mathcal{V}}$ is defined as
\begin{equation}
f_{uv}: (\tilde{u}_{i_s},\tilde{v}_{j_s}) \to \mathcal{U}_{i_s}\times \mathcal{V}_{j_s}.
\end{equation}

Correspondingly, give a symbol $u_i$, the conditionally synonymous mapping $f_{v|u}: \tilde {\mathcal{V}}|u \to \mathcal{V}|u$ is defined as
\begin{equation}
f_{v|u}: \tilde{v}_{j_s}|u_i \to \mathcal{V}_{j_s}|u_i.
\end{equation}
\end{definition}

\begin{definition}
Given a pair of discrete semantic variables $(\tilde {U},\tilde {V})$ and the corresponding random variable pairs $\left(U,V\right)$ with a joint distribution $p(u,v)$, under the joint mapping $f_{uv}: \tilde {\mathcal{U}}\times \tilde{\mathcal{V}}\to \mathcal{U}\times{\mathcal{V}}$, the semantic joint entropy $H_s(\tilde {U},\tilde {V})$ is defined as
\begin{equation}\label{joint_semantic_entropy}
\begin{aligned}
H_s(\tilde {U},\tilde {V})=-\sum_{i_s=1}^{{\tilde{N}}_u}\sum_{j_s=1}^{{\tilde{N}}_v}&p\left(\mathcal{U}_{i_s}\times\mathcal{V}_{j_s}\right) \log p\left(\mathcal{U}_{i_s}\times\mathcal{V}_{j_s}\right)  \\
=-\sum_{i_s=1}^{{\tilde{N}}_u}\sum_{j_s=1}^{{\tilde{N}}_v} &\sum_{(u_i,v_j)\in \mathcal{U}_{i_s}\times \mathcal{V}_{j_s}} p\left(u_i,v_j\right)\\
\cdot \log &\sum_{(u_i,v_j)\in \mathcal{U}_{i_s}\times \mathcal{V}_{j_s}} p\left(u_i,v_j\right).
\end{aligned}
\end{equation}
\end{definition}

\begin{lemma}\label{lemma3}
The semantic joint entropy is no more than the joint entropy, that is,
\begin{equation}
H_s(\tilde{U},\tilde{V})\leq H(U,V)
\end{equation}
\end{lemma}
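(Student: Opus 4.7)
The plan is to mimic the proof of Lemma \ref{lemma2} verbatim, but carried out on the product alphabet $\mathcal{U}\times\mathcal{V}$ with the partition induced by the jointly synonymous mapping $f_{uv}$. Concretely, the sets $\{\mathcal{U}_{i_s}\times\mathcal{V}_{j_s}\}_{i_s,j_s}$ form a disjoint partition of $\mathcal{U}\times\mathcal{V}$ whose union recovers the full product alphabet, so the ordinary joint entropy $H(U,V)=-\sum_{i,j}p(u_i,v_j)\log p(u_i,v_j)$ can be reorganised by first summing the inner terms over each block $\mathcal{U}_{i_s}\times\mathcal{V}_{j_s}$ and then summing over the block indices $(i_s,j_s)$.

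With that bookkeeping in place, I would form the difference and collapse it into a single double sum:
\begin{equation*}
H_s(\tilde U,\tilde V)-H(U,V)=\sum_{i_s,j_s}\sum_{(u_i,v_j)\in\mathcal{U}_{i_s}\times\mathcal{V}_{j_s}} p(u_i,v_j)\,\log\frac{p(u_i,v_j)}{p(\mathcal{U}_{i_s}\times\mathcal{V}_{j_s})}.
\end{equation*}
Since every atom $(u_i,v_j)$ lies inside its block, we have $p(u_i,v_j)\le p(\mathcal{U}_{i_s}\times\mathcal{V}_{j_s})$, hence each logarithm is non-positive and each term is therefore non-positive (the multiplier $p(u_i,v_j)$ being non-negative). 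Summing gives an upper bound of $0$, which is exactly the claim. Equivalently, one can phrase the same estimate through the log-sum inequality applied blockwise, or observe that the inner sum over a fixed block equals $-p(\mathcal{U}_{i_s}\times\mathcal{V}_{j_s})$ times the entropy of the conditional distribution of $(U,V)$ restricted to that block, which is manifestly non-negative.

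I do not anticipate a real obstacle here: the only thing to be careful about is notational, namely that the double index $(i_s,j_s)$ in the semantic joint entropy (\ref{joint_semantic_entropy}) indexes a bona fide partition of $\mathcal{U}\times\mathcal{V}$, so the reindexing of $H(U,V)$ goes through without loss. Equality conditions can be read off as in Lemma \ref{lemma2}: they require $p(u_i,v_j)=p(\mathcal{U}_{i_s}\times\mathcal{V}_{j_s})$ whenever $(u_i,v_j)$ belongs to the block, i.e.\ every synonymous block is a singleton on the support of $p(u,v)$, but establishing equality is not needed for the stated inequality.
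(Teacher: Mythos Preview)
Your proposal is correct and matches the paper's own proof essentially line for line: the paper also forms $H_s(\tilde U,\tilde V)-H(U,V)$, rewrites it as $\sum_{i_s,j_s}\sum_{(u_i,v_j)\in\mathcal{U}_{i_s}\times\mathcal{V}_{j_s}} p(u_i,v_j)\log\frac{p(u_i,v_j)}{\sum_{(u_i,v_j)}p(u_i,v_j)}$, and bounds each term by $p(u_i,v_j)\log 1=0$. The only difference is cosmetic notation (the paper writes the block probability as $\sum_{(u_i,v_j)}p(u_i,v_j)$ rather than $p(\mathcal{U}_{i_s}\times\mathcal{V}_{j_s})$).
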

\begin{proof}
By the definition of Eq. (\ref{joint_semantic_entropy}), we have
\begin{equation}
\begin{aligned}
&H_s\left(\tilde{U},\tilde{V}\right)-H(U,V)\\
=&-\sum_{i_s=1}^{{\tilde{N}}_u}\sum_{j_s=1}^{{\tilde{N}}_v} \sum_{(u_i,v_j)} p\left(u_i,v_j\right) \log \sum_{(u_i,v_j)} p\left(u_i,v_j\right)\\
+&\sum_{(u_i,v_j)} p\left(u_i,v_j\right) \log p\left(u_i,v_j\right)\\
=&\sum_{(u_i,v_j)} p\left(u_i,v_j\right)\log \frac{p\left(u_i,v_j\right)}{\sum_{(u_i,v_j)} p\left(u_i,v_j\right)}\\
\leq & \sum_{(u_i,v_j)} p\left(u_i,v_j\right)\log 1 =0
\end{aligned}
\end{equation}
\end{proof}

We also define the semantic conditional entropy of a semantic variable given another random variable as follows.
\begin{definition}
Given a pair of discrete semantic variables $(\tilde {U},\tilde {V})$ and the corresponding random variable pairs $\left(U,V\right)\sim p(u,v)$, under the conditional mapping $f_{v|u}: \tilde {\mathcal{V}}|u \to \mathcal{V}|u$, the semantic conditional entropy $H_s(\tilde {V}|U)$ is defined as
\begin{equation}
\begin{aligned}
H_s(\tilde {V}| U)&=-\sum_{i=1}^{N_u}\sum_{j_s=1}^{{\tilde{N}}_v}p\left(u_i\right)p\left(\mathcal{V}_{j_s}\left|u_i\right.\right) \log p\left(\mathcal{V}_{j_s}\left|u_i\right.\right)  \\
&=-\sum_{i=1}^{N_u}\sum_{j_s=1}^{{\tilde{N}}_v} \sum_{(u_i,v_j)} p\left(u_i,v_j\right) \log \sum_{v_j\in \mathcal{V}_{j_s}|u_i} p\left(v_j\left|u_i\right.\right).
\end{aligned}
\end{equation}
\end{definition}
Similarly, we can also define the semantic conditional entropy $H_s(\tilde {U}|V )$.

\begin{lemma}\label{lemma4}
The semantic conditional entropy is no more than the conditional entropy, that is,
\begin{equation}
H_s\left(\tilde {V}| U\right)\leq H\left(V|U\right)
\end{equation}
\end{lemma}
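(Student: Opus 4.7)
The plan is to follow the same template used in Lemmas~\ref{lemma2} and \ref{lemma3}: write both quantities explicitly, regroup the classical one along the synonymous partition, subtract, and combine the two logarithms into a single non-positive ratio.

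First I would re-index the conditional entropy
\begin{equation*}
H(V|U) = -\sum_{i=1}^{N_u}\sum_{j=1}^{N_v} p(u_i,v_j)\log p(v_j|u_i)
\end{equation*}
by partitioning the inner sum according to the conditionally synonymous mapping $f_{v|u}$, using $\mathcal{V}|u_i = \bigcup_{j_s=1}^{\tilde N_v} \mathcal{V}_{j_s}|u_i$. This rewrites $H(V|U)$ as a triple sum $-\sum_i\sum_{j_s}\sum_{v_j \in \mathcal{V}_{j_s}|u_i} p(u_i,v_j)\log p(v_j|u_i)$, which has the same outer-sum structure as the given definition of $H_s(\tilde V|U)$.

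Subtracting and combining the two logarithms then yields
\begin{equation*}
H_s(\tilde V|U) - H(V|U) = \sum_{i=1}^{N_u}\sum_{j_s=1}^{\tilde N_v}\sum_{v_j \in \mathcal{V}_{j_s}|u_i} p(u_i,v_j)\,\log\frac{p(v_j|u_i)}{p(\mathcal{V}_{j_s}|u_i)}.
\end{equation*}
Since $v_j \in \mathcal{V}_{j_s}$ implies $p(v_j|u_i) \leq \sum_{v_j'\in\mathcal{V}_{j_s}} p(v_j'|u_i) = p(\mathcal{V}_{j_s}|u_i)$, every summand has a logarithm bounded above by $\log 1 = 0$, so the whole expression is $\leq 0$, proving the lemma.

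The only real thing to watch is the bookkeeping of the conditional partition: one must note that, for each fixed $u_i$, the family $\{\mathcal{V}_{j_s}|u_i\}_{j_s}$ really does partition the conditional alphabet, so that the inner probability mass telescopes to $p(\mathcal{V}_{j_s}|u_i)$. This is essentially built into the definition of $f_{v|u}$, so the obstacle is notational rather than mathematical. No log-sum inequality or convexity argument is required; the pointwise bound $p(v_j|u_i) \leq p(\mathcal{V}_{j_s}|u_i)$ is enough, exactly parallel to how the unconditional cases in Lemmas~\ref{lemma2} and \ref{lemma3} were handled.
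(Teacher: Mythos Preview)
Your proposal is correct and is exactly the approach the paper intends: the paper's proof of Lemma~\ref{lemma4} simply says ``follows along the same lines as Lemma~\ref{lemma3},'' and your write-up is precisely that same-lines argument carried out for the conditional case. The subtraction, the regrouping along the conditional partition $\{\mathcal{V}_{j_s}|u_i\}_{j_s}$, and the pointwise bound $p(v_j|u_i)\le p(\mathcal{V}_{j_s}|u_i)$ are all just the conditional analogues of the steps in Lemmas~\ref{lemma2} and~\ref{lemma3}.
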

\begin{proof}
The proof follows along the same lines as Lemma \ref{lemma3}.
\end{proof}

Similarly, we also attain the following lemma.
\begin{lemma}\label{lemma41}
If the semantic variables $\tilde{U}$ and $\tilde{V}$ are regarded as a new random variable respectively, the conditional entropy obeys the following relation, that is,
\begin{equation}
H\left(\tilde {V}| \tilde{U}\right)\leq H\left(V|\tilde{U}\right).
\end{equation}
\end{lemma}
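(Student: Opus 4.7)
The plan is to exploit the fact that the synonymous mapping $f_v$ partitions $\mathcal{V}$ into disjoint classes $\{\mathcal{V}_{j_s}\}$, so the semantic variable $\tilde{V}$ is a deterministic function of the syntactic variable $V$: namely, $\tilde{V}=g(V)$, where $g(v_j)=\tilde{v}_{j_s}$ for the unique $j_s$ such that $v_j\in\mathcal{V}_{j_s}$. Once this is recognized, the lemma follows from a standard data-processing argument for conditional entropy.

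Concretely, the first step is to form the joint distribution of the triple $(\tilde{U},V,\tilde{V})$ induced by $p(u,v)$ together with the mappings $f_u$ and $f_v$, and to apply the chain rule for conditional entropy to $(V,\tilde{V})$ given $\tilde{U}$ in two ways:
\begin{equation}
H(V,\tilde{V}\mid\tilde{U})=H(V\mid\tilde{U})+H(\tilde{V}\mid V,\tilde{U})=H(\tilde{V}\mid\tilde{U})+H(V\mid\tilde{V},\tilde{U}).
\end{equation}
Because $\tilde{V}=g(V)$, the term $H(\tilde{V}\mid V,\tilde{U})$ vanishes, and by non-negativity of conditional entropy, $H(V\mid\tilde{V},\tilde{U})\ge 0$. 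Equating the two expressions yields $H(\tilde{V}\mid\tilde{U})\le H(V\mid\tilde{U})$, which is exactly the claim.

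As a sanity check I would also sketch a direct log-sum argument paralleling the proofs of Lemma \ref{lemma3} and Lemma \ref{lemma4}: expanding both sides gives
\begin{equation}
H(V\mid\tilde{U})-H(\tilde{V}\mid\tilde{U})=\sum_{i_s,j_s}\sum_{v_j\in\mathcal{V}_{j_s}}p(\tilde{u}_{i_s},v_j)\log\frac{\sum_{v\in\mathcal{V}_{j_s}}p(v\mid\tilde{u}_{i_s})}{p(v_j\mid\tilde{u}_{i_s})},
\end{equation}
and each logarithm is non-negative since the numerator is a sum that contains the denominator as one of its terms. This second route confirms the chain-rule derivation and matches the style of the preceding lemmas.

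I do not anticipate a genuine obstacle here. The only care needed is notational: one must consistently distinguish $p(v\mid\tilde{u}_{i_s})=\sum_{u\in\mathcal{U}_{i_s}}p(u,v)/p(\mathcal{U}_{i_s})$ from $p(v\mid u)$, and verify that the induced distribution of $(\tilde{U},\tilde{V})$ coincides with $p(\mathcal{U}_{i_s}\times\mathcal{V}_{j_s})$ so that the chain rule is applied to a bona fide joint distribution. With this bookkeeping in place the argument is essentially a one-line consequence of the deterministic nature of $f_v$.
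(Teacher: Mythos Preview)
Your proposal is correct. The paper itself gives no explicit proof of this lemma; it simply states ``Similarly, we also attain the following lemma,'' pointing back to Lemma~\ref{lemma4}, whose proof in turn ``follows along the same lines as Lemma~\ref{lemma3}.'' Thus the paper's intended argument is precisely your second route: the direct log-sum computation showing that each term $\log\bigl(\sum_{v\in\mathcal{V}_{j_s}}p(v\mid\tilde{u}_{i_s})/p(v_j\mid\tilde{u}_{i_s})\bigr)$ is non-negative.

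Your first route---observing that $\tilde{V}=g(V)$ is a deterministic function and then applying the chain rule $H(V,\tilde{V}\mid\tilde{U})=H(V\mid\tilde{U})=H(\tilde{V}\mid\tilde{U})+H(V\mid\tilde{V},\tilde{U})$---is a genuinely different and arguably cleaner argument. It makes transparent \emph{why} the inequality holds (functions of a random variable cannot increase conditional entropy) and requires no explicit manipulation of the synonymous-set sums. The paper's log-sum approach, on the other hand, keeps the argument uniform with the surrounding lemmas and avoids introducing the triple $(\tilde{U},V,\tilde{V})$. Either is fine; your bookkeeping remarks about the induced distribution of $(\tilde{U},V)$ are the only point requiring care, and you have identified it correctly.
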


%Let $H(\tilde{U},V)=H(V)+H_s(\tilde{U}\left|V\right.)$ denote the partial joint entropy. The relationship between semantic joint entropy and partial joint entropy can be indicated by the following theorem.

The chain rule of entropies with two variables can be indicated by the following theorem.

\begin{theorem}\label{theorem2}
(\textit{Chain Rule of Entropies with Two Variables}):
\begin{equation}\label{chain_rule}
H_s(\tilde{U})+H_s(\tilde{V}\left|U\right.) \leq H_s(\tilde{U},\tilde{V})\leq H(V)+H_s(\tilde{U}\left|V\right.) \leq H(U,V)
\end{equation}
\end{theorem}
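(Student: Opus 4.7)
The plan is to handle the three inequalities one at a time, exploiting the single structural fact that the synonymous partitions $\{\mathcal{U}_{i_s}\}$ and $\{\mathcal{V}_{j_s}\}$ make each semantic variable a deterministic function of its syntactic counterpart. In particular, if we simply read the tilded symbols as ordinary random variables over $\tilde{\mathcal{U}}$ and $\tilde{\mathcal{V}}$ with masses $p(\mathcal{U}_{i_s})$ and $p(\mathcal{V}_{j_s})$, the definitions of Section \ref{section_III} give $H_s(\tilde{U}) = H(\tilde{U})$, $H_s(\tilde{V}\mid U) = H(\tilde{V}\mid U)$, and $H_s(\tilde{U},\tilde{V}) = H(\tilde{U},\tilde{V})$, so that the classical chain rules and monotonicity properties of ordinary entropy are available.

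For the rightmost inequality $H(V) + H_s(\tilde{U}\mid V) \leq H(U,V)$, I would apply the classical chain rule $H(U,V) = H(V) + H(U\mid V)$ and then invoke the symmetric counterpart of Lemma \ref{lemma4}, namely $H_s(\tilde{U}\mid V) \leq H(U\mid V)$, whose proof is the same log-sum manipulation already used for $H_s(\tilde{V}\mid U) \leq H(V\mid U)$. For the middle inequality $H_s(\tilde{U},\tilde{V}) \leq H(V) + H_s(\tilde{U}\mid V)$, I would rewrite the right-hand side as the ordinary joint entropy $H(\tilde{U},V) = H(V) + H(\tilde{U}\mid V)$ and then show $H(\tilde{U},\tilde{V}) \leq H(\tilde{U},V)$. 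The latter follows from the aggregation identity $p(\mathcal{U}_{i_s}\times \mathcal{V}_{j_s}) = \sum_{v_j \in \mathcal{V}_{j_s}} p(\mathcal{U}_{i_s}, v_j)$ combined with the log-sum inequality, i.e.\ exactly the manipulation that drives Lemmas \ref{lemma2} and \ref{lemma3}.

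For the leftmost inequality $H_s(\tilde{U}) + H_s(\tilde{V}\mid U) \leq H_s(\tilde{U},\tilde{V})$, I would apply the ordinary chain rule to the pair of semantic random variables, giving $H_s(\tilde{U},\tilde{V}) = H_s(\tilde{U}) + H(\tilde{V}\mid \tilde{U})$, so the goal reduces to $H(\tilde{V}\mid U) \leq H(\tilde{V}\mid \tilde{U})$. This is a direct instance of the fact that conditioning on a finer variable reduces entropy: since $\tilde{U}$ is a deterministic function of $U$, the $\sigma$-algebra generated by $U$ refines that generated by $\tilde{U}$, and conditioning on the finer variable can only shrink the remaining uncertainty.

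The main obstacle I expect is purely bookkeeping: four subtly different entropies are in play ($H$ versus $H_s$, conditioning on raw versus semantic variables), and it is easy to write a quantity like $H_s(\tilde{V}\mid U)$ and forget that the definition aggregates only over the $\tilde{V}$-partition while leaving the conditioning on the raw $U$ intact. Once the identification ``$H_s$ of a semantic variable equals ordinary $H$ of that variable viewed as a new random variable over $\tilde{\mathcal{U}}$ or $\tilde{\mathcal{V}}$'' is recorded carefully, each of the three inequalities collapses to either a single aggregation step via the log-sum inequality or to the standard monotonicity of conditional entropy under refinement.
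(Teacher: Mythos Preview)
Your proposal is correct, and the organizing principle you adopt---regarding $\tilde{U},\tilde{V}$ as ordinary random variables that are deterministic functions of $U,V$, so that every $H_s$ quantity becomes a classical entropy---is a genuinely cleaner route than the paper's. The paper never invokes the classical chain rule $H(\tilde{U},\tilde{V})=H(\tilde{U})+H(\tilde{V}\mid\tilde{U})$ or the refinement monotonicity $H(\tilde{V}\mid U)\le H(\tilde{V}\mid\tilde{U})$; instead it argues each inequality by writing a pointwise probability bound such as $p(v_j)\sum_{u_i\in\mathcal{U}_{i_s}}p(u_i\mid v_j)\le \sum_{(u_i,v_j)\in\mathcal{U}_{i_s}\times\mathcal{V}_{j_s}}p(u_i,v_j)$ and then taking the negative logarithm and expectation. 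For the rightmost inequality both approaches coincide (chain rule plus Lemma~\ref{lemma4}). For the middle and left inequalities the underlying mathematics is the same aggregation/log-sum step, but your abstract viewpoint makes the argument a one-liner, whereas the paper's pointwise bound for the left inequality is written with some notational ambiguity (the index $u_i$ appears simultaneously as summation variable and as a free conditioning variable). Your framing also makes the equality conditions transparent and would scale more readily to the $n$-variable chain rule in Theorem~\ref{theorem_chainrule}.
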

\begin{proof}
For the left inequality, due to the property of joint probability, we can write
\begin{equation}
\sum_{(u_i,v_j)\in \mathcal{U}_{i_s}\times  \mathcal{V}_{j_s}}p(u_i,v_j)\leq \left(\sum_{u_i \in \mathcal{U}_{i_s}}p(u_i)\right)\left(\sum_{v_j\in  \mathcal{V}_{i_s}|u_i}p(v_j\left|u_i\right.)\right).
\end{equation}
So we take the negative logarithm and expectation of inequality to prove this inequality. The equality holds if and only if $U$ and $V$ are mutually independent. By using a similar method, we can prove $H_s(\tilde{V})+H_s(\tilde{U}\left|V\right.) \leq H_s(\tilde{U},\tilde{V})$.

Similarly, we can write
\begin{equation}
p\left(v_j\right)\sum_{u_i\in \mathcal{U}_{i_s}|v_j}p\left(u_i\left|v_j\right.\right) \leq \sum_{(u_i,v_j)\in \mathcal{U}_{i_s}\times  \mathcal{V}_{j_s}}p\left(u_i,v_j\right).
\end{equation}
Take the negative logarithm and expectation of inequality, we prove the medium inequality.

Correspondingly, we also write
\begin{equation}
p\left(u_i\right)\sum_{v_j\in \mathcal{V}_{j_s}|u_i}p\left(v_j\left|u_i\right.\right) \leq \sum_{(u_i,v_j)\in \mathcal{U}_{i_s}\times  \mathcal{V}_{j_s}}p\left(u_i,v_j\right)
\end{equation}
and conclude that $H_s(\tilde{U},\tilde{V})\leq H(U) + H_s(\tilde{V}|U)$.

By using similar methods, we write
\begin{equation}
p(v_j\left|u_i\right.) \sum_{u_i\in \mathcal{U}_{i_s}} p\left(u_i\right) \leq \sum_{(u_i,v_j)\in \mathcal{U}_{i_s}\times  \mathcal{V}_{j_s}}p\left(u_i,v_j\right)
\end{equation}
and conclude that $H_s(\tilde{U},\tilde{V})\leq H_s(\tilde{U})+H(V\left|U\right.) $.

In addition, we also write
\begin{equation}
p\left(u_i\left|v_j\right.\right) \sum_{v_j\in \mathcal{V}_{j_s}} p\left(v_j\right) \leq \sum_{(u_i,v_j)\in \mathcal{U}_{i_s}\times  \mathcal{V}_{j_s}}p\left(u_i,v_j\right)
\end{equation}
and conclude that $H_s(\tilde{U},\tilde{V})\leq H_s(\tilde{V})+H(U\left|V\right.)$.

By Lemma \ref{lemma4}, due to $H(V)+H_s(\tilde{U}|V)-H(U, V) = H(V)+H_s(\tilde{U}|V)-H(V)-H(U|V)\leq 0$, we prove the right inequality. Similarly, we can attain that $H(U) + H_s(\tilde{V}|U) \leq H(U,V)$.
\end{proof}

\begin{definition}
Given a semantic sequence $(\tilde {U}_1,\tilde {U}_2,\cdots,\tilde {U}_n)$ and the associated syntactic sequence $\left(U_1,U_2,\cdots,U_n\right)$, $f^n: \tilde{\mathcal{U}}^n\to\mathcal{U}^n$ denotes the sequential synonymous mapping from $\tilde {\mathcal{U}}$ to $\mathcal{U}$ and we have
\begin{equation}
f^n: (\tilde{u}^n) \to \prod_{k=1}^{n} \mathcal{U}_{\tilde{u}_k}.
\end{equation}

Correspondingly, give a subvector $u_1^{k-1}$, the conditionally synonymous mapping $f_{u_k|u_1^{k-1}}: \tilde {\mathcal{U}}|u_1^{k-1} \to \mathcal{U}|u_1^{k-1}$ is defined as
\begin{equation}
f_{u_k|u_1^{k-1}}: \tilde{u}_{k}|u_i^{k-1} \to \mathcal{U}_{\tilde{u}_k}|u_1^{k-1}.
\end{equation}
\end{definition}

\begin{theorem}\label{theorem_chainrule}
(\textit{Chain Rule of Sequential Entropy}):

Given a semantic sequence $(\tilde {U}_1,\tilde {U}_2,\cdots,\tilde {U}_n)$ and the associated syntactic sequence $\left(U_1,U_2,\cdots,U_n\right)$, under the sequential synonymous mapping $f^n$ and conditional synonymous mappings $f_{u_k|u_1^{k-1}}$, we have

\begin{equation}
\begin{aligned}
&\sum_{k=1}^{n} H_s(\tilde{U}_k\left|U_1^{k-1}\right.) \leq H_s(\tilde{U}^n)\\
&\leq \tilde{H}(\tilde{U}_1^{n-1},U_{n}) \leq \cdots \leq \tilde{H}(\tilde{U}_1^m,U_{m+1}^n) \\
&\leq \cdots \leq \tilde{H}(\tilde{U}_1,U_{2}^n)\\
& \leq H(U^n)=\sum_{k=1}^{n} H(U_k\left|U_1^{k-1}\right.)
\end{aligned}
\end{equation}
where $\tilde{H}(\tilde{U}_1^m,U_{m+1}^n) =\sum_{k=1}^{m} H(\tilde{U}_k\left|\tilde{U}_1^{k-1}\right.)+\sum_{k=m+1}^{n} H(U_k\left|\tilde{U}_1^{m},U_{m+1}^{k-1}\right.)$ with $m=n-1,\cdots,2,1$ denote sequential entropies.

\end{theorem}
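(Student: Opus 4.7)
The plan is to exploit the observation that, under a deterministic synonymous mapping, each semantic variable $\tilde{U}_k$ is a function of its syntactic counterpart $U_k$ (since every $u\in\mathcal{U}$ belongs to exactly one synonymous class $\mathcal{U}_{i_s}$, whose index is $\tilde{U}_k$). Because of this functional dependence, the semantic entropies and conditional semantic entropies appearing in the statement coincide with the ordinary entropies of $\tilde{U}_k$ viewed as a random variable on $\tilde{\mathcal{U}}$; in particular $H_s(\tilde{U}^n)=H(\tilde{U}^n)$ and $H_s(\tilde{U}_k|U_1^{k-1})=H(\tilde{U}_k|U_1^{k-1})$. This identification will let us collapse each mixed quantity $\tilde{H}(\tilde{U}_1^m,U_{m+1}^n)$ into a joint entropy and read off monotonicity.

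First I would handle the middle chain. For each $0\leq m\leq n$, applying the classical chain rule separately to the two summations in the definition of $\tilde{H}(\tilde{U}_1^m,U_{m+1}^n)$ gives
\begin{equation*}
\tilde{H}(\tilde{U}_1^m,U_{m+1}^n)=H(\tilde{U}_1^m)+H(U_{m+1}^n|\tilde{U}_1^m)=H(\tilde{U}_1^m,U_{m+1}^n),
\end{equation*}
so every term in the chain is simply the joint entropy of the indicated $n$-tuple, with endpoints $\tilde{H}(\emptyset,U_1^n)=H(U^n)$ and $\tilde{H}(\tilde{U}_1^n,\emptyset)=H_s(\tilde{U}^n)$. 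Since $\tilde{U}_{m+1}$ is a function of $U_{m+1}$, the tuple $(\tilde{U}_1^{m+1},U_{m+2}^n)$ is itself a deterministic function of $(\tilde{U}_1^m,U_{m+1}^n)$, and the standard inequality that entropy is non-increasing under a deterministic map yields $\tilde{H}(\tilde{U}_1^{m+1},U_{m+2}^n)\leq \tilde{H}(\tilde{U}_1^m,U_{m+1}^n)$; iterating over $m=0,1,\ldots,n-1$ produces the entire middle chain.

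For the leftmost inequality $\sum_{k=1}^n H_s(\tilde{U}_k|U_1^{k-1})\leq H_s(\tilde{U}^n)$, I would expand the right-hand side by the classical chain rule as $H_s(\tilde{U}^n)=\sum_{k=1}^n H(\tilde{U}_k|\tilde{U}_1^{k-1})$ and compare term by term. Because $U_1^{k-1}$ determines $\tilde{U}_1^{k-1}$ coordinatewise through the coordinate mappings $f_{u_i}$, conditioning on $U_1^{k-1}$ is at least as fine as conditioning on $\tilde{U}_1^{k-1}$, so the standard conditioning-reduces-entropy inequality gives $H(\tilde{U}_k|U_1^{k-1})\leq H(\tilde{U}_k|\tilde{U}_1^{k-1})$ for each $k$; summing yields the claim. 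The rightmost equality $H(U^n)=\sum_{k=1}^n H(U_k|U_1^{k-1})$ is the classical chain rule of entropy and requires no additional argument.

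The main obstacle is really notational: the statement juggles semantic entropies $H_s$, classical entropies of semantic variables, and conditional entropies with mixed semantic/syntactic conditioning, and a naive term-by-term comparison between consecutive $\tilde{H}(\tilde{U}_1^m,U_{m+1}^n)$'s is deceptive because individual summands can move in either direction when one replaces $U_{m+1}$ in the conditioning by $\tilde{U}_{m+1}$; only after collapsing $\tilde{H}$ to a joint entropy does the correct sign become manifest. If one prefers to stay in the probabilistic style used in the proof of Theorem \ref{theorem2}, the same monotonicity can alternatively be obtained by induction on $n$, invoking the two-variable inequalities of Theorem \ref{theorem2} at each step with $V=(\tilde{U}_1^m,U_{m+2}^n)$ and $U=U_{m+1}$.
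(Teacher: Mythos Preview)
Your proof is correct, and for the middle chain it is essentially the paper's argument made more transparent: the paper also treats $\tilde{H}(\tilde{U}_1^m,U_{m+1}^n)$ as a genuine joint entropy (it freely permutes coordinates and applies the chain rule to it), then compares consecutive terms by invoking Lemma~\ref{lemma41}, i.e.\ $H(\tilde{U}_m\mid U_{m+1}^n,\tilde{U}_1^{m-1})\leq H(U_m\mid U_{m+1}^n,\tilde{U}_1^{m-1})$. Your ``entropy is non-increasing under a deterministic map'' is exactly this lemma in disguise, so the two derivations coincide at the level of ideas; your presentation just makes the identification $\tilde{H}(\tilde{U}_1^m,U_{m+1}^n)=H(\tilde{U}_1^m,U_{m+1}^n)$ explicit upfront rather than leaving it implicit in the manipulations.

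Where you genuinely diverge is in the leftmost inequality. The paper argues at the probability level, writing $\sum_{u_1^n\in\prod_k \mathcal{U}_{\tilde u_k}} p(u_1^n)\leq \prod_{k}\sum_{u_k\in\mathcal{U}_{\tilde u_k}\mid u_1^{k-1}} p(u_k\mid u_1^{k-1})$ and then taking $-\log$ and expectation, mirroring the style of Theorem~\ref{theorem2}. You instead expand $H_s(\tilde{U}^n)=\sum_k H(\tilde{U}_k\mid\tilde{U}_1^{k-1})$ by the classical chain rule and use that $U_1^{k-1}$ refines $\tilde{U}_1^{k-1}$ to get $H(\tilde{U}_k\mid U_1^{k-1})\leq H(\tilde{U}_k\mid\tilde{U}_1^{k-1})$ termwise. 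Your route is shorter and stays entirely within standard Shannon-theoretic inequalities, while the paper's computation has the advantage of exhibiting explicitly when equality holds (mutual independence of the $U_k$), which your data-processing argument does not surface as directly.
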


\begin{proof}
In order to prove inequality $\sum_{k=1}^{n} H_s(\tilde{U}_k\left|U_1^{k-1}\right.) \leq H_s(\tilde{U}^n)$, by using the property of joint probability and synonymous mappings, we can write
\begin{equation}
\sum_{u_1^n\in \prod_{k=1}^{n}\mathcal{U}_{\tilde{u}_k}}p(u_1,u_2,\cdots,u_n)\leq \prod_{k=1}^{n} \sum_{u_k\in  \mathcal{U}_{\tilde{u}_k}|u_1^{k-1}}p(u_k\left|u_{1}^{k-1}\right.).
\end{equation}
So we take the negative logarithm and expectation to prove this inequality. The equality holds if and only if all elements $U_k$ are mutually independent.

By using the chain rule of sequential entropy, after a permutation of sequence, we have $\tilde{H}(\tilde{U}_1^m,U_{m+1}^n)=\tilde{H}(U_{m+1}^n,\tilde{U}_1^{m-1})+H(\tilde{U}_m|U_{m+1}^n,\tilde{U}_1^{m-1})$ and $\tilde{H}(\tilde{U}_1^{m-1},U_{m}^n)=\tilde{H}(U_{m+1}^n,\tilde{U}_1^{m-1})+H(U_m|U_{m+1}^n,\tilde{U}_1^{m-1})$. According to Lemma \ref{lemma41}, it follows that $H(\tilde{U}_m|U_{m+1}^n,\tilde{U}_1^{m-1})\leq H(U_m|U_{m+1}^n,\tilde{U}_1^{m-1})$. So we can attain that
\begin{equation}
\tilde{H}(\tilde{U}_1^m,U_{m+1}^n)\leq \tilde{H}(\tilde{U}_1^{m-1},U_{m}^n), \text{ for } m=n-1,\cdots,2,
\end{equation}
and
\begin{equation}
\tilde{H}(\tilde{U}_1,U_{2}^n)\leq H(U^n).
\end{equation}
\end{proof}

\begin{remark}
In classic information theory, the joint entropy of a pair of random variables is the entropy of one variable plus the conditional entropy of the other variable, that is, $H\left(U,V\right)=H(U)+H(V|U)$. On the other hand, the semantic joint entropy does not satisfy the addition of entropy and degrades to an inequality of information entropy plus semantic conditional entropy, that is, $ H_s(\tilde{U},\tilde{V})\leq H(U)+H_s(\tilde{V}\left|U\right.)$ or $H_s(\tilde{U},\tilde{V})\leq H(V)+H_s(\tilde{U}\left|V\right.)$.
\end{remark}

\begin{example}
Table \ref{JPDF_RV} gives a joint probability distribution of random variable pair $(U,V)$. Table \ref{JSmapping_SRV} illustrates the distribution of the associated semantic variable pair $(\tilde{U},\tilde{V})$ under a joint synonymous mapping $f_{uv}$. Table \ref{CSmapping_SRV} and Table \ref{CSmapping_SRV2} give the conditional distribution of the semantic variables $\tilde{U}|V$ and $\tilde{V}|U$. The marginal distributions of the semantic variable $\tilde{U}$ and $\tilde{V}$ are depicted in Table \ref{Smapping_SRV}.

\begin{table*}[htbp]
\centering
\caption{Joint probability distribution of random variable pair $(U,V)$.} \label{JPDF_RV}
\begin{tabular}{|c|c|c|c|c|c|}
  \hline $(U,V)$     & $(u_1,v_1)$ &  $(u_1,v_2)$  & $(u_1,v_3)$ & $(u_1,v_4)$  & $(u_1,v_5)$  \\
  \hline $p(u,v)$     &   $0.05$       &       $0.1$       &    $0.15$      &       $0$         &       $0$         \\
  \hline $(U,V)$      & $(u_2,v_1)$ &  $(u_2,v_2)$  & $(u_2,v_3)$ & $(u_2,v_4)$  & $(u_2,v_5)$  \\
  \hline $p(u,v)$     &     $0.1$       &     $0.05$       &    $0.05$      &     $0.1$        &       $0$         \\
  \hline $(U,V)$     & $(u_3,v_1)$ &  $(u_3,v_2)$  & $(u_3,v_3)$ & $(u_3,v_4)$  & $(u_3,v_5)$  \\
  \hline $p(u,v)$     &     $0.1$       &     $0.05$       &        $0$       &       $0$         &      $0.05$     \\
  \hline $(U,V)$     & $(u_4,v_1)$ &  $(u_4,v_2)$  & $(u_4,v_3)$ & $(u_4,v_4)$  & $(u_4,v_5)$  \\
  \hline $p(u,v)$     &   $0.05$       &       $0$          &        $0$       &     $0.1$        &      $0.05$     \\
  \hline
\end{tabular}
\end{table*}

\begin{table*}[htbp]
\centering
\caption{Joint synonymous mapping of semantic variable pair $(\tilde{U},\tilde{V})$.} \label{JSmapping_SRV}
\begin{tabular}{|c|c|c|}
  \hline $f_{uv}$  & $({\tilde{u}}_1,{\tilde{v}}_1)\to\{(u_1,v_1),(u_2,v_1)\}$    & $({\tilde{u}}_1,{\tilde{v}}_2)\to\{(u_1,v_2),(u_2,v_2)\}$ \\
  \hline $p(\tilde{u},\tilde{v})$                   &       $0.15$            &            $0.15$        \\
  \hline $f_{uv}$  & $({\tilde{u}}_1,{\tilde{v}}_3)\to\{(u_1,v_3),(u_2,v_3)\}$ & $({\tilde{u}}_1,{\tilde{v}}_4)\to\{(u_1,v_4),(u_1,v_5),(u_2,v_4),(u_2,v_5)\}$ \\
  \hline $p(\tilde{u},\tilde{v})$                   &      $0.2$         &         $0.1$  \\
  \hline $f_{uv}$  & $({\tilde{u}}_2,{\tilde{v}}_1)\to\{(u_3,v_1),(u_4,v_1)\}$ & $(\tilde{u}_2,\tilde{v}_2)\to\{(u_3,v_2),(u_4,v_2)\}$ \\
  \hline $p(\tilde{u},\tilde{v})$                   &  $0.15$    &  $0.05$ \\
  \hline $f_{uv}$  & $(\tilde{u}_2,\tilde{v}_3)\to\{(u_3,v_3),(u_4,v_3)\}$ & $(\tilde{u}_2,\tilde{v}_4)\to\{(u_3,v_4),(u_4,v_4),(u_3,v_5),(u_4,v_5)\}$ \\
   \hline $p(\tilde{u},\tilde{v})$                   &   $0$   &  $0.2$  \\
  \hline
\end{tabular}
\end{table*}

The joint entropy of $(U,V)$ is calculated as $H(U,V)=$ $-\sum_{i=1}^{4}\sum_{j=1}^{5} p(u_i,v_j)\log p(u_i,v_j)$ $=3.5842\text{ bits}$.

The semantic joint entropy of $(\tilde{U},\tilde{V})$ is $H_s(\tilde{U},\tilde{V})=$ $-\sum_{i_s=1}^{2}\sum_{j_s=1}^{4} p(\tilde{u}_{i_s},\tilde{v}_{j_s})\log p(\tilde{u}_{i_s},\tilde{v}_{j_s})$ $=2.7087 \text{ sebits}$.

The conditional entropies are $H(U|V)=$ $-\sum_{i=1}^{4}\sum_{j=1}^{5} p(u_i,v_j)\log p(u_i|v_j)$ $=1.3377 \text{ bits}$ and $H(V|U)=$ $-\sum_{j=1}^{5}\sum_{i=1}^{4} p(u_i,v_j)\log p(v_j|u_i)$ $=1.6132 \text{ bits}$.

Correspondingly, the semantic conditional entropies are $H_s(\tilde{U}|V)=-\sum_{i_s=1}^{2}\sum_{j=1}^{5} p(\tilde{u}_{i_s},v_j)$\\$\cdot\log p(\tilde{u}_{i_s}|v_j)$ $=0.6623 \text{ sebits}$ and $H_s(\tilde{V}|U)$ $=-\sum_{j_s=1}^{4}\sum_{i=1}^{4} p(\tilde{v}_{j_s},u_i)\log p(\tilde{v}_{j_s}|u_i)$ $=1.4755 \text{ sebits}$. Thus we have $H_s(\tilde{U}|V)=0.6623 \text{ sebits}<H(U|V)=1.3377 \text{ bits}$ and $H_s(\tilde{V}|U)=1.4755 \text{ sebits}<H(V|U)=1.6132 \text{ bits}$.

The entropies of random variables $U$ and $V$ are calculated as $H(U)=H(0.3,0.3,0.2,0.2)=1.971 \text{ bits}$ and $H(V)=H(0.3,0.2,0.2,0.2,0.1)=2.2464 \text{ bits}$ respectively. Then, the semantic entropies of $\tilde{U}$ and $\tilde{V}$ are $H_s(\tilde{U})=H_s(0.6,0.4)=0.971 \text{ sebits}$ and $H_s(\tilde{V})=H_s(0.3,0.2,0.2,0.3)=1.971 \text{ sebits}$ respectively.
So it follows that $H_s(\tilde{U},\tilde{V})=2.7087 \text{ sebits}<H(U,V)=3.5842 \text{ bits}$, $H_s(\tilde{V})+H_s(\tilde{U}|V)=2.6633 \text{ sebits}<H_s(\tilde{U},\tilde{V})=2.7087 \text{ sebits}<H(V)+H_s(\tilde{U}|V)=2.9087 \text{ sebits}$, and $H_s(\tilde{U})+H_s(\tilde{V}|U)=2.4465 \text{ sebits}<H_s(\tilde{U},\tilde{V})=2.7087 \text{ sebits}<H(U)+H_s(\tilde{V}|U)=3.4465 \text{ sebits}$.

\begin{table*}[htbp]
\centering
\caption{Conditional synonymous mapping of semantic variable $\tilde{U}|V$.} \label{CSmapping_SRV}
\begin{tabular}{|c|c|c|c|c|c|}
  \hline $\tilde{U}|V$     & $\tilde{u}_1|v_1\to \{u_1,u_2\}|v_1 $ &  $\tilde{u}_1|v_2\to \{u_1,u_2\}|v_2$ & $\tilde{u}_1|v_3\to \{u_1,u_2\}|v_3$ & $\tilde{u}_1|v_4\to \{u_1,u_2\}|v_4$ & $\tilde{u}_1|v_5\to \{u_1,u_2\}|v_5$ \\
  \hline $p(\tilde{u}|v)$         &   $0.5$       &       $0.75$       &    $1$      &       $0.5$         &       $0$         \\
  \hline $\tilde{U}|V$     & $\tilde{u}_2|v_1\to \{u_3,u_4\}|v_1 $ &  $\tilde{u}_2|v_2\to \{u_3,u_4\}|v_2$ & $\tilde{u}_2|v_3\to \{u_3,u_4\}|v_3$ & $\tilde{u}_2|v_4\to \{u_3,u_4\}|v_4$ & $\tilde{u}_2|v_5\to \{u_3,u_4\}|v_5$ \\
  \hline $p(\tilde{u}|v)$         &     $0.5$       &     $0.25$       &        $0$       &       $0.5$         &      $1$     \\
  \hline
\end{tabular}
\end{table*}

\begin{table*}[htbp]
\centering
\caption{Conditional synonymous mapping of semantic variable $\tilde{V}|U$.} \label{CSmapping_SRV2}
\begin{tabular}{|c|c|c|c|c|}
  \hline $\tilde{V}|U$     & $\tilde{v}_1|u_1\to \{v_1\}|u_1 $ &  $\tilde{v}_1|u_2\to \{v_1\}|u_2$ & $\tilde{v}_1|u_3\to \{v_1\}|u_3$ & $\tilde{v}_1|u_4\to \{v_1\}|u_4$  \\
  \hline $p(\tilde{v}|u)$         &   $1/6$       &       $1/3$       &    $0.5$      &       $0.25$           \\
  \hline $\tilde{V}|U$     & $\tilde{v}_2|u_1\to \{v_2\}|u_1 $ &  $\tilde{v}_2|u_2\to \{v_2\}|u_2$ & $\tilde{v}_2|u_3\to \{v_2\}|u_3$ & $\tilde{v}_2|u_4\to \{v_2\}|u_4$  \\
  \hline $p(\tilde{v}|u)$         &   $1/3$       &       $1/6$       &    $0.25$      &       $0$             \\
  \hline $\tilde{V}|U$     & $\tilde{v}_3|u_1\to \{v_3\}|u_1 $ &  $\tilde{v}_3|u_2\to \{v_3\}|u_2$ & $\tilde{v}_3|u_3\to \{v_3\}|u_3$ & $\tilde{v}_3|u_4\to \{v_3\}|u_4$  \\
  \hline $p(\tilde{v}|u)$         &   $0.5$       &       $1/6$       &    $0$           &       $0$             \\
  \hline $\tilde{V}|U$     & $\tilde{v}_4|u_1\to \{v_4,v_5\}|u_1 $ &  $\tilde{v}_4|u_2\to \{v_4,v_5\}|u_2$ & $\tilde{v}_4|u_3\to \{v_4,v_5\}|u_3$ & $\tilde{v}_4|u_4\to \{v_4,v_5\}|u_4$  \\
  \hline $p(\tilde{v}|u)$         &   $0$         &       $1/3$        &    $0.25$      &       $0.75$         \\
  \hline
\end{tabular}
\end{table*}

\begin{table*}[htbp]
\centering
\caption{Synonymous mappings of semantic variables $\tilde{U}$ and $\tilde{V}$.} \label{Smapping_SRV}
\begin{tabular}{|c|c|c|}
  \hline $f_{u}$               & $\tilde{u}_1\to\{u_1,u_2\}$  &  $\tilde{u}_2\to\{u_3,u_4\}$  \\
  \hline $p(\tilde{u})$           &           $0.6$                        &            $0.4$         \\
  \hline
\end{tabular}
\begin{tabular}{|c|c|c|c|c|}
  \hline $f_{v}$         & $\tilde{v}_1 \to\{v_1\}$  & $\tilde{v}_2 \to\{v_2\}$ & $\tilde{v}_3 \to\{v_3\}$ & $\tilde{v}_4 \to\{v_4,v_5\}$ \\
  \hline $p(\tilde{v})$    &               $0.3$               &            $0.2$                   &                $0.2$              &               $0.3$ \\
  \hline
\end{tabular}
\end{table*}

\end{example}
\section{Semantic Relative Entropy and Mutual Information}
\label{section_IV}
In this section, we apply the synonymous mapping to define semantic relative entropy and semantic mutual information. Three semantic relative entropies are measures of the distance between two semantic/syntactic variables. We introduce two measures, such as up/down semantic mutual information to evaluate the reduction in the semantic information of one variable due to the knowledge of the other variable.
\subsection{Semantic Relative Entropy}
In classic information theory, the relative entropy or Kullback Leibler distance $D\left(p\|q\right)=\sum_{u\in\mathcal{U}}p(u)\log\frac{p(u)}{q(u)}$ is used to measure the difference between two probability mass function $p(u)$ and $q(u)$. Similarly, by the synonymous mapping, we can define the semantic relative entropy as following.

\begin{definition}
Given the semantic variable $\tilde {\mathcal{U}}$ and the random variable $\mathcal{U}$ with two probability mass function $p(u)$ and $q(u)$, under the synonymous mapping $f: \tilde {\mathcal{U}}\to  {\mathcal{U}}$, the full semantic relative entropy is defined as
\begin{equation}
D_s\left(p_s\|q_s\right)=\sum_{i_s=1}^{\tilde{N}}\sum_{u_i\in \mathcal{U}_{i_s}}p(u_i) \log \frac{\sum_{u_i\in \mathcal{U}_{i_s}} p(u_i)}{\sum_{u_i\in \mathcal{U}_{i_s}} q(u_i)}.
\end{equation}
Two partial semantic relative entropies are defined as
\begin{equation}
D_s\left(p_s\|q\right)=\sum_{i_s=1}^{\tilde{N}}\sum_{u_i\in \mathcal{U}_{i_s}}p(u_i) \log \frac{\sum_{u_i\in \mathcal{U}_{i_s}} p(u_i)}{q(u_i)},
\end{equation}
and
\begin{equation}
D_s\left(p\|q_s\right)=\sum_{i_s=1}^{\tilde{N}}\sum_{u_i\in \mathcal{U}_{i_s}}p(u_i) \log \frac{p(u_i)}{\sum_{u_i\in \mathcal{U}_{i_s}} q(u_i)}.
\end{equation}
\end{definition}
Hereafter, in order to simplify the presentation, we use $p_s=\sum_{u_i\in \mathcal{U}_{i_s}} p(u_i)$ and $q_s=\sum_{u_i\in \mathcal{U}_{i_s}} q(u_i)$ to stand for two different probability distributions of semantic variable $\tilde{U}$.

Similar to the relative entropy, we have the following basic inequality for the semantic relative entropy.
\begin{theorem}\label{theorem3}
(Semantic information inequality): Let $p(u)$, $q(u)$, $u\in\mathcal{U}$, be two probability mass function. Given the synonymous mapping $f$, we have
\begin{equation}
\left\{\begin{aligned}
&D_s(p_s\|q_s)\geq 0\\
&D_s(p_s\|q)\geq D_s(p_s\|p)\\
&D_s(p\|q_s)\geq D_s(p\|p_s)\\
\end{aligned}
\right.
\end{equation}
with equality if and only if
\begin{equation}
\left\{\begin{aligned}
\sum_{u_i \in \mathcal{U}_{i_s}}p(u_i)&=\sum_{u_i \in \mathcal{U}_{i_s}}q(u_i), &\text{for all } \mathcal{U}_{i_s} \\
p(u_i)&=q(u_i), &\text{for all } \mathcal{U}_{i_s} \\
\sum_{u_i \in \mathcal{U}_{i_s}}p(u_i)&=\sum_{u_i \in \mathcal{U}_{i_s}}q(u_i), &\text{for all } \mathcal{U}_{i_s} \\
\end{aligned}\right.
\end{equation}
\end{theorem}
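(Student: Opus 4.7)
The plan is to dispatch the three inequalities in sequence, reducing each one either to the classical Gibbs inequality on $\mathcal{U}$ or to the first inequality of the theorem. For $D_s(p_s\|q_s)\ge 0$, I observe that after the synonymous classes are aggregated the quantity collapses to
\[
D_s(p_s\|q_s)=\sum_{i_s=1}^{\tilde{N}} p_s \log\frac{p_s}{q_s},
\]
which is the ordinary Kullback--Leibler divergence between the two induced probability mass functions $\{p_s\}$ and $\{q_s\}$ on the quotient set $\mathcal{U}/f$. Since $\sum_{i_s} p_s=\sum_{i_s} q_s=1$, the classical information inequality (a direct application of Jensen's inequality to the convex function $-\log$) yields the non-negativity, with equality iff $p_s=q_s$ for every synonymous class $\mathcal{U}_{i_s}$.

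For $D_s(p_s\|q)\ge D_s(p_s\|p)$, I form the difference directly from the definitions; the outer double-summation is identical in both terms, the factor $p_s$ inside the logarithm cancels, and what remains is
\[
D_s(p_s\|q)-D_s(p_s\|p)=\sum_{i_s=1}^{\tilde{N}}\sum_{u_i\in\mathcal{U}_{i_s}} p(u_i)\log\frac{p(u_i)}{q(u_i)}=D(p\|q)\ge 0,
\]
using that the synonymous sets partition $\mathcal{U}$ to collapse the outer sum. Classical KL non-negativity then forces equality iff $p(u_i)=q(u_i)$ on every $\mathcal{U}_{i_s}$, matching the theorem's stated condition.

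The third inequality is handled by the same bookkeeping: when $D_s(p\|p_s)$ is subtracted from $D_s(p\|q_s)$, the factor $p(u_i)$ in the numerator of each log cancels, leaving
\[
D_s(p\|q_s)-D_s(p\|p_s)=\sum_{i_s=1}^{\tilde{N}}\sum_{u_i\in\mathcal{U}_{i_s}} p(u_i)\log\frac{p_s}{q_s}=\sum_{i_s=1}^{\tilde{N}} p_s\log\frac{p_s}{q_s}=D_s(p_s\|q_s)\ge 0
\]
by step (i), with equality iff $p_s=q_s$ for every $i_s$. The main (mild) obstacle is purely clerical rather than structural: tracking the inner versus outer summations when splitting the logarithm ratios, and making sure the equality conditions emitted by Jensen or Gibbs at each reduction step are translated back into the correct conditions on the original $p$, $q$, and the partition $\{\mathcal{U}_{i_s}\}$. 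No machinery beyond Jensen's inequality and the algebraic cancellations above is required.
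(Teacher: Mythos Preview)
Your proposal is correct and takes essentially the same approach as the paper: both arguments reduce the second inequality to $D(p\|q)\ge 0$ and the third to $D_s(p_s\|q_s)\ge 0$ via the same logarithm-splitting algebra, and both handle the first via Jensen. Your observation that $D_s(p_s\|q_s)$ is literally the classical KL divergence between the induced distributions on $\mathcal{U}/f$ is a slightly cleaner phrasing than the paper's direct Jensen computation, but the content is identical.
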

\begin{proof}
For the first inequality, by Jensen's inequality, we can write
\begin{equation}
\begin{aligned}
-D_s\left(p_s\|q_s\right)&=-\sum_{i_s=1}^{\tilde{N}}\sum_{u_i\in \mathcal{U}_{i_s}}p(u_i) \log \frac{\sum_{u_i\in \mathcal{U}_{i_s}} p(u_i)}{\sum_{u_i\in \mathcal{U}_{i_s}} q(u_i)}\\
&=\sum_{i_s=1}^{\tilde{N}}\sum_{u_i\in \mathcal{U}_{i_s}}p(u_i) \log \frac{\sum_{u_i\in \mathcal{U}_{i_s}} q(u_i)}{\sum_{u_i\in \mathcal{U}_{i_s}} p(u_i)}\\
&\leq \log \sum_{i_s=1}^{\tilde{N}}\sum_{u_i\in \mathcal{U}_{i_s}}p(u_i)  \frac{\sum_{u_i\in \mathcal{U}_{i_s}} q(u_i)}{\sum_{u_i\in \mathcal{U}_{i_s}} p(u_i)}\\
&=\log \sum_{i_s=1}^{\tilde{N}}  {\sum_{u_i\in \mathcal{U}_{i_s}} q(u_i)}\\
&=\log 1 =0.
\end{aligned}
\end{equation}

For the second inequality, we can write
\begin{equation}
\begin{aligned}
D_s\left(p_s\|q\right)&=\sum_{i_s=1}^{\tilde{N}}\sum_{u_i\in \mathcal{U}_{i_s}}p(u_i) \log \frac{\sum_{u_i\in \mathcal{U}_{i_s}} p(u_i)}{q(u_i)}\\
&=\sum_{i_s=1}^{\tilde{N}}\sum_{u_i\in \mathcal{U}_{i_s}}p(u_i) \log \frac{\sum_{u_i\in \mathcal{U}_{i_s}} p(u_i)}{p(u_i)}\\
&+\sum_{i_s=1}^{\tilde{N}}\sum_{u_i\in \mathcal{U}_{i_s}}p(u_i) \log \frac{ p(u_i)}{q(u_i)}\\
&\geq \sum_{i_s=1}^{\tilde{N}}\sum_{u_i\in \mathcal{U}_{i_s}}p(u_i) \log \frac{\sum_{u_i\in \mathcal{U}_{i_s}} p(u_i)}{p(u_i)}\\
&=D_s(p_s\|p)\geq 0.
\end{aligned}
\end{equation}

For the third inequality, we further derive as follows
\begin{equation}
\begin{aligned}
D_s\left(p\|q_s\right)&=\sum_{i_s=1}^{\tilde{N}}\sum_{u_i\in \mathcal{U}_{i_s}}p(u_i) \log \frac{p(u_i)}{\sum_{u_i\in \mathcal{U}_{i_s}} q(u_i)}\\
&=\sum_{i_s=1}^{\tilde{N}}\sum_{u_i\in \mathcal{U}_{i_s}}p(u_i) \log \frac{p(u_i)}{\sum_{u_i\in \mathcal{U}_{i_s}} p(u_i)}\\
&+\sum_{i_s=1}^{\tilde{N}}\sum_{u_i\in \mathcal{U}_{i_s}}p(u_i) \log \frac{\sum_{u_i\in \mathcal{U}_{i_s}} p(u_i)}{\sum_{u_i\in \mathcal{U}_{i_s}} q(u_i)}\\
&\geq \sum_{i_s=1}^{\tilde{N}}\sum_{u_i\in \mathcal{U}_{i_s}}p(u_i) \log \frac{p(u_i)}{\sum_{u_i\in \mathcal{U}_{i_s}} p(u_i)}\\
&=D_s(p\|p_s).
\end{aligned}
\end{equation}

Note that $D_s\left(p\|q_s\right)$ may be negative. In the practical application, we can take the non-negative value, that is, $\left(D_s\left(p\|q_s\right)\right)^{+}$.
\end{proof}

\begin{corollary}
\begin{equation}
D_s(p\|q_s)\leq D_s(p_s\|q_s) \leq D_s(p_s\|q).
\end{equation}
\end{corollary}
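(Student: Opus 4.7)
The plan is to prove the two inequalities separately by pairing up terms indexed by the same synonymous set $\mathcal{U}_{i_s}$ and exploiting the elementary fact that, since all $p(u_i),q(u_i)\geq 0$, one always has $p(u_i)\leq p_s=\sum_{u_i\in\mathcal{U}_{i_s}}p(u_i)$ and $q(u_i)\leq q_s=\sum_{u_i\in\mathcal{U}_{i_s}}q(u_i)$.  So no deep tool such as Jensen is really needed; a term-by-term sign argument should suffice.

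First I would handle the right-hand inequality $D_s(p_s\|q_s)\leq D_s(p_s\|q)$.  Subtracting the definitions and using $\sum_{u_i\in\mathcal{U}_{i_s}}p(u_i)=p_s$ to rewrite the $D_s(p_s\|q_s)$ term as a double sum over the same index set, the difference collapses to
\begin{equation}
D_s(p_s\|q)-D_s(p_s\|q_s)=\sum_{i_s=1}^{\tilde{N}}\sum_{u_i\in\mathcal{U}_{i_s}}p(u_i)\log\frac{q_s}{q(u_i)}.
\end{equation}
Since $q(u_i)\leq q_s$ for every $u_i\in\mathcal{U}_{i_s}$, each logarithm is nonnegative, hence the whole sum is nonnegative.  (Equivalently, one may view the inner sum as $p_s$ times a cross-entropy between the normalized distributions $p(u_i)/p_s$ and $q(u_i)/q_s$ on $\mathcal{U}_{i_s}$.)

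Next, for the left-hand inequality $D_s(p\|q_s)\leq D_s(p_s\|q_s)$, I would carry out the symmetric manipulation:
\begin{equation}
D_s(p_s\|q_s)-D_s(p\|q_s)=\sum_{i_s=1}^{\tilde{N}}\sum_{u_i\in\mathcal{U}_{i_s}}p(u_i)\log\frac{p_s}{p(u_i)},
\end{equation}
and again each logarithm is nonnegative because $p(u_i)\leq p_s$.  Chaining the two gives the desired two-sided bound.

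There is essentially no obstacle here; the only point that deserves care is the bookkeeping in turning the single sum $\sum_{i_s}p_s\log(p_s/q_s)$ into the double sum $\sum_{i_s}\sum_{u_i\in\mathcal{U}_{i_s}}p(u_i)\log(p_s/q_s)$ before subtracting, so that the cancellation that produces the nonnegative residual can be read off directly.  It is worth noting, as a sanity check consistent with Theorem \ref{theorem3}, that $D_s(p\|q_s)$ may itself be negative, but the corollary only asserts its ordering against the other two quantities, which the above sign argument establishes unconditionally.
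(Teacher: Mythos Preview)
Your proof is correct and takes essentially the same approach as the paper: both compute the two differences term by term, reduce each to $\sum_{i_s}\sum_{u_i\in\mathcal{U}_{i_s}}p(u_i)\log(q_s/q(u_i))$ and $\sum_{i_s}\sum_{u_i\in\mathcal{U}_{i_s}}p(u_i)\log(p_s/p(u_i))$ respectively, and conclude by the elementary observation that $q(u_i)\leq q_s$ and $p(u_i)\leq p_s$. The only cosmetic difference is that the paper writes the subtractions in the opposite order (bounding each difference above by $\log 1=0$), but the underlying computation is identical.
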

\begin{proof}
For the left inequality, we can write
\begin{equation}
\begin{aligned}
&D_s(p\|q_s)-D_s(p_s\|q_s) \\
&=\sum_{i_s=1}^{\tilde{N}}\sum_{u_i\in \mathcal{U}_{i_s}}p(u_i) \log \frac {p(u_i)}{\sum_{u_i\in \mathcal{U}_{i_s}} q(u_i)}\\
&-\sum_{i_s=1}^{\tilde{N}}\sum_{u_i\in \mathcal{U}_{i_s}}p(u_i) \log \frac{\sum_{u_i\in \mathcal{U}_{i_s}}p(u_i)}{\sum_{u_i\in \mathcal{U}_{i_s}} q(u_i)}\\
&=\sum_{i_s=1}^{\tilde{N}}\sum_{u_i\in \mathcal{U}_{i_s}}p(u_i) \log \frac{p(u_i)}{\sum_{u_i\in \mathcal{U}_{i_s}} p(u_i)}\\
&\leq \sum_{i_s=1}^{\tilde{N}}\sum_{u_i\in \mathcal{U}_{i_s}}p(u_i) \log 1=0.\\
\end{aligned}
\end{equation}

For the right inequality, we can write
\begin{equation}
\begin{aligned}
&D_s(p_s\|q_s)-D_s(p_s\|q)\\
&=\sum_{i_s=1}^{\tilde{N}}\sum_{u_i\in \mathcal{U}_{i_s}}p(u_i) \log \frac{\sum_{u_i\in \mathcal{U}_{i_s}}p(u_i)}{\sum_{u_i\in \mathcal{U}_{i_s}} q(u_i)}\\
&-\sum_{i_s=1}^{\tilde{N}}\sum_{u_i\in \mathcal{U}_{i_s}}p(u_i) \log \frac{\sum_{u_i\in \mathcal{U}_{i_s}} p(u_i)}{q(u_i)}\\
&=\sum_{i_s=1}^{\tilde{N}}\sum_{u_i\in \mathcal{U}_{i_s}}p(u_i) \log \frac{q(u_i)}{\sum_{u_i\in \mathcal{U}_{i_s}} q(u_i)}\\
&\leq \sum_{i_s=1}^{\tilde{N}}\sum_{u_i\in \mathcal{U}_{i_s}}p(u_i) \log 1=0.\\
\end{aligned}
\end{equation}
\end{proof}

\begin{corollary}
\begin{equation}
D_s(p\|q_s)\leq D(p\|q) \leq D_s(p_s\|q).
\end{equation}
\end{corollary}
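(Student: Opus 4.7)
The plan is to establish both inequalities by direct term-by-term comparison, exploiting the fact that each pointwise mass $q(u_i)$ (respectively $p(u_i)$) is bounded above by the corresponding aggregated synonymous mass $\sum_{u_i\in\mathcal{U}_{i_s}} q(u_i)$ (respectively $\sum_{u_i\in\mathcal{U}_{i_s}} p(u_i)$). So instead of invoking Jensen or the log-sum inequality, I expect everything to collapse to monotonicity of $\log$ applied to ratios that are provably $\geq 1$.

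For the left inequality $D_s(p\|q_s)\leq D(p\|q)$, I would form the difference
\begin{equation*}
D(p\|q)-D_s(p\|q_s)=\sum_{i_s=1}^{\tilde{N}}\sum_{u_i\in\mathcal{U}_{i_s}} p(u_i)\log\frac{p(u_i)}{q(u_i)}-\sum_{i_s=1}^{\tilde{N}}\sum_{u_i\in\mathcal{U}_{i_s}} p(u_i)\log\frac{p(u_i)}{\sum_{u_i\in\mathcal{U}_{i_s}}q(u_i)},
\end{equation*}
re-index the first sum according to the synonymous partition (which is legitimate since $\mathcal{U}=\bigcup_{i_s}\mathcal{U}_{i_s}$ is a disjoint union), cancel the $\log p(u_i)$ terms, and reduce the difference to $\sum_{i_s}\sum_{u_i\in\mathcal{U}_{i_s}} p(u_i)\log\frac{\sum_{u_i\in\mathcal{U}_{i_s}}q(u_i)}{q(u_i)}$. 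Since every summand in the denominator satisfies $q(u_i)\leq\sum_{u_i\in\mathcal{U}_{i_s}}q(u_i)$, the logarithm is non-negative termwise, and non-negativity of the whole expression is immediate.

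For the right inequality $D(p\|q)\leq D_s(p_s\|q)$, I would apply the symmetric manoeuvre, forming
\begin{equation*}
D_s(p_s\|q)-D(p\|q)=\sum_{i_s=1}^{\tilde{N}}\sum_{u_i\in\mathcal{U}_{i_s}} p(u_i)\log\frac{\sum_{u_i\in\mathcal{U}_{i_s}}p(u_i)}{p(u_i)},
\end{equation*}
where the $\log q(u_i)$ terms cancel after re-indexing, and the remaining log is again non-negative because $p(u_i)\leq\sum_{u_i\in\mathcal{U}_{i_s}}p(u_i)$. Equality in each case holds precisely when every synonymous set is a singleton, i.e., the synonymous mapping $f$ is a bijection, in which case all three relative entropies collapse to the classical one.

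I do not expect any essential obstacle; the only care required is bookkeeping with the nested indices $(i_s, u_i\in\mathcal{U}_{i_s})$ versus the flat index $i$, so that the cancellations are justified by the disjoint-union structure of the partition $\mathcal{U}=\bigcup_{i_s=1}^{\tilde{N}}\mathcal{U}_{i_s}$ induced by $f$. No convexity tool beyond monotonicity of the logarithm is needed, which is why the chain $D_s(p\|q_s)\leq D(p\|q)\leq D_s(p_s\|q)$ can be read as a direct quantitative refinement of the preceding corollary rather than a separate analytic argument.
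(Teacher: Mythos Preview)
Your proposal is correct and essentially identical to the paper's own proof: both arguments form the same difference, cancel the common $\log$-term, and reduce to the pointwise inequality $q(u_i)\leq\sum_{u_i\in\mathcal{U}_{i_s}}q(u_i)$ (left side) and $p(u_i)\leq\sum_{u_i\in\mathcal{U}_{i_s}}p(u_i)$ (right side), using only monotonicity of $\log$. The only cosmetic difference is that the paper writes the difference with the opposite sign and bounds it above by $\log 1=0$, whereas you bound the negation below by $0$.
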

\begin{proof}
For the left inequality, we can write
\begin{equation}
\begin{aligned}
&D_s(p\|q_s)-D(p\|q) \\
&=\sum_{i_s=1}^{\tilde{N}}\sum_{u_i\in \mathcal{U}_{i_s}}p(u_i) \log \frac {p(u_i)}{\sum_{u_i\in \mathcal{U}_{i_s}} q(u_i)}\\
&-\sum_{i_s=1}^{\tilde{N}}\sum_{u_i\in \mathcal{U}_{i_s}}p(u_i) \log \frac{p(u_i)}{q(u_i)}\\
&=\sum_{i_s=1}^{\tilde{N}}\sum_{u_i\in \mathcal{U}_{i_s}}p(u_i) \log \frac{q(u_i)}{\sum_{u_i\in \mathcal{U}_{i_s}} q(u_i)}\\
&\leq \sum_{i_s=1}^{\tilde{N}}\sum_{u_i\in \mathcal{U}_{i_s}}p(u_i) \log 1=0.\\
\end{aligned}
\end{equation}

For the right inequality, we can write
\begin{equation}
\begin{aligned}
&D_s(p\|q)-D_s(p_s\|q)\\
&=\sum_{i_s=1}^{\tilde{N}}\sum_{u_i\in \mathcal{U}_{i_s}}p(u_i) \log \frac{p(u_i)}{q(u_i)}\\
&-\sum_{i_s=1}^{\tilde{N}}\sum_{u_i\in \mathcal{U}_{i_s}}p(u_i) \log \frac{\sum_{u_i\in \mathcal{U}_{i_s}} p(u_i)}{q(u_i)}\\
&=\sum_{i_s=1}^{\tilde{N}}\sum_{u_i\in \mathcal{U}_{i_s}}p(u_i) \log \frac{p(u_i)}{\sum_{u_i\in \mathcal{U}_{i_s}} p(u_i)}\\
&\leq \sum_{i_s=1}^{\tilde{N}}\sum_{u_i\in \mathcal{U}_{i_s}}p(u_i) \log 1=0.\\
\end{aligned}
\end{equation}
\end{proof}

\begin{theorem}\label{convexity_crossentropy}
$D_s(p_s\|q_s)$, $D_s(p_s\|q)$, and $D_s(p\|q_s)$ are convex in the pair $(p,q)$. Equivalently, given $(p_1,q_1)$ and $(p_2,q_2)$ are two pairs of probability mass functions, for all $0\leq\theta\leq1$, we have
\begin{equation}
\left\{\begin{aligned}
&D_s(\theta p_{s,1} + (1-\theta) p_{s,2}\| \theta q_{s,1} + (1-\theta) q_{s,2})\\
& \leq \theta D_s(p_{s,1}\|q_{s,1}) +(1-\theta) D_s(p_{s,2}\|q_{s,2})\\
&D_s(\theta p_{s,1} + (1-\theta) p_{s,2}\| \theta q_{1} + (1-\theta) q_{2})\\
& \leq \theta D_s(p_{s,1}\|q_{1}) +(1-\theta) D_s(p_{s,2}\|q_{2})\\
&D_s(\theta p_{1} + (1-\theta) p_{2}\| \theta q_{s,1} + (1-\theta) q_{s,2})\\
& \leq \theta D_s(p_{1}\|q_{s,1}) +(1-\theta) D_s(p_{2}\|q_{s,2})\\
\end{aligned}\right.
\end{equation}
\end{theorem}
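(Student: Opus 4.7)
The plan is to prove each of the three convexity inequalities by invoking the joint convexity of the elementary kernel $g(a,b)=a\log(a/b)$ on $\{(a,b):a\geq 0,\; b>0\}$, equivalently the log-sum inequality $\sum_i a_i\log(a_i/b_i) \geq \bigl(\sum_i a_i\bigr)\log\bigl(\sum_i a_i/\sum_i b_i\bigr)$. This is the same template used to establish joint convexity of classical Kullback--Leibler divergence; the synonymous mapping $f$ affects only how the arguments of $g$ are aggregated before $g$ is applied, and how the summation is organized afterward.

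For $D_s(p_s\|q_s)$, the key remark is that $p_s(\tilde u_{i_s})=\sum_{u_i\in\mathcal{U}_{i_s}}p(u_i)$ and $q_s(\tilde u_{i_s})=\sum_{u_i\in\mathcal{U}_{i_s}}q(u_i)$ depend linearly on $p$ and $q$ respectively. Hence $D_s(p_s\|q_s)$ is exactly the standard KL divergence on the quotient alphabet $\tilde{\mathcal U}$, evaluated at the linear images of $(p,q)$. Joint convexity then follows immediately from joint convexity of standard KL together with the fact that composition with a linear map preserves joint convexity.

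For $D_s(p\|q_s)$, I would write the divergence as the sum over $u_i$ of $g(p(u_i),q_s)$, where $q_s$ is a linear functional of $q$ that happens to be shared by all $u_i$ lying in the same synonymous set. Then I apply joint convexity of $g$ term by term: given $p=\theta p_1+(1-\theta)p_2$ and $q=\theta q_1+(1-\theta)q_2$, invoke the log-sum inequality at each $u_i$ with the two components $\theta p_k(u_i)$ in the numerator slot and $\theta q_{s,k}$ in the denominator slot, absorb the scalar $\theta$ (resp.\ $1-\theta$) by the homogeneity relation $g(\lambda a,\lambda b)=\lambda g(a,b)$, and then sum over $u_i$ to recover the claimed inequality.

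The main obstacle will be $D_s(p_s\|q)$, because now the numerator $p_s$ is aggregated over $\mathcal U_{i_s}$ while the denominator $q(u_i)$ is not, so the term-by-term log-sum manipulation no longer lines up. The plan is to first group summands inside each $\mathcal U_{i_s}$ and rewrite the inner contribution as $p_s\log p_s-\sum_{u_i\in\mathcal U_{i_s}}p(u_i)\log q(u_i)$, then handle the two parts jointly: convexity of $x\log x$ composed with the linear map $p\mapsto p_s$ controls the first piece, while a further log-sum step applied to the two convex-combination components $(\theta p_k,\theta q_k)$ simultaneously, with the summation index ranging over $\mathcal U_{i_s}$, is needed to control the second piece jointly in $(p,q)$. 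I expect this reordering of the inner summation, together with careful bookkeeping of which terms are linear and which are jointly convex in $(p,q)$, to be the delicate step; once it is in place, summing over $i_s$ closes the argument in the same style as the other two cases.
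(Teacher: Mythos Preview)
Your treatment of $D_s(p_s\|q_s)$ and $D_s(p\|q_s)$ is correct and arguably cleaner than the paper's: recognizing each as a sum of terms $g(a,b)=a\log(a/b)$ evaluated at \emph{linear} images of $(p,q)$ gives joint convexity immediately, whereas the paper writes out the difference explicitly and bounds it via Jensen's inequality on $\log$. The two arguments are equivalent in content; yours makes the mechanism more visible.

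Your plan for $D_s(p_s\|q)$, however, cannot be completed, because that inequality is false as stated. Take $\tilde N=1$ (a single synonymous class) with $|\mathcal U|=2$. Then $p_s\equiv 1$ and
\[
D_s(p_s\|q)=\sum_{i} p(u_i)\log\frac{1}{q(u_i)}
\]
is just the cross-entropy, which is not jointly convex in $(p,q)$. Concretely, with $(p_1,q_1)=((0.9,0.1),(0.9,0.1))$ and $(p_2,q_2)=((0.1,0.9),(0.1,0.9))$ one has $D_s(p_{s,k}\|q_k)=H(0.9,0.1)\approx 0.469$ for $k=1,2$, while the midpoint $p=q=(\tfrac12,\tfrac12)$ gives $D_s=\log 2=1>0.469$. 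The split you propose, $P\log P-\sum_i p_i\log q_i$, exposes exactly why: the cross term $-p_i\log q_i$ has an indefinite Hessian in $(p_i,q_i)$, so no rearrangement of the inner summation can rescue joint convexity. The paper itself only carries out the $D_s(p_s\|q_s)$ case and asserts the other two follow ``by similar methods''; your instinct that $D_s(p_s\|q)$ is the obstacle was right, but the obstacle is not a missing technical step --- the claimed inequality simply does not hold in general.
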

The proof is referred to Appendix \ref{proof_convexity_crossentropy}.

\begin{remark}
In neural network model, relative entropy or cross entropy is an important cost function used to training. For some deep learning application involved semantic information, such as clustering, classification, recognition, we can use semantic relative entropy to alternate the classic counterpart so as to further improve the system performance.
\end{remark}

\subsection{Semantic Mutual Information}
We now introduce up semantic mutual information, which is a partial relative entropy to indicate the large reduction in the semantic information of one variable due to the knowledge of the other.
\begin{definition}\label{definition7}
Consider two semantic variables $\tilde{U}$ and $\tilde{V}$ and two associated random variables $U$ and $V$ with a joint probability mass function $p(u,v)$ and marginal probability mass function $p(u)$ and $p(v)$. Given the jointly synonymous mapping $f_{uv}: \tilde {\mathcal{U}}\times \tilde{\mathcal{V}}\to \mathcal{U}\times{\mathcal{V}}$, the up semantic mutual information $I^s(\tilde{U};\tilde{V})$ is the partial entropy between the joint distribution $p_s\left(u,v\right)$ and the product distribution $p(u)p(v)$, i.e.,
\begin{equation}
\begin{aligned}
I^s(\tilde{U};\tilde{V})=&-\sum_{i_s=1}^{{\tilde{N}}_u}\sum_{j_s=1}^{{\tilde{N}}_v}\sum_{(u_i,v_j)\in \mathcal{U}_{i_s}\times \mathcal{V}_{j_s}}p\left(u_i,v_j\right) \\
&\cdot \log \frac{p\left(u_i\right)p\left(v_j\right)}{\sum_{(u_i,v_j) \in \mathcal{U}_{i_s} \times \mathcal{V}_{j_s}}p\left(u_i,v_j\right)}\\
=&D_s\left(p_s\left(u,v\right)\|p(u)p(v)\right)\\
=&H(U)+H(V)-H_s(\tilde{U},\tilde{V}).
\end{aligned}
\end{equation}

Similarly, the down semantic mutual information $I_s(\tilde{U};\tilde{V})$ is the partial entropy between the joint distribution $p\left(u,v\right)$ and the product distribution $p_s(u)p_s(v)$, i.e.,
\begin{equation}
\begin{aligned}
I_s (\tilde{U};\tilde{V})=&-\sum_{i_s=1}^{{\tilde{N}}_u}\sum_{j_s=1}^{{\tilde{N}}_v}\sum_{(u_i,v_j)\in \mathcal{U}_{i_s}\times \mathcal{V}_{j_s}}p\left(u_i,v_j\right)\\
 &\cdot \log \frac{\sum_{u_i \in \mathcal{U}_{i_s}}p\left(u_i\right) \sum_{v_j \in \mathcal{V}_{j_s}}p\left(v_j\right)}{p\left(u_i,v_j\right)}\\
=&D_s\left(p\left(u,v\right)\|p_s(u)p_s(v)\right)\\
=&H_s(\tilde{U})+H_s(\tilde{V})-H(U,V).
\end{aligned}
\end{equation}

Correspondingly, the full semantic mutual information $\tilde{I}_s(\tilde{U};\tilde{V})$ is defined as,
\begin{equation}
\begin{aligned}
\tilde{I}_s (\tilde{U};\tilde{V})=&-\sum_{i_s=1}^{{\tilde{N}}_u}\sum_{j_s=1}^{{\tilde{N}}_v}\sum_{(u_i,v_j)\in \mathcal{U}_{i_s}\times \mathcal{V}_{j_s}}p\left(u_i,v_j\right) \\
&\cdot \log \frac{\sum_{u_i \in \mathcal{U}_{i_s}}p\left(u_i\right) \sum_{v_j \in \mathcal{V}_{j_s}}p\left(v_j\right)}{\sum_{(u_i,v_j)\in \mathcal{U}_{i_s}\times \mathcal{V}_{j_s}}p\left(u_i,v_j\right)}\\
=&D_s\left(p_s\left(u,v\right)\|p_s(u)p_s(v)\right)\\
=&H_s(\tilde{U})+H_s(\tilde{V})-H_s(\tilde{U},\tilde{V}).
\end{aligned}
\end{equation}
\end{definition}

\begin{theorem}
(Non-negativity of semantic mutual information): For any two semantic variables, $\tilde{U}$, $\tilde{V}$, given the jointly synonymous mapping $f_{uv}$, we have
\begin{align}
\left\{\begin{aligned}
&\tilde{I}_s(\tilde{U};\tilde{V})\geq 0\\
&I^s (\tilde{U};\tilde{V} )\geq H(U,V)-H_s(\tilde{U},\tilde{V})\geq 0\\
&I_s (\tilde{U};\tilde{V} )\geq H_s(\tilde{U},\tilde{V})-H(U,V)
\end{aligned}
\right.
\end{align}
with equality if and only if %, $\tilde{U}$ and $\tilde{V}$, or $\tilde{U}$ and $\tilde{V}$ are independent.
\begin{equation}
\left\{\begin{aligned}
p_s\left(u_i\right) p_s\left(v_j\right)&=p_s\left(u_i,v_j\right)\\
p\left(u_i\right)p\left(v_j\right)&=p\left(u_i,v_j\right)\\
p_s\left(u_i\right) p_s\left(v_j\right)&=p_s\left(u_i,v_j\right)\\
\end{aligned}\right.
\end{equation}
\end{theorem}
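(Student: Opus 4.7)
My plan is to express each of the three quantities as an instance of the semantic relative entropy and then invoke the corresponding case of Theorem \ref{theorem3} with the common choice $q(u,v) = p(u)p(v)$. The pivotal observation is that on the product alphabet $\mathcal{U}\times\mathcal{V}$ equipped with the jointly synonymous mapping $f_{uv}$, marginalising this product $q$ over a product synonymous class factorises:
\begin{equation*}
\sum_{(u_i,v_j)\in \mathcal{U}_{i_s}\times\mathcal{V}_{j_s}} p(u_i)\,p(v_j) \;=\; p_s(u_i)\,p_s(v_j),
\end{equation*}
so the ``$q_s$'' of Theorem \ref{theorem3} coincides with the product of semantic marginals $p_s(u)p_s(v)$ appearing in Definition \ref{definition7}.

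With this identification the three bounds should fall out in parallel. For $\tilde{I}_s$ I would write $\tilde{I}_s = D_s(p_s(u,v)\|p_s(u)p_s(v))$, matching the form $D_s(p_s\|q_s)$, so the first inequality of Theorem \ref{theorem3} gives $\tilde{I}_s \geq 0$, with equality iff $p_s(u_i,v_j) = p_s(u_i)p_s(v_j)$. For $I^s$ I would write $I^s = D_s(p_s(u,v)\|p(u)p(v))$, matching the form $D_s(p_s\|q)$, so the second inequality of Theorem \ref{theorem3} yields $I^s \geq D_s(p_s\|p)$, and a direct expansion gives
\begin{equation*}
D_s(p_s(u,v)\|p(u,v)) = \sum p(u,v)\log\frac{p_s(u,v)}{p(u,v)} = H(U,V) - H_s(\tilde{U},\tilde{V}),
\end{equation*}
which is non-negative by Lemma \ref{lemma3}; this closes the chain $I^s \geq H(U,V)-H_s(\tilde{U},\tilde{V}) \geq 0$. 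For $I_s$, an analogous expansion gives $D_s(p(u,v)\|p_s(u,v)) = H_s(\tilde{U},\tilde{V}) - H(U,V)$, and the third inequality of Theorem \ref{theorem3} supplies $I_s = D_s(p\|q_s) \geq D_s(p\|p_s) = H_s(\tilde{U},\tilde{V}) - H(U,V)$.

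Equality conditions should transfer verbatim from Theorem \ref{theorem3} under the substitution $q = p(u)p(v)$: semantic independence $p_s(u_i,v_j) = p_s(u_i)p_s(v_j)$ for the first and third bounds, and syntactic independence $p(u_i,v_j) = p(u_i)p(v_j)$ for the middle bound. The only real obstacle is notational bookkeeping, that is, keeping the six distributions $p,\,p_s,\,q,\,q_s,\,p(u)p(v),\,p_s(u)p_s(v)$ straight across the three cases; the one substantive mathematical step beyond quoting Theorem \ref{theorem3} is the product-factorisation observation displayed above, together with the invocation of Lemma \ref{lemma3} to certify non-negativity of $H(U,V) - H_s(\tilde{U},\tilde{V})$ in the middle bound.
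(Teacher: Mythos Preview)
Your proposal is correct and follows essentially the same approach as the paper: identify each semantic mutual information as a semantic relative entropy on the product alphabet with $q(u,v)=p(u)p(v)$ and invoke the corresponding case of Theorem~\ref{theorem3}. The paper's proof is terser---it handles $\tilde{I}_s$ explicitly and dispatches the remaining two cases with ``similarly''---whereas you spell out the product-factorisation $\sum_{\mathcal{U}_{i_s}\times\mathcal{V}_{j_s}} p(u_i)p(v_j)=p_s(u_i)p_s(v_j)$ and the identification $D_s(p_s\|p)=H(U,V)-H_s(\tilde{U},\tilde{V})$ (with Lemma~\ref{lemma3} for its non-negativity), which are exactly the details the paper suppresses.
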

\begin{proof}
By using Theorem \ref{theorem3}, we obtain $\tilde{I}_s (\tilde{U};\tilde{V})=D_s\left(p_s\left(u,v\right)\|p_s(u)p_s(v)\right)\geq 0$. So the equality holds if and only if $p_s\left(u_i\right) p_s\left(v_j\right)=p_s\left(u_i,v_j\right)$, which means $\tilde{U}$ and $\tilde{V}$ are independent. Similarly, we can prove the other two inequalities.
\end{proof}

Note that the down semantic mutual information $I_s (\tilde{U};\tilde{V} )$ may be negative. Considering the practical case, we can set $(I_s (\tilde{U};\tilde{V} ))^{+}$.
\begin{theorem}\label{SemMI_concave_convex}
The up/down semantic mutual information is a concave function of $p(u)$ for fixed $p(v|u)$ and a convex function of $p(v|u)$ for fixed $p(x)$.
\end{theorem}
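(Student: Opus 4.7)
My plan is to prove the two parts separately, leveraging the joint convexity of semantic relative entropies established in Theorem \ref{convexity_crossentropy} for one direction, and a decomposition into classical entropic quantities for the other.

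First, for convexity in $p(v|u)$ with $p(u)$ held fixed: I would observe that $p(u,v)=p(u)p(v|u)$, $p(v)=\sum_{u}p(u)p(v|u)$, $p_s(u,v)=\sum_{(u',v')\in\mathcal{U}_{i_s}\times\mathcal{V}_{j_s}}p(u')p(v'|u')$, and $p_s(v)=\sum_{v\in\mathcal{V}_{j_s}}p(v)$ are all affine in $p(v|u)$, while $p(u)$ and $p_s(u)$ do not depend on $p(v|u)$ at all. Consequently, in both representations $I^s(\tilde{U};\tilde{V})=D_s(p_s(u,v)\,\|\,p(u)p(v))$ and $I_s(\tilde{U};\tilde{V})=D_s(p(u,v)\,\|\,p_s(u)p_s(v))$ the two arguments of $D_s$ are affine functions of $p(v|u)$, so the joint convexity of $D_s$ in the pair $(p,q)$ proved in Theorem \ref{convexity_crossentropy} immediately yields convexity of both $I^s$ and $I_s$ in $p(v|u)$.

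Next, for concavity in $p(u)$ with $p(v|u)$ held fixed, the direct convexity route fails because $p(u)p(v)$ and $p_s(u)p_s(v)$ become quadratic rather than affine in $p(u)$. My plan is instead to view $\tilde{U}$ and $\tilde{V}$ as deterministic functions of $U$ and $V$ via the synonymous mapping, which yields the classical identities $H_s(\tilde{U},\tilde{V})=H(U,V)-H(U,V\mid\tilde{U},\tilde{V})$ and $H_s(\tilde{U})=H(U)-H(U\mid\tilde{U})$. Substituting these into the definitions in Definition \ref{definition7} gives
\begin{equation}
I^s(\tilde{U};\tilde{V})=I(U;V)+H(U,V\mid\tilde{U},\tilde{V}),
\end{equation}
\begin{equation}
I_s(\tilde{U};\tilde{V})=I(U;V)-H(U\mid\tilde{U})-H(V\mid\tilde{V}).
\end{equation}
For the up case, both summands are concave in $p(u)$ for fixed $p(v|u)$: the classical mutual information $I(U;V)$ is concave by Shannon's result, and the classical conditional entropy $H(U,V\mid\tilde{U},\tilde{V})$ is concave in the joint distribution $p(u,v)$, which itself is affine in $p(u)$. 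A sum of concave functions being concave, $I^s$ is concave in $p(u)$, as required.

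The main obstacle is the analogous concavity for the down semantic mutual information $I_s$: its decomposition writes it as a difference of three concave functions of $p(u)$, leaving its concavity indeterminate. I would attempt a direct mixture argument here: for $p_\theta=\theta p_1+(1-\theta)p_2$, expand $I_s(p_\theta)-\theta I_s(p_1)-(1-\theta)I_s(p_2)$ and apply the log-sum inequality block by block over the disjoint synonymous rectangles $\mathcal{U}_{i_s}\times\mathcal{V}_{j_s}$, using the fact that all relevant probabilities $p(u,v)$, $p_s(u)$, $p_s(v)$, and $p_s(u,v)$ are affine in $p(u)$ so that cross terms can be collected. This block-wise manipulation, in which the combinatorial structure of the synonym partition must cooperate with the entropic concavity inequalities, is the step I anticipate to be the most delicate; any additional structural conditions on the synonym mapping needed for concavity to hold should become explicit in this calculation.
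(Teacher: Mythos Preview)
Your convexity-in-$p(v|u)$ argument via the joint convexity of $D_s$ (Theorem~\ref{convexity_crossentropy}) is exactly the route the paper takes for $I_s$, and your extension of the same idea to $I^s$ is correct for the same affine-composition reason.

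For concavity of $I^s$ in $p(u)$, your decomposition $I^s=I(U;V)+H(U,V\mid\tilde U,\tilde V)$ is genuinely different from, and in fact more rigorous than, the paper's argument. The paper writes $I^s=H(U)+H(V)-H_s(\tilde U,\tilde V)$, observes that each of the three terms is concave in $p(u)$, and then concludes $I^s$ is concave---but a difference of concave functions need not be concave, so that step is a gap. Your rewriting as a \emph{sum} of two concave pieces closes it.

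Your caution on the concavity of $I_s$ in $p(u)$ is well placed, but the proposed block-wise log-sum attack cannot succeed, because the claim is false in general. Take $\mathcal U=\{u_1,u_2\}$ with both symbols mapped to a single $\tilde u$, and let $V=U$ (identity channel, so $p(v|u)$ is fixed). Then $H_s(\tilde U)=H_s(\tilde V)=0$ and $H(U,V)=H(U)$, giving $I_s(\tilde U;\tilde V)=-H(U)$, which is \emph{convex} in $p(u)$. The paper handles this case only by the sentence ``by using the similar methods,'' which is not a proof; the concavity assertion should be read as established for $I^s$, while for $I_s$ it requires additional hypotheses on the synonymous mapping (for instance, $f_u$ trivial so that $H_s(\tilde U)=H(U)$) that are not stated in the theorem.
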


This theorem is proved in Appendix \ref{proof_SemMI_concabe_convex}. We now show the relationship among these mutual information measures.
\begin{corollary}\label{corollary3}
\begin{equation}
\begin{aligned}
I_s (\tilde{U};\tilde{V}) \leq I (U;V )\leq H (V )-H_s (\tilde{V}\left|U\right. ) \leq I^s (\tilde{U};\tilde{V} )
\end{aligned}
\end{equation}
\end{corollary}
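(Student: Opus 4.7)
The plan is to verify the three inequalities of the chain by rewriting each one in terms of the basic entropy/semantic entropy quantities and then invoking the comparison lemmas (Lemma \ref{lemma2}, Lemma \ref{lemma3}, Lemma \ref{lemma4}) and the chain rule (Theorem \ref{theorem2}) that have already been established. Since all four quantities in the corollary expand into simple sums and differences of $H(U), H(V), H_s(\tilde{U}), H_s(\tilde{V}), H(U,V), H_s(\tilde{U},\tilde{V})$, and $H_s(\tilde{V}|U)$, the proof reduces to three short algebraic verifications, one per inequality.

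First, for the leftmost inequality $I_s(\tilde{U};\tilde{V}) \leq I(U;V)$, I would substitute the definitions $I_s(\tilde{U};\tilde{V}) = H_s(\tilde{U}) + H_s(\tilde{V}) - H(U,V)$ and $I(U;V) = H(U) + H(V) - H(U,V)$; the $H(U,V)$ terms cancel and what remains is $H_s(\tilde{U}) - H(U) + H_s(\tilde{V}) - H(V) \leq 0$, which follows by applying Lemma \ref{lemma2} to each pair. Second, for the middle inequality $I(U;V) \leq H(V) - H_s(\tilde{V}|U)$, I would use $I(U;V) = H(V) - H(V|U)$, cancel $H(V)$, and obtain the equivalent statement $H_s(\tilde{V}|U) \leq H(V|U)$, which is exactly Lemma \ref{lemma4}.

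Third, for the rightmost inequality $H(V) - H_s(\tilde{V}|U) \leq I^s(\tilde{U};\tilde{V})$, I would expand $I^s(\tilde{U};\tilde{V}) = H(U) + H(V) - H_s(\tilde{U},\tilde{V})$ and, after cancelling $H(V)$, reduce the claim to $H_s(\tilde{U},\tilde{V}) \leq H(U) + H_s(\tilde{V}|U)$. This is precisely one of the inequalities established inside Theorem \ref{theorem2} (the chain-rule theorem), so it can be cited directly.

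The proof is essentially bookkeeping; no step requires a new probabilistic estimate. The only mildly subtle point is that the last inequality is not the cleanest form of the chain rule and one has to be careful to pick the right auxiliary inequality from Theorem \ref{theorem2}, namely $H_s(\tilde{U},\tilde{V}) \leq H(U) + H_s(\tilde{V}|U)$, rather than the symmetric one with $V$ and $\tilde{U}$ swapped. Once that is identified, all three inequalities fall into place with at most a line of algebra each, so I do not expect any real obstacle.
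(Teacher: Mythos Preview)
Your proposal is correct and matches the paper's own proof essentially line for line: the paper also handles the three inequalities separately by expanding the definitions and invoking Lemma~\ref{lemma2} for the first, Lemma~\ref{lemma4} for the second, and the inequality $H_s(\tilde{U},\tilde{V})\leq H(U)+H_s(\tilde{V}|U)$ from Theorem~\ref{theorem2} for the third.
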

\begin{proof}
To prove the left inequality, by Lemma \ref{lemma2}, we can write
\begin{equation}
I_s (\tilde{U};\tilde{V} )-I (U;V )=H_s(\tilde{U})+H_s(\tilde{V})-H(U)-H(V)\leq 0.
\end{equation}
For the medium inequality, due to Lemma \ref{lemma4}, we have
\begin{equation}
I (U;V )-  (H (V )-H_s (\tilde{V}\left|U\right. ) )=-H (V|U )+H_s (\tilde{V}\left|U\right. )\leq 0.
\end{equation}
For the right inequality, due to Theorem \ref{theorem2}, we have
\begin{equation}
\begin{aligned}
 &H (V )-H_s (\tilde{V}\left|U\right. )- I^s (\tilde{U};\tilde{V} )\\
 &= -H_s (\tilde{V}\left|U\right. )-H(U)+H_s(\tilde{U},\tilde{V})\leq 0.
\end{aligned}
\end{equation}
\end{proof}

\begin{corollary}
\begin{equation}
\begin{aligned}
I_s(\tilde{U};\tilde{V}) \leq \tilde{I}_s (\tilde{U};\tilde{V}) \leq H(V)-H_s(\tilde{V}\left|U\right.) \leq I^s(\tilde{U};\tilde{V})
\end{aligned}
\end{equation}
\end{corollary}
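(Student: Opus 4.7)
The plan is to reduce each of the three inequalities to an algebraic identity involving only the entropies $H(U),H(V),H(U,V),H_s(\tilde{U}),H_s(\tilde{V}),H_s(\tilde{U},\tilde{V}),H_s(\tilde{V}|U)$, and then to cite the corresponding inequality proved earlier (Lemma \ref{lemma2}, Lemma \ref{lemma3}, or the chain rule of Theorem \ref{theorem2}). Concretely, I would expand each quantity using the three definitions from Definition \ref{definition7}:
\begin{align*}
I^s(\tilde U;\tilde V)&=H(U)+H(V)-H_s(\tilde U,\tilde V),\\
\tilde I_s(\tilde U;\tilde V)&=H_s(\tilde U)+H_s(\tilde V)-H_s(\tilde U,\tilde V),\\
I_s(\tilde U;\tilde V)&=H_s(\tilde U)+H_s(\tilde V)-H(U,V),
\end{align*}
so that each pairwise comparison collapses to a single, previously proved entropy inequality.

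For the left inequality I would compute
\[
\tilde I_s(\tilde U;\tilde V)-I_s(\tilde U;\tilde V)=H(U,V)-H_s(\tilde U,\tilde V)\ge 0,
\]
which is exactly Lemma \ref{lemma3}. For the rightmost inequality I would write
\[
I^s(\tilde U;\tilde V)-\bigl[H(V)-H_s(\tilde V|U)\bigr]=H(U)+H_s(\tilde V|U)-H_s(\tilde U,\tilde V),
\]
and invoke the middle link of the chain in Theorem \ref{theorem2}, namely $H_s(\tilde U,\tilde V)\le H(U)+H_s(\tilde V|U)$ (this is the companion of the displayed inequality $H_s(\tilde U,\tilde V)\le H(V)+H_s(\tilde U|V)$, and is proved inside the same theorem). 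Either of these two companion bounds suffices.

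The one step that requires a little more care is the middle inequality $\tilde I_s(\tilde U;\tilde V)\le H(V)-H_s(\tilde V|U)$. Here I would split the difference into two pieces:
\[
\tilde I_s(\tilde U;\tilde V)-\bigl[H(V)-H_s(\tilde V|U)\bigr]=\bigl[H_s(\tilde U)+H_s(\tilde V|U)-H_s(\tilde U,\tilde V)\bigr]+\bigl[H_s(\tilde V)-H(V)\bigr],
\]
and then bound each bracket separately: the first bracket is $\le 0$ by the leftmost inequality of Theorem \ref{theorem2} (the semantic analogue of $H(U)+H(V|U)\le H(U,V)$, which in our setting is $H_s(\tilde U)+H_s(\tilde V|U)\le H_s(\tilde U,\tilde V)$), while the second bracket is $\le 0$ by Lemma \ref{lemma2}. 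Summing gives the claim.

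The only real obstacle is bookkeeping: one must pick, from the long chain in Theorem \ref{theorem2}, the precise variant needed in each step (e.g.\ the $H_s(\tilde U)+H_s(\tilde V|U)\le H_s(\tilde U,\tilde V)$ bound for the middle inequality versus the $H_s(\tilde U,\tilde V)\le H(U)+H_s(\tilde V|U)$ bound for the right inequality). Once the right inequality is matched to the right line, the proof is a one-line algebraic check in each case, and no new analytic machinery beyond Lemmas \ref{lemma2}--\ref{lemma3} and Theorem \ref{theorem2} is required.
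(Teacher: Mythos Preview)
Your proposal is correct and follows essentially the same route as the paper: the left inequality is reduced to $H_s(\tilde U,\tilde V)\le H(U,V)$ (Lemma~\ref{lemma3}), the middle inequality is handled by combining $H_s(\tilde V)\le H(V)$ (Lemma~\ref{lemma2}) with the left link $H_s(\tilde U)+H_s(\tilde V|U)\le H_s(\tilde U,\tilde V)$ of Theorem~\ref{theorem2}, and the right inequality (already established in Corollary~\ref{corollary3}) uses the companion bound $H_s(\tilde U,\tilde V)\le H(U)+H_s(\tilde V|U)$ from the proof of Theorem~\ref{theorem2}. The only cosmetic difference is that the paper does not restate the right inequality here, since it was just proved in the preceding corollary.
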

\begin{proof}
By Definition \ref{definition7} and Lemma \ref{lemma3}, we can write the left inequality as
\begin{equation}
I_s(\tilde{U};\tilde{V})-\tilde{I}_s(\tilde{U};\tilde{V})=-H(U,V)+H_s(\tilde{U},\tilde{V})\leq 0.
\end{equation}
Correspondingly, by Definition \ref{definition7}, Lemma \ref{lemma2}, and Theorem \ref{theorem2}, we can also write the medium inequality as
\begin{equation}
\begin{aligned}
&\tilde{I}_s (\tilde{U};\tilde{V})-(H(V)-H_s(\tilde{V}\left|U\right.))\\
&=H_s(\tilde{U})+H_s(\tilde{V})-H_s(\tilde{U},\tilde{V})-H(V)+H_s (\tilde{V}\left|U\right. )\\
&\leq H_s(\tilde{U})+H_s (\tilde{V}\left|U\right. )-H_s(\tilde{U},\tilde{V})\leq 0.
\end{aligned}
\end{equation}
\end{proof}

\begin{theorem}\label{chainrule_SMI}
(\textit{Chain Rule of Sequential Mutual Information}):

Given a pair of semantic sequences $(\tilde {U}^n,\tilde {V}^n)$ and the associated pair of syntactic sequences $\left(U^n,V^n\right)$, under the sequential synonymous mapping $f_{uv}^n$ and conditional synonymous mappings $f_{u_k,v^n|u_1^{k-1}}$, for the sequential version of down semantic mutual information, we have
\begin{equation}
\begin{aligned}
I_s(\tilde{U}^n;\tilde{V}^n)&\leq I_s(\tilde{U}_1^{n-1},U_n;\tilde{V}^n)\\
& \leq \cdots \leq I_s(\tilde{U}_1^m,U_{m+1}^n;\tilde{V}^n) \\
&\leq \cdots \leq I_s(\tilde{U}_1,U_{2}^n;\tilde{V}^n) \\
& \leq I(U^n;V^n)=\sum_{k=1}^{n} I(U_k;V^n\left|U_1^{k-1}\right.),
\end{aligned}
\end{equation}
where $I_s(\tilde{U}_1^m,U_{m+1}^n;\tilde{V}^n)=\tilde{H}(\tilde{U}_1^m,U_{m+1}^n)+H_s(\tilde{V}^n)-H(U^n,V^n)$ with $m=n-1,\cdots,2,1$.

Similarly, for the sequential version of up semantic mutual information, we have
\begin{equation}
\begin{aligned}
I(U^n;V^n)&=\sum_{k=1}^{n} I(U_k;V^n\left|U_1^{k-1}\right.)\\
&\leq I^s(\tilde{U}_1,U_{2}^n;\tilde{V}^n)\\
& \leq \cdots \leq I^s(\tilde{U}_1^m,U_{m+1}^n;\tilde{V}^n)\\
& \leq \cdots \leq I^s(\tilde{U}_1^{n-1},U_n;\tilde{V}^n)\leq I^s(\tilde{U}^n;\tilde{V}^n),
\end{aligned}
\end{equation}
where $I^s(\tilde{U}_1^m,U_{m+1}^n;\tilde{V}^n) =H(U^n)+H(V^n)-\tilde{H}(\tilde{U}_1^m,U_{m+1}^n,\tilde{V}^n)$ with $m=1,2,\cdots,n$.
\end{theorem}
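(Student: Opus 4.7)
The plan is to reduce both chains to one elementary fact: because the synonymous mapping $f_u$ partitions $\mathcal{U}$ into the equivalence classes $\{\mathcal{U}_{i_s}\}$, each semantic symbol $\tilde{U}_k$ is a deterministic function of its syntactic counterpart $U_k$, namely the map sending $u_i$ to the unique label $\tilde{u}_{i_s}$ with $u_i\in\mathcal{U}_{i_s}$, and symmetrically for $\tilde{V}_k$. Hence for any auxiliary random vector $W$ the pair $(W,\tilde{U}_k)$ is a function of $(W,U_k)$ and therefore
\[
H(W,\tilde{U}_k)\;\le\;H(W,U_k),\qquad H(\tilde{U}_k\mid W)\;\le\;H(U_k\mid W).
\]
This single inequality, combined with the ordinary chain rule of entropy and the decomposition identities already written into the statement, is all that the proof really needs.

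For the down chain I would start from
\[
I_s(\tilde{U}_1^m,U_{m+1}^n;\tilde{V}^n)=\tilde{H}(\tilde{U}_1^m,U_{m+1}^n)+H_s(\tilde{V}^n)-H(U^n,V^n),
\]
in which only the first term depends on $m$. The required monotonicity $\tilde{H}(\tilde{U}_1^m,U_{m+1}^n)\le\tilde{H}(\tilde{U}_1^{m-1},U_m^n)$ as $m$ decreases is exactly the content of Theorem \ref{theorem_chainrule} and is itself the boxed inequality above applied with $W=(\tilde{U}_1^{m-1},U_{m+1}^n)$. Iterating for $m=n,n-1,\ldots,1$ yields every inequality except the last. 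The terminal step is handled by rewriting
\[
I(U^n;V^n)-I_s(\tilde{U}_1,U_2^n;\tilde{V}^n)=\bigl(H(U^n)-\tilde{H}(\tilde{U}_1,U_2^n)\bigr)+\bigl(H(V^n)-H_s(\tilde{V}^n)\bigr),
\]
and observing that both summands are non-negative by Theorem \ref{theorem_chainrule} and Lemma \ref{lemma2}.

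For the up chain the identity
\[
I^s(\tilde{U}_1^m,U_{m+1}^n;\tilde{V}^n)=H(U^n)+H(V^n)-\tilde{H}(\tilde{U}_1^m,U_{m+1}^n,\tilde{V}^n)
\]
reduces the claim to showing that $\tilde{H}(\tilde{U}_1^m,U_{m+1}^n,\tilde{V}^n)$ is non-increasing as $m$ grows from $1$ to $n$, together with the boundary bound $\tilde{H}(\tilde{U}_1,U_2^n,\tilde{V}^n)\le H(U^n,V^n)$. Monotonicity again comes from the boxed inequality, now with $W=(\tilde{U}_1^{m-1},U_{m+1}^n,\tilde{V}^n)$, and the boundary follows from two applications of the same fact, first replacing $\tilde{U}_1$ by $U_1$ to get $H(\tilde{U}_1,U_2^n,\tilde{V}^n)\le H(U^n,\tilde{V}^n)$, and then replacing $\tilde{V}^n$ by $V^n$ to get $H(U^n,\tilde{V}^n)\le H(U^n,V^n)$. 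The rightmost equality $I(U^n;V^n)=\sum_{k=1}^n I(U_k;V^n\mid U_1^{k-1})$ is the classical chain rule of mutual information and requires no semantic input.

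The main point that needs care, and the place where I expect the only real bookkeeping, is legitimising the mixed-argument expression $\tilde{H}(\tilde{U}_1^m,U_{m+1}^n,\tilde{V}^n)$ appearing in the up-chain identity: one must verify that it can be treated as an ordinary joint entropy over the underlying probability space, so that the data-processing bound applies coordinate-wise, and that the permutation argument of Theorem \ref{theorem_chainrule} goes through unchanged when $\tilde{V}^n$ is adjoined as auxiliary conditioning. Once this is settled, both chains collapse to repeated use of the single inequality $H(W,\tilde{X})\le H(W,X)$ together with chain-rule manipulations.
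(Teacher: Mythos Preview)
Your proposal is correct and follows essentially the same route as the paper: both reduce the down chain to the monotonicity of $\tilde{H}(\tilde{U}_1^m,U_{m+1}^n)$ and the up chain to the monotonicity of $\tilde{H}(\tilde{U}_1^m,U_{m+1}^n,\tilde{V}^n)$, invoking Theorem~\ref{theorem_chainrule} (or its extension with $\tilde{V}^n$ adjoined) for each step, and handle the boundary via $H_s(\tilde{V}^n)\le H(V^n)$ and $\tilde{H}(\tilde{U}_1,U_2^n,\tilde{V}^n)\le H(U^n,V^n)$. Your additional framing of everything through the single data-processing inequality $H(W,\tilde{U}_k)\le H(W,U_k)$ is a helpful conceptual unification that the paper leaves implicit inside Lemma~\ref{lemma41} and Theorem~\ref{theorem_chainrule}.
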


\begin{proof}
By Theorem \ref{theorem_chainrule}, we derive that
\begin{equation}
\begin{aligned}
&I_s(\tilde{U}^n;\tilde{V}^n)=H_s(\tilde{U}^n)+H_s(\tilde{V}^n)-H(U^n,V^n)\\
&\leq H_s(\tilde{U}_1^{n-1}, U_n)+H_s(\tilde{V}^n)-H(U^n,V^n)\\
&\leq \cdots \leq H_s(\tilde{U}_1^m, U_{m+1}^n)+H_s(\tilde{V}^n)-H(U^n,V^n)\\
&\leq H_s(\tilde{U}_1^{m-1}, U_{m}^n)+H_s(\tilde{V}^n)-H(U^n,V^n)\\
&\leq \cdots \leq H_s(\tilde{U}_1, U_2^n)+H_s(\tilde{V}^n)-H(U^n,V^n)\\
&\leq H(U^n)+H(V^n)-H(U^n,V^n) = I(U^n;V^n).
\end{aligned}
\end{equation}
So we prove the first chain of inequalities.

Similarly, by Theorem \ref{theorem_chainrule}, we can attain that
\begin{equation}
\begin{aligned}
&I(U^n;V^n)=H(U^n)+H(V^n)-H(U^n,V^n)\\
&\leq H(U^n)+H(V^n)-\tilde{H}(\tilde{U}_1,U_2^n,\tilde{V}^n)\\
&\leq \cdots \leq H(U^n)+H(V^n)-\tilde{H}(\tilde{U}_1^{m-1},U_{m}^n,\tilde{V}^n)\\
&\leq H(U^n)+H(V^n)-\tilde{H}(\tilde{U}_1^{m},U_{m+1}^n,\tilde{V}^n)\\
&\leq \cdots \leq H(U^n)+H(V^n)-\tilde{H}(\tilde{U}_1^{n-1},U_{n},\tilde{V}^n)\\
&\leq H(U^n)+H(V^n)-H_s(\tilde{U}^n,\tilde{V}^n)=I^s(\tilde{U}^n;\tilde{V}^n).
\end{aligned}
\end{equation}

%On the other hand, by Theorem \ref{theorem2}, we have $\tilde{H}(\tilde{U}_1^m,U_{m+1}^n,\tilde{V}^n)\leq H(V^n)+H(\tilde{U}_1^m,U_{m+1}^n|V^n)$. So we attain that
%\begin{equation}
%\begin{aligned}
%I^s(\tilde{U}_1^m,U_{m+1}^n;\tilde{V}^n)&=H(U^n)+H(V^n)-H(\tilde{U}_1^m,U_{m+1}^n,\tilde{V}^n)\\
%&\geq H(U^n)-H(\tilde{U}_1^m,U_{m+1}^n|V^n).\\
%\end{aligned}
%\end{equation}
%Similarly, by Theorem \ref{theorem2}, we also have $\tilde{H}(\tilde{U}_1^{m-1},U_{m}^n,\tilde{V}^n)\geq H_s(\tilde{V}^n)+H(\tilde{U}_1^{m-1},U_{m}^n|V^n)$. Therefore, we derive that
%\begin{equation}
%\begin{aligned}
%I^s(\tilde{U}_1^{m-1},U_{m}^n;\tilde{V}^n)&=H(U^n)+H(V^n)-H(\tilde{U}_1^{m-1},U_{m}^n,\tilde{V}^n)\\
%&\leq H(U^n)+H(V^n)-H_s(\tilde{V}^n)-H(\tilde{U}_1^{m-1},U_{m}^n|V^n)\\
%&\leq H(U^n)+H(V^n)-H_s(\tilde{V}^n)-H(\tilde{U}_1^{m-1},U_{m}^n|V^n)\\
%\end{aligned}
%\end{equation}

% Furthermore, by Corollary \ref{corollary3}, we can write $H(U^n)-H_s(\tilde{U}^n|V^n) \leq I^s(\tilde{U}^n;\tilde{V}^n)$.
So we prove the second chain of inequalities.
\end{proof}

\begin{remark}
In classic information theory, the mutual information satisfies the following equalities
\begin{equation}
\begin{aligned}
I(U;V)&=H(U)-H(U|V)\\
         &=H(V)-H(V|U)\\
         &=H(U)+H(V)-H(U;V).
\end{aligned}
\end{equation}
On the contrary, in semantic information theory, the semantic mutual information only obeys the degraded inequalities
\begin{equation}
\left\{\begin{aligned}
&I_s (\tilde{U};\tilde{V}) \leq \tilde{I}_s (\tilde{U};\tilde{V} ) \leq H(V)-H_s(\tilde{V}\left|U\right.) \leq I^s (\tilde{U};\tilde{V})\\
&I_s (\tilde{U};\tilde{V} ) \leq \tilde{I}_s (\tilde{U};\tilde{V}) \leq H(U)-H_s(\tilde{U}\left|V\right.) \leq I^s (\tilde{U};\tilde{V}).
\end{aligned}\right.
\end{equation}
Since the size relationship between mutual information and semantic mutual information is uncertain, we only use the up and down semantic mutual information to indicate the upper bound and the lower bound respectively. In fact, by Theorem \ref{theorem2}, the up semantic mutual information $I^s(\tilde{U};\tilde{V})$ can be further upper bounded by $H(U)+H(V)-H_s(\tilde{U})-H_s(\tilde{V}\left|U\right.)$. However, since this bound has not a clear physical meaning, we still use $I^s(\tilde{U};\tilde{V})$ as the upper bound.
\end{remark}

\begin{theorem}
(\textit{Semantic Processing Inequality}):

Given a Markov chain $U\to V \to W$, then we have
\begin{equation}
I_s(\tilde{U};\tilde{W})\leq I_s(\tilde{U};W)\leq I(U;W)\leq I(U;V) \leq I^s(\tilde{U};V) \leq I^s(\tilde{U};\tilde{V}).
\end{equation}
\end{theorem}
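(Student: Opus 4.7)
The plan is to split the six-step chain into three classes: a central classical step, two outer flanks that reduce to Lemma \ref{lemma2}, and two inner flanks that reduce to Lemma \ref{lemma3}. The central inequality $I(U;W)\leq I(U;V)$ is exactly Shannon's data processing inequality, which I would invoke directly from the Markov hypothesis $U\to V\to W$; this is the unique place where the Markov assumption enters the argument.

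For the remaining five inequalities the approach is purely algebraic. I would rewrite every semantic mutual information via the identities from Definition \ref{definition7},
\begin{equation*}
I^s(\tilde X;\tilde Y) = H(X) + H(Y) - H_s(\tilde X,\tilde Y), \qquad I_s(\tilde X;\tilde Y) = H_s(\tilde X) + H_s(\tilde Y) - H(X,Y),
\end{equation*}
and interpret the ``mixed'' quantities $I_s(\tilde U;W)$ and $I^s(\tilde U;V)$ as the same expressions evaluated under the trivial (identity) synonymous mapping on the untilded side. Under this convention the semantic entropy of a trivially-mapped variable collapses to its ordinary entropy, and the semantic joint entropy reduces to a joint entropy coarsened only on the tilded coordinate. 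I would record this interpretation as a one-line preparatory remark to fix notation.

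With these rewritings each of the four flanking inequalities collapses to a single cancellation. The two $I_s$ comparisons yield
\begin{equation*}
I_s(\tilde U;\tilde W) - I_s(\tilde U;W) = H_s(\tilde W) - H(W) \leq 0, \qquad I_s(\tilde U;W) - I(U;W) = H_s(\tilde U) - H(U) \leq 0,
\end{equation*}
both immediate from Lemma \ref{lemma2}. The two $I^s$ comparisons yield
\begin{equation*}
I^s(\tilde U;V) - I(U;V) = H(U,V) - H_s(\tilde U,V) \geq 0, \qquad I^s(\tilde U;\tilde V) - I^s(\tilde U;V) = H_s(\tilde U,V) - H_s(\tilde U,\tilde V) \geq 0,
\end{equation*}
where the first is Lemma \ref{lemma3} applied to the joint partition that coarsens only on the $U$-side, and the second is the same coarsening principle applied on the $V$-side.

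The main obstacle I anticipate is notational rather than mathematical. The development of $H_s$, $I_s$, $I^s$ in Sections \ref{section_III}--\ref{section_IV} is written for pairs of semantic variables, so I need to carefully justify the ``trivial mapping'' reading of the mixed expressions $I_s(\tilde U;W)$ and $I^s(\tilde U;V)$, and I need the mild generalization of Lemma \ref{lemma3} asserting that any further coarsening of a synonymous partition cannot increase the associated semantic joint entropy. Both follow by rerunning the proofs of Lemmas \ref{lemma2} and \ref{lemma3} verbatim on a coarser partition (an application of Jensen's inequality to $-x\log x$). Once this bookkeeping is in place, the chain telescopes and no further work is required.
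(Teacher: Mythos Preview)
Your proposal is correct and follows a genuinely different, more elementary route than the paper. The paper's proof invokes the chain rule of semantic mutual information (Theorem \ref{chainrule_SMI}): it expands $I(U;VW)$ two different ways, uses the Markov hypothesis to kill $I(U;W|V)$, and then appeals to the sequential chain-rule inequalities to squeeze in the $I_s$ and $I^s$ terms. Your argument bypasses that machinery entirely: you isolate the classical data processing inequality as the sole place where the Markov structure is used, and reduce each of the four flanking inequalities to a one-term cancellation governed by Lemma \ref{lemma2} or Lemma \ref{lemma3} (plus the coarsening observation that further merging synonymous classes on one coordinate cannot increase $H_s$). This is cleaner and makes the logical structure more transparent --- in particular it shows that the four semantic inequalities hold for \emph{any} triple $(U,V,W)$, Markov or not, and that only the middle step needs the hypothesis. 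The paper's approach, by contrast, bundles the Markov reasoning together with the semantic comparisons via the chain rule, which obscures this separation. Your only real obligation, as you note, is to justify reading $I_s(\tilde U;W)$ and $I^s(\tilde U;V)$ through the identity mapping on the untilded coordinate and to state the one-sided coarsening version of Lemma \ref{lemma3}; both are immediate from the same Jensen step, so there is no gap.
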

\begin{proof}
By the chain rule of semantic mutual information (Theorem \ref{chainrule_SMI}), we can expand (down) mutual information in two different ways,
\begin{equation}
\begin{aligned}
I(U;VW)&=I(U;V)+I(U;W|V)\\
             &=I(U;W)+I(U;V|W)\\
             &\geq I_s(\tilde{U};W)+I(U;V|W)\\
             &\geq I_s(\tilde{U};\tilde{W}).
\end{aligned}
\end{equation}
Since $U$ and $W$ are conditionally independent give $V$, we have $I(U;W|V)=0$. Due to $I(U;V|W)\geq 0$, we have
\begin{equation}
I(U;V)\geq I(U;W) \geq I_s(\tilde{U};W) \geq I_s(\tilde{U};\tilde{W}).
\end{equation}

Similarly, by the same chain rule, we can expand (up) mutual information in two different ways,
\begin{equation}
\begin{aligned}
I(U;VW)&=I(U;V)+I(U;W|V)\\
             &=I(U;W)+I(U;V|W)\\
             &\leq I^s(\tilde{U};V)+I(U;W|V)
             &\leq I^s(\tilde{U};\tilde{V}).
\end{aligned}
\end{equation}
Thus, we have
\begin{equation}
I(U;W) \leq I(U;V)\leq I^s(\tilde{U};V) \leq I^s(\tilde{U};\tilde{V})
\end{equation}
and complete the proof.
\end{proof}

\begin{example}
According to the joint distribution in Table \ref{JPDF_RV}, we calculate the mutual information between $U$ and $V$ as $I(U;V)=H(U)+H(V)-H(U,V)=0.6332 \text{ bits}$. Correspondingly, by using the joint distribution in Table \ref{JSmapping_SRV}, we can compute the up semantic mutual information as $I^s(\tilde{U};\tilde{V})=H(U)+H(V)-H_s(\tilde{U};\tilde{V})=1.5087 \text{ sebits}$.

Similarly, by using the distribution in Table \ref{Smapping_SRV}, we compute the down semantic mutual information as $I_s(\tilde{U};\tilde{V})=H_s(U)+H_s(V)-H(U;V)=-0.6422 \text{ sebits}$. Consider the non-negative requirement, we set $I_s(\tilde{U};\tilde{V})=\max\{-0.6422,0\}=0 \text{ sebits}$. By using the distribution in Table \ref{CSmapping_SRV} and Table \ref{CSmapping_SRV2}, we calculate that $H(V)-H_s(\tilde{V}\left|U\right.)=0.7709 \text{ sebits}$ and $H(U)-H_s(\tilde{U}\left|V\right.)=1.3087 \text{ sebits}$. So we conclude that $I_s(\tilde{U};\tilde{V})=0 \text{ sebits}<I(U;V)=0.6332 \text{ bits}<H(V)-H_s(\tilde{V}\left|U\right.)=0.7709 \text{ sebits}<H(U)-H_s(\tilde{U}\left|V\right.)=1.3087 \text{ sebits}<I^s(\tilde{U};\tilde{V})=1.5087 \text{ sebits}$.
\end{example}

\section{Semantic Channel Capacity and Semantic Rate Distortion}
\label{section_V}
In this section, we introduce the semantic channel capacity and semantic rate distortion. We use the maximum up semantic mutual information to indicate the former. Similarly, the minimum down semantic mutual information is used to present the latter.
\subsection{Semantic Channel Capacity}
Generally, the discrete channel of semantic communication can be modeled as the form of a five-tuple, that is, $\left\{\tilde{\mathcal{X}},\mathcal{X},\mathcal{Y},\tilde{\mathcal{Y}}, p(Y|X)\right\}$. Here, $\mathcal{X}$ and $\mathcal{Y}$ are the input and output syntactical alphabet. And $\tilde{\mathcal{X}}$ and $\tilde{\mathcal{Y}}$ are the corresponding input and output semantic alphabet. Furthermore, $p(Y|X)$ is the channel transition probability.

\begin{definition}
Given a discrete memoryless channel $\left\{\tilde{\mathcal{X}},\mathcal{X},\mathcal{Y},\tilde{\mathcal{Y}}, p(Y|X)\right\}$, the semantic channel capacity is defined as
\begin{equation}
\begin{aligned}
C_s=&\max_{f_{xy}}\max_{p(x)} I^s(\tilde{X};\tilde{Y})\\
=&\max_{f_{xy}}\max_{p(x)}\sum_{i_s}\sum_{j_s}\sum_{(x_i,y_j)\in \mathcal{X}_{i_s}\times \mathcal{Y}_{j_s}}p\left(x_i\right)p\left(y_j\left|x_i\right.\right) \\
&\cdot \log \frac{\sum_{(x_i,y_j) \in \mathcal{X}_{i_s} \times \mathcal{Y}_{j_s}}p\left(x_i,y_j\right)}{p\left(x_i\right)p\left(y_j\right)}\\
=&\max_{f_{xy}}\max_{p(x)}\left[H(X)+H(Y)-H_s(\tilde{X},\tilde{Y})\right]
\end{aligned}
\end{equation}
where the maximum is taken over all possible input distribution $p(x)$ and the jointly synonymous mapping $f_{xy}:\tilde{\mathcal{X}}\times\tilde{\mathcal{Y}}\to\mathcal{X}\times\mathcal{Y}$.
\end{definition}

By Corollary \ref{corollary3}, the channel mutual information is no more than the up semantic mutual information, that is, $I(X;Y)\leq I^s(\tilde{X};\tilde{Y})$. Thus, we conclude that the channel capacity is no more than the semantic capacity, i.e., $C\leq C_s$.

\subsection{Semantic Rate Distortion}
Similar to the classic rate distortion theory, given a semantic/syntactic source distribution and a distortion measure, we also need to investigate the minimum rate description required to achieve a specific distortion.
\begin{definition}
Given a discrete source $X\sim p(x), x\in\mathcal{X}$ and the associated semantic source $\tilde{X}$, the decoder outputs an estimate $\hat{X}$ with the associated semantic variable $\hat{\tilde{X}}$. Under the synonymous mapping $f_{x}$ and $f_{\hat{x}}$, the semantic distortion measure is a mapping
\begin{equation}
d_s: \tilde{\mathcal{X}}\times \hat{\tilde{\mathcal{X}}}\to \mathbb{R}^{+}
\end{equation}
from the Cartesian product of semantic source alphabet and reconstruction alphabet into the set of non-negative real numbers. The semantic distortion $d_s(\tilde{x}_{i_s},\hat{\tilde{x}}_{j_s})=d_s(\mathcal{X}_{i_s},\hat{\mathcal{X}}_{j_s})$ is a measure of the cost of representing the semantic symbol $\tilde{x}_{i_s}$ by the symbol $\hat{\tilde{x}}_{j_s}$, equivalently, which is a cost of representing the syntactical symbol set $\mathcal{X}_{i_s}$ by the reproduction set $\hat{\mathcal{X}}_{j_s}$. For the traditional lossy source coding, the semantic distortion can be measured using Hamming distortion or mean squared error (MSE) distortion. On the other hand, for the lossy coding with neural network model, the semantic distortion can also be evaluated by using word error rate (WER), structure similarity index measure (SSIM), learned perceptual image patch similarity (LPIPS) etc.

So the average semantic distortion is defined as
\begin{equation}
\begin{aligned}
\bar{d}_s&=\mathbb{E}\left[d_s(\tilde{x},\hat{\tilde{x}})\right]\\
&=\sum_{i_s}\sum_{j_s}\sum_{(x_i,\hat{x}_j)\in \mathcal{X}_{i_s}\times \hat{\mathcal{X}}_{j_s}}p\left(x_i,\hat{x}_j\right)d_s(\mathcal{X}_{i_s},\hat{\mathcal{X}}_{j_s}).
\end{aligned}
\end{equation}
Furthermore, the test channel set $P_D$ is defined as
\begin{equation}
P_D=\left\{p(\hat{x}\left|x\right.): \bar{d}_s=\mathbb{E}\left[d_s\left(\tilde{x},\hat{\tilde{x}}\right)\right] \leq D\right\}.
\end{equation}
\end{definition}

Next, we give the formal definition of semantic rate distortion.
\begin{definition}
Given an i.i.d. source $X$ with distribution $p(x)$, the associated semantic source $\tilde{X}$, and the semantic distortion function $d_s(\tilde{x}_{i_s},\hat{\tilde{x}}_{j_s})$, the semantic rate distortion is defined as,
\begin{equation}
\begin{aligned}
R_s(D)=&\min_{f_{x},f_{\hat{x}}}\min_{p(\hat{x}\left|x\right.)\in P_D}I_s(\tilde{X};\hat{\tilde{X}})\\
=&\min_{f_{x},f_{\hat{x}}}\min_{p(\hat{x}\left|x\right.)\in P_D} \sum_{i_s}\sum_{j_s}\sum_{(x_i,\hat{x}_j)\in {\mathcal{X}_{i_s}}\times {\hat{\mathcal{X}}_{j_s}}}p\left(x_i,\hat{x}_j\right) \\
&\cdot \log \frac{p\left(x_i,\hat{x}_j\right)}{\sum_{x_i \in {\mathcal{X}_{i_s}}}p\left(x_i\right) \sum_{\hat{x}_j \in {\hat{\mathcal{X}}_{j_s}}}p\left(\hat{x}_j\right)}\\
=&\min_{f_{x},f_{\hat{x}}}\min_{p(\hat{x}\left|x\right.)\in P_D} \left[H_s(\tilde{X})+H_s(\hat{\tilde{X}})-H(X,\hat{X})\right].
\end{aligned}
\end{equation}
\end{definition}

\begin{lemma}(Convexity of $R_s(D)$):
The semantic rate distortion function $R_s(D)$ is a non-increasing convex function of $D$.
\end{lemma}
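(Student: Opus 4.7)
The plan is to establish the two properties separately, with monotonicity being essentially definitional and convexity mirroring the classical Cover--Thomas argument with one genuine new ingredient.

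For monotonicity I would first observe that whenever $D_1 \le D_2$, the feasible set $P_{D_1}$ is contained in $P_{D_2}$, because any test channel satisfying $\mathbb{E}[d_s(\tilde{x},\hat{\tilde{x}})] \le D_1$ automatically satisfies $\mathbb{E}[d_s(\tilde{x},\hat{\tilde{x}})] \le D_2$. Since the semantic synonymous mappings $f_x,f_{\hat{x}}$ and the inner objective are unchanged, the minimum taken over a larger feasible set can only be no larger, so $R_s(D_2) \le R_s(D_1)$. This handles the non-increasing claim in one line.

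For convexity, fix $D_1,D_2 \ge 0$, $\theta \in [0,1]$, and set $D_\theta = \theta D_1 + (1-\theta)D_2$. Let $(f_{x}^{(i)}, f_{\hat{x}}^{(i)}, p_i(\hat{x}|x))$ attain (or $\epsilon$-attain) $R_s(D_i)$ for $i=1,2$. The classical strategy is to mix the two test channels: with the mappings temporarily held fixed, define $p_\theta(\hat{x}|x) = \theta p_1(\hat{x}|x) + (1-\theta)p_2(\hat{x}|x)$. Because the expected semantic distortion $\mathbb{E}[d_s(\tilde{x},\hat{\tilde{x}})]$ is linear in $p(\hat{x}|x)$ for each fixed $p(x)$ and each fixed pair of mappings, we get $\bar{d}_s(p_\theta) \le \theta D_1 + (1-\theta)D_2 = D_\theta$, so $p_\theta \in P_{D_\theta}$. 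Then invoking Theorem \ref{SemMI_concave_convex}, which asserts that the down semantic mutual information $I_s(\tilde{X};\hat{\tilde{X}})$ is convex in $p(\hat{x}|x)$ for fixed $p(x)$, yields $I_s(p_\theta) \le \theta I_s(p_1) + (1-\theta)I_s(p_2)$. Minimizing the left side over test channels in $P_{D_\theta}$ gives the desired inequality $R_s(D_\theta) \le \theta R_s(D_1)+(1-\theta)R_s(D_2)$.

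The main obstacle is that the outer minimization is taken over synonymous mappings $f_x,f_{\hat{x}}$ as well, and the optimal mappings for $D_1$ and $D_2$ need not coincide; since the minimum of a family of convex functions is in general not convex, a bare appeal to Theorem \ref{SemMI_concave_convex} on a per-mapping basis is not enough. To get around this I would introduce a time-sharing/auxiliary-variable construction: enlarge the source by an independent switch $Q \in \{1,2\}$ with $\Pr(Q{=}1)=\theta$, apply mappings $(f_{x}^{(Q)}, f_{\hat{x}}^{(Q)})$ and test channel $p_Q(\hat{x}|x)$ conditionally on $Q$, and verify that the induced joint mapping together with its averaged test channel lies in $P_{D_\theta}$ while the resulting down semantic mutual information is at most $\theta R_s(D_1)+(1-\theta)R_s(D_2)$ (the remaining cost of $Q$ is a single-letter term that vanishes in the standard way, or is absorbed into a refined joint synonymous partition on the product alphabets). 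Once this time-sharing step is combined with the convex-mixing argument above, the convexity of $R_s(D)$ follows, and monotonicity plus convexity together complete the lemma.
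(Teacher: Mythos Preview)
Your monotonicity argument and the core of your convexity argument (mix two optimal test channels, use linearity of the distortion constraint, then invoke Theorem \ref{SemMI_concave_convex} on the convexity of $I_s$ in $p(\hat{x}|x)$) are exactly what the paper does. The paper's entire proof is the three-line chain
\[
R_s(D_\theta)\;\leq\; I_{s,p_\theta}(\tilde{X};\hat{\tilde{X}})\;\leq\; \theta I_{s,p_1}(\tilde{X};\hat{\tilde{X}})+(1-\theta) I_{s,p_2}(\tilde{X};\hat{\tilde{X}})\;=\;\theta R_s(D_1)+(1-\theta) R_s(D_2),
\]
preceded by one sentence asserting monotonicity.

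Where you differ is that you notice, and try to repair, the issue with the outer minimization over the synonymous mappings $f_x,f_{\hat x}$: the optimal mappings at $D_1$ and $D_2$ may disagree, and a pointwise minimum over a family of convex functions need not be convex. The paper's own proof simply does not engage with this; it picks achieving $p_1,p_2$ without mentioning mappings and proceeds as though a single fixed pair $(f_x,f_{\hat x})$ is in force throughout. So your time-sharing construction is extra scaffolding beyond what the paper provides. It is a legitimate concern and your auxiliary-switch idea is the natural remedy, though the final sentence (``vanishes in the standard way, or is absorbed into a refined joint synonymous partition'') would need to be made precise if you wanted a fully rigorous proof at that level of generality. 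For the purposes of matching the paper, your first two paragraphs already reproduce its argument.
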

\begin{proof}
Similar to syntactic rate distortion $R(D)$, $R_s(D)$ is also a non-increasing function in $D$.

Consider two rate distortion pairs $(R_1,D_1)$ and $(R_2,D_2)$, the corresponding joint distribution achieving these pair are $p_1(x,\hat{x})=p(x)p_1(\hat{x}|x)$ and $p_2(x,\hat{x})=p(x)p_2(\hat{x}|x)$ respectively. Consider the weighted distribution $p_{\theta}=\theta p_1+(1-\theta) p_2$ and the weighted distortion $D_\theta=\theta D_1+(1-\theta) D_2$. Due to the convexity of semantic mutual information, we have
\begin{equation}
\begin{aligned}
R_s(D_\theta) &\leq I_{s,p_\theta}(\tilde{X};\hat{\tilde{X}})\\
                          &\leq \theta I_{s,p_1}(\tilde{X};\hat{\tilde{X}})+(1-\theta) I_{s,p_2}(\tilde{X};\hat{\tilde{X}})\\
                          &=\theta R_s(D_1)+(1-\theta) R_s(D_2).
\end{aligned}
\end{equation}
So we complete the proof.
\end{proof}

According to Corollary \ref{corollary3}, the down semantic mutual information is no more than the mutual information, that is, $I_s(\tilde{X};\tilde{Y})\leq I(X;Y) $. Thus, we conclude that the semantic rate distortion is no more than the classic rate distortion, i.e., $R_s(D)\leq R(D)$.

\section{Semantic Lossless Source Coding}
\label{section_VI}
In this section, we discuss the semantic lossless source coding. First, we investigate the asymptotic equipartition property (AEP) of semantic coding and introduce the synonymous typical set. Then we prove the semantic source coding theorem and give the optimal length of semantic coding. Finally, we design the semantic Huffman coding to demonstrate the advantage of semantic data compression.

\subsection{Asymptotic Equipartition Property and Synonymous Typical Set}
Similar to classic information theory, asymptotic equipartition property (AEP) is also an important tool to prove the coding theorem in semantic information theory.

\begin{definition}
Given a discrete random variable $U$ with the distribution $p(u)$ and the corresponding semantic variable $\tilde {U}$, when we consider an i.i.d. semantic sequence $(\tilde {U}_1,\tilde {U}_2,\cdots,\tilde {U}_n)$ and the associated syntactic sequences $\left(U_1,U_2,\cdots,U_n\right)$, the sequential synonymous mapping is defined as $f^{n}: \tilde{\mathcal{U}}^{n}\to\mathcal{U}^{n}$. That is to say, given a sequence $\tilde {u}^n$, we have $f^n\left(\tilde {u}^n\right)=\prod_{k=1}^{n}\mathcal{U}_{\tilde{u}_{k}}$.
\end{definition}

\begin{theorem}\label{theorem5}
(Semantic and Syntactic AEP):
If $(\tilde {U}_1,\tilde {U}_2,\cdots,\tilde {U}_n,\cdots)$ is an i.i.d. semantic sequence and the associated syntactic sequence is $\left(U_1,U_2,\cdots,U_n,\cdots\right)$, given the sequential synonymous mapping $f^{n}: \tilde{\mathcal{U}}^{n}\to\mathcal{U}^{n}$, then
\begin{equation}
\left\{\begin{aligned}
&\lim_{n\to\infty}-\frac{1}{n}\log p(\tilde {U}_1,\tilde {U}_2,\cdots,\tilde {U}_n)=H_s(\tilde {U})\\
&\lim_{n\to\infty}-\frac{1}{n}\log p\left(U_1,U_2,\cdots,U_n\right)=H\left(U\right)
\end{aligned}
\right.
\end{equation}
\end{theorem}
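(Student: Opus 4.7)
The plan is to recognize both assertions as direct applications of the weak law of large numbers to i.i.d.\ sequences of log-probabilities, where the only subtlety is to confirm that the semantic sequence inherits an i.i.d.\ structure from the syntactic one and that the resulting mean matches the semantic entropy functional of Eq.~(\ref{semantic_entropy}). The claimed limits should be read in the customary AEP sense, i.e.\ convergence in probability.

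For the syntactic statement I would simply invoke the classical Shannon AEP. Since $(U_1, U_2, \ldots)$ is i.i.d.\ with distribution $p(u)$, the random variables $-\log p(U_k)$ are i.i.d.\ with common mean $\mathbb{E}[-\log p(U)] = H(U)$, finite because $\mathcal{U}$ is discrete. Writing
\begin{equation*}
-\frac{1}{n}\log p(U_1,\ldots,U_n) = -\frac{1}{n}\sum_{k=1}^{n}\log p(U_k)
\end{equation*}
and applying the weak law of large numbers yields convergence in probability to $H(U)$.

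For the semantic statement, the key observation is that the synonymous mapping $f$ is deterministic: $\tilde{U}_k = \tilde{u}_{i_s}$ precisely when $U_k \in \mathcal{U}_{i_s}$, where $\{\mathcal{U}_{i_s}\}$ is the partition of $\mathcal{U}$ induced by $f$. Hence $(\tilde{U}_k)$ is a deterministic function, applied coordinate-wise, of the i.i.d.\ sequence $(U_k)$, and is therefore itself i.i.d., with marginal $\Pr(\tilde{U}_k = \tilde{u}_{i_s}) = p(\mathcal{U}_{i_s}) = \sum_{i\in\mathcal{N}_{i_s}} p(u_i)$. Independence across $k$ gives $p(\tilde{U}_1,\ldots,\tilde{U}_n) = \prod_{k=1}^{n} p(\mathcal{U}_{\tilde{U}_k})$, so
\begin{equation*}
-\frac{1}{n}\log p(\tilde{U}_1,\ldots,\tilde{U}_n) = -\frac{1}{n}\sum_{k=1}^{n}\log p(\mathcal{U}_{\tilde{U}_k}).
\end{equation*}
The summands are i.i.d.\ with expectation $\mathbb{E}\bigl[-\log p(\mathcal{U}_{\tilde{U}})\bigr] = H_s(\tilde{U})$ by the very definition of semantic entropy in Eq.~(\ref{semantic_entropy}), so the weak law of large numbers delivers convergence in probability to $H_s(\tilde{U})$.

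The work is conceptual rather than technical. The main potential pitfall is ensuring that the semantic variable, although introduced as an abstraction of meaning, is treated rigorously as the measurable function $\tilde{U} = f^{-1}(U)$ induced by the equivalence-class partition $\mathcal{U}/f$; once this is made explicit, the i.i.d.\ property and the expectation identification are immediate, and no delicate estimates are needed. The argument is structurally identical to the classical AEP proof, with the partition $\{\mathcal{U}_{i_s}\}$ playing the role that singletons play in the syntactic case.
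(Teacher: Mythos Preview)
Your proposal is correct and follows essentially the same approach as the paper: both arguments factor the joint probability, apply the weak law of large numbers to the i.i.d.\ summands $-\log p(\tilde{U}_k)$ (respectively $-\log p(U_k)$), and identify the limiting mean with $H_s(\tilde{U})$ (respectively $H(U)$). Your version is slightly more explicit in justifying why $(\tilde{U}_k)$ inherits the i.i.d.\ property as a coordinate-wise deterministic function of $(U_k)$, which the paper simply assumes, but the underlying argument is identical.
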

\begin{proof}
For the first equality, since the $\tilde{U}_{k}$ are i.i.d., so are $\log p(\tilde{U}_{k})$. By the weak law of large numbers, we have
\begin{equation}
\begin{aligned}
-\frac{1}{n}\log p(\tilde {U}_1,\tilde {U}_2,\cdots,\tilde {U}_n)&=-\frac{1}{n}\sum_{k=1}^{n}\log p(\tilde{U}_{k})\\
&\to \mathbb{E} \left[-\log p(\tilde{U})\right]\\
&=\mathbb{E} \left[-\log \sum_{u\in \mathcal{U}_{\tilde{u}}}p(u)\right]\\
&=H_s(\tilde{U}).
\end{aligned}
\end{equation}
The second equality holds from the classic information theory \cite[Theorem 3.1.1]{Book_Cover}.
 \end{proof}

Base on semantic and syntactic AEP, under the sequential synonymous mapping $f^n$, we have the following synonymous AEP.
\begin{theorem}
(Synonymous AEP):
If $(\tilde {U}_1,\tilde {U}_2,\cdots,\tilde {U}_n,\cdots)$ is an i.i.d. semantic sequence and the associated syntactic sequence is $\left(U_1,U_2,\cdots,U_n,\cdots\right)$, under the sequential synonymous mapping $f^n: \tilde{\mathcal{U}}^n\to\mathcal{U}^n$, we have
\begin{equation}
\begin{aligned}
&\lim_{n\to\infty}-\frac{1}{n}\left[\log p\left(U_1,U_2,\cdots,U_n\right)-\log p(\tilde {U}_1,\tilde {U}_2,\cdots,\tilde {U}_n)\right]\\
&=H\left(U\right)-H_s(\tilde {U}).
\end{aligned}
\end{equation}
\end{theorem}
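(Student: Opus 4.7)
The plan is to obtain this result as an immediate corollary of the preceding Theorem \ref{theorem5}, exploiting the linearity of the limit (or equivalently, of convergence in probability). The key observation is that the quantity whose limit we must evaluate splits cleanly as the difference of the two normalized log-probabilities already analyzed:
\begin{equation*}
-\frac{1}{n}\bigl[\log p(U_1,\ldots,U_n)-\log p(\tilde{U}_1,\ldots,\tilde{U}_n)\bigr]=\left(-\frac{1}{n}\log p(U_1,\ldots,U_n)\right)-\left(-\frac{1}{n}\log p(\tilde{U}_1,\ldots,\tilde{U}_n)\right).
\end{equation*}

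First I would invoke Theorem \ref{theorem5} to note that the first bracketed sequence converges to $H(U)$ and the second to $H_s(\tilde{U})$, both as consequences of the weak law of large numbers applied to the i.i.d.\ sequences $\{-\log p(U_k)\}$ and $\{-\log p(\mathcal{U}_{\tilde{U}_k})\}$. Next, I would combine the two convergences: since each term converges in probability to a deterministic limit, their difference converges in probability to the difference of those limits, by a routine $\epsilon/2$ union-bound argument. Concretely, for any $\epsilon>0$, the event $\bigl|{-\tfrac{1}{n}\log p(U^n)}-H(U)\bigr|<\epsilon/2$ and the event $\bigl|{-\tfrac{1}{n}\log p(\tilde{U}^n)}-H_s(\tilde{U})\bigr|<\epsilon/2$ both hold with probability approaching one, and their intersection forces the difference to lie within $\epsilon$ of $H(U)-H_s(\tilde{U})$.

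Alternatively, and perhaps more cleanly, I would bypass the explicit subtraction step by directly observing that the sequence of random variables
\begin{equation*}
Z_k \;=\; -\log p(U_k) + \log p(\mathcal{U}_{\tilde{U}_k})
\end{equation*}
is itself i.i.d., so the weak law of large numbers applied to $\{Z_k\}$ yields
\begin{equation*}
\frac{1}{n}\sum_{k=1}^{n}Z_k \;\to\; \mathbb{E}[Z]\;=\;\mathbb{E}[-\log p(U)]-\mathbb{E}[-\log p(\mathcal{U}_{\tilde{U}})]\;=\;H(U)-H_s(\tilde{U}),
\end{equation*}
which is exactly the desired statement. This one-step route makes it clear that no new ideas beyond those in Theorem \ref{theorem5} are required, and it also avoids any concern about whether the ``limit'' in the statement is meant in probability, in $L^1$, or almost surely --- the weak-law version used in Theorem \ref{theorem5} transfers verbatim.

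There is no real obstacle here; the only point deserving a sentence of commentary is to remark on the \emph{interpretation} of the result. The quantity $H(U)-H_s(\tilde{U})\geq 0$ (nonnegativity coming from Lemma \ref{lemma2}) measures the asymptotic per-symbol rate at which syntactic log-probability exceeds semantic log-probability; equivalently, it is the exponential rate of the average synonymous set size along a typical sequence. This will play the role of the ``synonymous redundancy'' in the subsequent construction of the synonymous typical set $B_\epsilon^{(n)}$ partitioning $A_\epsilon^{(n)}$, and it is precisely what enables the compression gain claimed in the forthcoming semantic source coding theorem.
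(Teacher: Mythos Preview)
Your proposal is correct, and your ``alternative'' one-step route---applying the weak law directly to the i.i.d.\ sequence $Z_k=-\log p(U_k)+\log p(\mathcal{U}_{\tilde U_k})$---is exactly what the paper does: it writes the quantity as $-\tfrac{1}{n}\sum_k[\log p(U_k)-\log p(\tilde U_k)]$ and passes to the difference of expectations. Your first route (subtracting two separate limits via an $\epsilon/2$ union bound) is also valid but slightly more roundabout; the paper skips it in favor of the direct WLLN application you identified.
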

\begin{proof}
By Theorem \ref{theorem5}, we have
\begin{equation}
\begin{aligned}
&-\frac{1}{n}\left[\log p\left(U_1,U_2,\cdots,U_n\right)-\log p(\tilde {U}_1,\tilde {U}_2,\cdots,\tilde {U}_n)\right]\\
&=-\frac{1}{n}\sum_{k=1}^{n}\left[\log p(U_{k})-\log p(\tilde{U}_{k})\right]\\
&\to \mathbb{E} \left[-\log p(U)\right]-\mathbb{E} \left[-\log p(\tilde{U})\right]\\
&=\mathbb{E}\left[-\log p(u)\right]-\mathbb{E} \left[-\log \sum_{u\in \mathcal{U}_{\tilde{u}}}p(u)\right]\\
&=H(U)-H_s(\tilde{U}).
\end{aligned}
\end{equation}
\end{proof}

Similar to the definition of syntactically typical set $A_{\epsilon}^{(n)}$ in \cite{Book_Cover}, we now introduce the semantically typical set $\tilde{A}_{\epsilon}^{(n)}$ and the synonymous typical set $B_{\epsilon}^{(n)}\left(\tilde{u}^n\right)$ respectively.
\begin{definition}\label{definition12}
Given the typical sequences $\left\{\tilde{u}^n\right\}$ with respect to the distribution $p(u)$, the semantically typical set ${\tilde{A}}_{\epsilon}^{(n)}$ is defined as the set of $n$-sequences with empirical semantic entropy $\epsilon$-close to the true semantic entropy, that is,
\begin{equation}
{\tilde{A}}_{\epsilon}^{(n)}=\left\{\tilde{u}^n\in \tilde{\mathcal{U}}^n: \left|-\frac{1}{n}\log p\left(\tilde{u}^n\right)-H_s(\tilde{U})\right|<\epsilon\right\},
\end{equation}
where
\begin{equation}
p\left(\tilde{u}^n\right)=\prod_{k=1}^{n} p(\tilde{u}_k)=\prod_{k=1}^{n} \sum_{u_k\in \mathcal{U}_{\tilde{u}_k}}p(u_k).
\end{equation}
\end{definition}

\begin{definition}
Given a specific typical sequence $\left\{\tilde{u}^n\right\}$, the synonymous typical set $B_{\epsilon}^{(n)}\left(\tilde{u}^n\right)$ with the syntactically typical sequences $\left\{u^n\right\}$ is defined as the set of $n$-sequences with the difference of empirical entropies  $\epsilon$-close to the difference of true entropies, that is,
\begin{equation}
\begin{aligned}
B_{\epsilon}^{(n)}\left(\tilde{u}^n\right)=\bigg\{ u^n\in \mathcal{U}^n: &\left|-\frac{1}{n}\log p\left(u^n\right)-H(U)\right|<\epsilon,\\
&\left|-\frac{1}{n}\log p\left(\tilde{u}^n\right)-H_s(\tilde{U})\right|<\epsilon,\\
&\left|-\frac{1}{n}\log p(\tilde{u}^n\to u^n)\right.\\
&\left.\left.-\left(H(U)-H_s(\tilde{U})\right)\right|<\epsilon\right\},
\end{aligned}
\end{equation}
where $p(u^n)\!\!=\!\!\prod_{k=1}^{n} p(u_k)$ and $ p(\tilde{u}^n)\!\!=\!\!\prod_{k=1}^{n} \sum_{u_k\in \mathcal{U}_{\tilde{u}_k}}\!\!\!\!p(u_k)$. Here the probability $p(\tilde{u}^n\to u^n)$ is defined as
\begin{equation}
p(\tilde{u}^n\to u^n)=\left\{
\begin{aligned}
&\frac{p\left(u^n\right)}{p\left(\tilde{u}^n\right)},&\text{ if } u^n=f^n(\tilde{u}^n),\\
&0, &\text{otherwise}.
\end{aligned}\right.
\end{equation}
\end{definition}

Under the sequential synonymous mapping $f^n:\tilde{\mathcal{U}}^n\to{\mathcal{U}}^n$, the syntactically typical set $A_{\epsilon}^{(n)}$ can be further partitioned into multiple synonymous typical sets $B_{\epsilon}^{(n)}\left(\tilde{u}^n\right)$, that is,
\begin{equation}
A_{\epsilon}^{(n)}=\bigcup_{\tilde{u}^n\in \tilde{A}_{\epsilon}^{(n)}} B_{\epsilon}^{(n)}\left(\tilde{u}^n\right).
\end{equation}
And for $\forall \tilde{u}^n$, $\tilde{v}^n \in \tilde{A}_{\epsilon}^{(n)}$, $\tilde{u}^n \neq \tilde{v}^n$, we have $B_{\epsilon}^{(n)}\left(\tilde{u}^n\right) \bigcap B_{\epsilon}^{(n)}\left(\tilde{v}^n\right)=\varnothing$.

Essentially, $B_{\epsilon}^{(n)}\left(\tilde{u}^n\right)$ is an equivalence class of the synonymous typical sequences. Thus for the syntactically typical set $A_{\epsilon}^{(n)}$, we can obtain a quotient set $A_{\epsilon}^{(n)}/f^n=\left\{B_{\epsilon}^{(n)}(\tilde{u}^n)\right\}$ and construct an one-to-one mapping between $\tilde{A}_{\epsilon}^{(n)}$ and $A_{\epsilon}^{(n)}/f^n$.

We now discuss the properties of syntactically and semantically typical sets. In order to the convenient reading, we first rewrite the properties of syntactically typical set as follows.

\begin{theorem}
(\cite[Theorem 3.1.2]{Book_Cover}):
\begin{enumerate}[(1)]
    \item If $\left(u_1,u_2,\cdots,u_n\right)\in A_{\epsilon}^{(n)}$, then $H(U)-\epsilon \leq -\frac{1}{n}\log p\left(u_1,u_2,\cdots,u_n\right) \leq H(U)+\epsilon$ or equivalently
         \begin{equation}\label{AEPinequality}
        2^{-n\left(H(U)+\epsilon\right)}\leq p\left(u_1,u_2,\cdots,u_n\right) \leq 2^{-n\left(H(U)-\epsilon\right)}.
        \end{equation}
    \item $\text{Pr}\left\{A_{\epsilon}^{(n)}\right\}>1-\epsilon$ for $n$ sufficiently large.
    \item $\left(1-\epsilon\right) 2^{n\left(H(U)-\epsilon\right)}\leq \left|A_{\epsilon}^{(n)}\right| \leq 2^{n\left(H(U)+\epsilon\right)}$ for $n$ sufficiently large.
  \end{enumerate}
\end{theorem}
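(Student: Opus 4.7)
The plan is to derive the three items in order, leveraging only the definition of ${A}_{\epsilon}^{(n)}$ and Theorem \ref{theorem5} (the syntactic half of semantic/syntactic AEP), since this theorem is exactly the classical AEP transcribed for the syntactic source $U$.

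First, item (1) is essentially definitional. The syntactically typical set is the event $\{|-\tfrac{1}{n}\log p(U^n)-H(U)|<\epsilon\}$, so any $u^n\in A_{\epsilon}^{(n)}$ satisfies $H(U)-\epsilon\leq -\tfrac{1}{n}\log p(u^n)\leq H(U)+\epsilon$, and exponentiating each side (base $2$) immediately yields the inequality \eqref{AEPinequality}. No probabilistic argument is needed here beyond the definition.

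Second, item (2) is where the law of large numbers enters. By Theorem \ref{theorem5}, $-\tfrac{1}{n}\log p(U^n)=-\tfrac{1}{n}\sum_{k=1}^{n}\log p(U_k)$ converges in probability to $\mathbb{E}[-\log p(U)]=H(U)$, because the summands are i.i.d.\ with finite mean. Convergence in probability means that for the fixed tolerance $\epsilon$, $\Pr\{|{-\tfrac{1}{n}\log p(U^n)}-H(U)|\geq\epsilon\}\to 0$, so for all $n$ sufficiently large this probability is below $\epsilon$, giving $\Pr\{A_{\epsilon}^{(n)}\}>1-\epsilon$.

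Third, item (3) combines (1) and (2). For the upper bound on $|A_{\epsilon}^{(n)}|$, I would use that $1\geq \sum_{u^n\in A_{\epsilon}^{(n)}} p(u^n)\geq |A_{\epsilon}^{(n)}|\cdot 2^{-n(H(U)+\epsilon)}$ by the left inequality in \eqref{AEPinequality}, yielding $|A_{\epsilon}^{(n)}|\leq 2^{n(H(U)+\epsilon)}$. For the lower bound, I would chain $1-\epsilon<\Pr\{A_{\epsilon}^{(n)}\}=\sum_{u^n\in A_{\epsilon}^{(n)}} p(u^n)\leq |A_{\epsilon}^{(n)}|\cdot 2^{-n(H(U)-\epsilon)}$ using the right inequality in \eqref{AEPinequality} together with (2), giving $(1-\epsilon)2^{n(H(U)-\epsilon)}\leq |A_{\epsilon}^{(n)}|$ for $n$ sufficiently large.

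The main obstacle is not conceptual, since every step is standard; the only care required is keeping the quantifiers straight: the upper bound on cardinality in (3) holds for every $n$, while the lower bound and item (2) itself are only asserted for $n$ sufficiently large (depending on $\epsilon$). I would therefore state precisely the threshold $n_0(\epsilon)$ coming from the WLLN in (2) and re-use the same $n_0$ throughout item (3), so that the ``sufficiently large $n$'' in items (2) and (3) refer to a common choice.
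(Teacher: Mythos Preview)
Your proposal is correct and is exactly the standard argument from Cover and Thomas. Note that the paper itself does not supply a proof of this theorem at all: it is merely quoted as \cite[Theorem~3.1.2]{Book_Cover} for the reader's convenience, with the actual argument deferred to that reference. What you have written is precisely that deferred argument, so there is nothing to compare and nothing to correct.
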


Then as the consequence of the semantic AEP, we give the properties of semantically typical set as below.
\begin{theorem}\label{theorem8}
\text{ }
\begin{enumerate}[(1)]
    \item If $\left(\tilde{u}_1,\tilde{u}_2,\cdots,\tilde{u}_n\right)\in \tilde{A}_{\epsilon}^{(n)}$, then $H_s(\tilde{U})-\epsilon \leq -\frac{1}{n}\log p\left(\tilde{u}_1,\tilde{u}_2,\cdots,\tilde{u}_n\right) \leq H_s(\tilde{U})+\epsilon$, equivalently,
        \begin{equation}\label{SAEPinequality}
        2^{-n\left(H_s(\tilde{U})+\epsilon\right)}\leq p\left(\tilde{u}_1,\tilde{u}_2,\cdots,\tilde{u}_n\right) \leq 2^{-n\left(H_s(\tilde{U})-\epsilon\right)}.
        \end{equation}
    \item $\text{Pr}\left\{\tilde{A}_{\epsilon}^{(n)}\right\}>1-\epsilon$ for sufficiently large $n$.
    \item $\left(1-\epsilon\right) 2^{n\left(H_s(\tilde{U})-\epsilon\right)}\leq\left|\tilde{A}_{\epsilon}^{(n)}\right| \leq 2^{n\left(H_s(\tilde{U})+\epsilon\right)}$ for sufficiently large $n$.
\end{enumerate}
\end{theorem}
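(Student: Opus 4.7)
The plan is to mirror the classical proof of the typical-set properties (Cover \& Thomas, Theorem 3.1.2), but applied to the semantic variable $\tilde{U}$ equipped with the pushforward distribution $p(\tilde{u})=\sum_{u\in\mathcal{U}_{\tilde{u}}}p(u)$ induced by the synonymous mapping $f$. Since Theorem \ref{theorem5} has already established the semantic AEP, namely that $-\frac{1}{n}\log p(\tilde{U}_1,\ldots,\tilde{U}_n)\to H_s(\tilde{U})$ in probability whenever $\tilde{U}_1,\tilde{U}_2,\ldots$ are i.i.d., the three claimed properties follow by repeating the standard argument verbatim. The only conceptual point to check is that $\tilde{U}$, endowed with the pushforward measure, behaves as an ordinary discrete random variable; once this is accepted, nothing new is required.

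For item (1), I would simply unfold Definition \ref{definition12}: the defining condition of $\tilde{A}_{\epsilon}^{(n)}$ is precisely $|-\frac{1}{n}\log p(\tilde{u}^n)-H_s(\tilde{U})|<\epsilon$, and exponentiating both sides of this inequality yields the bounds $2^{-n(H_s(\tilde{U})+\epsilon)}\leq p(\tilde{u}^n)\leq 2^{-n(H_s(\tilde{U})-\epsilon)}$. For item (2), I would invoke Theorem \ref{theorem5}: convergence in probability says that for any $\epsilon>0$ there exists $n_0$ with $\Pr\{|-\frac{1}{n}\log p(\tilde{U}^n)-H_s(\tilde{U})|\geq\epsilon\}<\epsilon$ for all $n\geq n_0$, and taking complements gives $\Pr\{\tilde{A}_{\epsilon}^{(n)}\}>1-\epsilon$.

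For item (3), the upper bound on the cardinality is obtained by summing the lower bound of (1) over the typical set and using that total probability is at most one:
\begin{equation}
1\;\geq\;\sum_{\tilde{u}^n\in\tilde{A}_{\epsilon}^{(n)}} p(\tilde{u}^n)\;\geq\;|\tilde{A}_{\epsilon}^{(n)}|\cdot 2^{-n(H_s(\tilde{U})+\epsilon)},
\end{equation}
so $|\tilde{A}_{\epsilon}^{(n)}|\leq 2^{n(H_s(\tilde{U})+\epsilon)}$. The matching lower bound comes from combining (2) with the upper bound of (1):
\begin{equation}
1-\epsilon\;<\;\Pr\{\tilde{A}_{\epsilon}^{(n)}\}\;=\;\sum_{\tilde{u}^n\in\tilde{A}_{\epsilon}^{(n)}} p(\tilde{u}^n)\;\leq\;|\tilde{A}_{\epsilon}^{(n)}|\cdot 2^{-n(H_s(\tilde{U})-\epsilon)},
\end{equation}
which rearranges to $|\tilde{A}_{\epsilon}^{(n)}|>(1-\epsilon)\,2^{n(H_s(\tilde{U})-\epsilon)}$.

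I do not expect any genuine obstacle; the work is mechanical once the semantic AEP is available. The only place to be slightly careful is the bookkeeping of the pushforward measure — writing $p(\tilde{u}^n)=\prod_{k=1}^{n}\sum_{u_k\in\mathcal{U}_{\tilde{u}_k}}p(u_k)$ so that the independence of the $\tilde{U}_k$ is explicit, and thereby justifying the application of the weak law of large numbers in step (2). Everything else is a transcription of the classical argument with $H(U)$ replaced by $H_s(\tilde{U})$ and $A_{\epsilon}^{(n)}$ replaced by $\tilde{A}_{\epsilon}^{(n)}$.
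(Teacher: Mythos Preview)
Your proposal is correct and matches the paper's own proof essentially line for line: the paper likewise derives (1) directly from Definition~\ref{definition12}, obtains (2) from the semantic AEP (Theorem~\ref{theorem5}) via convergence in probability, and proves both bounds in (3) by summing the pointwise probability bounds of (1) over $\tilde{A}_{\epsilon}^{(n)}$ and combining with (2). There is nothing to add.
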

\begin{proof}
The proof of property (1) is directly from the definition of $\tilde{A}_{\epsilon}^{(n)}$. By Theorem \ref{theorem5}, the probability of the event $\tilde{U}^n \in \tilde{A}_{\epsilon}^{(n)}$ tends to $1$ as $n\to \infty$. So for $\forall \epsilon>0$, $\exists n_0$, for $\forall n \geq n_0$, we have
\begin{equation}
\text{Pr}\left\{ \left|-\frac{1}{n}\log p(\tilde{U}^n)-H_s(\tilde{U})\right|<\epsilon\right\}>1-\epsilon.
\end{equation}
Then we complete the proof of property (2).

After summing over the set $\tilde{A}_{\epsilon}^{(n)}$, by using property (2), Eq. (\ref{SAEPinequality}) can be rewritten as
\begin{equation}
\left\{ \begin{aligned}
&\left|\tilde{A}_{\epsilon}^{(n)}\right| 2^{-n\left(H_s(\tilde{U})+\epsilon\right)}\leq \sum_{\tilde{u}^n\in \tilde{A}_{\epsilon}^{(n)}} p\left(\tilde{u}^n\right)\leq 1\\
&1-\epsilon \leq \sum_{\tilde{u}^n\in \tilde{A}_{\epsilon}^{(n)}} p\left(\tilde{u}^n\right) \leq \left|\tilde{A}_{\epsilon}^{(n)}\right| 2^{-n\left(H_s(\tilde{U})-\epsilon\right)}
\end{aligned}\right.
\end{equation}

Hence, we can write
\begin{equation}\label{equation61}
\left|\tilde{A}_{\epsilon}^{(n)}\right| \geq \left(1-\epsilon\right) 2^{n\left(H_s(\tilde{U})-\epsilon\right)}
\end{equation}
and
\begin{equation}\label{equation62}
\left|\tilde{A}_{\epsilon}^{(n)}\right| \leq 2^{n\left(H_s(\tilde{U})+\epsilon\right)}
\end{equation}
respectively and complete the third property.
\end{proof}

Furthermore, we give the properties of synonymous typical set as below.
\begin{theorem}\label{SynTSet_Theorem}
\text{ }
\begin{enumerate}[(1)]
    \item Given a semantic sequence $\left(\tilde{u}_1,\tilde{u}_2,\cdots,\tilde{u}_n\right)\in \tilde{A}_{\epsilon}^{(n)}$, if $\left(u_1, u_2,\cdots,u_n\right)\in B_{\epsilon}^{(n)}\left(\tilde{u}^n\right)$, then $H(U)-H_s(\tilde{U})-\epsilon \leq -\frac{1}{n}\log \frac{p\left(u^n\right)}{p\left(\tilde{u}^n\right)} \leq H(U)-H_s(\tilde{U})+\epsilon$, equivalently,
        \begin{equation}
        2^{-n\left(H(U)-H_s(\tilde{U})+\epsilon\right)}\leq \frac{p\left(u^n\right)}{p\left(\tilde{u}^n\right)} \leq 2^{-n\left(H(U)-H_s(\tilde{U})-\epsilon\right)}.
        \end{equation}
    \item $ 2^{n\left(H(U)-H_s(\tilde{U})-\epsilon\right)}\leq\left|B_{\epsilon}^{(n)}\left(\tilde{u}^n\right)\right| \leq 2^{n\left(H(U)-H_s(\tilde{U})+\epsilon\right)}$ for sufficiently large $n$.
\end{enumerate}
\end{theorem}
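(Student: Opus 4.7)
Part (1) is essentially a restatement of the defining conditions of $B_\epsilon^{(n)}(\tilde u^n)$. The third clause of the definition already states that $|-\tfrac{1}{n}\log p(\tilde u^n\to u^n) - (H(U)-H_s(\tilde U))| < \epsilon$ for any $u^n$ in the set. Because $p(\tilde u^n \to u^n)$ vanishes outside the synonymous class $\prod_{k=1}^n \mathcal U_{\tilde u_k}$, every $u^n \in B_\epsilon^{(n)}(\tilde u^n)$ lies in that class, on which $p(\tilde u^n\to u^n) = p(u^n)/p(\tilde u^n)$. Exponentiating yields the claimed two-sided bound on the ratio.

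For the upper bound in part (2), I would sum this ratio over $B_\epsilon^{(n)}(\tilde u^n)$. The factorization $p(\tilde u^n) = \prod_{k=1}^n \sum_{u_k\in\mathcal U_{\tilde u_k}} p(u_k)$ gives
\begin{equation}
\sum_{u^n \in \prod_k \mathcal U_{\tilde u_k}} \frac{p(u^n)}{p(\tilde u^n)} = 1,
\end{equation}
so the corresponding sum restricted to $B_\epsilon^{(n)}(\tilde u^n)$ is at most $1$. Combined with the per-term lower bound from part (1), this forces $|B_\epsilon^{(n)}(\tilde u^n)| \leq 2^{n(H(U)-H_s(\tilde U)+\epsilon)}$, mirroring the argument for $|\tilde A_\epsilon^{(n)}|$ in Theorem \ref{theorem8}.

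The lower bound is the substantive step. My plan is to fix $\tilde u^n \in \tilde A_\epsilon^{(n)}$ and show that the conditional mass $\sum_{u^n\in B_\epsilon^{(n)}(\tilde u^n)} p(u^n)/p(\tilde u^n) \geq 1-\delta$ for $n$ large. Under the conditional distribution $p(u^n|\tilde u^n) = \prod_k p(u_k|\tilde u_k)$, the second typicality condition in the definition of $B_\epsilon^{(n)}$ is automatic from $\tilde u^n \in \tilde A_\epsilon^{(n)}$, while the first and third conditions follow from a Chebyshev-type concentration argument applied to the independent random variables $-\log p(U_k|\tilde u_k)$, whose expected average is $\tfrac{1}{n}\sum_k H(U|\tilde U=\tilde u_k)$. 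Pairing this concentration with the per-term upper bound $p(u^n)/p(\tilde u^n) \leq 2^{-n(H(U)-H_s(\tilde U)-\epsilon)}$ from part (1) then gives $|B_\epsilon^{(n)}(\tilde u^n)| \geq (1-\delta)\, 2^{n(H(U)-H_s(\tilde U)-\epsilon)}$, and the prefactor $(1-\delta)$ is absorbed into the exponent for $n$ sufficiently large.

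The main obstacle is the concentration step itself: the conditional summands $-\log p(U_k|\tilde u_k)$ are independent but \emph{not} identically distributed, and each one's mean depends on the specific $\tilde u_k$ that is conditioned upon. Showing that $\tfrac{1}{n}\sum_k H(U|\tilde U=\tilde u_k)$ is $\epsilon$-close to $H(U|\tilde U) = H(U)-H_s(\tilde U)$ uniformly over $\tilde u^n \in \tilde A_\epsilon^{(n)}$ requires some control on the empirical distribution of the $\tilde u_k$'s, which weak typicality alone does not supply. The cleanest repair is to strengthen Definition \ref{definition12} to strong typicality, so that the empirical distribution of the $\tilde u_k$'s is $\epsilon$-close to $p(\tilde u)$ by assumption; the Chebyshev bound then holds uniformly and the stated theorem follows. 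An alternative is to argue averaged over $\tilde u^n$ weighted by $p(\tilde u^n)$ and pass to a ``most-$\tilde u^n$'' statement, which is what is actually used in the subsequent source-coding theorem, but which slightly weakens the pointwise bound given here.
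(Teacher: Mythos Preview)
Your treatment of part~(1) and of the upper bound in part~(2) matches the paper exactly: part~(1) is declared immediate from the definition, and the upper bound follows by summing the per-term lower bound $p(u^n)/p(\tilde u^n)\ge 2^{-n(H(U)-H_s(\tilde U)+\epsilon)}$ over $B_\epsilon^{(n)}(\tilde u^n)$ and using that the total conditional mass is at most~$1$.

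For the lower bound the two arguments diverge. The paper simply writes
\[
p(\tilde u^n)=\sum_{u^n\in B_\epsilon^{(n)}(\tilde u^n)}p(u^n)
\]
as an \emph{equality} and then combines it with the per-term upper bound from part~(1) to conclude $|B_\epsilon^{(n)}(\tilde u^n)|\ge 2^{n(H(U)-H_s(\tilde U)-\epsilon)}$. That equality is exactly the statement you flagged as the substantive step: it asserts that the synonymous typical set carries the full mass of the synonymous class $\prod_k\mathcal U_{\tilde u_k}$, which strictly speaking requires the concentration argument you outline (otherwise one only has $p(\tilde u^n)\ge\sum_{B_\epsilon^{(n)}}p(u^n)$, and the inequality points the wrong way). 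The paper treats this as self-evident and does not address the non-i.i.d.\ issue you raise about the conditional summands $-\log p(U_k\mid\tilde u_k)$ or the dependence of their means on the particular $\tilde u^n$. Your proposed fixes---passing to strong typicality for $\tilde u^n$, or proving an averaged rather than pointwise bound---are both sound and are what a fully rigorous proof would need; the paper's version is shorter but leaves precisely the gap you identified unfilled.
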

\begin{proof}
The proof of property (1) is directly from the definition of $B_{\epsilon}^{(n)}\left(\tilde{u}^n\right)$.

To prove the left inequality of property (2), we write
\begin{equation}
\begin{aligned}
p\left(\tilde{u}^n\right)&=\sum_{u^n\in B_{\epsilon}^{(n)}\left(\tilde{u}^n\right)} p\left(u^n\right) \\
&\leq p\left(\tilde{u}^n\right) \sum_{u^n \in B_{\epsilon}^{(n)}\left(\tilde{u}^n\right)}2^{-n\left(H(U)-H_s(\tilde{U})-\epsilon\right)}\\
& =p\left(\tilde{u}^n\right) \left|B_{\epsilon}^{(n)}\left(\tilde{u}^n\right)\right| 2^{-n\left(H(U)-H_s(\tilde{U})-\epsilon\right)}.
\end{aligned}
\end{equation}
So we have
\begin{equation}
\left|B_{\epsilon}^{(n)}\left(\tilde{u}^n\right)\right|\geq 2^{n\left(H(U)-H_s(\tilde{U})-\epsilon\right)}.
\end{equation}

On the other hand, for the right inequality, we can write
\begin{equation}
\begin{aligned}
p\left(\tilde{u}^n\right)&=\sum_{u^n\in B_{\epsilon}^{(n)}\left(\tilde{u}^n\right)} p\left(u^n\right) \\
&\geq p\left(\tilde{u}^n\right) \sum_{u^n \in B_{\epsilon}^{(n)}\left(\tilde{u}^n\right)}2^{-n\left(H(U)-H_s(\tilde{U})+\epsilon\right)}\\
& =p\left(\tilde{u}^n\right) \left|B_{\epsilon}^{(n)}\left(\tilde{u}^n\right)\right| 2^{-n\left(H(U)-H_s(\tilde{U})+\epsilon\right)}.
\end{aligned}
\end{equation}
Similarly, we have
\begin{equation}
\left|B_{\epsilon}^{(n)}\left(\tilde{u}^n\right)\right| \leq 2^{n\left(H(U)-H_s(\tilde{U})+\epsilon\right)}
\end{equation}
and complete the proof.
\end{proof}

\begin{remark}
Both the probability of syntactically typical set $A_{\epsilon}^{(n)}$ and the probability of semantically typical set $\tilde{A}_{\epsilon}^{(n)}$ trends to $1$ with sufficiently large $n$. Furthermore, all the syntactically and semantically typical sequences are almost equiprobability. Therefore, the number of syntactically typical sequences is about $\left|A_{\epsilon}^{(n)}\right|\thickapprox2^{n(H(U)\pm\epsilon)}$ and that of semantically typical sequences is about $\left|\tilde{A}_{\epsilon}^{(n)}\right|\thickapprox2^{n(H_s(\tilde{U})\pm\epsilon)}$. Simultaneously, the number of synonymous typical sequences is about $\left|B_{\epsilon}^{(n)}\left(\tilde{u}^n\right)\right|\thickapprox2^{n(H(U)-H_s(\tilde{U})\pm\epsilon)}$. In fact, all the synonymous typical sets have almost the same number of typical sequences. Hence, hereafter, in the case of non-confusion, we abbreviate $B_{\epsilon}^{(n)}\left(\tilde{u}^n\right)$ to $B_{\epsilon}^{(n)}$.
\end{remark}
\subsection{Semantic Source Coding Theorem}
We now discuss the semantic source coding. As shown in Fig. \ref{Semantic_Source_coding}, in the side of transmitter, with the help of synonymous mapping $f^n$, the semantic index $i_s$ is mapped into a syntactic sequence $U^n$ and encoded into a source codeword $X^n$. In the other side, the decoder decides the codeword $X^n$ to an estimated syntactic sequence $\hat{U}^n$. After de-synonymous mapping $g^n$, we obtain an index estimation of semantic sequence $\hat{i}_s$.

\begin{figure*}[htbp]
\setlength{\abovecaptionskip}{0.cm}
\setlength{\belowcaptionskip}{-0.cm}
  \centering{\includegraphics[scale=0.8]{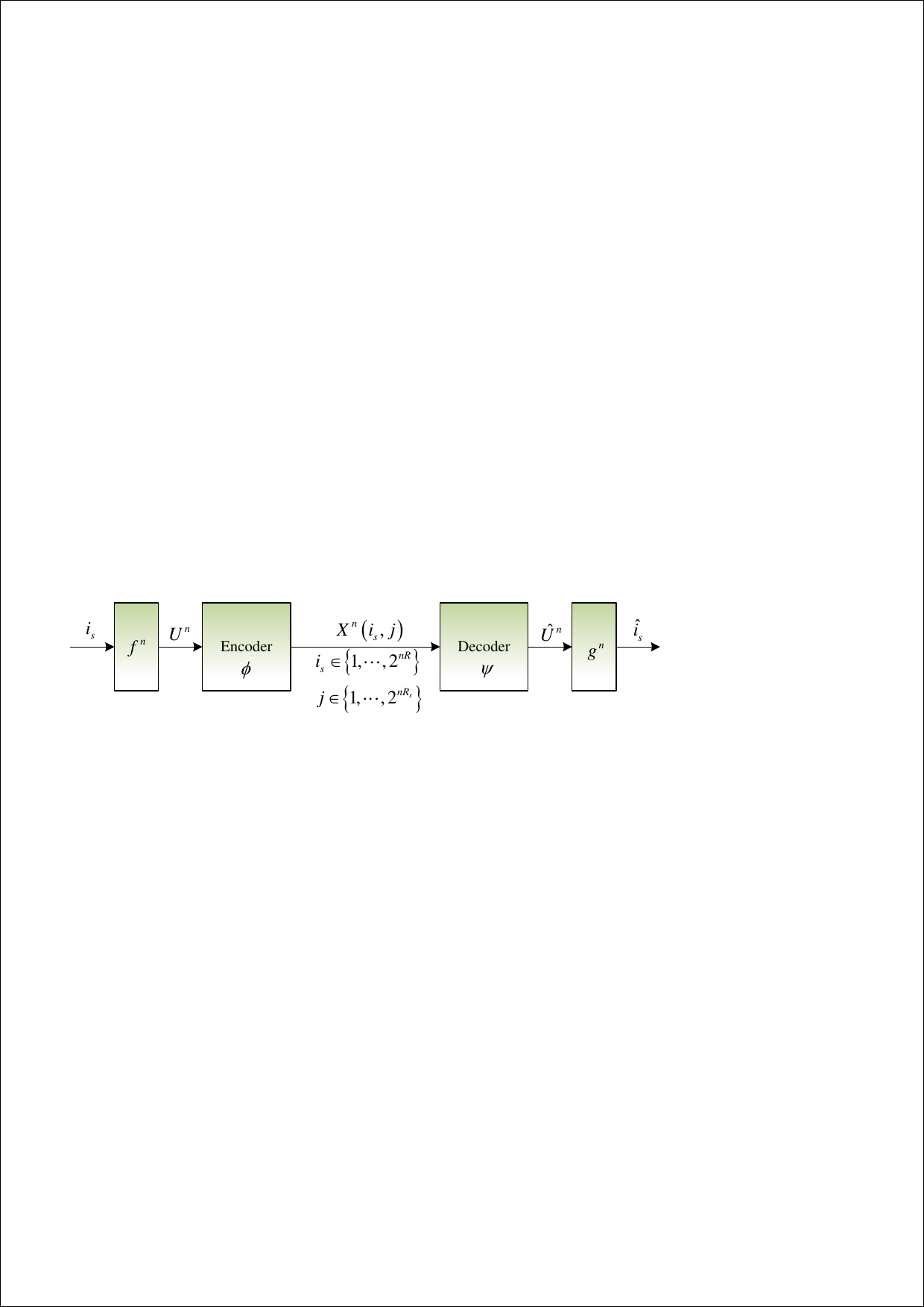}}
  \caption{Semantic lossless source encoder and decoder.}\label{Semantic_Source_coding}
\end{figure*}

\begin{definition}
An $\left(M,n\right)$ code for semantic source coding consists of the following parts:
\begin{enumerate}[(1)]
   \item A semantic index set $\mathcal{I}_s=\left\{1,\cdots,i_s,\cdots, M_s\right\}$ and a syntactic index set $\mathcal{I}=\left\{1,\cdots,i,\cdots, M\right\}$.
   \item By the synonymous mapping $f^n: \tilde{\mathcal{U}}^n \to \mathcal{U}^n$, one semantic sequence in the set $\tilde{\mathcal{U}}^n$ is mapped into a syntactic sequence.
   \item An encoding function $\phi: \mathcal{U}^n\to  \mathcal{X}^n$ generates the set of codewords, namely, codebook, $\mathcal{C}=\left\{X^n(1),X^n(2),\cdots,X^n(M)\right\}$. Due to synonymous mapping, this codebook can be partitioned into synonymous codeword subsets $\mathcal{C}_s(i_s)=\{X^n(i_s,j),i_s\in \mathcal{I}_s,j=1,2,\cdots,\frac{M}{M_s}\}$, where $X^n(i_s,j)$ denotes the $j$-th codeword of the $i_s$-th subset.
   \item A decoding function $\psi: \mathcal{X}^n \to \mathcal{U}^n$ outputs the decision syntactic sequence $\hat{U}^n$.
   \item After de-mapping, $g^n (\hat{U}^n)=\hat{i}_s$, the estimated semantic index is obtained. Note that both $\psi$ and $g^n$ are deterministic.
\end{enumerate}
\end{definition}

Under the synonymous mapping $f^n$, $\mathcal{C}_s$ is an equivalence class of the synonymous codewords. So we can construct the quotient set $\mathcal{C}/f^n=\left\{\mathcal{C}_s\right\}$. Without loss of generality, we can assume all the synonymous sets have the same number of codewords, that is, $\left|\mathcal{C}_s\right|=\frac{M}{M_s}=2^{nR_s}$, where $R_s$ is called the rate of synonymous codeword set. So we have $\left|\mathcal{C}/f^n\right|=M_s$. Let $R=\frac{1}{n}\log_2 M_s$ denote the semantic code rate. Furthermore, let $R'=R+R_s=\frac{1}{n}\log_2 M$ be the syntactic code rate.

\begin{definition}
The decoding error probability is defined as
\begin{equation}
P_e^{(n)}=\text{Pr}(g^n (\mathcal{C}_s )\neq i_s).
\end{equation}
\end{definition}

We now give the formal description of semantic source coding theorem.
\begin{theorem}
Given the semantic source $\tilde{U}$ and the syntactic source $U$ with the synonymous mapping $f:\tilde{\mathcal{U}}\to \mathcal{U}$, for each code rate $R>H_s(\tilde{U})$, there exists a series of $\left(2^{n(R+R_s)},n\right)$ codes, when code length $n$ tends to sufficiently large, the error probability is close to zero, i.e. $P_e^{(n)}\to 0$. On the contrary, if $R<H_s(\tilde{U})$, then for any $\left(2^{n(R+R_s)},n\right)$ code, the error probability tends to $1$ with $n$ sufficiently large.
\end{theorem}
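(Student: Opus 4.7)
The plan is to follow Shannon's typicality-based proof of the source coding theorem, replacing the classical typical set $A_{\epsilon}^{(n)}$ with its semantic counterpart $\tilde{A}_{\epsilon}^{(n)}$ of Definition \ref{definition12}. All of the heavy lifting is supplied by the three properties of $\tilde{A}_{\epsilon}^{(n)}$ in Theorem \ref{theorem8}, together with the size bound on each synonymous typical subset $B_{\epsilon}^{(n)}$ in Theorem \ref{SynTSet_Theorem}. Because the relevant error event is a statement about the \emph{semantic} index rather than about the underlying syntactic sequence, it is the semantic AEP that one needs to invoke, not the syntactic AEP.

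For the direct part I would fix $\epsilon>0$ small enough that $R>H_s(\tilde{U})+2\epsilon$. By Theorem \ref{theorem8}(3), $|\tilde{A}_{\epsilon}^{(n)}|\le 2^{n(H_s(\tilde{U})+\epsilon)}<M_s$ for $n$ large, so one can assign a distinct index to every typical semantic sequence and reserve a single index for all non-typical ones. The encoder reads $u^n$, determines the unique $\tilde{u}^n$ with $u^n\in f^n(\tilde{u}^n)$ (unique because the synonymous mapping partitions $\mathcal{U}^n$ into disjoint classes), and transmits the assigned semantic index together with up to $nR_s=n(H(U)-H_s(\tilde{U})+\epsilon)$ further bits locating $u^n$ inside $B_{\epsilon}^{(n)}(\tilde{u}^n)$; the latter rate is justified by Theorem \ref{SynTSet_Theorem}(2). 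After $\psi$ and $g^n$ the decoder recovers $\hat{i}_s$, and a semantic mismatch can only arise when $\tilde{U}^n\notin\tilde{A}_{\epsilon}^{(n)}$, which event has probability at most $\epsilon$ by Theorem \ref{theorem8}(2). Hence $P_e^{(n)}\to 0$.

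For the converse I would fix $\epsilon>0$ with $R<H_s(\tilde{U})-2\epsilon$ and bound the probability of correct semantic decoding directly. Let $\mathcal{D}\subseteq\tilde{\mathcal{U}}^n$ collect those semantic sequences whose index is recovered correctly by at least one syntactic representative; distinct elements of $\mathcal{D}$ must correspond to distinct decoded indices, so $|\mathcal{D}|\le M_s$. Splitting on semantic typicality,
\begin{equation}
1-P_e^{(n)}\le \Pr\!\bigl(\tilde{U}^n\in\mathcal{D}\cap\tilde{A}_{\epsilon}^{(n)}\bigr)+\Pr\!\bigl(\tilde{U}^n\notin\tilde{A}_{\epsilon}^{(n)}\bigr).
\end{equation}
By Theorem \ref{theorem8}(1) the first term is bounded by $M_s\cdot 2^{-n(H_s(\tilde{U})-\epsilon)}=2^{-n(H_s(\tilde{U})-R-\epsilon)}\le 2^{-n\epsilon}$, while the second is below $\epsilon$ for $n$ large; both vanish as $n\to\infty$, so $P_e^{(n)}\to 1$.

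The main obstacle I expect is the two-layer bookkeeping: the error event lives at the semantic level, while $\phi$ and $\psi$ act on syntactic sequences with the de-mapping $g^n$ sitting in between. In particular, I must argue carefully that within each synonymous class the decoded semantic index is a single well-defined value, so that the pigeon-hole count $|\mathcal{D}|\le M_s$ really holds. A secondary subtlety is that the stated converse is the strong-converse form $P_e^{(n)}\to 1$: a Fano-style inequality alone would only yield a weak converse bounding $P_e^{(n)}$ away from zero, so the per-sequence probability bound in $\tilde{A}_{\epsilon}^{(n)}$ (and not merely the cardinality bound) is essential to upgrade the conclusion.
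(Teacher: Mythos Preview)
Your proposal is correct and follows essentially the same route as the paper: both the direct and converse parts rest on indexing $\tilde{A}_{\epsilon}^{(n)}$ via Theorem~\ref{theorem8}, with Theorem~\ref{SynTSet_Theorem} controlling the auxiliary rate $R_s$, and the strong converse obtained by the same typical/atypical split together with the per-sequence bound $p(\tilde{u}^n)\le 2^{-n(H_s(\tilde{U})-\epsilon)}$. Your explicit pigeonhole set $\mathcal{D}$ and the remark that $g^n$ is constant on each synonymous class make the two-layer bookkeeping more transparent than the paper's somewhat compressed converse, but the substance is identical.
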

\begin{proof}
First, we prove the direct part of the theorem. We select $\epsilon>0$ and construct a one-to-one mapping from $\tilde{A}_{\epsilon}^{(n)}$ to $\mathcal{C}/f^n$. So for sufficiently large $n$, by Theorem \ref{theorem8}, we have
\begin{equation}
\left(1-\epsilon\right) 2^{n\left(H_s(\tilde{U})-\epsilon\right)}\leq\left|\mathcal{C}/f^n\right|=2^{nR}=\left|\tilde{A}_{\epsilon}^{(n)}\right| \leq 2^{n\left(H_s(\tilde{U})+\epsilon\right)}.
\end{equation}

Therefore, the semantic code rate satisfies
\begin{equation}
\frac{1}{n}\log\left(1-\epsilon\right)+H_s(\tilde{U})-\epsilon \leq R \leq H_s(\tilde{U})+\epsilon .
\end{equation}
Also by Theorem \ref{theorem8}, it follows that
\begin{equation}
P_e^{(n)}=\text{Pr}(\tilde{U}^n\notin \tilde{A}_{\epsilon}^{(n)})<\epsilon.
\end{equation}

By Theorem \ref{SynTSet_Theorem}, the size of synonymous set satisfies
\begin{equation}
1\leq\left|\mathcal{C}_s\right|=2^{nR_s}\leq \left|B_{\epsilon}^{(n)}\left(\tilde{u}^n\right)\right| \leq 2^{n\left(H(U)-H_s(\tilde{U})+\epsilon\right)}.
\end{equation}
Therefore the syntactic code rate satisfies
\begin{equation}\label{equation74}
\frac{1}{n}\log\left(1-\epsilon\right)+R_s+H_s(\tilde{U})-\epsilon \leq R' \leq H_s(\tilde{U})+R_s+\epsilon .
\end{equation}

When the synonymous set $\mathcal{C}_s$ only has one codeword to represent the semantically typical sequence, that is, $R_s=0$, letting $\epsilon\to 0$, both the code rate $R$ and $R'$ tend to $H_s(\tilde{U})$, while $P_e^{(n)}$ tends to $0$. On the other hand, substituting $R_s=H(U)-H_s(\tilde{U})$ into (\ref{equation74}) and letting $\epsilon\to 0$, $R\to H_s(\tilde{U})$ and $R'\to H(U)$. So we prove the direct part of theorem.

Next we prove the converse part. Consider any code with code length $n$ and the number of synonymous sets satisfies $2^{nR}\leq 2^{n\left(H_s(\tilde{U})-\zeta\right)}$. So some sets are used to represent the semantically typical sequences $g^n\left(\mathcal{C}_s\right)\in \tilde{A}_{\epsilon}^{(n)}$ and the rest sets to represent the semantic non-typical sequences $g^n\left(\mathcal{C}_s\right)\notin \tilde{A}_{\epsilon}^{(n)}$. By Theorem \ref{theorem8}, for sufficiently large $n$, the probability of semantic sequences covered by the code is upper bounded by
\begin{equation}
\begin{aligned}
1-P_e^{(n)}&=2^{n\left(H_s(\tilde{U})-\zeta\right)}2^{-n\left(H_s(\tilde{U})-\epsilon\right)}+\text{Pr}(g^n\left(\mathcal{C}_s\right)\notin \tilde{A}_{\epsilon}^{(n)})\\
&< 2^{-n\left(\zeta-\epsilon\right)}+\epsilon.
\end{aligned}
\end{equation}
Therefore, we can write the error probability as
\begin{equation}
P_e^{(n)} > 1-2^{-n\left(\zeta-\epsilon\right)}-\epsilon.
\end{equation}
This inequality holds when $n\to\infty$ for $\epsilon>0$ and $\zeta>\epsilon$. So we have $P_e^{(n)}>1-2\epsilon$. Additionally, when $n\to\infty$ and $\epsilon\to 0$, $P_e^{(n)} \to 1$. So we complete the proof of the semantic source coding theorem.
\end{proof}

\begin{remark}
The semantic source coding theorem is an extended version of the counterpart in the classic information theory. The limitation of semantic compression rate $R$ is $H_s(\tilde{U})$ while the corresponding syntactic rate $R'$ is $H_s(\tilde{U})+R_s$. If the synonymous set is represented by only one codeword, i.e., $R_s=0$, then compression rate $R'$ tends to the semantic entropy $H_s(\tilde{U})$. On the other hand, if the synonymous mapping is ignored and all the synonymous typical sequences are distinct, thus $R_s=H(U)-H_s(\tilde{U})$ and the compression rate $R'$ achieves the information entropy $H(U)$. Since the distinction between codewords in the synonymous set is no longer regarded as an error, by using semantic coding, the source data can be further compressed and the efficiency is improved.
\end{remark}

\subsection{Semantic Source Coding Method}
Similar to the classic source coding, the variable length coding is desired for semantic source coding. Thus we also obtain the semantic version of Kraft inequality as following.
\begin{theorem}
(Semantic Kraft Inequality):
Given a discrete random variable $U\in \mathcal{U}=\left\{u_i\right\}_{i=1}^{N}$, the corresponding semantic variable $\tilde {U}\in \tilde{\mathcal{U}}=\left\{\tilde{u}_{i_s}\right\}_{i_s=1}^{\tilde{N}}$, and the synonymous mapping $f: \tilde{\mathcal{U}}\to\mathcal{U}$. For any prefix code over an alphabet of size $F$ exists if and only if the codeword length $l_1,l_2,\cdots, l_{\tilde{N}}$ satisfies
\begin{equation}
\sum_{i_s=1}^{\tilde{N}}F^{-l_{i_s}}\leq 1.
\end{equation}
\end{theorem}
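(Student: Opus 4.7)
The plan is to reduce the semantic Kraft inequality to the classical Kraft inequality by treating the semantic alphabet $\tilde{\mathcal{U}}$ as the effective source alphabet. The synonymous mapping $f$ partitions $\mathcal{U}$ into the equivalence classes $\{\mathcal{U}_{i_s}\}_{i_s=1}^{\tilde{N}}$, and in the variable-length semantic code exactly one codeword of length $l_{i_s}$ is assigned per semantic symbol $\tilde{u}_{i_s}$ (equivalently, per equivalence class). Hence the prefix condition must be verified only across these $\tilde{N}$ codewords, independently of how many syntactic elements lie inside each $\mathcal{U}_{i_s}$. Once this reduction is made explicit, the proof mirrors the classical Kraft argument on an $F$-ary tree.

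For the necessity direction, I would embed the $\tilde{N}$ codewords into the full $F$-ary tree of depth $L=\max_{i_s} l_{i_s}$. Each codeword of length $l_{i_s}$ corresponds to an internal node whose descendant leaves at depth $L$ form a subtree of size $F^{L-l_{i_s}}$. The prefix property guarantees that no codeword is an ancestor of another, so the subtrees of depth-$L$ descendants are pairwise disjoint. Summing the sizes and dividing by $F^L$ gives
\begin{equation}
\sum_{i_s=1}^{\tilde{N}} F^{L-l_{i_s}} \leq F^L \;\Longleftrightarrow\; \sum_{i_s=1}^{\tilde{N}} F^{-l_{i_s}} \leq 1.
\end{equation}

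For the sufficiency direction, I would sort the lengths $l_1 \leq l_2 \leq \cdots \leq l_{\tilde{N}}$ and construct the codewords greedily on the same $F$-ary tree. At step $i_s$, having assigned $i_s-1$ codewords, the forbidden nodes at depth $l_{i_s}$ number $\sum_{k<i_s} F^{l_{i_s}-l_k}$; by the hypothesis $\sum_{i_s=1}^{\tilde{N}} F^{-l_{i_s}}\leq 1$ this quantity is strictly less than $F^{l_{i_s}}$, so at least one free node remains and may be declared the codeword for $\tilde{u}_{i_s}$. Inductively this yields a prefix code with the prescribed lengths.

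There is no substantive obstacle here beyond the standard argument; the only point requiring a line of justification is the observation that the statement concerns the $\tilde{N}$ semantic symbols rather than the $N$ syntactic symbols, so that codeword collisions within a synonymous class $\mathcal{U}_{i_s}$ are irrelevant (they are decoded to the same $\tilde{u}_{i_s}$ anyway). With that reduction, both directions are a verbatim transcription of the classical Kraft proof with $N$ replaced by $\tilde{N}$.
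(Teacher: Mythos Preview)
Your proposal is correct and matches the paper's approach: the paper itself omits the proof, stating only that it ``is similar to that of classic Kraft inequality'' and that the distinction is that the semantic prefix code is performed over synonymous sets rather than single syntactic symbols, so $\tilde{N}\leq N$. Your reduction to the classical Kraft argument with $N$ replaced by $\tilde{N}$, together with the standard $F$-ary tree necessity and greedy sufficiency constructions, is exactly the intended route and in fact supplies the details the paper leaves implicit.
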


The proof is similar to that of classic Kraft inequality and omitted. It should be noted that the semantic Kraft inequality has the same form as that of classic counterpart. However, since the semantic prefix code is performed over a synonymous set rather than a single syntactic symbol, the number of codewords is less than classic prefix code, that is, $\tilde{N}\leq N$.

Furthermore, we can obtain the average code length of optimal semantic source code as follows.
\begin{theorem}
Given the syntactic source distribution $p(u)$ and the synonymous mapping $f: \tilde{\mathcal{U}}\to\mathcal{U}$, let $l_1^*,l_2^*,\cdots, l_{\tilde{N}}^*$ denote the optimal code lengths with an $F$-ary alphabet, then the expected length $\bar{L}^*$ of the optimal semantic code satisfies
\begin{equation}
\frac{H_s(\tilde{U})}{\log F}\leq \bar{L}^* < \frac{H_s(\tilde{U})}{\log F}+1.
\end{equation}
\end{theorem}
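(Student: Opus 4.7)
The plan is to mirror Shannon's classical optimal-code-length argument, but with the synonymous probabilities $p(\mathcal{U}_{i_s})$ playing the role of the symbol probabilities. The two bounds are handled separately: the lower bound comes from the semantic Kraft inequality plus non-negativity of a relative entropy, and the upper bound is achieved by a Shannon-style length assignment.

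For the lower bound, I would express the gap between the expected length and the normalized semantic entropy as
\begin{equation}
\bar{L}\log F - H_s(\tilde{U}) = \sum_{i_s=1}^{\tilde{N}} p(\mathcal{U}_{i_s})\log\!\frac{p(\mathcal{U}_{i_s})}{F^{-l_{i_s}}}.
\end{equation}
Setting $c=\sum_{j_s} F^{-l_{j_s}}$ and $r_{i_s}=F^{-l_{i_s}}/c$, this rewrites as $D(p_s\|r) - \log c$, where $p_s$ is the semantic distribution on $\tilde{\mathcal{U}}$. By the semantic Kraft inequality for a prefix code, $c\le 1$, so $-\log c\ge 0$; combined with the non-negativity of relative entropy, this forces $\bar{L}\ge H_s(\tilde{U})/\log F$. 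Applied to the optimum, this yields the left inequality.

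For the upper bound, I would invoke a Shannon-style assignment $l_{i_s} = \lceil -\log_F p(\mathcal{U}_{i_s}) \rceil$ for each synonymous class. These lengths are integer-valued and satisfy
\begin{equation}
\sum_{i_s=1}^{\tilde{N}} F^{-l_{i_s}} \le \sum_{i_s=1}^{\tilde{N}} F^{\log_F p(\mathcal{U}_{i_s})} = \sum_{i_s=1}^{\tilde{N}} p(\mathcal{U}_{i_s}) = 1,
\end{equation}
so a prefix code with these lengths exists by the semantic Kraft inequality. Using $\lceil x\rceil < x+1$, the expected length of this constructed code is bounded above by $H_s(\tilde{U})/\log F + 1$. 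Since the optimum expected length $\bar{L}^*$ cannot exceed the expected length of any feasible code, we conclude $\bar{L}^* < H_s(\tilde{U})/\log F + 1$.

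There is no real obstacle here beyond recycling the classical proof: once the probabilities $p(u_i)$ are replaced by the class probabilities $p(\mathcal{U}_{i_s})$ and the semantic Kraft inequality is in hand, both halves go through verbatim. The only place one has to be slightly careful is in noting that the codewords are assigned one per synonymous class $\tilde{u}_{i_s}$ (not per syntactic symbol $u_i$), so all sums run over $i_s=1,\dots,\tilde{N}$ and the resulting entropy that appears is $H_s(\tilde{U})$ rather than $H(U)$, which is exactly what is required.
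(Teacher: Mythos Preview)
Your proposal is correct and takes essentially the same approach as the paper: the paper assigns the Shannon lengths $l_{i_s}=\lceil -\log_F p(\mathcal{U}_{i_s})\rceil$ and then appeals to the classical argument via the semantic Kraft inequality, which is exactly what you spell out in detail. Your treatment of the lower bound via the relative-entropy decomposition is simply the standard justification that the paper elides with the phrase ``similar to the classic version.''
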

\begin{proof}
Assign the code length as $l_{i_s}=\left\lceil-\log_F \sum_{i\in \mathcal{U}_{i_s}} p(u_i)\right\rceil$. Similar to the classic version, by semantic Kraft inequality, we have
\begin{equation}
\frac{H_s(\tilde{U})}{\log F} \leq \bar{L}^*\leq \sum_{i_s=1}^{\tilde{N}} p(\tilde{u}_{i_s}) l_{i_s}< \frac{H_s(\tilde{U})}{\log F}+1.
\end{equation}
So we prove the theorem.
\end{proof}

\begin{example}
(Semantic Huffman Coding):
Now we describe an example of semantic Huffman coding. For a syntactic source $U$ with four symbols $u_1,u_2,u_3,u_4$, the probability distribution is listed in Table \ref{Syntactic_HC}. The information entropy is $H(U)=1.75$ bits. By using Huffman coding, the codewords are shown in Table \ref{Syntactic_HC} and the average code length is $\bar{L}=1.75$ bits $=H(U)$.

\begin{table}[tp]
\centering
\caption{Probability distribution and Huffman codes of syntactic source $U$.} \label{Syntactic_HC}
\begin{tabular}{|c|c|c|c|c|}
  \hline Syntactic symbol      &       $u_1$          &          $u_2$         &         $u_3$         &        $u_4$         \\
  \hline Probability               &  $\frac{1}{2}$    &    $\frac{1}{4}$   &   $\frac{1}{8}$   &   $\frac{1}{8}$  \\
  \hline Syn. HC  &           0              &            10             &          110            &           111         \\
  \hline
\end{tabular}
\end{table}

If we give a synonymous mapping $f$, that is, $\tilde{u}_1\to \{u_1\}$, $\tilde{u}_2\to \{u_2\}$, $\tilde{u}_3\to \{u_3,u_4\}$, the probability distribution of semantic source $\tilde{U}$ is listed in Table \ref{Semantic_HC}. So the semantic entropy is calculated as $H_s(\tilde{U})=1.5$ sebits. Correspondingly, by the semantic Huffman codewords listed in Table \ref{Semantic_HC}, the average code length is $\bar{L}_s=1.5$ sebits $=H_s(\tilde{U})$. Distinctly, due to synonymous mapping, the average code length of semantic Huffman code is smaller than that of traditional Huffman code.

\begin{table}[tp]
\centering
\caption{Probability distribution and Huffman codes of semantic source $\tilde{U}$.} \label{Semantic_HC}
\begin{tabular}{|c|c|c|c|}
  \hline Semantic symbol      &       $\tilde{u}_1\to\{u_1\}$     &          $\tilde{u}_2\to \{u_2\}$         &         $\tilde{u}_3\to \{u_3,u_4\}$   \\
  \hline Probability               &              $\frac{1}{2}$            &                 $\frac{1}{4}$                 &                $\frac{1}{4}$                  \\
  \hline Sem. HC  &                         0                    &                           10                         &                         11                           \\
  \hline
\end{tabular}
\end{table}

Given a syntactic sequence $\mathbf{u}=(u_1u_1u_3u_4u_2u_3u_2)$, by Table \ref{Syntactic_HC}, the syntactic Huffman coding is $\mathbf{x}=(001101111011010)$. On the other hand, by Table \ref{Semantic_HC}, the semantic Huffman coding is $\mathbf{x}_s=(001111101110)$. Hence, the length of syntactic coding is $L(\mathbf{x})=15$ bits and the length of semantic coding is $L(\mathbf{x}_s)=12$ sebits so that the latter is smaller than the former, i.e., $L(\mathbf{x}_s)<L(\mathbf{x})$. Certainly, since the decoder can select arbitrary symbol from the set $\{u_3,u_4\}$ when decoding $\tilde{u}_3$, the result may be $\hat{\mathbf{u}}=(u_1u_1u_3u_3u_2u_4u_2)$. Although such decoding sequence is different from the original one $\mathbf{u}=(u_1u_1u_3u_4u_2u_3u_2)$ in the syntactic sense, the semantic information of the decoding results still keeps the same since $u_3$ and $u_4$ have the same meaning.
\end{example}

\begin{remark}
For the method of semantic source coding, we have two kinds of design thought. The first kind thought is to modify the traditional source coding, such as Huffman, arithmetic or universal coding. By using an elaborate synonymous mapping, these coding methods can be devised to further improve the compression efficiency. For the second thought, based on the deep learning method, we can construct a neural network model to perform semantic source coding. In this model, the synonymous mapping and semantic coding can be integrated and optimized to approach the theoretic limitation.
\end{remark}

\section{Semantic Channel Coding}
\label{section_VII}
In this section, we investigate the semantic channel coding. First, we introduce the jointly asymptotic equipartition property (JAEP) in the semantic sense and define the jointly synonymous typical set. Then we prove the semantic channel coding theorem by using JAEP and jointly synonymous typical set, which states that the semantic capacity, $\max_{f_{xy}}\max_{p(x)}I^s({\tilde{X};\tilde{Y}})$, the maximum up semantic mutual information, is the largest achievable rate of semantic communication. Finally, we consider the semantic channel decoding problem and propose the maximum likelihood group (MLG) decoding algorithm. A simple example of semantic Hamming code is analyzed based on the MLG decoding rule.

\subsection{Jointly Asymptotic Equipartition Property and Jointly Synonymous Typical Set}
Given the semantic channel model $\left\{\tilde{\mathcal{X}},\mathcal{X},\mathcal{Y},\tilde{\mathcal{Y}}, p(Y|X)\right\}$, $\mathcal{X}$ and $\mathcal{Y}$ are the input and output syntactical alphabet and $\tilde{\mathcal{X}}$ and $\tilde{\mathcal{Y}}$ are the corresponding input and output semantic alphabet. Furthermore, $p(Y|X)$ is the channel transition probability and let $f_{xy}:\tilde{\mathcal{X}}\times\tilde{\mathcal{Y}}\to\mathcal{X}\times\mathcal{Y}$ be the jointly synonymous mapping.

Based on this channel model, we extend the jointly asymptotic equipartition property (JAEP) to the semantic sense and use it to prove the channel coding theorem in semantic information theory.

\begin{definition}
Given the semantic channel model, the jointly synonymous mapping over the sequence pairs is defined as $f_{xy}^{n}: \tilde{\mathcal{X}}^{n}\times \tilde{\mathcal{Y}}^{n} \to \mathcal{X}^{n}\times \mathcal{Y}^{n}$. Equivalently, given a sequence pair $\left(\tilde {x}^n,\tilde {y}^n\right)$, we have $f_{xy}^n\left(\tilde {x}^n, \tilde {y}^n \right)=\prod_{k=1}^{n}{\mathcal{X}}_{\tilde{x}_{k}}\times {\mathcal{Y}}_{\tilde{y}_{k}}$.
\end{definition}

To facilitate the reader's understanding, we first cite the definition of syntactically joint typical set $A_{\epsilon}^{(n)}$ in \cite{Book_Cover}.
\begin{definition}\label{Syn_JTS}
Given the jointly typical sequences $\left\{x^n,y^n\right\}$ with respect to the distribution $p(x^n,y^n)$, the syntactically jointly typical set $A_{\epsilon}^{(n)}$ is defined as the set of $n$-sequence pairs with empirical entropies $\epsilon$-close to the true entropies, that is,
\begin{equation}
\begin{aligned}
A_{\epsilon}^{(n)}=& \bigg\{\left(x^n,y^n\right) \in {\mathcal{X}}^n\times \mathcal{Y}^n:  \\
                                & \left|-\frac{1}{n}\log p\left(x^n\right)-H(X)\right|<\epsilon,\\
                                & \left|-\frac{1}{n}\log p\left(y^n\right)-H(Y)\right|<\epsilon,\\
                                & \left. \left|-\frac{1}{n}\log p\left(x^n,y^n\right)-H(X,Y)\right|<\epsilon \right\}
\end{aligned}
\end{equation}
where $p\left(x^n,y^n\right)=\prod_{k=1}^{n} p(x_k,y_k)$.
\end{definition}

Next, we introduce the semantically jointly typical set $\tilde{A}_{\epsilon}^{(n)}$.
\begin{definition}\label{Sem_JTS}
Given the jointly typical sequences $\left\{\tilde{x}^n,\tilde{y}^n\right\}$ with respect to the distribution $p(x^n,y^n)$, the semantically jointly typical set ${\tilde{A}}_{\epsilon}^{(n)}$ is defined as the set of $n$-sequence pairs with empirical entropies $\epsilon$-close to the true entropies, that is,
\begin{equation}
\begin{aligned}
{\tilde{A}}_{\epsilon}^{(n)}=&\bigg\{\left(\tilde{x}^n,\tilde{y}^n\right)\in \tilde{\mathcal{X}}^n\times \tilde{\mathcal{Y}}^n:\\
                                              &\left|-\frac{1}{n}\log p\left(\tilde{x}^n \right)-H_s(\tilde{X})\right|<\epsilon, \\
                                              &\left|-\frac{1}{n}\log p\left(\tilde{y}^n \right)-H_s(\tilde{Y})\right|<\epsilon, \\
                                              & \left.\left|-\frac{1}{n}\log p\left(\tilde{x}^n,\tilde{y}^n\right)-H_s(\tilde{X},\tilde{Y})\right|<\epsilon\right\}
\end{aligned}
\end{equation}
where
\begin{equation}
p\left(\tilde{x}^n,\tilde{y}^n\right)=\prod_{k=1}^{n} p(\tilde{x}_k,\tilde{y}_k)=\prod_{k=1}^{n} \sum_{(x_k,y_k)\in {\mathcal{X}_{\tilde{x}_k}\times \mathcal{Y}_{\tilde{y}_k}}}p(x_k,y_k).
\end{equation}
\end{definition}

The semantic and syntactic sequence mapping is depicted in Fig. \ref{Sem_Syn_Seq_Mapping}. As described in Section \ref{section_VI}, under the synonymous mapping $f_x^n$ and $f_y^n$, semantic source set $\tilde {\mathcal{X}}$ and semantic destination set $\tilde {\mathcal{Y}}$ are mapped into synonymous typical sets. Similarly, under the joint mapping $f_{xy}^n$, the typical sequences in these sets can further compose the jointly synonymous typical set $B_{\epsilon}^{(n)}\left(\tilde{x}^n,\tilde{y}^n\right)$.
\begin{figure*}[htbp]
\setlength{\abovecaptionskip}{0.cm}
\setlength{\belowcaptionskip}{-0.cm}
  \centering{\includegraphics[scale=0.9]{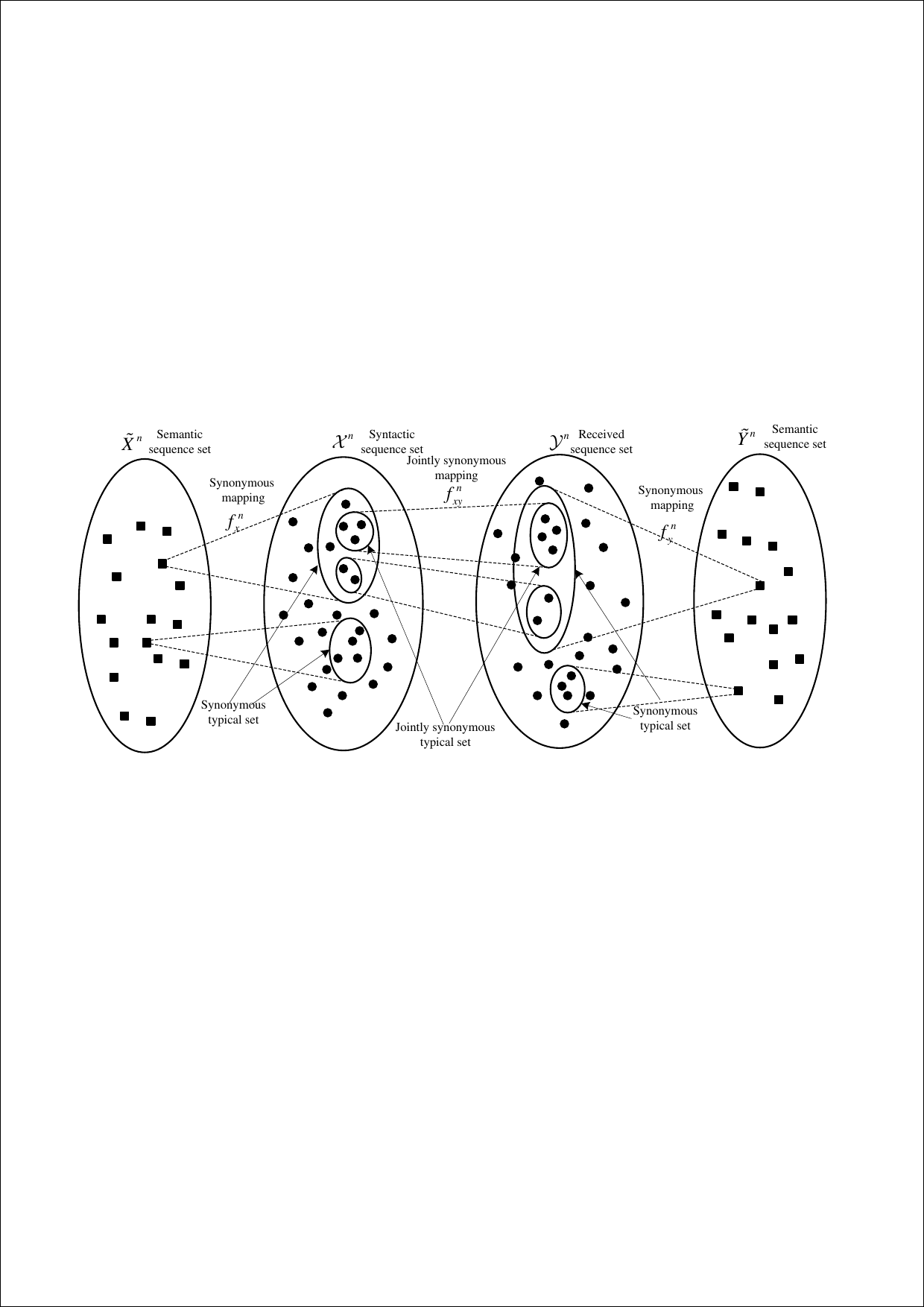}}
  \caption{Semantic and syntactic sequence mapping over the channel.}\label{Sem_Syn_Seq_Mapping}
\end{figure*}

\begin{definition}
Given a specifically typical sequence pair $\left\{\tilde{x}^n,\tilde{y}^n\right\}$, the jointly synonymous typical set $B_{\epsilon}^{(n)}\left(\tilde{x}^n,\tilde{y}^n\right)$ with the syntactically jointly typical sequences $\left\{x^n,y^n\right\}$ is defined as the set of $n$-sequence pairs with the difference of empirical entropies $\epsilon$-close to the difference of true entropies, that is,
\begin{equation}
\begin{aligned}
&B_{\epsilon}^{(n)}\left(\tilde{x}^n,\tilde{y}^n\right)\\
=&\bigg\{\left(x^n,y^n\right) \in {\mathcal{X}}^n\times \mathcal{Y}^n:  \\
                                                              &\left|-\frac{1}{n}\log p\left(\tilde{x}^n \right)-H_s(\tilde{X})\right|<\epsilon,\\
                                                              &\left|-\frac{1}{n}\log p\left(\tilde{y}^n \right)-H_s(\tilde{Y})\right|<\epsilon, \\
                                                              &\left|-\frac{1}{n}\log p\left(\tilde{x}^n,\tilde{y}^n\right)-H_s(\tilde{X},\tilde{Y})\right|<\epsilon,\\
                                                              &\left|-\frac{1}{n}\log p\left(x^n \right)-H(X)\right|<\epsilon,\\
                                                              &\left|-\frac{1}{n}\log p\left(y^n \right)-H(Y)\right|<\epsilon, \\
                                                              &\left|-\frac{1}{n}\log p\left(x^n,y^n\right)-H(X,Y)\right|<\epsilon,\\
&\left|-\frac{1}{n}\log p((\tilde{x}^n,\tilde{y}^n)\to (x^n,y^n))\right.\\
&\left.\left.-\left(H(X,Y)-H_s(\tilde{X},\tilde{Y})\right)\right|<\epsilon\right\},
\end{aligned}
\end{equation}
where the probability $p((\tilde{x}^n,\tilde{y}^n)\to (x^n,y^n))$ is defined as
\begin{equation}
\begin{aligned}
&p((\tilde{x}^n,\tilde{y}^n)\to (x^n,y^n))\\
&=\left\{
\begin{aligned}
&\frac{p\left(x^n,y^n\right)}{p\left(\tilde{x}^n,\tilde{y}^n\right)},&\text{ if } f_{xy}^n(\tilde{x}^n,\tilde{y}^n)=(x^n,y^n),\\
&0,&\text{otherwise}.
\end{aligned}\right.
\end{aligned}
\end{equation}
\end{definition}

Under the jointly synonymous mapping $f_{xy}^n:\tilde{\mathcal{X}}^n \times \tilde{\mathcal{Y}}^n\to {\mathcal{X}}^n\times {\mathcal{Y}}^n$, the syntactically jointly typical set $A_{\epsilon}^{(n)}$ can be partitioned into jointly synonymous typical sets $B_{\epsilon}^{(n)}\left(\tilde{x}^n,\tilde{y}^n\right)$, that is,
\begin{equation}
A_{\epsilon}^{(n)}=\bigcup_{(\tilde{x}^n,\tilde{y}^n)\in \tilde{A}_{\epsilon}^{(n)}} B_{\epsilon}^{(n)}\left(\tilde{x}^n,\tilde{y}^n\right).
\end{equation}
For $\forall (\tilde{u}^n, \tilde{v}^n),(\tilde{x}^n, \tilde{y}^n)\in \tilde{A}_{\epsilon}^{(n)}$, $(\tilde{u}^n,\tilde{v}^n)\neq (\tilde{x}^n,\tilde{y}^n)$, we have $B_{\epsilon}^{(n)}\left(\tilde{x}^n,\tilde{y}^n\right) \bigcap B_{\epsilon}^{(n)}\left(\tilde{u}^n,\tilde{v}^n\right)=\varnothing$.

Similarly, $B_{\epsilon}^{(n)}\left(\tilde{x}^n,\tilde{y}^n\right)$ also is an equivalence class of the jointly synonymous typical sequences. Thus we can construct a quotient set $A_{\epsilon}^{(n)}/f_{xy}^n=\left\{B_{\epsilon}^{(n)}(\tilde{x}^n,\tilde{y}^n)\right\}$ from the syntactically jointly typical set $A_{\epsilon}^{(n)}$, and establish an one-to-one mapping between $\tilde{A}_{\epsilon}^{(n)}$ and $A_{\epsilon}^{(n)}/f_{xy}^n$.

We now discuss the properties of syntactically and semantically jointly typical sets. We first rewrite the properties of syntactically jointly typical set in \cite{Book_Cover} as follows.

\begin{theorem}\label{Syn_JAEP_theorem}
(Syntactically joint AEP \cite[Theorem 7.6.1]{Book_Cover}):
Let $\left(X^n,Y^n\right)$ be a sequence pair with length $n$ obeying the i.i.d. distribution $p(x^n,y^n)$, then
\begin{enumerate}[(1)]
    \item $\text{Pr}((X^n,Y^n)\in A_{\epsilon}^{(n)})>1-\epsilon$ for $n$ sufficiently large. Or equivalently, if $\left(x^n,y^n\right)\in A_{\epsilon}^{(n)}$, then
         \begin{equation}\label{Syn_JAEPinequality}
        2^{-n\left(H(X,Y)+\epsilon\right)}\leq p\left(x^n,y^n\right) \leq 2^{-n\left(H(X,Y)-\epsilon\right)}.
        \end{equation}
    \item $\left(1-\epsilon\right) 2^{n\left(H(X,Y)-\epsilon\right)}\leq \left|A_{\epsilon}^{(n)}\right| \leq 2^{n\left(H(X,Y)+\epsilon\right)}$ for $n$ sufficiently large.
    \item If $\dot{X}^n$ and $\dot{Y}^n$ are independent sequences with the same marginals as $x^n$ and $y^n$, i.e., $(\dot{X}^n,\dot{Y}^n)\sim p(x^n)p(y^n)$, for $n$ sufficiently large, we have
        \begin{equation}
        \begin{aligned}
        (1-\epsilon)2^{-n(I(X;Y)+3\epsilon)}& \leq \text{Pr}\left((\dot{X}^n,\dot{Y}^n)\in A_{\epsilon}^{(n)}\right) \\
                                                               &\leq 2^{-n(I(X;Y)-3\epsilon)}.
        \end{aligned}
        \end{equation}
  \end{enumerate}
\end{theorem}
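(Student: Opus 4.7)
The plan is to establish the three properties via the weak law of large numbers (WLLN) applied to the empirical log-probabilities, following the classical AEP argument. Since $(X_k, Y_k)$ are drawn i.i.d.\ according to $p(x,y)$, the three derived scalar sequences $-\log p(X_k)$, $-\log p(Y_k)$, and $-\log p(X_k, Y_k)$ are each i.i.d.\ with finite means $H(X)$, $H(Y)$, and $H(X,Y)$ respectively, so their empirical averages $-\frac{1}{n}\sum_{k=1}^{n}\log p(X_k)$ and so on converge in probability to the corresponding entropies.

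For property (1), I would first observe that by the WLLN each of the three conditions defining $A_{\epsilon}^{(n)}$ in Definition \ref{Syn_JTS} is individually violated with probability less than $\epsilon/3$ for $n$ sufficiently large. A union bound then gives $\text{Pr}\{(X^n,Y^n)\in A_{\epsilon}^{(n)}\} > 1-\epsilon$, while the two-sided bound on $p(x^n,y^n)$ in Eq.~(\ref{Syn_JAEPinequality}) is an immediate rewriting of the third defining inequality.

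For property (2), I would apply the standard counting argument. The upper bound follows from $1 \geq \sum_{(x^n,y^n)\in A_{\epsilon}^{(n)}} p(x^n,y^n) \geq |A_{\epsilon}^{(n)}| \cdot 2^{-n(H(X,Y)+\epsilon)}$, giving $|A_{\epsilon}^{(n)}| \leq 2^{n(H(X,Y)+\epsilon)}$; the lower bound follows from $1-\epsilon < \text{Pr}\{A_{\epsilon}^{(n)}\} \leq |A_{\epsilon}^{(n)}| \cdot 2^{-n(H(X,Y)-\epsilon)}$, which yields $|A_{\epsilon}^{(n)}| \geq (1-\epsilon)\, 2^{n(H(X,Y)-\epsilon)}$.

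For property (3), which will be the crucial ingredient for the semantic channel coding theorem to come, I would exploit the independence of $\dot{X}^n$ and $\dot{Y}^n$. For any pair in $A_{\epsilon}^{(n)}$ the marginal AEP bounds give $2^{-n(H(X)+H(Y)+2\epsilon)} \leq p(\dot{x}^n)p(\dot{y}^n) \leq 2^{-n(H(X)+H(Y)-2\epsilon)}$. Summing $p(\dot{x}^n)p(\dot{y}^n)$ over $A_{\epsilon}^{(n)}$ and combining with the two-sided size bounds from property (2) yields $\text{Pr}\{(\dot{X}^n,\dot{Y}^n)\in A_{\epsilon}^{(n)}\} \leq 2^{n(H(X,Y)+\epsilon)} \cdot 2^{-n(H(X)+H(Y)-2\epsilon)} = 2^{-n(I(X;Y)-3\epsilon)}$, and analogously a lower bound of $(1-\epsilon)\, 2^{-n(I(X;Y)+3\epsilon)}$. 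The only real obstacle is the careful bookkeeping of how the $\epsilon$ factors accumulate into the final exponent $3\epsilon$: two copies of $\epsilon$ come from the marginal typicality of $\dot{x}^n$ and $\dot{y}^n$, and a third from the size bound on $|A_{\epsilon}^{(n)}|$. Since the statement is explicitly cited as Cover's Theorem~7.6.1, no new ideas are required beyond this standard argument.
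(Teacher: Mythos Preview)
Your proposal is correct and follows exactly the standard argument from Cover and Thomas. The paper itself does not supply a proof of this theorem; it simply cites it as \cite[Theorem 7.6.1]{Book_Cover} and restates the result for the reader's convenience before moving on to the semantic analogue in Theorem~\ref{Sem_JAEP_theorem}, so there is nothing further to compare.
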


Then as the consequence of the semantic JAEP, we present the properties of semantically jointly typical set as following.
\begin{theorem}\label{Sem_JAEP_theorem}
(Semantically joint AEP): Let $(\tilde{X}^n,\tilde{Y}^n)$ be a semantic sequence pair with length $n$ drawn i.i.d. according to $p(\tilde{x}^n,\tilde{y}^n)$, by using jointly synonymous mapping $f_{xy}^n$, the associated syntactic sequence pair is $(X^n, Y^n)\sim p(x^n,y^n)$, then
\begin{enumerate}[(1)]
    \item $\text{Pr}((\tilde{X}^n,\tilde{Y}^n)\in \tilde{A}_{\epsilon}^{(n)})>1-\epsilon$ for $n$ sufficiently large. Or equivalently if $\left(\tilde{x}^n,\tilde{y}^n\right)\in \tilde{A}_{\epsilon}^{(n)}$, then
         \begin{equation}\label{Sem_JAEPinequality}
        2^{-n\left(H_s(\tilde{X},\tilde{Y})+\epsilon\right)}\leq p\left(\tilde{x}^n,\tilde{y}^n\right) \leq 2^{-n\left(H_s(\tilde{X},\tilde{Y})-\epsilon\right)}.
        \end{equation}
    \item $\left(1-\epsilon\right) 2^{n\left(H_s(\tilde{X},\tilde{Y})-\epsilon\right)}\leq \left|\tilde{A}_{\epsilon}^{(n)}\right| \leq 2^{n\left(H_s(\tilde{X},\tilde{Y})+\epsilon\right)}$ for $n$ sufficiently large.
    \item If $\dot{X}^n$ and $\dot{Y}^n$ are independent sequences with the same marginals as $X^n$ and $Y^n$, i.e., $(\dot{X}^n,\dot{Y}^n)\sim p(x^n)p(y^n)$,  and the corresponding semantic sequence is $(\dot{\tilde{X}}^n,\dot{\tilde{Y}}^n)$, for $n$ sufficiently large, we have
        \begin{equation}\label{jointly_typical_equality}
        \begin{aligned}
        (1-\epsilon)2^{-n(I^s(\tilde{X};\tilde{Y})+3\epsilon)}& \leq \text{Pr}((\dot{\tilde{X}}^n,\dot{\tilde{Y}}^n)\in \tilde{A}_{\epsilon}^{(n)}) \\
                                                                                        & \leq 2^{-n(I^s(\tilde{X};\tilde{Y})-3\epsilon)}.
        \end{aligned}
        \end{equation}
\end{enumerate}
\end{theorem}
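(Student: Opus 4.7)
My plan mirrors the Cover--Thomas proof of the classical joint AEP, adapted to the semantic setting established earlier. I would handle the three parts in order, leaning on the semantic AEP (Theorem \ref{theorem5}) and the semantically typical set properties (Theorem \ref{theorem8}) as my workhorses, and reusing the counting template from Theorem \ref{Syn_JAEP_theorem} whenever possible.

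For property (1), I would apply the weak law of large numbers to each of the three i.i.d.\ empirical entropies appearing in the definition of $\tilde{A}_{\epsilon}^{(n)}$. Theorem \ref{theorem5} directly gives $-\frac{1}{n}\log p(\tilde{X}^n)\to H_s(\tilde{X})$ and $-\frac{1}{n}\log p(\tilde{Y}^n)\to H_s(\tilde{Y})$ in probability; an analogous application to the i.i.d.\ pair sequence $(\tilde{X}_k,\tilde{Y}_k)$ yields $-\frac{1}{n}\log p(\tilde{X}^n,\tilde{Y}^n)\to H_s(\tilde{X},\tilde{Y})$. A union bound over the three deviation events (each made smaller than $\epsilon/3$ for $n$ sufficiently large) delivers $\Pr\{(\tilde{X}^n,\tilde{Y}^n)\in\tilde{A}_{\epsilon}^{(n)}\}>1-\epsilon$, and the pointwise inequality $2^{-n(H_s(\tilde{X},\tilde{Y})+\epsilon)}\leq p(\tilde{x}^n,\tilde{y}^n)\leq 2^{-n(H_s(\tilde{X},\tilde{Y})-\epsilon)}$ is immediate from the third defining condition.

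For property (2), I would invoke a standard counting argument. The upper bound follows from
\begin{equation}
1\geq \sum_{(\tilde{x}^n,\tilde{y}^n)\in\tilde{A}_{\epsilon}^{(n)}}p(\tilde{x}^n,\tilde{y}^n)\geq |\tilde{A}_{\epsilon}^{(n)}|\cdot 2^{-n(H_s(\tilde{X},\tilde{Y})+\epsilon)},
\end{equation}
while the lower bound combines property (1) with the complementary per-sequence estimate $p(\tilde{x}^n,\tilde{y}^n)\leq 2^{-n(H_s(\tilde{X},\tilde{Y})-\epsilon)}$ to yield $|\tilde{A}_{\epsilon}^{(n)}|\geq (1-\epsilon)2^{n(H_s(\tilde{X},\tilde{Y})-\epsilon)}$. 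These are direct copies of the steps in the second part of Theorem \ref{Syn_JAEP_theorem}, with $H(X,Y)$ replaced by $H_s(\tilde{X},\tilde{Y})$.

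Property (3) is the main obstacle. Because the semantic symbols are deterministic functions of the syntactic ones through the synonymous partition, drawing $(\dot{X}^n,\dot{Y}^n)\sim p(x^n)p(y^n)$ induces $(\dot{\tilde{X}}^n,\dot{\tilde{Y}}^n)$ with product law $p(\tilde{x}^n)p(\tilde{y}^n)$, and hence
\begin{equation}
\Pr\bigl\{(\dot{\tilde{X}}^n,\dot{\tilde{Y}}^n)\in\tilde{A}_{\epsilon}^{(n)}\bigr\}=\sum_{(\tilde{x}^n,\tilde{y}^n)\in\tilde{A}_{\epsilon}^{(n)}}p(\tilde{x}^n)\,p(\tilde{y}^n).
\end{equation}
Substituting the per-sequence envelopes from the marginal constraints in the definition of $\tilde{A}_{\epsilon}^{(n)}$ and the cardinality estimate of property (2) yields both inequalities in the form $(1-\epsilon)2^{-n(E+3\epsilon)}\leq \Pr \leq 2^{-n(E-3\epsilon)}$ for some exponent $E$. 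The delicate step is matching $E$ to the up semantic mutual information $I^s(\tilde{X};\tilde{Y})=H(X)+H(Y)-H_s(\tilde{X},\tilde{Y})$ appearing in the statement: the naive substitution produces $H_s(\tilde{X})+H_s(\tilde{Y})-H_s(\tilde{X},\tilde{Y})$ (i.e.\ $\tilde{I}_s$), so closing the gap requires exploiting the relationship between syntactic and semantic marginal entropies established in Lemma \ref{lemma2} together with the inequality chain $\tilde{I}_s\leq I^s$ from Section \ref{section_IV}. I anticipate this bookkeeping step to be where most of the care is needed; the rest of the argument is a routine replay of the classical proof with semantic objects in place of syntactic ones.
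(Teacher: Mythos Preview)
Your treatment of properties (1) and (2) matches the paper's proof essentially verbatim.

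For property (3), however, your approach has a genuine gap that the inequality chain $\tilde{I}_s\leq I^s$ cannot repair. Writing $\Pr=\sum_{\tilde{A}_{\epsilon}^{(n)}}p(\tilde{x}^n)p(\tilde{y}^n)$ and bounding with the semantic marginal envelopes gives, as you note, the exponent $\tilde{I}_s=H_s(\tilde{X})+H_s(\tilde{Y})-H_s(\tilde{X},\tilde{Y})$. But $\tilde{I}_s\leq I^s$ means $2^{-n\tilde{I}_s}\geq 2^{-nI^s}$, so the upper bound $\Pr\leq 2^{-n(\tilde{I}_s-3\epsilon)}$ is \emph{weaker} than the claimed $\Pr\leq 2^{-n(I^s-3\epsilon)}$ and does not imply it. (Your lower bound would survive; the upper bound would not.)

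The paper avoids this by never passing through the semantic marginals. Instead, it sets up a one-to-one correspondence between each $(\dot{\tilde{x}}^n,\dot{\tilde{y}}^n)\in\tilde{A}_{\epsilon}^{(n)}$ and a \emph{syntactic} representative pair $(x^n,y^n)$ with $x^n\in A_{\epsilon}^{(n)}(X^n)$ and $y^n\in A_{\epsilon}^{(n)}(Y^n)$, and evaluates the sum using the product $p(x^n)p(y^n)$ over these representatives. The per-term bounds are then $2^{-n(H(X)\mp\epsilon)}$ and $2^{-n(H(Y)\mp\epsilon)}$, while the number of terms is still $|\tilde{A}_{\epsilon}^{(n)}|\approx 2^{nH_s(\tilde{X},\tilde{Y})}$. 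This directly produces the exponent $H(X)+H(Y)-H_s(\tilde{X},\tilde{Y})=I^s(\tilde{X};\tilde{Y})$ without any bridging step. The crucial move you are missing is to keep the marginal factors at the \emph{syntactic} level while the cardinality count stays at the \emph{semantic} level; that mismatch is precisely what manufactures $I^s$ rather than $\tilde{I}_s$.
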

\begin{proof}
Similar to Theorem \ref{Syn_JAEP_theorem}, the proof of property (1) is directly from the weak law of large numbers and Definition \ref{Sem_JTS}.

According to property (1), Eq. (\ref{Sem_JAEPinequality}) can be rewritten as
\begin{equation}
\left\{ \begin{aligned}
&\left|\tilde{A}_{\epsilon}^{(n)}\right| 2^{-n\left(H_s(\tilde{X},\tilde{Y})+\epsilon\right)}\leq \sum_{(\tilde{x}^n,\tilde{y}^n) \in \tilde{A}_{\epsilon}^{(n)}} p\left(\tilde{x}^n,\tilde{y}^n\right)\leq 1\\
&1-\epsilon \leq \sum_{(\tilde{x}^n,\tilde{y}^n)\in \tilde{A}_{\epsilon}^{(n)}} p\left(\tilde{x}^n,\tilde{y}^n\right) \leq \left|\tilde{A}_{\epsilon}^{(n)}\right| 2^{-n\left(H_s(\tilde{X},\tilde{Y})-\epsilon\right)}
\end{aligned}\right.
\end{equation}

So the cardinality of semantically jointly typical set can be written as
\begin{equation}\label{equation_SJTS_down}
\left|\tilde{A}_{\epsilon}^{(n)}\right| \geq \left(1-\epsilon\right) 2^{n\left(H_s(\tilde{X},\tilde{Y})-\epsilon\right)}
\end{equation}
and
\begin{equation}\label{equation_SJTS_up}
\left|\tilde{A}_{\epsilon}^{(n)}\right| \leq 2^{n\left(H_s(\tilde{X},\tilde{Y})+\epsilon\right)}
\end{equation}
respectively. We complete the proof of the second property.

Now assume $\dot{X}^n$ and $\dot{Y}^n$ are independent but have the same marginals as $X^n$ and $Y^n$, and after de-mapping, $(\dot{X}^n,\dot{Y}^n)$ is mapped into a semantic sequence pair $(\dot{\tilde{X}}^n,\dot{\tilde{Y}}^n)$, so we can establish an one-to-one mapping $(x^n,y^n) \leftrightarrow (\dot{x}^n,\dot{y}^n) \leftrightarrow (\dot{\tilde{x}}^n,\dot{\tilde{y}}^n)$, then we have
\begin{equation}
\begin{aligned}
&\text{Pr}\left((\dot{\tilde{X}}^n,\dot{\tilde{Y}}^n)\in \tilde{A}_{\epsilon}^{(n)}\right) \\
&=\text{Pr}\left((\dot{X}^n,\dot{Y}^n)\in B_{\epsilon}^{(n)},\dot{X}^n\in A_{\epsilon}^{(n)}(X^n),\dot{Y}^n\in A_{\epsilon}^{(n)}(Y^n)\right)\\
&= \sum_{ (x^n,y^n) \leftrightarrow (\dot{\tilde{x}}^n,\dot{\tilde{y}}^n)\in \tilde{A}_{\epsilon}^{(n)}} p(x^n)p(y^n)\\
&\leq 2^{n\left(H_s(\tilde{X},\tilde{Y})+\epsilon\right)}2^{-n\left(H(X)-\epsilon\right)}2^{-n\left(H(Y)-\epsilon\right)}\\
&= 2^{-n\left(I^s(\tilde{X},\tilde{Y})-3\epsilon\right)}.
\end{aligned}
\end{equation}

Using a similar thought, we can also derive that
\begin{equation}
\begin{aligned}
&\text{Pr}\left((\dot{\tilde{X}}^n,\dot{\tilde{Y}}^n)\in \tilde{A}_{\epsilon}^{(n)}\right)\\
&= \sum_{ \tilde{A}_{\epsilon}^{(n)}} p(x^n)p(y^n)\\
&\geq  (1-\epsilon)2^{n(H_s(\tilde{X};\tilde{Y})-\epsilon)}2^{-n\left(H(X)+\epsilon\right)}2^{-n\left(H(Y)+\epsilon\right)}\\
&= (1-\epsilon) 2^{-n\left(I^s(\tilde{X},\tilde{Y})+3\epsilon\right)}.
\end{aligned}
\end{equation}
Thus we complete the proof of the theorem.
\end{proof}

Furthermore, we give the properties of jointly synonymous typical set as below.
\begin{theorem}\label{SJTS_theorem}
\text{ }
\begin{enumerate}[(1)]
    \item Given a semantic sequence pair $\left(\tilde{x}^n,\tilde{y}^n\right)\in \tilde{A}_{\epsilon}^{(n)}$, if $\left(x^n, y^n\right)\in B_{\epsilon}^{(n)}\left(\tilde{x}^n,\tilde{y}^n\right)$, then
        \begin{equation}
        \begin{aligned}
        2^{-n\left(H(X,Y)-H_s(\tilde{X},\tilde{Y})+\epsilon\right)}&\leq \frac{p\left(x^n,y^n\right)}{p\left(\tilde{x}^n,\tilde{y}^n\right)}\\
         &\leq 2^{-n\left(H(X,Y)-H_s(\tilde{X},\tilde{Y})-\epsilon\right)}.
        \end{aligned}
        \end{equation}
    \item $ 2^{n\left(H(X,Y)-H_s(\tilde{X},\tilde{Y})-\epsilon\right)}\leq\left|B_{\epsilon}^{(n)}\left(\tilde{x}^n,\tilde{y}^n\right)\right| \leq 2^{n\left(H(X,Y)-H_s(\tilde{X},\tilde{Y})+\epsilon\right)}$ for sufficiently large $n$.
\end{enumerate}
\end{theorem}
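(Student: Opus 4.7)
My plan is to mirror the structure of the proof of Theorem \ref{SynTSet_Theorem} (the synonymous typical set properties for a single variable), lifted to the joint setting using the jointly synonymous mapping $f_{xy}^n$. The two properties should essentially fall out of: (a) the definition of $B_{\epsilon}^{(n)}(\tilde{x}^n,\tilde{y}^n)$ in the joint case, and (b) the fact that the conditional probability mass on a single jointly synonymous class sums to the semantic joint probability, namely $\sum_{(x^n,y^n)\in B_{\epsilon}^{(n)}(\tilde{x}^n,\tilde{y}^n)} p(x^n,y^n) = p(\tilde{x}^n,\tilde{y}^n)$, which follows from the partition identity $A_{\epsilon}^{(n)} = \bigcup_{(\tilde{x}^n,\tilde{y}^n)\in \tilde{A}_{\epsilon}^{(n)}} B_{\epsilon}^{(n)}(\tilde{x}^n,\tilde{y}^n)$ combined with the definition $p(\tilde{x}^n,\tilde{y}^n)=\sum_{(x^n,y^n)\in \mathcal{X}_{\tilde{x}^n}\times \mathcal{Y}_{\tilde{y}^n}} p(x^n,y^n)$.

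For property (1), the plan is simply to read off the bounds from the defining constraint $\bigl|-\tfrac{1}{n}\log p((\tilde{x}^n,\tilde{y}^n)\to (x^n,y^n)) - (H(X,Y)-H_s(\tilde{X},\tilde{Y}))\bigr|<\epsilon$ that appears in the definition of $B_{\epsilon}^{(n)}(\tilde{x}^n,\tilde{y}^n)$. Since $p((\tilde{x}^n,\tilde{y}^n)\to(x^n,y^n)) = p(x^n,y^n)/p(\tilde{x}^n,\tilde{y}^n)$ whenever $(x^n,y^n)=f_{xy}^n(\tilde{x}^n,\tilde{y}^n)$, exponentiating both sides of the inequality gives exactly the stated two-sided bound on the ratio $p(x^n,y^n)/p(\tilde{x}^n,\tilde{y}^n)$. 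This is a one-liner.

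For property (2), I would replicate the counting argument from Theorem \ref{SynTSet_Theorem}(2) in the joint setting. Start from the identity $p(\tilde{x}^n,\tilde{y}^n) = \sum_{(x^n,y^n)\in B_{\epsilon}^{(n)}(\tilde{x}^n,\tilde{y}^n)} p(x^n,y^n)$. For the lower bound on cardinality, apply the upper bound from property (1) to every summand, yielding $p(\tilde{x}^n,\tilde{y}^n) \le |B_{\epsilon}^{(n)}(\tilde{x}^n,\tilde{y}^n)|\, p(\tilde{x}^n,\tilde{y}^n)\, 2^{-n(H(X,Y)-H_s(\tilde{X},\tilde{Y})-\epsilon)}$, then cancel $p(\tilde{x}^n,\tilde{y}^n)$ (which is strictly positive since $(\tilde{x}^n,\tilde{y}^n)\in \tilde{A}_{\epsilon}^{(n)}$) to obtain $|B_{\epsilon}^{(n)}(\tilde{x}^n,\tilde{y}^n)|\ge 2^{n(H(X,Y)-H_s(\tilde{X},\tilde{Y})-\epsilon)}$. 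For the upper bound, apply the lower bound from property (1) to every summand and cancel analogously.

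There is no real obstacle here: both parts are structurally identical to Theorem \ref{SynTSet_Theorem}, only with the single-variable distributions $p(u^n),\,p(\tilde{u}^n)$ replaced by the joint distributions $p(x^n,y^n),\,p(\tilde{x}^n,\tilde{y}^n)$, and the synonymous class $\mathcal{U}_{\tilde{u}_k}$ replaced by $\mathcal{X}_{\tilde{x}_k}\times \mathcal{Y}_{\tilde{y}_k}$. The only point requiring a moment of care is ensuring that for sufficiently large $n$ we indeed have $(\tilde{x}^n,\tilde{y}^n)\in \tilde{A}_{\epsilon}^{(n)}$ (so that $p(\tilde{x}^n,\tilde{y}^n)>0$ and cancellation is legal), and that the inequalities derived from the definition are valid over the entire class $B_{\epsilon}^{(n)}(\tilde{x}^n,\tilde{y}^n)$; both are guaranteed by Theorem \ref{Sem_JAEP_theorem}(1) and the definition of $B_{\epsilon}^{(n)}$, respectively. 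Thus the whole argument is short and essentially routine once the joint synonymous machinery is in place.
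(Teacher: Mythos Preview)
Your proposal is correct and follows essentially the same approach as the paper: property (1) is read off directly from the defining constraint on $p((\tilde{x}^n,\tilde{y}^n)\to(x^n,y^n))$, and property (2) follows from the summation identity $p(\tilde{x}^n,\tilde{y}^n)=\sum_{(x^n,y^n)\in B_{\epsilon}^{(n)}(\tilde{x}^n,\tilde{y}^n)} p(x^n,y^n)$ together with the two-sided bound from (1), exactly as in the single-variable Theorem~\ref{SynTSet_Theorem}.
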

\begin{proof}
The proof of property (1) is directly from the definition of $B_{\epsilon}^{(n)}\left(\tilde{x}^n,\tilde{y}^n\right)$.

To prove the left inequality of property (2), we write
\begin{equation}
\begin{aligned}
&p\left(\tilde{x}^n,\tilde{y}^n\right)\\
&=\sum_{(x^n,y^n)\in B_{\epsilon}^{(n)}\left(\tilde{x}^n,\tilde{y}^n\right)} p\left(x^n,y^n\right)\\
&\leq p\left(\tilde{x}^n,\tilde{y}^n\right) \sum_{(x^n,y^n) \in B_{\epsilon}^{(n)}\left(\tilde{x}^n,\tilde{y}^n\right)}2^{-n\left(H(X,Y)-H_s(\tilde{X},\tilde{Y})-\epsilon\right)}\\
& =p\left(\tilde{x}^n,\tilde{y}^n\right) \left|B_{\epsilon}^{(n)}\left(\tilde{x}^n,\tilde{y}^n\right)\right| 2^{-n\left(H(X,Y)-H_s(\tilde{X},\tilde{Y})-\epsilon\right)}.
\end{aligned}
\end{equation}
So it follows that
\begin{equation}
\left|B_{\epsilon}^{(n)}\left(\tilde{x}^n,\tilde{y}^n\right)\right|\geq 2^{n\left(H(X,Y)-H_s(\tilde{X},\tilde{Y})-\epsilon\right)}
\end{equation}

On the other hand, we can write
\begin{equation}
\begin{aligned}
&p\left(\tilde{x}^n,\tilde{y}^n\right)\\
&=\sum_{(x^n,y^n)\in B_{\epsilon}^{(n)}\left(\tilde{x}^n,\tilde{y}^n\right)} p\left(x^n,y^n\right) \\
&\geq p\left(\tilde{x}^n,\tilde{y}^n\right) \sum_{(x^n,y^n) \in B_{\epsilon}^{(n)}\left(\tilde{x}^n,\tilde{y}^n \right)}2^{-n\left(H(X,Y)-H_s(\tilde{X},\tilde{Y})+\epsilon\right)}\\
& =p\left(\tilde{x}^n,\tilde{y}^n\right) \left|B_{\epsilon}^{(n)}\left(\tilde{x}^n,\tilde{y}^n\right)\right| 2^{-n\left(H(X,Y)-H_s(\tilde{X},\tilde{Y})+\epsilon\right)}.
\end{aligned}
\end{equation}
Similarly, we have
\begin{equation}
\left|B_{\epsilon}^{(n)}\left(\tilde{x}^n,\tilde{y}^n\right)\right| \leq 2^{n\left(H(X,Y)-H_s(\tilde{X},\tilde{Y})+\epsilon\right)}
\end{equation}
and complete the proof.
\end{proof}

\begin{figure*}[htbp]
\setlength{\abovecaptionskip}{0.cm}
\setlength{\belowcaptionskip}{-0.cm}
  \centering{\includegraphics[scale=0.7]{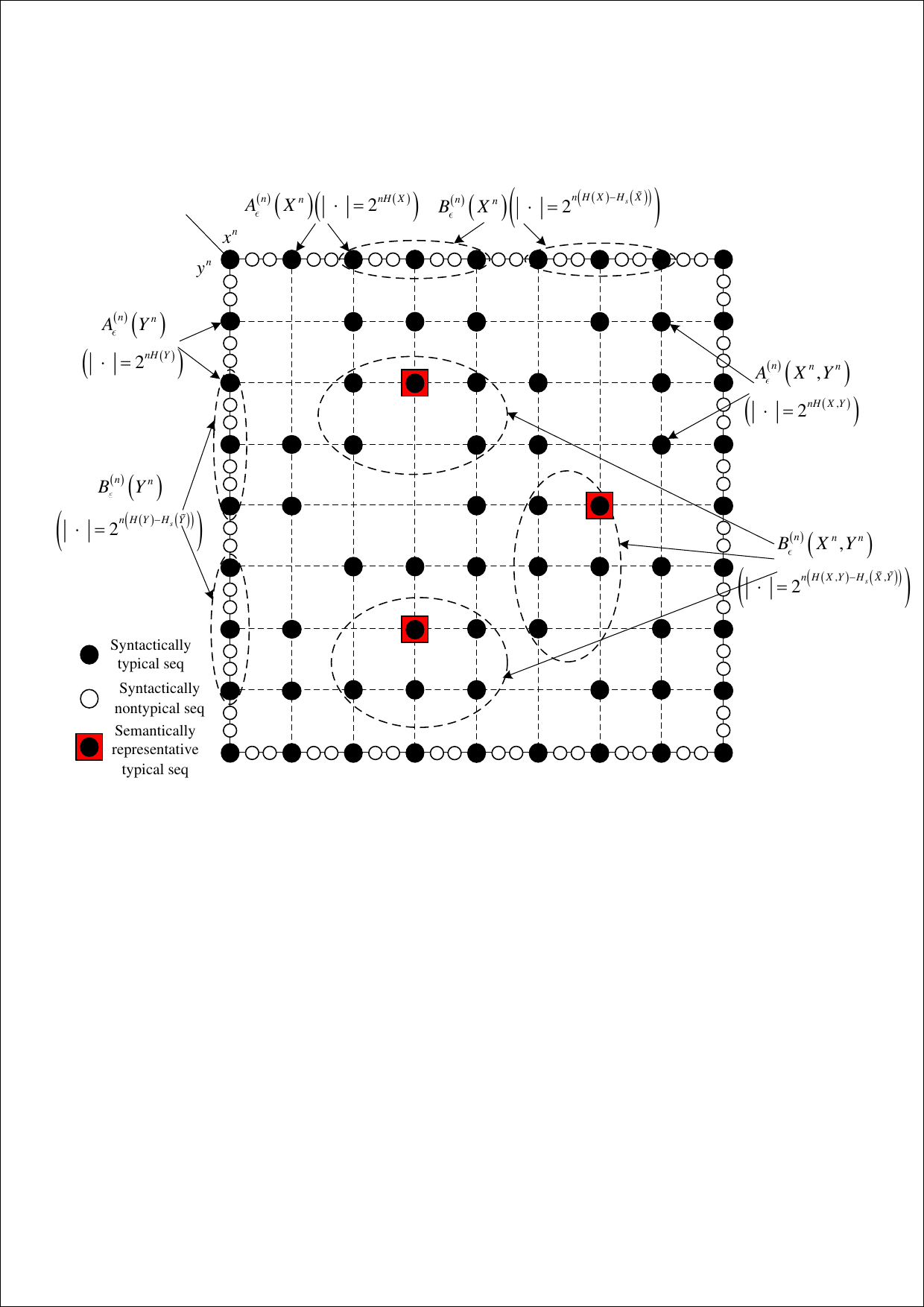}}
  \caption{The relationship of typical sets.}\label{Typical_set_relation}
\end{figure*}

The relationships of typical sets, such as typical set, jointly typical set and synonymous typical set, are shown in Fig. \ref{Typical_set_relation}. By using synonymous mapping $f_x^n \text{ or } f_y^n$, syntactically typical set $A_{\epsilon}^{(n)}(X^n) \text{ or } A_{\epsilon}^{(n)}(Y^n)$ can be partitioned into many synonymous typical sets $B_{\epsilon}^{(n)}(X^n)\text{ or } B_{\epsilon}^{(n)}(Y^n)$. \footnote{Note that $B_{\epsilon}^{(n)}(X^n)$ denotes the set consisting the syntactically typical sequences $X^n$ whereas $B_{\epsilon}^{(n)}(\tilde{x}^n)$ denotes the set induced by the semantic typical sequence $\tilde{x}^n$. Correspondingly, $B_{\epsilon}^{(n)}(Y^n)$ and $B_{\epsilon}^{(n)}(\tilde{y}^n)$, $B_{\epsilon}^{(n)}(X^n,Y^n)$ and $B_{\epsilon}^{(n)}(\tilde{x}^n,\tilde{y}^n)$ have similar distinction.} The syntactically jointly typical set $A_{\epsilon}^{(n)}(X^n,Y^n)$ consists of some syntactically typical sequences simultaneously belonging to the typical set $A_{\epsilon}^{(n)}(X^n)$ and $A_{\epsilon}^{(n)}(Y^n)$. Note that not all pairs of syntactically typical $X^n$ and $Y^n$ are jointly typical.

The probability of jointly typical set $A_{\epsilon}^{(n)}(X^n,Y^n)$ with sufficiently large $n$ is close to 1. In addition, since all the jointly typical sequences are almost equiprobability, the number of syntactically jointly typical sequences is about $\left|A_{\epsilon}^{(n)}(X^n,Y^n)\right|\thickapprox2^{n(H(X,Y)\pm\epsilon)}$.
Similar conclusion holds for the jointly typical set $\tilde{A}_{\epsilon}^{(n)}(\tilde{X}^n,\tilde{Y}^n)$ and the corresponding number of jointly typical sequences is about $\left|\tilde{A}_{\epsilon}^{(n)}(\tilde{X}^n,\tilde{Y}^n)\right|\thickapprox2^{n(H_s(\tilde{X},\tilde{Y})\pm\epsilon)}$.

Furthermore, under the joint mapping $f_{xy}^n$, as circled by dashed lines, some jointly typical sequences constitute the jointly synonymous typical sets $B_{\epsilon}^{(n)}(X^n,Y^n)$. Since the number of jointly synonymous typical sequences is about $2^{n(H(X,Y)-H_s(\tilde{X},\tilde{Y})\pm\epsilon)}$, all the synonymous typical sets have almost the same number of typical sequences. In each set, the black circle marked by a red box denotes the representative typical sequence. If we randomly choose a typical sequence pair, the probability that this pair falls in a jointly synonymous typical set (equivalently represents a semantically jointly typical sequence) is about $2^{-nI^s(\tilde{X};\tilde{Y})}$. This means that there are about $2^{nI^s(\tilde{X};\tilde{Y})}$ distinguishable sequences $\tilde{X}^n$ in the semantic sense.

\subsection{Semantic Channel Coding Theorem}
We now discuss the problem of semantic channel coding. As shown in Fig. \ref{Semantic_Channel_coding}, with the help of synonymous mapping $f^n$, a semantic index $i_s$ is mapped and encoded into the channel codeword $X^n$. Here, the semantic message $\tilde{X}^n(i_s)$ is a broad concept, which can be a real semantic sequence or a syntactic sequence with some semantic constraints. After going through the channel, we obtain the received sequence $Y^n$. Then the decoder outputs the decoded codeword $\hat{X}^n$. After de-synonymous mapping $g^n$, we obtain an estimation of semantic index $\hat{i}_s$.

\begin{figure*}[htbp]
\setlength{\abovecaptionskip}{0.cm}
\setlength{\belowcaptionskip}{-0.cm}
  \centering{\includegraphics[scale=0.9]{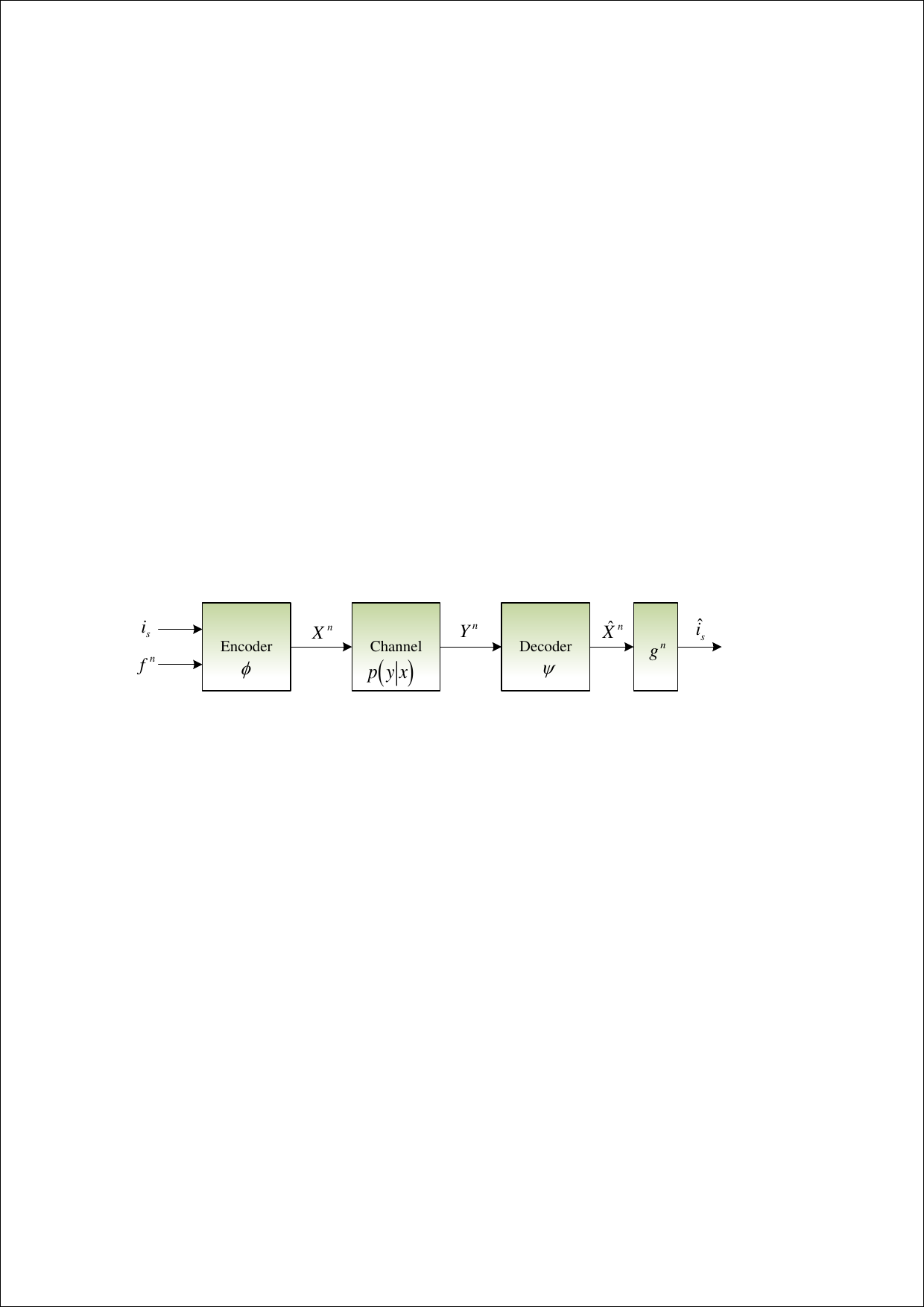}}
  \caption{Block diagram of semantic channel coding.}\label{Semantic_Channel_coding}
\end{figure*}

For a discrete memoryless channel, let $p(Y|X)$ be the channel transition probabilities and $\mathcal{X}$ and $\mathcal{Y}$ denote the input and output syntactical alphabet of channel respectively. Then the channel transition probabilities for the $n$-th extension of the channel can be written as
\begin{equation}
p(y^n\left|x^n\right.)=\prod_{k=1}^{n}p(y_k\left|x_k\right.).
\end{equation}

\begin{definition}
An $\left(M,n\right)$ code for the semantic channel $\left\{\tilde{\mathcal{X}},\mathcal{X},\mathcal{Y},\tilde{\mathcal{Y}}, p(Y|X)\right\}$ consists of the following parts:
\begin{enumerate}[(1)]
   \item A semantic index set $\mathcal{I}_s=\left\{1,\cdots,i_s,\cdots, M_s\right\}$ and a syntactic index set $\mathcal{I}=\left\{1,\cdots,i,\cdots, M\right\}$.
   \item An encoding function $\phi: \tilde{\mathcal{X}}^n\to  \mathcal{X}^n$ generates the set of codewords, namely, codebook, $\mathcal{C}=\left\{X^n(1),X^n(2),\cdots,X^n(M)\right\}$. Due to the synonymous mapping $f^n$, this codebook can be partitioned into synonymous codeword subsets $\mathcal{C}_s$.
   \item A decoding function $\psi: \mathcal{Y}^n \to \mathcal{X}^n$ outputs the decision syntactic codeword $\hat{X}^n$.
   \item After de-mapping, $g^n(\hat{X}^n)=\hat{i}_s$, the estimated semantic index is obtained. Note that both $\psi$ and $g^n$ are deterministic.
\end{enumerate}
\end{definition}

According to the synonymous mapping $f^n$, $\mathcal{C}_s$ is an equivalence class consisting of the synonymous codewords. So we can construct a quotient set $\mathcal{C}/f^n=\left\{\mathcal{C}_s\right\}$ with $\left|\mathcal{C}/f^n\right|=M_s$. Let $R=\frac{1}{n}\log_2 M_s$ denote the semantic code rate of channel coding. We configure each synonymous set with the same number of codewords, that is, $\left|\mathcal{C}_s\right|=2^{nR_s}=\frac{M}{M_s}$, where $R_s$ is named as the rate of synonymous set. Furthermore, let $R'=R+R_s=\frac{1}{n}\log_2 M$ being the syntactic code rate.

\begin{definition}
Assume a semantic index $i_s$ is mapped into a syntactic codeword $X^n(i)\in \mathcal{C}_s(i_s)$, the conditional decoding error probability given the index $i_s$ is defined as
\begin{equation}
\begin{aligned}
\lambda_{i_s}&=\text{Pr}\left(g^n\left(\psi \left(Y^n\right)\right)\neq i_s \left|X^n=X^n(i)\leftrightarrow\tilde{X}^n=\tilde{X}^n(i_s)\right.\right)\\
      &=\text{Pr}\left(\hat{X}^n(i) \notin \mathcal{C}_s(i_s) \left|X^n=X^n(i)\leftrightarrow\tilde{X}^n=\tilde{X}^n(i_s)\right.\right)\\
      &=\sum_{y^n}p\left(y^n(i)\left|x^n(i)\right.\right) I\left(\psi(y^n)\notin \mathcal{C}_s(i_s)\right)
\end{aligned}
\end{equation}
where $I(\cdot)$ is the indicator function.
Assume the index $i_s$ is chosen uniformly on the set $\mathcal{I}_s$, the average error probability $P_e^{(n)}$ for an $(M,n)$ code is defined as
\begin{equation}
P_e^{(n)}=\frac{1}{M_s} \sum_{i_s=1}^{M_s} \lambda_{i_s}.
\end{equation}

\end{definition}

We now give the formal description of semantic channel coding theorem.
\begin{theorem}\label{Sem_CCT}
(Semantic Channel Coding Theorem)

Given the semantic channel $\left\{\tilde{\mathcal{X}},\mathcal{X},\mathcal{Y},\tilde{\mathcal{Y}}, p(Y|X)\right\}$, for each code rate $R<C_s$, there exists a sequence of $\left(2^{n(R+R_s)},n\right)$ codes consisting of synonymous codeword set with the rate $0\leq R_s \leq H(X,Y)-H_s(\tilde{X},\tilde{Y})$, when code length $n$ tends to sufficiently large, the error probability tends to zero, i.e. $P_e^{(n)}\to 0$.

On the contrary, if $R>C_s$, then for any $\left(2^{n{(R+R_s)}},n\right)$ code, the error probability tends to $1$ with sufficiently large $n$.
\end{theorem}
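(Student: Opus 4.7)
The plan is to establish both directions by a semantic adaptation of Shannon's random-coding / jointly typical decoding argument, with the crucial change that typicality is tested in the semantic sense using $\tilde{A}_\epsilon^{(n)}$ rather than using the syntactic jointly typical set $A_\epsilon^{(n)}$. This makes the natural competitor for decoding not an individual codeword but an entire synonymous codeword group $\mathcal{C}_s(i_s)$.

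For achievability, first fix a jointly synonymous mapping $f_{xy}^{*}$ and an input distribution $p^{*}(x)$ that attain $C_s$. I would generate $2^{n(R+R_s)}$ syntactic codewords i.i.d.\ from $p^{*}(x^n)$ and partition them into $M_s = 2^{nR}$ synonymous groups $\mathcal{C}_s(i_s)$ of size $2^{nR_s}$, one per semantic message $\tilde{X}^n(i_s)$; the allowed range $R_s \leq H(X,Y)-H_s(\tilde{X},\tilde{Y})$ is exactly the room permitted by Theorem \ref{SJTS_theorem} for jointly synonymous typical sequences. The decoder receives $Y^n$, forms $\tilde{Y}^n$ through the de-mapping $g^n$, and declares $\hat{i}_s$ iff it is the unique index with $(\tilde{X}^n(\hat{i}_s),\tilde{Y}^n)\in \tilde{A}_\epsilon^{(n)}$. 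The error splits into the atypical event $E_0=\{(\tilde{X}^n(i_s),\tilde{Y}^n)\notin \tilde{A}_\epsilon^{(n)}\}$, controlled by property (1) of Theorem \ref{Sem_JAEP_theorem}, and confusions $E_j=\{(\tilde{X}^n(j),\tilde{Y}^n)\in \tilde{A}_\epsilon^{(n)}\}$ for $j\neq i_s$. Since the groups are generated independently, $\tilde{X}^n(j)$ is independent of $\tilde{Y}^n$ for $j\neq i_s$ and property (3) of Theorem \ref{Sem_JAEP_theorem} bounds $\Pr(E_j)\leq 2^{-n(I^s(\tilde{X};\tilde{Y})-3\epsilon)}$. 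A union bound yields
\[
P_e^{(n)} \;\leq\; \epsilon + 2^{n(R - C_s + 3\epsilon)},
\]
which vanishes whenever $R<C_s$. Note that $R_s$ does not appear in this bound, precisely because the semantic decoder discriminates only between synonymous groups; a standard averaging / expurgation argument then upgrades this to the existence of a specific code with small maximal error.

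For the converse, I would assume a sequence of $(2^{n(R+R_s)},n)$ codes with $P_e^{(n)}\to 0$ and $i_s$ uniform on $\mathcal{I}_s$. Along the Markov chain $i_s \to \tilde{X}^n \to X^n \to Y^n \to \tilde{Y}^n \to \hat{i}_s$, Fano's inequality gives $H(i_s\mid \hat{i}_s)\leq 1+nR P_e^{(n)}$, and the semantic processing inequality yields
\[
nR \;\leq\; 1 + nR P_e^{(n)} + I^s(\tilde{X}^n;\tilde{Y}^n).
\]
The remaining task is to single-letterize $I^s(\tilde{X}^n;\tilde{Y}^n)\leq n C_s$ by combining the memoryless channel factorization $p(y^n\mid x^n)=\prod_k p(y_k\mid x_k)$, the product structure of the sequential synonymous mapping $f_{xy}^n=\prod_k f_{xy}$, and the concavity of $I^s(\tilde{X};\tilde{Y})$ in $p(x)$ established in Theorem \ref{SemMI_concave_convex}. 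Dividing by $n$ and sending $n\to\infty$ forces $R\leq C_s$, contradicting $R>C_s$.

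The main obstacle is this converse single-letterization. Classical information theory rests on the exact identity $I(X^n;Y^n)=\sum_k I(X_k;Y^n\mid X_1^{k-1})$ followed by removal of conditioning via the memoryless property; the semantic counterpart in Theorem \ref{chainrule_SMI} only delivers a chain of inequalities, and the intermediate quantities $\tilde{H}(\tilde{U}_1^m,U_{m+1}^n,\tilde{V}^n)$ do not factor cleanly into per-letter semantic mutual informations. Pushing through will therefore require rewriting $H(X^n)+H(Y^n)-H_s(\tilde{X}^n,\tilde{Y}^n)$ in product form by exploiting the element-wise nature of $f_{xy}^n$, and then invoking Jensen's inequality via the concavity of Theorem \ref{SemMI_concave_convex} to collapse the averaged single-letter expression to the worst-case $C_s$. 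A secondary subtlety on the achievability side is that the $2^{nR_s}$ representatives inside each group must be chosen inside $B_\epsilon^{(n)}(\tilde{x}^n)$ in a way jointly compatible with the channel output, which is exactly what forces $R_s$ to lie in the interval $[0,H(X,Y)-H_s(\tilde{X},\tilde{Y})]$ stated in the theorem.
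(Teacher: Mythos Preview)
Your proposal is correct and follows essentially the same route as the paper: random codebook generation partitioned into $2^{nR}$ synonymous groups, semantically jointly typical decoding, and the union bound via Theorem~\ref{Sem_JAEP_theorem}(3) for achievability; Fano's inequality followed by single-letterization of $I^s(\tilde X^n;\tilde Y^n)$ for the converse. The only cosmetic difference is that the paper isolates the converse single-letterization into two short lemmas (Lemma~\ref{lemma8}, bounding $I(\tilde X^n;Y^n)\le I^s(\tilde X^n;\tilde Y^n)$, and Lemma~\ref{lemma9}, bounding $I^s(\tilde X^n;\tilde Y^n)\le nC_s$ directly via subadditivity of $H(X^n)$, $H(Y^n)$ and the product structure of $f_{xy}^n$) rather than invoking the concavity of Theorem~\ref{SemMI_concave_convex}, but this is the same inequality reached by the same mechanism you anticipated.
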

\begin{proof}
We first prove the achievability part of the theorem and the converse will be left in the next part.

Given the source distribution $p(x)$ and the synonymous mapping $f_x$, a $\left(2^{n{(R+R_s)}},n\right)$ code can be generated randomly according to the distribution $p(x)$. Note that $2^{n{(R+R_s)}}$ codewords can be independently generated based on the distribution
\begin{equation}
p(x^n)=\prod_{k=1}^{n} p(x_k).
\end{equation}
Furthermore, these codewords can be uniformly divided into $2^{nR}$ groups according to the synonymous mapping
\begin{equation}
f_x^n(\tilde{x}^n)=\prod_{k=1}^{n} \mathcal{X}_{\tilde{x}_k}.
\end{equation}

Thus all the $2^{n{(R+R_s)}}$ codewords can be listed as a matrix
\begin{equation}
\mathcal{C}=\left[ \begin{matrix}
   {{x}_{1}}(1) & {{x}_{2}}(1) & \cdots  & {{x}_{n}}(1)  \\
   \vdots  & \vdots  & \vdots  & \vdots   \\
   {{x}_{1}}({{2}^{n{{R}_{s}}}}) & {{x}_{2}}({{2}^{n{{R}_{s}}}}) & \cdots  & {{x}_{n}}({{2}^{n{{R}_{s}}}})  \\
   \vdots  & \vdots  & \ddots  & \vdots   \\
   {{x}_{1}}({{2}^{n(R+R_s)}}) & {{x}_{2}}({{2}^{n(R+R_s)}}) & \cdots  & {{x}_{n}}({{2}^{n(R+R_s)}})  \\
\end{matrix} \right]
  \begin{matrix}
   \begin{aligned}
   & \left.
     \begin{aligned}
   &  \\
   &  \\
   &  \\
  \end{aligned} \right\}{{\mathcal{C}}_{s}}(1) \\
     & \vdots  \\
   \end{aligned}  \\
   {{\mathcal{C}}_{s}}({{2}^{n{{R}}}})  \\
\end{matrix}
\end{equation}
The probability of generating the synonymous codeword set $\mathcal{C}_s(1)$ is
\begin{equation}
\text{Pr}\left(\mathcal{C}_s(1)\right)=\prod_{i=1}^{2^{nR_s}}\prod_{k=1}^{n} p(x_k(i))
\end{equation}
All the codeword sets $\mathcal{C}_s(i_s)$ have the same generating probability. Furthermore, the probability of generating a particular code $\mathcal{C}$ is
\begin{equation}
\text{Pr}\left(\mathcal{C}\right)=\prod_{i=1}^{2^{n(R+R_s)}}\prod_{k=1}^{n} p(x_k(i)).
\end{equation}

Similar to the idea in \cite{Book_Cover}, we also use jointly typical decoding for the semantic channel code. If a codeword $\hat{X}^n(i)$ is decided, it must satisfy the following conditions.
\begin{enumerate}
   \item $(\hat{X}^n(i),Y^n)$ is syntactically jointly typical, due to $g^n(\hat{X}^n(i))=\hat{i}_s$, $(\tilde{X}^n(\hat{i}_s),\tilde{Y}^n)$ is also semantically jointly typical. Equivalently, $(\hat{X}^n(i),Y^n)$ is jointly synonymous typical. Hence, the decision codeword $\hat{X}^n(i)$ may be not equal to the transmit codeword $X^n(i)$ but both $\hat{X}^n(i)$ and $X^n(i)$ belong to $\mathcal{C}_s(\hat{i}_s)$.
   \item There is no other index $m$, satisfying $(\hat{X}^n(m),Y^n) \in A_{\epsilon}^{(n)}$ or $g^n(\hat{X}^n(m))=\hat{i}_s$.
\end{enumerate}

We now calculate the average error probability of jointly typical decoding. Generally, this error probability should be averaged over all the codebooks and all the codewords. However, based on the symmetry of the code construction, due to averaging over all codes, the error probability is not dependent on the specific the semantic index $i_s$ and the syntactic index $i$. We relabel the codewords in a synonymous set as $X^n(i_s,j)\in \mathcal{C}_s(i_s), j\in \left\{1,2,\cdots,2^{nR_s}\right\}$.

Without loss of generality, we can assume $i_s=1$ and $i=1$, that is, the codeword $X^n(1,1)$ is sent. Therefore, the average error probability can be written as
\begin{equation}
\begin{aligned}
\text{Pr}(\mathcal{E})&=\frac{1}{2^{nR}}\sum_{i_s=1}^{2^{nR}}\sum_{\mathcal{C}}P(\mathcal{C})\lambda_{i_s}(\mathcal{C})\\
&=\sum_{\mathcal{C}}P(\mathcal{C})\lambda_{1}(\mathcal{C})\\
&=\text{Pr}\left(\mathcal{E}\left|X^{n}(1,1)\right.\right).
\end{aligned}
\end{equation}

Given the received sequence $Y^n$ when sending the first codeword $X^n(1,1)$, we define the following events:
\begin{equation}
\begin{aligned}
E_{i_s}=\left\{\left(X^n(i_s,j),Y^n\right)\in B_{\epsilon}^{(n)}, X^n(i_s,j)\in \mathcal{C}_s(i_s)\right\}&,\\
i_s\in \mathcal{I}_s, j\in \left\{1,2,\cdots,2^{nR_s}\right\}&.
\end{aligned}
\end{equation}
Here, the event $E_{i_s}$ means that the codewords in the $i_s$-th synonymous set $ \mathcal{C}_s(i_s)$ and $Y^n$ are jointly synonymous typical.

When sending the first codeword $X^n(1,1)$ and receiving the received sequence $Y^n$, by using jointly typical decoding, two kinds of error will occur. The first error event is $E_1^c$ which means that all the codewords in $\mathcal{C}_s(1)$ and $Y^n$ are not jointly typical. By the syntactically joint AEP, we have
\begin{equation}\label{Prb_Errorevent1}
P(E_1^c)\leq \epsilon, \text{for sufficiently large }n.
\end{equation}

On the other hand, the second error event is $E_{i_s},i_s\in \left\{2,\cdots,2^{nR}\right\}$ which means that a codeword in a wrong synonymous set is jointly typical with the received sequence $Y^n$. Due to the code construction process, $X^n(1,1)$, $X^n(i_s,j),(i_s\neq 1)$, and $Y^n$ are mutually independent. Hence, by using semantically joint AEP (Theorem \ref{Sem_JAEP_theorem}), the probability that $X^n(i_s,j)$ and $Y^n$ are jointly synonymous typical is written as
\begin{equation}\label{Prb_Errorevent2}
P(E_{i_s})\leq 2^{-n(I^s(\tilde{X};\tilde{Y})-3\epsilon)}, i_s\in \left\{2,\cdots,2^{nR}\right\}.
\end{equation}

Consequently, combining (\ref{Prb_Errorevent1}) and (\ref{Prb_Errorevent2}), the error probability can be derived as
\begin{equation}
\begin{aligned}
\text{Pr}(\mathcal{E})&=\text{Pr}\left(\mathcal{E}\left|X^{n}(1,1)\right.\right)\\
                                 &\leq P(E_{1}^c) + \sum_{i_s=2}^{2^{nR}}P(E_{i_s})\\
                                 &\leq \epsilon+\sum_{i_s=2}^{2^{nR}}2^{-n(I^s(\tilde{X};\tilde{Y})-3\epsilon)}\\
                                 & = \epsilon+\left(2^{nR}-1\right)2^{-n(I^s(\tilde{X};\tilde{Y})-3\epsilon)}\\
                                 &\leq \epsilon+2^{-n\left(I^s(\tilde{X};\tilde{Y})-R-3\epsilon\right)}\\
                                 &\leq 2\epsilon.
\end{aligned}
\end{equation}
This formula holds for sufficiently large $n$ and $I^s(\tilde{X};\tilde{Y})-R-3\epsilon>0$.

Therefore, if the semantic code rate satisfies $R<I^s(\tilde{X};\tilde{Y})$, the error probability can tend to zero with the suitable $\epsilon$ and $n$.
In addition, by Theorem \ref{SJTS_theorem}, the size of synonymous codeword set satisfies
\begin{equation}
\begin{aligned}
1\leq 2^{nR_s}& \leq  2^{n\left(H(X;Y)-H_s(\tilde{X};\tilde{Y})\right)}\\
                        &= 2^{n\left(I^s(\tilde{X};\tilde{Y})-I(X;Y)\right)}
\end{aligned}
\end{equation}
for sufficiently large $n$. Thus we have $0 \leq R_s \leq (I^s(\tilde{X};\tilde{Y})-I(X;Y))$ and derive that
\begin{equation}\label{equation112}
I(X;Y)+R_s\leq I^s(\tilde{X};\tilde{Y}).
\end{equation}

Then the syntactic code rate $R'=R+R_s$ can be upper bounded by
\begin{equation}
\begin{aligned}
R' &\leq 2I^s(\tilde{X};\tilde{Y})-I(X;Y)\\
    &= I^s(\tilde{X};\tilde{Y})+H(X,Y)-H_s(\tilde{X},\tilde{Y}).
\end{aligned}
\end{equation}
In summary, for any code rate below the semantic capacity $C_s=\max_{f_{xy}}\max_{p(x)}I^s(\tilde{X};\tilde{Y})$, we can construct a code with the error probability being close to zero for sufficiently large $n$. This proves the achievability of theorem.
\end{proof}

\begin{remark}
In the classic channel coding theorem, in order to satisfying the requirement of reliable communication, the code rate must be lower than the channel capacity $C$. On the contrary, in the semantic channel coding, the code rate can be further increased to the semantic channel capacity $C_s$ under the condition of keeping the semantic reliability. Using synonymous codeword set to present the semantic sequence is the key technique to achieve this goal. By (\ref{equation112}), due to $C+R_s\leq C_s$, with the increasing of the number of codewords in a synonymous codeword set, the semantic code rate gradually grows and approaches the semantic capacity. Similar to the classic counterpart, although the proof of the semantic channel coding theorem is also an existence method due to using the random coding, it may provide some hints to construct the channel codes approaching the semantic capacity.
\end{remark}

In order to prove the converse, we first illustrate the relationship between sequential syntactic mutual information and sequential semantic mutual information.
\begin{lemma}\label{lemma8}
Assume $\tilde{X}^n$ is the transmitted semantic sequence over a discrete memoryless channel and the received sequence is $Y^n$, we have
\begin{equation}
I(\tilde{X}^n; Y^n)\leq I^s(\tilde{X}^n;\tilde{Y}^n), \text{ for all } p(x^n).
\end{equation}
\end{lemma}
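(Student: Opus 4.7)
The plan is to subtract the two mutual informations and exhibit the difference as a sum of two ordinary Shannon conditional entropies, each manifestly non-negative. No channel structure is actually needed for this step — the inequality will hold for any joint distribution $p(x^n,y^n)$, and so the discrete-memoryless-channel hypothesis in the lemma statement is extraneous here.

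First I would expand both sides using definitions already in the excerpt. By the sequential analogue of Definition \ref{definition7} (as used in Theorem \ref{chainrule_SMI}),
\begin{equation*}
I^s(\tilde{X}^n;\tilde{Y}^n)=H(X^n)+H(Y^n)-H_s(\tilde{X}^n,\tilde{Y}^n),
\end{equation*}
while, since the semantic vector $\tilde{X}^n$ is itself a random variable with ordinary Shannon entropy $H(\tilde{X}^n)=H_s(\tilde{X}^n)$,
\begin{equation*}
I(\tilde{X}^n;Y^n)=H_s(\tilde{X}^n)+H(Y^n)-H(\tilde{X}^n,Y^n).
\end{equation*}

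The crucial structural remark is that the synonymous mapping $f_x^n$ partitions $\mathcal{X}^n$ into disjoint synonymous blocks, so $\tilde{X}^n$ is a deterministic function of $X^n$; likewise $\tilde{Y}^n$ is a deterministic function of $Y^n$. This gives two elementary identities: $H(X^n)-H_s(\tilde{X}^n)=H(X^n\mid\tilde{X}^n)$, and, since adjoining $\tilde{Y}^n$ to $(\tilde{X}^n,Y^n)$ adds no entropy, $H(\tilde{X}^n,Y^n)-H_s(\tilde{X}^n,\tilde{Y}^n)=H(Y^n\mid\tilde{X}^n,\tilde{Y}^n)$. Subtracting the two expansions above yields
\begin{equation*}
I^s(\tilde{X}^n;\tilde{Y}^n)-I(\tilde{X}^n;Y^n)=H(X^n\mid\tilde{X}^n)+H(Y^n\mid\tilde{X}^n,\tilde{Y}^n)\geq 0,
\end{equation*}
which is exactly the claimed inequality.

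I do not anticipate a real obstacle. The only point that needs care is to verify that $H_s(\tilde{X}^n,\tilde{Y}^n)$ coincides with the Shannon entropy of the pair $(\tilde{X}^n,\tilde{Y}^n)$ regarded as an ordinary random vector — this is immediate from the definition in \eqref{joint_semantic_entropy}, since the synonymous blocks $\mathcal{X}_{i_s}\times\mathcal{Y}_{j_s}$ are disjoint and partition $\mathcal{X}\times\mathcal{Y}$, so that $p(\mathcal{X}_{i_s}\times\mathcal{Y}_{j_s})$ is exactly the distribution of $(\tilde{X},\tilde{Y})$. Once this identification is made, the proof collapses to one line of entropy bookkeeping with the classical chain rule.
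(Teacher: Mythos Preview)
Your proof is correct and is in fact cleaner than the paper's. Both arguments begin from the same expansion of the difference $I^s(\tilde{X}^n;\tilde{Y}^n)-I(\tilde{X}^n;Y^n)$, but then diverge. The paper invokes the memoryless hypothesis to split the difference into a sum $\sum_{k=1}^n\bigl[H_s(\tilde{X}_k)-H(X_k)+H_s(\tilde{X}_k,\tilde{Y}_k)-H(\tilde{X}_k,Y_k)\bigr]$ and then bounds each summand by zero via Lemma~\ref{lemma2} and Theorem~\ref{theorem2}. You instead stay at the sequence level and use only the observation that $\tilde{X}^n$ and $\tilde{Y}^n$ are deterministic functions of $X^n$ and $Y^n$, which turns the difference into $H(X^n\mid\tilde{X}^n)+H(Y^n\mid\tilde{X}^n,\tilde{Y}^n)$. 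Your route is more elementary, avoids the per-symbol factoring, and --- as you point out --- shows that the discrete-memoryless assumption is not needed for this lemma; the inequality holds for an arbitrary joint law $p(x^n,y^n)$. The paper's decomposition, on the other hand, makes explicit how the gap accumulates symbol by symbol, which is a mild conceptual bonus but not logically necessary.
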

\begin{proof}
Due to the definition of discrete memoryless channel, we can write the sequential mutual information as
\begin{equation}
\begin{aligned}
&I(\tilde{X}^n; Y^n)-I^s(\tilde{X}^n;\tilde{Y}^n)\\
&=H(\tilde{X}^n)+H(Y^n)-H(\tilde{X}^n,Y^n)\\
&-\left[H(X^n)+H(Y^n)-H_s(\tilde{X}^n,\tilde{Y}^n)\right]\\
&= \sum_{k=1}^{n} \left[H_s(\tilde{X}_k)-H(X_k)+H_s(\tilde{X}_k,\tilde{Y}_k)-H(\tilde{X}_k,Y_k)\right]\leq 0.
\end{aligned}
\end{equation}
Due to $H(\tilde{X}_k)=H_s(\tilde{X}_k)\leq H(X_k)$ and $H_s(\tilde{X}_k,\tilde{Y}_k)\leq H(\tilde{X}_k,Y_k)$ (Theorem \ref{theorem2}), we prove the lemma.
\end{proof}

Then we investigate the sequential semantic mutual information many times using of discrete memoryless channel.
\begin{lemma}\label{lemma9}
Assume $X^n$ is the transmitted codeword over a discrete memoryless channel and the received sequence is $Y^n$, under a jointly synonymous mapping $f_{xy}^n$, we have
\begin{equation}
I^s(\tilde{X}^n;\tilde{Y}^n)\leq nC_s, \text{ for all } p(x^n).
\end{equation}
\end{lemma}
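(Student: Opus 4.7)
The plan is to reduce $I^s(\tilde{X}^n;\tilde{Y}^n)$ to single-letter quantities and then invoke the definition of $C_s$ via concavity. Throughout, I will assume the jointly synonymous mapping acts letter-wise, $f_{xy}^n=(f_{xy})^n$, so that $(\tilde{X}_k,\tilde{Y}_k)$ is a deterministic function of $(X_k,Y_k)$; this matches the construction used earlier in this section and gives the ``$nC_s$'' on the right-hand side its natural meaning.

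First I would rewrite the up semantic mutual information in a form that isolates the classical mutual information:
\begin{equation}
I^s(\tilde{X}^n;\tilde{Y}^n) = I(X^n;Y^n) + \bigl[H(X^n,Y^n) - H_s(\tilde{X}^n,\tilde{Y}^n)\bigr],
\end{equation}
which follows by subtracting the defining identities $I^s = H(X)+H(Y)-H_s(\tilde{X},\tilde{Y})$ and $I=H(X)+H(Y)-H(X,Y)$ applied to the length-$n$ variables. Because $(\tilde{X}^n,\tilde{Y}^n)$ is a deterministic function of $(X^n,Y^n)$, the bracketed term is exactly the classical conditional entropy $H(X^n,Y^n\mid\tilde{X}^n,\tilde{Y}^n)$.

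Second, I would single-letterize each piece. The textbook DMC argument (chain rule on $H(Y^n)$ together with $H(Y^n\mid X^n)=\sum_k H(Y_k\mid X_k)$) yields $I(X^n;Y^n)\leq\sum_k I(X_k;Y_k)$. For the conditional term, the entropy chain rule combined with ``conditioning reduces entropy'' gives
\begin{equation}
H(X^n,Y^n\mid\tilde{X}^n,\tilde{Y}^n) \leq \sum_{k=1}^n H(X_k,Y_k\mid\tilde{X}_k,\tilde{Y}_k) = \sum_{k=1}^n \bigl[H(X_k,Y_k)-H_s(\tilde{X}_k,\tilde{Y}_k)\bigr],
\end{equation}
where the last equality again uses that $(\tilde{X}_k,\tilde{Y}_k)$ is a function of $(X_k,Y_k)$. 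Adding the two single-letter bounds and regrouping reproduces the definition of $I^s$ at each time index, so
\begin{equation}
I^s(\tilde{X}^n;\tilde{Y}^n)\leq\sum_{k=1}^n I^s(\tilde{X}_k;\tilde{Y}_k).
\end{equation}

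Finally, I would appeal to the concavity of $I^s(\tilde{X};\tilde{Y})$ in $p(x)$ for fixed $p(y\mid x)$, which is Theorem~\ref{SemMI_concave_convex}. Let $\bar{p}(x)=\frac{1}{n}\sum_{k=1}^n p_{X_k}(x)$ denote the time-averaged input law; because the DMC transition $p(y\mid x)$ is the same at every time index, Jensen's inequality gives
\begin{equation}
\frac{1}{n}\sum_{k=1}^n I^s(\tilde{X}_k;\tilde{Y}_k)\leq I^s(\tilde{X};\tilde{Y})\bigr|_{p(x)=\bar{p}(x)}\leq C_s,
\end{equation}
where the last step is just the definition of $C_s$ evaluated at the single-letter $f_{xy}$ read off from $f_{xy}^n$. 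Multiplying through by $n$ completes the proof. The main obstacle I anticipate is the second step: making sure that the single-letter bound on the conditional entropy really only uses the product structure of $f_{xy}^n$ and not any independence across time of $X^n$ itself. If the statement is intended for an arbitrary (non-product) joint mapping $f_{xy}^n$, then $(\tilde{X}_k,\tilde{Y}_k)$ may fail to be a function of $(X_k,Y_k)$ alone and the identity $H(X_k,Y_k\mid\tilde{X}_k,\tilde{Y}_k)=H(X_k,Y_k)-H_s(\tilde{X}_k,\tilde{Y}_k)$ can break; one would then have to define $(\tilde{X}_k,\tilde{Y}_k)$ as the $k$-th coordinate of the full semantic image and argue the single-letterization via marginal entropies, which is the only place where real care is required.
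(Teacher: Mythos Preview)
Your argument is correct, but it differs from the paper's in both the decomposition and the final step. The paper works directly with the defining identity $I^s(\tilde{X}^n;\tilde{Y}^n)=H(X^n)+H(Y^n)-H_s(\tilde{X}^n,\tilde{Y}^n)$, single-letterizes each of the three entropies in one stroke (invoking subadditivity of $H(X^n)$, $H(Y^n)$ and the sequential-entropy chain rule for $H_s(\tilde{X}^n,\tilde{Y}^n)$), and then simply bounds each summand $I^s(\tilde{X}_k;\tilde{Y}_k)$ by $C_s$ from the definition of semantic capacity. You instead split $I^s$ into $I(X^n;Y^n)$ plus the ``semantic defect'' $H(X^n,Y^n\mid\tilde{X}^n,\tilde{Y}^n)$, single-letterize those two pieces separately, and finish with Theorem~\ref{SemMI_concave_convex} and Jensen. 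Your route is a bit longer but arguably cleaner: the conditional-entropy step $H(X^n,Y^n\mid\tilde{X}^n,\tilde{Y}^n)\le\sum_k H(X_k,Y_k\mid\tilde{X}_k,\tilde{Y}_k)$ is a transparent application of ``conditioning reduces entropy'' that needs only the product structure of $f_{xy}^n$, whereas the paper's handling of $H_s(\tilde{X}^n,\tilde{Y}^n)$ in the right direction is less explicit. On the other hand, your concavity/Jensen finish is more machinery than required: once you have $\sum_k I^s(\tilde{X}_k;\tilde{Y}_k)$, each term is at most $C_s$ straight from the definition $C_s=\max_{f_{xy}}\max_{p(x)}I^s(\tilde{X};\tilde{Y})$, so the averaging argument can be dropped.
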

\begin{proof}
Due to the definitions of discrete memoryless channel and up-semantic mutual information, we can write
\begin{equation}
\begin{aligned}
I^s(\tilde{X}^n;\tilde{Y}^n)&=H(X^n)+H(Y^n)-H_s(\tilde{X}^n;\tilde{Y}^n)\\                                  %\overset{?}{=}
                                         &\leq \sum_{k=1}^{n} \left[H(X_k)+H(Y_k)\right]-\sum_{k=1}^{n} H_s(\tilde{X}_k,\tilde{Y}_k)\\
                                         &=\sum_{k=1}^{n}\left[H(X_k)+H(Y_k)- H_s(\tilde{X}_k,\tilde{Y}_k)\right]\\
                                         &\leq nC_s.
\end{aligned}
\end{equation}
The first inequality is from the property of sequential entropy and the second is from the definition of semantic channel capacity.
\end{proof}

We now prove the converse to the semantic channel coding theorem.
\begin{proof} (Converse to Theorem \ref{Sem_CCT}, (Semantic Channel Coding Theorem)):

Let $W_s$ denote a semantic index uniformly drawn from $\left\{1,2,\cdots,2^{nR}\right\}$. The error probability can be written as $P_e^{(n)}=\text{Pr}(\hat{W}_s\neq W_s)$. So we have
\begin{equation}
\begin{aligned}
nR&=H(W_s)=H(W_s|Y^n)+I(W_s;Y^n)   \\
    &\overset{(a)}{\leq} H(W_s|Y^n)+I(\tilde{X}^n(W_s);Y^n)  \\
    &\overset{(b)}{\leq} 1+P_e^{(n)} nR +I(\tilde{X}^n(W_s);Y^n) \\
    &\overset{(c)}{\leq} 1+P_e^{(n)} nR +I^s(\tilde{X}^n(W_s);\tilde{Y}^n) \\
    &\overset{(d)}{\leq} 1+P_e^{(n)} nR +nC_s.
\end{aligned}
\end{equation}
Inequality $(a)$ holds since $\tilde{X}^n(W_s)$ is the function of $W_s$ and $(b)$ is from Fano's inequality. In equality $(c)$ is from Lemma \ref{lemma8} and $(d)$ from Lemma \ref{lemma9}.

So the semantic code rate satisfies
\begin{equation}\label{Semrate_inequality}
R\leq C_s+P_e^{(n)} R+\frac{1}{n},
\end{equation}
which means $R\leq C_s$ for $n\to\infty$.
On the other hand, we can rewrite (\ref{Semrate_inequality}) as
\begin{equation}
P_e^{(n)}\geq 1-\frac{1}{nR}-\frac{C_s}{R}.
\end{equation}
This formula indicates that if $R>C_s$, the error probability is larger than zero for sufficiently large $n$ and the reliable transmission of semantic information can not be fulfilled. So we complete the proof.
\end{proof}

In classic communication systems, the channel capacity is a fundamental limitation of data reliable transmission. In the past seventy years, people invented many powerful channel codes to approach capacity, such as turbo, LDPC and polar codes. Similarly, the semantic capacity is also a key parameter for the semantic transmission. In the future, the construction of channel codes approaching semantic capacity will become one core issue of semantic communication.

\subsection{Semantic Channel Coding Method}
We now investigate the semantic channel coding method. Given a $\left(2^{n(R+R_s)},n\right)$ channel code with length $n$ and semantic rate $R$, the codebook $\mathcal{C}$ can be divided into $2^{nR}$ synonymous codeword groups $\mathcal{C}_s(i_s), i_s\in \{1,2,\cdots,2^{nR}\}$ and each group has $2^{nR_s}$ synonymous codewords. Consider the synonymous mapping, we propose a new method, named as maximum likelihood group (MLG) decoding algorithm to decode this semantic code. The basic idea is to calculate the likelihood probability of the received signal on a synonymous group and compare all the group likelihood probabilities so as to select a group with the maximum probability as the final decoding result.

\begin{definition}
Assume one codeword $x^n\in \mathcal{C}_s(i_s)$ is transmitted on the discrete memoryless channel with the transition probability $p(y^n|x^n)$, the group likelihood probability is defined as
\begin{equation}
P(y^n|\mathcal{C}_s(i_s))\triangleq \prod_{l=1}^{2^{nR_s}}p(y^n|x^n(i_s,l)).
\end{equation}
\end{definition}

So the maximum likelihood group decoding rule is written as
\begin{equation}
\begin{aligned}
\hat{i}_s &=\text{arg}\max_{i_s} P(y^n|\mathcal{C}_s(i_s)) \\
              &=\text{arg}\max_{i_s} \prod_{l=1}^{2^{nR_s}}p(y^n|x^n(i_s,l)) .
\end{aligned}
\end{equation}
Equivalently, this rule can also presented as a logarithmic version,
\begin{equation}
\hat{i}_s =\text{arg}\max_{i_s} \sum_{l=1}^{2^{nR_s}}\ln p(y^n|x^n(i_s,l)) .
\end{equation}

Hence, we can calculate all the group likelihood probabilities and select one group with the maximum probability as the final decision. The index $\hat{i}_s$ indicates the estimation of semantic information $\hat{\tilde{x}}^n$.

Next, we discuss the MLG decoding in the additive white Gaussian noise (AWGN) channel. When a signal is transmitted over the AWGN channel, the received signal can be represented by an equivalent low-pass signal sampled at time $k$:
\begin{equation}
y_k= s_k+z_k
\end{equation}
where $s_k=\left\{\pm\sqrt{E_s}\right\}$ is the binary phase shifted key (BPSK) signal, $z_k$ is a sample of a zero-mean complex Gaussian noise process with variance $\sigma^2=N_0/2$. Let $E_s$ be the symbol energy and $N_0$ denote the single-sided power spectral density of the additive white noise. So the symbol signal-to-noise ratio (SNR) is defined as $\frac{E_s}{N_0}$.

Assume one codeword $x^n(i_s,l)$ is mapped into a transmitted signal vector $s^n(i_s,l)=$\\$\sqrt{E_s}\left(1^n-2x^n(i_s,l)\right)$ where $1^n$ is the all-one vector with the length of $n$, by using MLG rule, we can write
\begin{equation}
\begin{aligned}\label{MLG_rule}
\hat{i}_s &=\text{arg}\max_{i_s} \sum_{l=1}^{2^{nR_s}}\ln p(y^n|x^n(i_s,l)) \\
              &=\text{arg}\max_{i_s} \sum_{l=1}^{2^{nR_s}}\ln \left[\frac{1}{\sqrt{2\pi \sigma^2}} e^{-\frac{\left\|y^n-s^n(i_s,l)\right\|^2}{2\sigma^2}}\right]\\
              &=\text{arg}\min_{i_s} \sum_{l=1}^{2^{nR_s}} \left\|y^n-s^n(i_s,l)\right\|^2\\
              &=\text{arg}\min_{i_s} d_{\text{E}}^2 \left(y^n, \mathcal{C}_s(i_s)\right),
\end{aligned}
\end{equation}
where $d_{\text{E}}^2 \left(y^n, \mathcal{C}_s(i_s)\right)=\sum_{l=1}^{2^{nR_s}} \left\|y^n-s^n(i_s,l)\right\|^2$ is the squared Euclidian distance between the receive vector $y^n$ and the code group $\mathcal{C}_s(i_s)$. Thus the MLG rule in AWGN channel is transformed into the minimum distance grouping decoding rule.

Now we investigate the group-wise error probability (GEP) of semantic channel code.
\begin{theorem}
Given a semantic channel code $\mathcal{C}$ with equipartition code groups $\mathcal{C}_s$, the GEP between $\mathcal{C}_s(i_s)$ and $\mathcal{C}_s(j_s)$ is upper bounded by
\begin{equation}
P\left({\mathcal{C}_s}(i_s)\to{\mathcal{C}_s}(j_s)\right) \leq \exp\left\{-d_\text{GH}(\mathcal{C}_s(i_s),\mathcal{C}_s(j_s)) \frac{E_s}{N_0}\right\},
\end{equation}
where $d_\text{GH}$ denotes the group Hamming distance which is defined as following%in the bottom of this page.
\begin{equation}
\begin{aligned}
& d_\text{GH}(\mathcal{C}_s(i_s),\mathcal{C}_s(j_s))=\\
&\min_{m} \frac{\left[\sum_{l=1}^{2^{nR_s}} d_H(x^n(i_s,m),x^n(j_s,l))-\sum_{l=1,l\neq m}^{2^{nR_s}} d_H( x^n(i_s,m),x^n(i_s,l))\right]^2}
       { \|\sum_{l=1}^{2^{nR_s}} \left(x^n(j_s,l)-x^n(i_s,l)\right)\|^2 }.
\end{aligned}
\end{equation}
%\begin{figure*}[hb]
%\centering
%\hrulefill
%\vspace*{8pt}
%\begin{equation}
%\begin{aligned}
%& d_\text{GH}(\mathcal{C}_s(i_s),\mathcal{C}_s(j_s))=\\
%&\min_{m} \frac{\left[\sum_{l=1}^{2^{nR_s}} d_H(x^n(i_s,m),x^n(j_s,l))-\sum_{l=1,l\neq m}^{2^{nR_s}} d_H( x^n(i_s,m),x^n(i_s,l))\right]^2}
%       { \|\sum_{l=1}^{2^{nR_s}} \left(x^n(j_s,l)-x^n(i_s,l)\right)\|^2 }.
%\end{aligned}
%\end{equation}
%\end{figure*}
\end{theorem}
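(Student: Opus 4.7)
The plan is to reduce the pairwise GEP to a Gaussian tail probability on a one-dimensional projection of the noise and then translate the resulting Euclidean quantities into the Hamming-distance form appearing in $d_{\text{GH}}$. Starting from the MLG rule (\ref{MLG_rule}), the event $\mathcal{C}_s(i_s)\to\mathcal{C}_s(j_s)$ is exactly $d_{\text{E}}^2(y^n,\mathcal{C}_s(j_s))\le d_{\text{E}}^2(y^n,\mathcal{C}_s(i_s))$. Because every BPSK codeword has the same energy $\|s^n(k,l)\|^2=nE_s$, expanding both sides cancels the $\|y^n\|^2$ and codeword-energy terms, leaving the linear condition
\begin{equation}
\langle y^n,\,d\rangle \le 0, \qquad d \triangleq \sum_{l=1}^{2^{nR_s}} \bigl(s^n(i_s,l)-s^n(j_s,l)\bigr).
\end{equation}

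Next I would condition on the transmitter having sent $x^n(i_s,m)$, write $y^n=s^n(i_s,m)+z^n$ with $z^n$ i.i.d.\ Gaussian of variance $\sigma^2=N_0/2$, and rewrite the event as $\langle z^n,d\rangle \le -\langle s^n(i_s,m),d\rangle$. Since $\langle z^n,d\rangle\sim\mathcal{N}(0,\sigma^2\|d\|^2)$, the conditional pairwise probability equals $Q\!\bigl(\langle s^n(i_s,m),d\rangle/(\sigma\|d\|)\bigr)$, and the Chernoff bound $Q(x)\le e^{-x^2/2}$ together with $2\sigma^2=N_0$ gives
\begin{equation}
P\bigl(\mathcal{C}_s(i_s)\to\mathcal{C}_s(j_s)\,\big|\,m\bigr) \le \exp\!\left(-\frac{\langle s^n(i_s,m),d\rangle^2}{N_0\,\|d\|^2}\right).
\end{equation}

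The remaining step is purely algebraic. Using $s^n=\sqrt{E_s}(1^n-2x^n)$, one verifies the identities $\langle s^n,s'^n\rangle = E_s\bigl(n-2d_H(x^n,x'^n)\bigr)$ and $s^n(i_s,l)-s^n(j_s,l) = -2\sqrt{E_s}\bigl(x^n(i_s,l)-x^n(j_s,l)\bigr)$. A short bilinear expansion then yields
\begin{equation}
\langle s^n(i_s,m),d\rangle = 2E_s\!\left[\sum_{l}d_H\bigl(x^n(i_s,m),x^n(j_s,l)\bigr)-\sum_{l\neq m}d_H\bigl(x^n(i_s,m),x^n(i_s,l)\bigr)\right]
\end{equation}
(the omitted $l=m$ self-distance is zero) and $\|d\|^2 = 4E_s\bigl\|\sum_l\bigl(x^n(j_s,l)-x^n(i_s,l)\bigr)\bigr\|^2$. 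Substituting these two identities into the Chernoff bound, one factor of $E_s$ cancels between numerator and denominator of the ratio, leaving $E_s/N_0$ in front of a ratio $\Delta_m^2/\|\sum_l(\cdot)\|^2$ that is precisely the argument of the $\min_m$ in the definition of $d_{\text{GH}}$. Finally, since the transmitter may send any $m\in\{1,\ldots,2^{nR_s}\}$, upper-bounding over the worst-case $m$ produces exactly $d_{\text{GH}}(\mathcal{C}_s(i_s),\mathcal{C}_s(j_s))$ in the exponent, which is the claimed bound.

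The main obstacle I anticipate is the careful bookkeeping in the last algebraic step: keeping track of the sign conventions in $s^n=\sqrt{E_s}(1^n-2x^n)$, and verifying that the characteristic asymmetry of the numerator ``sum of distances to the other group minus sum of distances within the own group (excluding self)'' emerges in exactly the form written in the definition of $d_{\text{GH}}$. A secondary subtlety is that the Chernoff bound $Q(x)\le e^{-x^2/2}$ becomes loose (and formally vacuous) when $\langle s^n(i_s,m),d\rangle<0$ for some $m$; the bound therefore implicitly assumes the code is structured so that the dominant exponent is captured by the squared ratio, which is the natural regime in which $d_{\text{GH}}$ serves as a meaningful design metric.
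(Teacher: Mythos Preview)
Your proposal is correct and follows essentially the same route as the paper: condition on the transmitted codeword $x^n(i_s,m)$, reduce the MLG error event to a one-dimensional Gaussian tail $Q(\cdot)$ via the linear projection onto $d=\sum_l(s^n(i_s,l)-s^n(j_s,l))$, apply $Q(x)\le e^{-x^2/2}$, convert Euclidean quantities to Hamming distances using $s^n=\sqrt{E_s}(1^n-2x^n)$, and then pass to the worst-case $m$ to obtain $d_{\text{GH}}$. Your use of the equal-energy property $\|s^n(k,l)\|^2=nE_s$ to cancel the quadratic terms up front is a slight streamlining of the algebra, but the argument is otherwise identical to the paper's.
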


\begin{proof}
Assume one codeword $x^n(1,1)$ in the code group $\mathcal{C}_s(1)$ is transmitted, the received signal vector can be represented as follows
\begin{equation}\label{equation125}
y^n=s^n(1,1)+z^n,
\end{equation}
where $z^n \sim \mathcal{N}(\mathbf{0},\sigma^2\mathbf{I})$ is the Gaussian noise vector.

Suppose a codeword $x^n(j_s,l)\in \mathcal{C}_s(j_s), j_s\neq 1$ is mapped into the signal vector $s^n(j_s,l)$. By using the MLG rule, if a group-wise error occurs, the Euclidian distance between the received vector and the transmitted signal vector group satisfy the inequality
\begin{equation}
d_{\text{E}}^2 \left(y^n, \mathcal{C}_s(i_s)\right)>d_{\text{E}}^2 \left(y^n, \mathcal{C}_s(j_s)\right).
\end{equation}
Substituting (\ref{MLG_rule}) and (\ref{equation125}) into this inequality, we have
\begin{equation}
\begin{aligned}
&\sum_{l=1}^{2^{nR_s}} \left\|y^n-s^n(1,l)\right\|^2\\
&>\sum_{l=1}^{2^{nR_s}} \left\|y^n-s^n(j_s,l)\right\|^2 \Rightarrow\\
&\left\|z^n\right\|^2+\sum_{l=2}^{2^{nR_s}} \left\|s^n(1,1)-s^n(1,l)+z^n\right\|^2\\
&>\sum_{l=1}^{2^{nR_s}} \left\|s^n(1,1)-s^n(j_s,l)+z^n\right\|^2.
\end{aligned}
\end{equation}
After some manipulations, the error decision region can be written as
\begin{equation}
\begin{aligned}
\mathcal{H}= \left\{ z^n:  \left[\sum_{l=1}^{2^{nR_s}} \left(s^n(j_s,l)-s^n(1,l)\right)\right] (z^n)^T >\right. &\\
  \frac{1}{2}\left[\sum_{l=1}^{2^{nR_s}} \left\|s^n(1,1)-s^n(j_s,l)\right\|^2\right.&\\
  \left. \left.-\sum_{l=2}^{2^{nR_s}} \left\|s^n(1,1)-s^n(1,l)\right\|^2\right]\right\}&.
\end{aligned}
\end{equation}
Let $d_{\text{E}}^2 \left(s^n(1,1), \mathcal{C}_s(j_s)\right)=\sum_{l=1}^{2^{nR_s}} \left\|s^n(1,1)-s^n(j_s,l)\right\|^2$ denote the distance between the transmit vector $s^n(1,1)$ and the code group $\mathcal{C}_s(j_s)$ and $d_{\text{E}}^2 \left(s^n(1,1), \mathcal{C}_s(1)\right)=\sum_{l=2}^{2^{nR_s}} \left\|s^n(1,1)-s^n(1,l)\right\|^2$ denote the inner distance of code group $\mathcal{C}_s(1)$.

So the codeword-to-group error probability can be derived as
\begin{equation}
\begin{aligned}
  &P\left(x^n(1,1)\to{\mathcal{C}_s}(j_s)\right)\\
  &= Q\left[ \sqrt {\frac{(d_{\text{E}}^2 \left(s^n(1,1), \mathcal{C}_s(j_s)\right)-d_{\text{E}}^2 \left(s^n(1,1), \mathcal{C}_s(1)\right))^2}{\left\|\sum_{l=1}^{2^{nR_s}} \left(s^n(j_s,l)-s^n(1,l)\right)\right\|^22N_0}} \right], \\
\end{aligned}
\end{equation}
where $Q(x)=\frac{1}{\sqrt{2\pi}}\int_x^{\infty} e^{-t^2/2}dt$ is the tail distribution function of the standard normal distribution.

Furthermore, due to $\left(s^n(1,1)-s^n(j_s,l)\right)=2\sqrt{E_s}\left(x^n(j_s,l)-x^n(1,1)\right)$, then we have ${\left\| s^n(1,1)-s^n(j_s,l) \right\|}^2$\\$=4 E_s d_H(x^n(1,1),x^n(j_s,l))$. Additionally, we can derive that $\left\|\sum_{l=1}^{2^{nR_s}} \left(s^n(j_s,l)-s^n(1,l)\right)\right\|^2=4 E_s\left\|\sum_{l=1}^{2^{nR_s}} \left(x^n(j_s,l)-x^n(1,l)\right)\right\|^2$$=4 E_s\|\Delta(\mathcal{C}_s(j_s),\mathcal{C}_s(1))\|^2$. Thus the error probability can be further written as
\begin{equation}
  P\left(x^n(1,1)\to{\mathcal{C}_s}(j_s)\right)= Q\left[ \sqrt {d_\text{GH}(x^n(1,1),\mathcal{C}_s(j_s)) \frac{2E_s}{N_0}} \right],
\end{equation}
where $d_\text{GH}(x^n(1,1),\mathcal{C}_s(j_s))=$$[\sum_{l=1}^{2^{nR_s}} d_H(x^n(1,1),x^n(j_s,l))$\\$-\sum_{l=2}^{2^{nR_s}} d_H( x^n(1,1),x^n(1,l))]^2$$/\|\Delta(\mathcal{C}_s(j_s),\mathcal{C}_s(1))\|^2$ denotes the codeword-to-group Hamming distance.

Furthermore, using the inequality $Q(x)\leq e^{-\frac{x^2}{2}}$, the codeword-to-group error probability can be upper bounded by
\begin{equation}
P\left(x^n(1,1)\to{\mathcal{C}_s}(j_s)\right) \leq e^ {-d_\text{GH}(x^n(1,1),\mathcal{C}_s(j_s)) \frac{E_s}{N_0}}.
\end{equation}

Averaging over all the codewords of the group $\mathcal{C}_s(1)$, we obtain the upper bound of GEP as follows
\begin{equation}
\begin{aligned}
P\left({\mathcal{C}_s}(1)\to{\mathcal{C}_s}(j_s)\right) &\leq \sum_{l=1}^{2^{nR_s}} \frac{1}{2^{nR_s}}e^{-d_\text{GH}(x^n(1,l),\mathcal{C}_s(j_s)) \frac{E_s}{N_0}}\\
& \leq \exp\left\{-d_\text{GH}(\mathcal{C}_s(1),\mathcal{C}_s(j_s)) \frac{E_s}{N_0}\right\}.
\end{aligned}
\end{equation}
So we complete the proof.
\end{proof}

In the ML decoding, the minimum Hamming distance $d_{\text{H,min}}$ determines the error performance of one linear channel code. Similarly, in the MLG decoding, the minimum group Hamming distance $d_{\text{GH,min}}=\min d_{\text{GH}}(\mathcal{C}_s(i_s),\mathcal{C}_s(j_s))$ dominates the performance of semantic channel code.

\begin{example}
We now give an example of semantic code constructed based on (7,4) Hamming code with synonymous mapping and MLG decoding. The codebook is shown in Table \ref{Semantic_Hammingcode}. All the sixteen codewords are divided into eight code groups and each group has two synonymous codewords. For an instance, $\mathcal{C}_s(1)$ has two codewords $(0000000)$ and $(1101000)$ and its semantic sequence is $(000)$. So this code can be regarded as a (7,3) semantic Hamming code with code rate $R=\frac{3}{7}$ and $R_s=\frac{1}{7}$.

By using ML decoding, the union bound of the error probability is
\begin{equation}
P_e \leq \sum_{d=d_{\text{H,min}}}^{n} A_d Q\left(\sqrt{2d \frac{E_s}{N_0}}\right) \leq \sum_{d=d_{\text{H,min}}}^{n} A_d e^{-d\frac{E_s}{N_0}}.
\end{equation}

Since the minimum Hamming distance of this code is $d_{\text{H,min}}=3$ and distance spectrum is $\{A_3=8,A_4=6,A_7=1\}$, the error probability of ML decoding is upper bounded by
\begin{equation}\label{ML_UB}
P_e^{\text{ML}} \leq 8e^{-3\frac{E_s}{N_0}}+6e^{-4\frac{E_s}{N_0}}+e^{-7\frac{E_s}{N_0}}.
\end{equation}

Let $\{A_{d_1,d_2}\}$ denote the group distance spectrum and $d_1$ and $d_2$ mean the codeword-to-group Hamming distance. By using MLG decoding, the union bound of the error probability is
\begin{equation}
\begin{aligned}
P_e &\leq \sum_{d_1,d_2=d_{\text{GH,min}}}^{n} \frac{A_{d_1,d_2}}{2}\left[Q\left(\sqrt{2d_1 \frac{E_s}{N_0}}\right)+Q\left(\sqrt{2d_2 \frac{E_s}{N_0}}\right) \right]\\
&\leq \sum_{d_1,d_2=d_{\text{GH,min}}}^{n} \frac{A_{d_1,d_2}}{2}\left(e^{-d_1\frac{E_s}{N_0}}+e^{-d_2\frac{E_s}{N_0}}\right).
\end{aligned}
\end{equation}

So the minimum group Hamming distance of this code is $d_{\text{GH,min}}=2$ and group distance spectrum is $\{A_{2,2}=6,A_{4,4}=1\}$. The corresponding upper bound of the MLG decoding is
\begin{equation}\label {MLG_UB}
P_e^{\text{MLG}} \leq 6e^{-2\frac{E_s}{N_0}}+e^{-4\frac{E_s}{N_0}}.
\end{equation}

Compare with (\ref{ML_UB}) and (\ref{MLG_UB}), we find that the minimum distance of semantic Hamming code is decreased. However, for a long code length and well-designed synonymous mapping, the error performance of MLG decoding will be better than that using ML decoding.

\begin{table}[tp]
\centering
\caption{The codebook of (7,3) semantic Hamming code with synonymous mapping.} \label{Semantic_Hammingcode}
\begin{tabular}{|c|c|c|c|}
  \hline Index $i_s$ &  Semantic sequence   & Hamming code group $\mathcal{C}_s(i_s)$   \\
  \hline         1         &             000               &           \{0000000,   1101000\}           \\
  \hline         2         &             001               &           \{0110100,   1011100\}           \\
  \hline         3         &             010               &           \{1110010,   0011010\}          \\
  \hline         4         &             011               &           \{1000110,   0101110\}           \\
  \hline         5         &             100               &           \{1010001,   0111001\}          \\
  \hline         6         &             101               &           \{1100101,   0001101\}           \\
  \hline         7         &             110               &           \{0100011,   1001011\}           \\
  \hline         8         &             111               &           \{0010111,   1111111\}           \\
  \hline
\end{tabular}
\end{table}
\end{example}

\begin{remark}
From the viewpoint of practical application, semantic channel codes are a new kind of channel codes. Synonymous mapping provides a valuable idea for the construction and decoding of semantic channel codes. Unlike the traditional channel codes, semantic channel codes should optimize the minimum group Hamming distance. How to design an optimal synonymous mapping to cleverly partition the code group is significant for the design of semantic codes. Non-equipartition mapping may be more flexible than the equipartition mapping. On the other hand, the optimal decoding of semantic codes is the MLG rule rather ML rule. However, due to the exponent complexity of MLG decoding algorithm, it is not practical for application. So we should pursuit lower complexity decoding algorithms for the semantic channel codes in the future.
\end{remark}

\section{Semantic Lossy Source Coding}
\label{section_VIII}
In this section, we mainly discuss the semantic lossy source coding. Firstly, we investigate the semantic distortion measure and extend the concept of jointly typical sequence in the semantic sense. Then we prove the semantic rate-distortion coding theorem by using JAEP and synonymous typical set, which means that the semantic rate distortion function, $R_s(D)=\min_{f_x,f_{\hat{x}}}\min_{p(Y|X): \mathbb{E}d_s(\tilde{X},\hat{\tilde{X}})\leq D}I_s(\tilde{X};\hat{\tilde{X}})$, namely the minimum down semantic mutual information, is the lowest compression rate achieving by semantic lossy source coding.

\subsection{Semantic Distortion and Jointly Typical Set}
Assume a discrete source $X\sim p(x), x\in\mathcal{X}$ with the associated semantic source $\tilde{X}$ produces a sequence $X^n\sim p(x^n)$, after the quantization and reproduction, equivalently, through a test channel with the transition probabilities $p(\hat{x}^n|x^n)$, we obtain the representative sequence $\hat{X}^n\sim p(\hat{x}^n)$. Here, $p(\hat{x}^n)=\sum_{p(x^n)}p(x^n)p(\hat{x}^n|x^n)$ denotes the distribution of reproduction sequences.

\begin{definition}
For a semantic sequence $\tilde{x}^n$, under the synonymous mapping $f_{x}^n$, the associated source sequence is $x^n$. After going through a test channel with the transition probabilities $p(\hat{x}^n|x^n)$, we obtain the reconstruction sequence $\hat{x}^n$. Under the de-synonymous mapping $g_{\hat{x}}^n$, the associated semantic sequence is $\hat{\tilde{x}}^n$. Thus the semantic distortion between sequences $\tilde{x}^n$ and $\hat{\tilde{x}}^n$ is defined by
\begin{equation}
\begin{aligned}
d_s(\tilde{x}^n,\hat{\tilde{x}}^n)&=\frac{1}{n}\sum_{k=1}^{n}d_s(\tilde{x}_k,\hat{\tilde{x}}_k)\\
                                                  &=\frac{1}{n}\sum_{k=1}^{n}d_s(\mathcal{X}_{\tilde{x}_k},\hat{\mathcal{X}}_{\hat{\tilde{x}}_k}).
\end{aligned}
\end{equation}

Then the average semantic distortion over the semantic sequences is defined as
\begin{equation}
\begin{aligned}
\mathbb{E}\left[d_s (\tilde{X}^n,\hat{\tilde{X}}^n)\right]=\sum_{(x^n,\hat{x}^n)}p\left(x^n\right)p(\hat{x}^n|x^n)d_s(\tilde{x}^n,\hat{\tilde{x}}^n).
\end{aligned}
\end{equation}
\end{definition}

Given the synonymous mapping $f_x^n$, let $A_{\epsilon}^{(n)}(X^n)$ and $\tilde{A}_{\epsilon}^{(n)}(\tilde{X}^n)$ denote the syntactically and semantically typical set of source sequence respectively. Correspondingly, the synonymous typical set is denoted as $B_{\epsilon}^{(n)}(\tilde{x}^n\to X^n)$. According to Theorem \ref{SynTSet_Theorem}, we have
\begin{equation}
 2^{n\left(H(X)-H_s(\tilde{X})-\epsilon\right)}\leq\left|B_{\epsilon}^{(n)}(\tilde{x}^n\to X^n)\right| \leq 2^{n\left(H(X)-H_s(\tilde{X})+\epsilon\right)}
\end{equation}
for sufficiently large $n$.

Furthermore, given a test channel with the transition probabilities $p(\hat{x}^n|x^n)$, the distribution of reproduction sequence is written as $p(\hat{x}^n)=\sum_{p(x^n)}p(x^n)p(\hat{x}^n|x^n)$. Similarly, given the synonymous mapping $f_{\hat{x}}^n$,  let $A_{\epsilon}^{(n)}(\hat{X}^n)$ and $\tilde{A}_{\epsilon}^{(n)}(\hat{\tilde{X}}^n)$ denote the syntactically and semantically typical set of reproduction sequence respectively. Moreover, the associated synonymous typical set is addressed as $B_{\epsilon}^{(n)}(\hat{\tilde{x}}^n\to\hat{X}^n)$. By Theorem \ref{SynTSet_Theorem}, we also have
\begin{equation}
2^{n\left(H(\hat{X})-H_s(\hat{\tilde{X}})-\epsilon\right)} \leq\left|B_{\epsilon}^{(n)}(\hat{\tilde{x}}^n\to\hat{X}^n)\right|\\
 \leq 2^{n\left(H(\hat{X})-H_s(\hat{\tilde{X}})+\epsilon\right)}
\end{equation}
for sufficiently large $n$.
Since all the synonymous typical sets have almost the same size, we can abbreviate them as $B_{\epsilon}^{(n)}(\hat{X}^n)$ and $B_{\epsilon}^{(n)}(X^n)$ respectively.

Given the joint distribution $p(x,\hat{x})$ on $\mathcal{X}\times \hat{\mathcal{X}}$, let $A_{\epsilon}^{(n)}(X^n,\hat{X}^n)$ denote the the syntactically jointly typical set of source sequence and reproduction sequence and $\tilde{A}_{\epsilon}^{(n)}$ be the semantically jointly typical set.

The typical sequence mapping for lossy source coding is depicted in Fig. \ref{SemSyn_Mapping_lossy_coding}. Under the synonymous mappings $f_x^n$ and $f_{\hat{x}}^n$, the semantic source sequence $\tilde {x}^n$ and semantic reconstruction sequence $\hat{\tilde {x}}^n$ are separately mapped into synonymous typical sets $B_{\epsilon}^{(n)}(X^n)$ and $B_{\epsilon}^{(n)}(\hat{X}^n)$ . Furthermore, by the conditional probability $p(\hat{x}^n|x^n)$, the typical sequences in these sets can compose the jointly typical sequence.
\begin{figure*}[htbp]
\setlength{\abovecaptionskip}{0.cm}
\setlength{\belowcaptionskip}{-0.cm}
  \centering{\includegraphics[scale=0.9]{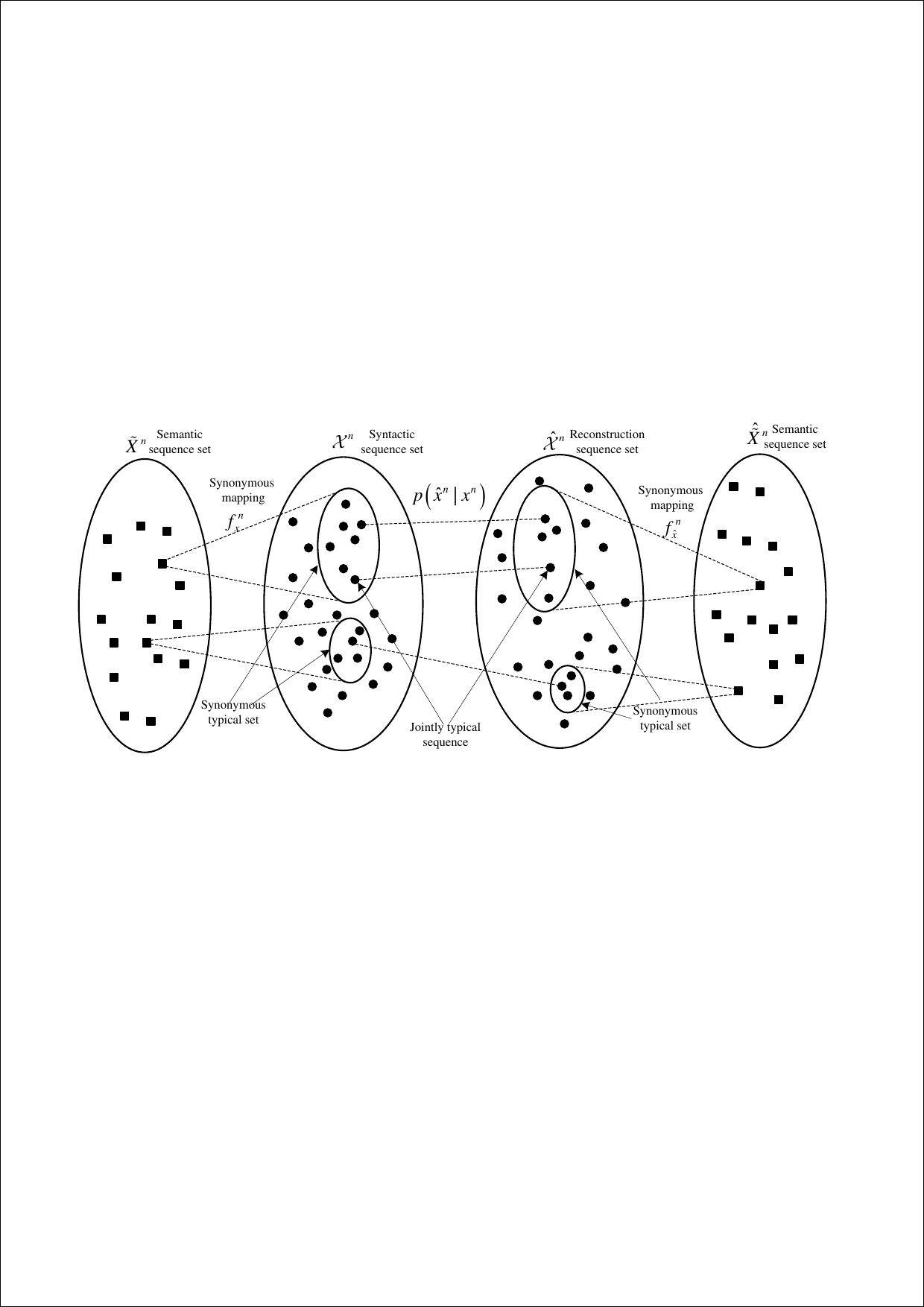}}
  \caption{Typical sequence mapping for lossy source coding.}\label{SemSyn_Mapping_lossy_coding}
\end{figure*}

Then as the consequence of the JAEP, we present the properties of semantically jointly typical set as following.
\begin{theorem}\label{Distortion_JAEP_theorem}
Let $(X^n, \hat{X}^n)$ be a sequence pair with length $n$ drawn i.i.d. according to $p(x^n,\hat{x}^n)$. By using synonymous mapping $f_x^n$ ($f_{\hat{x}}^n$), the semantic sequence $\tilde{X}^n$ ($\hat{\tilde{X}}^n$) is mapped into a syntactic sequence $X^n$ ($\hat{X}^n$).
\begin{enumerate}[(1)]
    \item $\text{Pr}\left((\tilde{X}^n,\hat{\tilde{X}}^n)\in \tilde{A}_{\epsilon}^{(n)}\right)>1-\epsilon$ for $n$ sufficiently large.
    \item $\left(1-\epsilon\right) 2^{n\left(H_s(\tilde{X},\hat{\tilde{X}})-\epsilon\right)}\leq \left|\tilde{A}_{\epsilon}^{(n)}\right| \leq 2^{n\left(H_s(\tilde{X},\hat{\tilde{X}})+\epsilon\right)}$ for $n$ sufficiently large.
    \item Given $\tilde{Z}^n$ and $\hat{\tilde{Z}}^n$ are two independent semantic sequences, if $Z^n$ and $\hat{Z}^n$ are two associated syntactic sequences with the same distributions as $X^n$ and $\hat{X}^n$, i.e., $Z^n\sim p(x^n)$ and $\hat{Z}^n\sim p(\hat{x}^n)$, for $n$ sufficiently large, we have
        \begin{equation}\label{seperate_typical_equality}
        \begin{aligned}
        (1-\epsilon)2^{-n(I_s(\tilde{X};\hat{\tilde{X}})+3\epsilon)}&\leq \text{Pr}\left((\tilde{Z}^n,\hat{\tilde{Z}}^n)\in \tilde{A}_{\epsilon}^{(n)}\right)\\
        & \leq 2^{-n(I_s(\tilde{X};\hat{\tilde{X}})-3\epsilon)}.
        \end{aligned}
        \end{equation}
\end{enumerate}
\end{theorem}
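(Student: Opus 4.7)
The plan is to mirror the structure of Theorem \ref{Sem_JAEP_theorem} (the semantically joint AEP for channel coding) and adapt each of the three steps to the source--reconstruction setting, in which $(X^n, \hat X^n)$ is drawn i.i.d.\ from $p(x, \hat x)$ and the semantic sequences $\tilde X^n$ and $\hat{\tilde X}^n$ are obtained deterministically through the synonymous mappings $f_x^n$ and $f_{\hat x}^n$. The substitutions are essentially notational: the output variable $Y$ of the channel setting is replaced by the reconstruction variable $\hat X$, and the up semantic mutual information $I^s$ is replaced by the down semantic mutual information $I_s$.

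For property (1), I would invoke the weak law of large numbers for each of the three sample averages $-\frac{1}{n}\log p(\tilde X^n)$, $-\frac{1}{n}\log p(\hat{\tilde X}^n)$, and $-\frac{1}{n}\log p(\tilde X^n, \hat{\tilde X}^n)$. Since the underlying pairs $(X_k, \hat X_k)$ are i.i.d., so are their semantic images under $f_x$ and $f_{\hat x}$, and each average converges in probability to the corresponding entropy $H_s(\tilde X)$, $H_s(\hat{\tilde X})$, or $H_s(\tilde X, \hat{\tilde X})$. A union bound over the three atypicality events then yields $\Pr((\tilde X^n, \hat{\tilde X}^n) \in \tilde A_\epsilon^{(n)}) > 1 - \epsilon$ for all sufficiently large $n$, exactly as in the first property of Theorem \ref{Sem_JAEP_theorem}.

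For property (2), I would sum the pointwise bound
$2^{-n(H_s(\tilde X, \hat{\tilde X}) + \epsilon)} \leq p(\tilde x^n, \hat{\tilde x}^n) \leq 2^{-n(H_s(\tilde X, \hat{\tilde X}) - \epsilon)}$
over $\tilde A_\epsilon^{(n)}$ and sandwich the total mass between $1 - \epsilon$ (from property (1)) and $1$, which delivers the matching upper and lower cardinality bounds in the same way as \eqref{equation_SJTS_down}--\eqref{equation_SJTS_up}.

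Property (3) is the key step. I would expand
\[
\Pr\bigl((\tilde Z^n, \hat{\tilde Z}^n) \in \tilde A_\epsilon^{(n)}\bigr) = \sum_{(\tilde z^n, \hat{\tilde z}^n) \in \tilde A_\epsilon^{(n)}} p(\tilde z^n)\, p(\hat{\tilde z}^n),
\]
using the independence of $\tilde Z^n$ and $\hat{\tilde Z}^n$ together with the matching of their marginals to those of $\tilde X^n$ and $\hat{\tilde X}^n$, and then combine the cardinality bound from property (2) with the semantic marginal typicality bounds $p(\tilde z^n) \leq 2^{-n(H_s(\tilde X) - \epsilon)}$ and $p(\hat{\tilde z}^n) \leq 2^{-n(H_s(\hat{\tilde X}) - \epsilon)}$ to obtain the upper bound, and the reverse inequalities to obtain the lower bound. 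The main obstacle is the algebraic identification of the resulting exponent with $I_s(\tilde X; \hat{\tilde X}) = H_s(\tilde X) + H_s(\hat{\tilde X}) - H(X, \hat X)$ as introduced in Definition \ref{definition7}: to reproduce the $H(X, \hat X)$ term rather than a $H_s(\tilde X, \hat{\tilde X})$ term, one must carefully track which typicality conditions (semantic marginals, semantic joint, and the underlying syntactic joint typicality of $(x^n, \hat x^n)$) the set $\tilde A_\epsilon^{(n)}$ is required to enforce, so that the effective count of independent pairs landing in $\tilde A_\epsilon^{(n)}$ is governed by $2^{nH(X,\hat X)}$. Once this bookkeeping is pinned down, the two-sided bound \eqref{seperate_typical_equality} follows by direct substitution, in complete analogy with the derivation of \eqref{jointly_typical_equality} in Theorem \ref{Sem_JAEP_theorem}.
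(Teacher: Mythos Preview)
Your proposal is essentially correct and follows the same route as the paper. For properties (1) and (2) the paper simply notes they are restatements of the semantically jointly typical set results, and your WLLN-plus-union-bound and sum-and-sandwich arguments are exactly the standard machinery behind those restatements. For property (3), you correctly diagnose that naively pairing the cardinality from property (2) with the semantic marginal bounds would produce $H_s(\tilde X,\hat{\tilde X})$ rather than $H(X,\hat X)$ in the exponent, and you correctly identify the fix: the effective count must come from the \emph{syntactic} jointly typical set $A_\epsilon^{(n)}$, of size $\approx 2^{nH(X,\hat X)}$, while the weights remain the \emph{semantic} marginals $p(\tilde x^n)p(\hat{\tilde x}^n)\approx 2^{-n(H_s(\tilde X)+H_s(\hat{\tilde X}))}$. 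The paper makes this concrete in one line by rewriting the event $\{(\tilde Z^n,\hat{\tilde Z}^n)\in\tilde A_\epsilon^{(n)}\}$ as $\{Z^n\in B_\epsilon^{(n)}(X^n),\ \hat Z^n\in B_\epsilon^{(n)}(\hat X^n),\ (Z^n,\hat Z^n)\in A_\epsilon^{(n)}\}$ and then summing $p(\tilde x^n)p(\hat{\tilde x}^n)$ over $(x^n,\hat x^n)\in A_\epsilon^{(n)}$; this is precisely the bookkeeping you allude to, so your plan matches the paper once that reformulation is written down explicitly.
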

\begin{proof}
Property (1) and (2) are restatements of semantically jointly typical set.

Assume $\tilde{Z}^n$ and $\hat{\tilde{Z}}^n$ are two independent semantic sequences, the associated syntactic sequences $Z^n$ and $\hat{Z}^n$ are independent but have the same distributions as $X^n$ and $\hat{X}^n$. Thus we can establish two one-to-one mappings ${\tilde{z}}^n \leftrightarrow z^n \leftrightarrow x^n$ and $\hat{\tilde{z}}^n \leftrightarrow \hat{z}^n \leftrightarrow \hat{x}^n$, then,
\begin{equation}
\begin{aligned}
&\text{Pr}\left((\tilde{Z}^n,\hat{\tilde{Z}}^n)\in {\tilde{A}}_{\epsilon}^{(n)}\right) \\
&=\text{Pr}\left(Z^n\in B_{\epsilon}^{(n)}(X^n),\hat{Z}^n\in B_{\epsilon}^{(n)}(\hat{X}^n), (Z^n,\hat{Z}^n)\in {A}_{\epsilon}^{(n)}\right) \\
&= \sum_{ (\tilde{x}^n,\hat{\tilde{x}}^n) \leftrightarrow ({x}^n,\hat{x}^n)\in {A}_{\epsilon}^{(n)}} p(\tilde{x}^n)p(\hat{\tilde{x}}^n)\\
&\leq 2^{n\left(H(X,\hat{X})+\epsilon\right)}2^{-n\left(H_s(\tilde{X})-\epsilon\right)}2^{-n\left(H_s(\hat{\tilde{X}})-\epsilon\right)}\\
&= 2^{-n\left(I_s(\tilde{X};\tilde{Y})-3\epsilon\right)}.
\end{aligned}
\end{equation}

Using a similar method, we can also derive that
\begin{equation}
\begin{aligned}
&\text{Pr}\left((\tilde{Z}^n,\hat{\tilde{Z}}^n)\in {\tilde{A}}_{\epsilon}^{(n)}\right) \\
&=\sum_{ {A}_{\epsilon}^{(n)}} p(\tilde{x}^n)p(\hat{\tilde{x}}^n)\\
&\geq  (1-\epsilon)2^{n(H(X;\hat{X})-\epsilon)}2^{-n\left(H_s(\tilde{X})+\epsilon\right)}2^{-n\left(H_s(\hat{\tilde{X}})+\epsilon\right)}\\
&= (1-\epsilon) 2^{-n\left(I_s(\tilde{X};\hat{\tilde{X}})+3\epsilon\right)}.
\end{aligned}
\end{equation}
Thus we complete the proof of the theorem.
\end{proof}

\begin{remark}
It should be noted that the probability of Eq. (\ref{seperate_typical_equality}) in Theorem \ref{Distortion_JAEP_theorem} is different from that of Eq. (\ref{jointly_typical_equality}) in Theorem \ref{Sem_JAEP_theorem}. For the former, it indicates the probability of selecting two synonymous typical sequences (equivalently, representing two semantically typical sequences) constituting a syntactically jointly typical pair which is used to evaluate the error probability of jointly typical encoding. On the other hand, for the latter, it represents the probability of selecting two syntactically typical sequences consisting a jointly synonymous typical pair (equivalently, a semantically jointly typical pair) so that it reveals the error probability of jointly typical decoding.
\end{remark}

\subsection{Semantic Rate Distortion Coding Theorem}
We now investigate the problem of semantic lossy source coding. As depicted in Fig. \ref{Semantic_lossy_source_coding}, with the help of synonymous mapping $f_{x}^n$, a semantic index $i_s$ is mapped into the syntactic source sequence $X^n(i)$. Considering the distortion requirement, the encoder selects a suitable codeword $\hat{X}^n(j)$ to represent the source sequence $X^n(i)$ then send the index $i$ to the receiver. Then in the side of receiver, the decoder outputs the reproduction sequence $\hat{X}^n(j)$. After de-mapping $g^n$, we obtain an estimation of semantic index $\hat{i}_s$.

\begin{figure*}[htbp]
\setlength{\abovecaptionskip}{0.cm}
\setlength{\belowcaptionskip}{-0.cm}
  \centering{\includegraphics[scale=0.9]{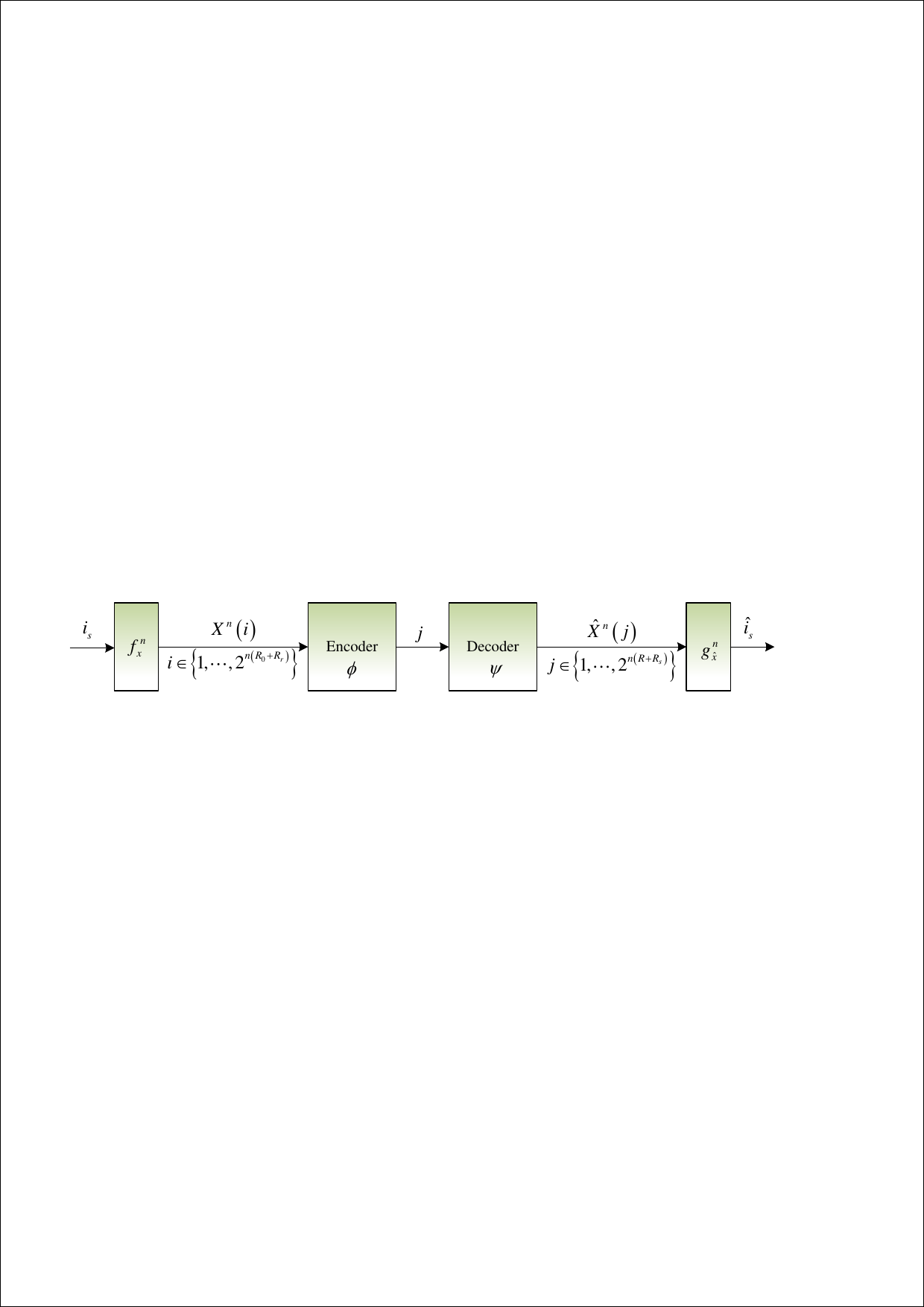}}
  \caption{Block diagram of semantic lossy source coding.}\label{Semantic_lossy_source_coding}
\end{figure*}

For a lossy source coding system, let $p(\hat{X}|X)$ be the conditional probability and $\mathcal{X}$ and $\hat{\mathcal{X}}$ denote the source and reproduction syntactical alphabet respectively. Then the conditional probabilities for the $n$-th extension can be written as
\begin{equation}
p(\hat{x}^n\left|x^n\right.)=\prod_{k=1}^{n}p(\hat{x}_k\left|x_k\right.).
\end{equation}

\begin{definition}
An $\left(M,n\right)$ code for the semantic lossy source coding consists of the following parts:
\begin{enumerate}[(1)]
   \item Two semantic index sets $\mathcal{I}_s=\left\{1,\cdots, M_s\right\}$ and $\mathcal{I}_r=\left\{1,\cdots, M_r\right\}$. Two syntactic index sets $\mathcal{I}=\left\{1,\cdots, M\right\}$ and $\mathcal{I}'=\left\{1,\cdots, M'\right\}$.
   \item A synonymous mapping $f_{x}^n: {\tilde{\mathcal{X}}}^n\to {\mathcal{X}}^n$ generates the set of source sequences, namely, semanticbook, $\mathcal{S}=\{X^n(1),X^n(2),\cdots,X^n(M')\}$. This semanticbook can be partitioned into synonymous sequences subsets $\mathcal{S}_r$.
   \item An encoding function $\phi: {\mathcal{X}}^n\to  \hat{\mathcal{X}}^n$ generates the set of codewords, namely, codebook, $\mathcal{C}=\{\hat{X}^n(1),\hat{X}^n(2),\cdots,\hat{X}^n(M)\}$. Due to the synonymous mapping $f_{\hat{x}}^n$, this codebook can be partitioned into synonymous codeword subsets $\mathcal{C}_s$.
   \item An decoding function $\psi: \hat{\mathcal{X}}^n \to \hat{\mathcal{X}}^n$ outputs the decision syntactic codeword $\hat{X}^n$.
   \item After de-mapping, $g_{\hat{x}}^n(\hat{X}^n)=\hat{i}_s$, the estimated semantic index is obtained. Note that both $\psi$ and $g_{\hat{x}}^n$ are deterministic.
\end{enumerate}
\end{definition}

According to the synonymous mappings $f_{x}^n$ and $f_{\hat{x}}^n$, $\mathcal{S}_r$ is an equivalence class consisting of the synonymous sequences and $\mathcal{C}_s$ is an equivalence class consisting of the synonymous codewords. So we can construct the quotient sets $\mathcal{S}/f_{x}^n=\left\{\mathcal{S}_r\right\}$ with $\left|\mathcal{S}/f_{x}^n\right|=M_r$ and $\mathcal{C}/f_{\hat{x}}^n=\left\{\mathcal{C}_s\right\}$ with $\left|\mathcal{C}/f_{\hat{x}}^n\right|=M_s$. Let $R=\frac{1}{n}\log_2 M_s$ denote the semantic rate of source coding and $R_0=\frac{1}{n}\log_2 M_r$ denote the semantic rate of source sequences. We configure each synonymous set with the same number of sequences or codewords, that is, $\left|\mathcal{S}_r\right|=\frac{M'}{M_r}=2^{nR_r}$ and $\left|\mathcal{C}_s\right|=\frac{M}{M_s}=2^{nR_s}$.

We now give the formal description of semantic rate-distortion coding theorem.
\begin{theorem}\label{Sem_RDT}
(Semantic Rate-Distortion Coding Theorem):

Given an i.i.d. syntactic source $X\sim p(x)$ with the associated semantic source $\tilde{X}$ under the synonymous mapping $f$ and the bounded semantic distortion function $d_s(\tilde{x},\hat{\tilde{x}})$, for each code rate $R>R_s(D)$, there exists a sequence of $\left(2^{n(R+R_s)},n\right)$ codes, when code length $n$ tends to sufficiently large, the semantic distortion satisfies $\mathbb{E}d_s(\tilde{X},\hat{\tilde{X}})<D$.

On the contrary, if $R<R_s(D)$, then for any $\left(2^{n{(R+R_s)}},n\right)$ code, the semantic distortion meets $\mathbb{E}d_s(\tilde{X},\hat{\tilde{X}})>D$ with sufficiently large $n$.
\end{theorem}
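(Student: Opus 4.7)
The plan follows the classical rate-distortion coding argument adapted to the synonymous framework. For the achievability part, I would first fix synonymous mappings $f_x$ and $f_{\hat{x}}$ together with a conditional distribution $p(\hat{x}|x)$ that attains $R_s(D)$ in the definition of the semantic rate-distortion function; this yields a joint law $p(x,\hat{x})$ and induced marginal $p(\hat{x})$. The codebook $\mathcal{C}$ is then generated by drawing $2^{n(R+R_s)}$ codewords $\hat{X}^n$ i.i.d.\ according to $\prod_{k=1}^{n} p(\hat{x}_k)$ and partitioning them into $2^{nR}$ synonymous groups $\mathcal{C}_s(j_s)$ of size $2^{nR_s}$ through $f_{\hat{x}}^n$. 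The encoding rule is jointly typical encoding in the semantic sense: given a source realization $X^n$, the encoder searches for some $j_s$ and some $\ell$ such that $(\tilde{X}^n,\hat{\tilde{X}}^n(j_s,\ell))\in\tilde{A}_{\epsilon}^{(n)}$; if such a pair exists it transmits $j_s$, otherwise it declares an error and outputs an arbitrary index.

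The key estimate comes from Theorem \ref{Distortion_JAEP_theorem}(3): independent draws are semantically jointly typical with probability at least $(1-\epsilon)2^{-n(I_s(\tilde{X};\hat{\tilde{X}})+3\epsilon)}$. Since the $2^{n(R+R_s)}$ codewords are mutually independent, the probability that none of them yields a jointly typical pair with $X^n$ is at most
\begin{equation}
\bigl(1-(1-\epsilon)2^{-n(I_s(\tilde{X};\hat{\tilde{X}})+3\epsilon)}\bigr)^{2^{n(R+R_s)}} \leq \exp\!\Bigl(-(1-\epsilon)2^{n(R+R_s-I_s(\tilde{X};\hat{\tilde{X}})-3\epsilon)}\Bigr),
\end{equation}
which vanishes as $n\to\infty$ provided $R+R_s>I_s(\tilde{X};\hat{\tilde{X}})+3\epsilon$. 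Choosing $R_s$ near the synonymous-set bound $H(\hat{X})-H_s(\hat{\tilde{X}})$ from Theorem \ref{SynTSet_Theorem} makes the operational semantic rate $R$ itself exceed $I_s(\tilde{X};\hat{\tilde{X}})$, hence $R>R_s(D)$ suffices. When encoding succeeds, the definition of $\tilde{A}_\epsilon^{(n)}$ and the per-letter distortion constraint give $d_s(\tilde{X}^n,\hat{\tilde{X}}^n)\leq D+\epsilon$; when it fails, the boundedness of $d_s$ together with the vanishing failure probability contributes at most $\epsilon\,d_{\max}$ to the expectation. Averaging over the random codebook and extracting one deterministic good code via the standard expurgation argument completes the direct part.

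For the converse, I would imitate the classical derivation but with the down semantic mutual information taking the role of ordinary mutual information. Starting from $nR\geq H_s(\hat{\tilde{X}}^n)\geq I_s(\tilde{X}^n;\hat{\tilde{X}}^n)$ (or more carefully, from the code size bound and a Fano-type inequality adapted to the semantic setting), I would invoke the chain-type inequality of Theorem \ref{chainrule_SMI} to lower bound the sequential $I_s(\tilde{X}^n;\hat{\tilde{X}}^n)$ by a sum of single-letter down semantic mutual informations, then apply the convexity of $R_s(D)$ established in the preceding lemma together with Jensen's inequality on the average semantic distortion. Explicitly, one writes
\begin{equation}
R \;\geq\; \tfrac{1}{n} I_s(\tilde{X}^n;\hat{\tilde{X}}^n) \;\geq\; \tfrac{1}{n}\sum_{k=1}^{n} I_s(\tilde{X}_k;\hat{\tilde{X}}_k) \;\geq\; \tfrac{1}{n}\sum_{k=1}^{n} R_s\!\bigl(\mathbb{E}d_s(\tilde{X}_k,\hat{\tilde{X}}_k)\bigr) \;\geq\; R_s\!\bigl(\mathbb{E}d_s(\tilde{X}^n,\hat{\tilde{X}}^n)\bigr) \;\geq\; R_s(D),
\end{equation}
contradicting $R<R_s(D)$ when the distortion constraint holds.

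The main obstacle I foresee is the decoupling step in the achievability argument. In classical rate-distortion the codewords are atomic, so the failure event factorizes cleanly; here the codebook is stratified into synonymous groups, and the decoding target is an entire group rather than a single codeword. One must ensure that counting jointly typical pairs \emph{at the codeword level} (which is what Theorem \ref{Distortion_JAEP_theorem} delivers, with rate governed by $I_s$) is compatible with declaring success \emph{at the group level} (so that the transmitted index has rate $R$, not $R+R_s$). The subtle point is that $I_s$ rather than $I^s$ or $I$ controls this probability, because here we are selecting two independently drawn typical sequences and asking whether they fall into the syntactically jointly typical set, which is exactly the event underlying Eq. \eqref{seperate_typical_equality}. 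Making this distinction precise, and correctly aligning $R_s$ with the synonymous-set cardinality so that the direct and converse bounds close, is where the semantic aspect genuinely diverges from the classical proof.
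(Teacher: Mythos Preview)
Your proposal follows essentially the same route as the paper: random codebook drawn from $p(\hat{x}^n)$, jointly typical encoding driven by Theorem~\ref{Distortion_JAEP_theorem}(3), and a converse chain $nR\geq I_s(\tilde{X}^n;\hat{\tilde{X}}^n)\geq\sum_k I_s(\tilde{X}_k;\hat{\tilde{X}}_k)\geq\sum_k R_s(\cdot)\geq nR_s(D)$ via the convexity lemma. Two differences are worth flagging.

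First, in the achievability, the paper factorizes the failure event over the $2^{nR}$ \emph{groups} (each group counted once as a semantic trial), so the exponent is $2^{nR}$ and the condition $R>I_s(\tilde{X};\hat{\tilde{X}})$ drops out directly. You instead factorize over all $2^{n(R+R_s)}$ codewords and obtain $R+R_s>I_s$. Your subsequent sentence --- ``choosing $R_s$ near the synonymous-set bound $H(\hat X)-H_s(\hat{\tilde X})$ makes the operational semantic rate $R$ itself exceed $I_s$'' --- is backwards: enlarging $R_s$ would \emph{lower} the threshold on $R$, not raise it. What you actually need is the trivial implication $R>I_s$ and $R_s\geq 0\Rightarrow R+R_s>I_s$; no special choice of $R_s$ is required. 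With that correction your argument closes.

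Second, for the converse your entry point $nR\geq H_s(\hat{\tilde{X}}^n)\geq I_s(\tilde{X}^n;\hat{\tilde{X}}^n)$ is different from the paper's, which goes $nR\geq H(W_s)\geq I(W_s;X^n)\geq I(\hat{\tilde{X}}^n;X^n)$ and then invokes Lemma~\ref{lemma10} to descend to $I_s$. Your route is valid (since $\hat{\tilde{X}}^n$ is a function of $W_s$ and $I_s\leq H_s(\hat{\tilde{X}})$ follows from $H_s(\tilde{X})\leq H(X)\leq H(X,\hat X)$), but you should state this explicitly rather than leaving it as ``or more carefully\dots''. The subsequent single-letterization and convexity steps match the paper's steps (b)--(d) exactly.
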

\begin{proof}
We first prove the achievability part of the theorem and the converse will be left in the next part.

Given the desired distortion $D$ and the conditional probability distribution $p(\hat{x}|x)$, set the rate-distortion function as $R_s(\frac{D}{1+\epsilon})$ and let $p(\hat{x})=\sum_{x}p(x)p(\hat{x}|x)$. A $\left(2^{n(R_0+R_r)},n\right)$ semanticbook $\mathcal{S}=\left\{X^n(1),X^n(2),\cdots,X^n(2^{n(R_0+R_r)})\right\}$ can be generated randomly according to the distribution
\begin{equation}
p(x^n)=\prod_{k=1}^{n} p(x_k).
\end{equation}
This set consists of the syntactic sequences to represent the semantic source sequence. Furthermore, these sequences can be uniformly divided into $2^{nR_0}$ groups according to the synonymous mapping $f_x^n(\tilde{x}^n)=\prod_{k=1}^{n}\mathcal{X}_{\tilde{x}_{k}}$. Let $\mathcal{S}_r(i_r)\subset B_{\epsilon}^{(n)}(X^n)$ denote the $i_r$-th synonymous sequence set.

Similarly, a $\left(2^{n(R+R_s)},n\right)$ code $\mathcal{C}=\left\{\hat{X}^{n}(1),\cdots,\hat{X}^{n}(2^{n(R+R_s)})\right\}$ can be generated randomly according to the distribution
\begin{equation}
p(\hat{x}^n)=\prod_{k=1}^{n} p(\hat{x}_k)=\prod_{k=1}^n \sum_{x_k}p(x_k)p(\hat{x}_k|x_k).
\end{equation}
Correspondingly, these codewords can be uniformly divided into $2^{nR}$ groups according to the synonymous mapping
$f_{\hat{x}}^n(\hat{\tilde{x}}^n)=\prod_{k=1}^{n}\hat{\mathcal{X}}_{\hat{\tilde{x}}_{k}}$. Let $\mathcal{C}_s(i_s)\subset B_{\epsilon}^{(n)}(\hat{X}^n)$ denote the $i_s$-th synonymous codeword set. Thus the semanticbook $\mathcal{S}$ and the codebook $\mathcal{C}$ are produced and shared in the encoder and the decoder.

Similar to the idea in \cite{Book_ElGamal}, we also use jointly typical encoding for the semantic lossy source code. Given a semantic index $w_r$, base on the synonymous mapping $f_x^n$, we determine a syntactic source sequence $x^n$. Furthermore, find an index $w_s$ such that $(\tilde{x}^n,\hat{\tilde{X}}^n(w_s))\in \tilde{A}_{\epsilon}^{(n)}$. Equivalently, we can select one codeword $\hat{X}^n(w_s,l)$ in a synonymous set $\mathcal{C}_s(w_s)$ to present the semantic reconstruction sequence $\hat{\tilde{X}}^n(w_s)$ so as to satisfy $(x^n,\hat{X}^n(w_s,l))\in A_{\epsilon}^{(n)}$. If there is more than one semantic index, choose the smallest one. If no such semantic index exists, let $w_s=1$. After sending $(w_s,l)$ to the decoder, the decoder produces the reconstruction sequence $\hat{x}^n=\hat{X}^n(w_s,l)$. Then under the de-synonymous mapping $g_{\hat{x}}^n$, we obtain the estimated semantic index $\hat{w}_r$.

We now analyze the expected distortion by using semantic coding. Let $W_s$ denote the semantic index chosen by the encoder. We can bound the semantic distortion averaged over the random choice of the semanticbook $\mathcal{S}$ and the codebook $\mathcal{C}$. The encoding error events can be expressed as
\begin{equation}
\mathcal{E}=\left\{\left(X^n,\hat{X}^n(W_s,l)\right)\notin A_{\epsilon}^{(n)}, \hat{X}^n(W_s,l)\in \mathcal{C}_s(W_s)\right\}.
\end{equation}
This error event can be divided into two types of events, that is, $\mathcal{E}=\mathcal{E}_1\bigcup\mathcal{E}_2$, where
\begin{equation}
\mathcal{E}_1=\left\{X^n \notin A_{\epsilon}^{(n)}(X^n)\right\}
\end{equation}
and
\begin{equation}
\begin{aligned}
\mathcal{E}_2=&\left\{X^n\in \mathcal{S}_r(w_r), \hat{X}^n(w_s,l)\in \mathcal{C}_s(w_s),\right.\\
                        & \left(X^n, \hat{X}^n(w_s,l)\right)\notin A_{\epsilon}^{(n)}(X^n,\hat{X}^n),\\
                        &\left. \text{for all } w_s\in \left\{1,2,\cdots,2^{nR}\right\}, l\in \left\{1,2,\cdots,2^{nR_s}\right\}\right\}.
\end{aligned}
\end{equation}

So the error probability can be upper bounded by $P_e^{(n)}=\text{Pr}(\mathcal{E})\leq P(\mathcal{E}_1)+P(\mathcal{E}_2)$.
For the first term, by the weak law of large numbers, the probability tends to zero as $n\to\infty$. For the second term, recall that $\mathcal{S}_r(w_r)\subset B_{\epsilon}^{(n)}(X^n)$ and $\mathcal{C}_s(w_s)\subset B_{\epsilon}^{(n)}(\hat{X}^n)$, we derive the probability shown as follows%at the bottom of the page.
\begin{equation}
\begin{aligned}
P(\mathcal{E}_2)&=\sum_{x^n\in A_{\epsilon}^{(n)}} p(x^n) \\
&\cdot P\left(X^n\in \mathcal{S}_r(w_r), \hat{X}^n(w_s,l)\in \mathcal{C}_s(w_s), (X^n, \hat{X}^n(w_s,l))\notin A_{\epsilon}^{(n)}, \forall w_s\in \mathcal{I}_s|X^n=x^n\right)\\
&=\sum_{x^n\in A_{\epsilon}^{(n)}} p(x^n) \prod_{w_s=1}^{2^{nR}} P\left(X^n\in \mathcal{S}_r(w_r), \hat{X}^n(w_s,l)\in \mathcal{C}_s(w_s),(X^n, \hat{X}^n(w_s,l))\notin A_{\epsilon}^{(n)}\right)\\
&=\sum_{x^n\in A_{\epsilon}^{(n)}} p(x^n) \left[P\left(X^n\in \mathcal{S}_r(w_r), \hat{X}^n(1,l)\in \mathcal{C}_s(1),(X^n, \hat{X}^n(1,l))\notin A_{\epsilon}^{(n)}\right)\right]^{2^{nR}}.
\end{aligned}
\end{equation}

%\begin{figure*}[hb]
%\centering
%\hrulefill
%\vspace*{8pt}
%\begin{equation}
%\begin{aligned}
%P(\mathcal{E}_2)&=\sum_{x^n\in A_{\epsilon}^{(n)}} p(x^n) \\
%&\cdot P\left(X^n\in \mathcal{S}_r(w_r), \hat{X}^n(w_s,l)\in \mathcal{C}_s(w_s), (X^n, \hat{X}^n(w_s,l))\notin A_{\epsilon}^{(n)}, \forall w_s\in \mathcal{I}_s|X^n=x^n\right)\\
%&=\sum_{x^n\in A_{\epsilon}^{(n)}} p(x^n) \prod_{w_s=1}^{2^{nR}} P\left(X^n\in \mathcal{S}_r(w_r), \hat{X}^n(w_s,l)\in \mathcal{C}_s(w_s),(X^n, \hat{X}^n(w_s,l))\notin A_{\epsilon}^{(n)}\right)\\
%&=\sum_{x^n\in A_{\epsilon}^{(n)}} p(x^n) \left[P\left(X^n\in \mathcal{S}_r(w_r), \hat{X}^n(1,l)\in \mathcal{C}_s(1),(X^n, \hat{X}^n(1,l))\notin A_{\epsilon}^{(n)}\right)\right]^{2^{nR}}.
%\end{aligned}
%\end{equation}
%\end{figure*}

Since $x^n\in A_{\epsilon}^{(n)}$ and $\hat{X}^n(1,l)\sim \prod_{k=1}^{n}p(\hat{x}_k)$, by Theorem \ref{Distortion_JAEP_theorem}, for sufficiently large $n$, we have
\begin{equation}
\begin{aligned}
&P\left(X^n\in \mathcal{S}_r(w_r), \hat{X}^n(1,l)\in \mathcal{C}_s(1),(X^n, \hat{X}^n(1,l))\in A_{\epsilon}^{(n)}\right)\\
& \geq (1-\epsilon) 2^{-n(I_s(\tilde{X};\hat{\tilde{X}})+3\epsilon)}.
\end{aligned}
\end{equation}

Furthermore, since $(1-x)^m\leq e^{-mx}$ for $x\in [0,1]$ and $m\geq 0$, we have
\begin{equation}
\begin{aligned}
P(\mathcal{E}_2) & \leq \left(1-(1-\epsilon)2^{-n(I_s(\tilde{X};\hat{\tilde{X}})+3\epsilon)}\right)^{2^{nR}}\\
                           & \leq \exp\left(-2^{nR}(1-\epsilon)2^{-n(I_s(\tilde{X};\hat{\tilde{X}})+3\epsilon)}\right)\\
                           & =\exp\left(-(1-\epsilon)2^{n(R-I_s(\tilde{X};\hat{\tilde{X}})-3\epsilon)}\right),
\end{aligned}
\end{equation}
which tends to zero as $n\to\infty$ and $\epsilon\to0$ if $R>I_s(\tilde{X};\hat{\tilde{X}})$. Consequently, the error probability $P_e^{(n)}\to 0$.

Hence, we can drive the expectation of semantic distortion as follows
\begin{equation}
\begin{aligned}
\mathbb{E}\left[d_s(\tilde{X}^n,\hat{\tilde{X}}^n(W_s))\right]&=P(\mathcal{E})\mathbb{E}\left[d_s(\tilde{X}^n,\hat{\tilde{X}}^n(W_s))|\mathcal{E}\right] \\
                                                                      &+P(\mathcal{E}^c)\mathbb{E}\left[d_s(\tilde{X}^n,\hat{\tilde{X}}^n(W_s))|\mathcal{E}^c\right]  \\
                                                                      &\leq P_e^{(n)}d_{\max}\\
                                                                      &+ (1-P_e^{(n)})(1+\epsilon) \mathbb{E} (d_s(\tilde{X},\hat{\tilde{X}})),
\end{aligned}
\end{equation}
where $d_{\max}$ is the maximum semantic distortion.

By the assumption that $\mathbb{E} (d_s(\tilde{X},\hat{\tilde{X}}))\leq \frac{D}{1+\epsilon}$, we have $\mathbb{E}\left[d_s(\tilde{X}^n,\hat{\tilde{X}}^n(W_s))\right]\leq D$ for sufficiently large $n$ if $R>I_s(\tilde{X};\hat{\tilde{X}})+3\epsilon$.

Furthermore, by Theorem \ref{SynTSet_Theorem}, we have $1\leq 2^{nR_r}\leq 2^{n(H(X)-H_s(\tilde{X}))}$ and $1\leq 2^{nR_s}\leq 2^{n(H(\hat{X})-H_s(\hat{\tilde{X}}))}$. Hence, it follows that $I_s(\tilde{X};\hat{\tilde{X}}) \leq R'=R+R_r+R_s \leq I_s(\tilde{X};\hat{\tilde{X}})+H(X)-H_s(\tilde{X})+H(\hat{X})-H_s(\hat{\tilde{X}})=I(X;\hat{X})$. Thus we can conclude that if $R_r=R_s=0$ with sufficiently large $n$, the compression rate $R'\to R_s(D)$. On the other hand, if $R_r$ and $R_s$ gradually increase, then the compression rate $R'\to R(D)$. This complete the proof of achievability.
\end{proof}

Next, we prove the converse of the theorem. In order to prove the converse, we first illustrate the relationship between the sequential syntactic mutual information and the sequential semantic mutual information.
\begin{lemma}\label{lemma10}
Assume $X^n$ is the syntactic sequence of discrete memoryless source and the reconstruction semantic sequence is $\hat{\tilde{X}}^n$, we have
\begin{equation}
I(X^n; \hat{\tilde{X}}^n)\geq I_s(\tilde{X}^n;\hat{\tilde{X}}^n), \text{ for all } p(x^n).
\end{equation}
\end{lemma}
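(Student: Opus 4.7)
The plan is to decompose both mutual informations via their definitions, bridge the ``random variable'' and ``semantic variable'' viewpoints by a simple identity of entropies, and then reduce the whole inequality to a direct concatenation of two results already established earlier in the paper, namely the chain rule in Theorem \ref{theorem2} and the joint-entropy bound in Lemma \ref{lemma3}.

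First I would write, by Definition \ref{definition7} applied in blocklength $n$,
\begin{equation*}
I(X^n;\hat{\tilde{X}}^n) = H(\hat{\tilde{X}}^n) - H(\hat{\tilde{X}}^n\mid X^n),
\qquad
I_s(\tilde{X}^n;\hat{\tilde{X}}^n) = H_s(\tilde{X}^n) + H_s(\hat{\tilde{X}}^n) - H(X^n,\hat{X}^n),
\end{equation*}
so that the key observation needed to connect them is that, when $\hat{\tilde{X}}^n$ is regarded as an ordinary random variable with p.m.f.\ $p(\hat{\tilde{x}}^n) = \sum_{\hat{x}^n \in \prod_k \hat{\mathcal{X}}_{\hat{\tilde{x}}_k}} p(\hat{x}^n)$, its Shannon entropy coincides by construction with the semantic entropy, $H(\hat{\tilde{X}}^n) = H_s(\hat{\tilde{X}}^n)$, and the same identification gives $H(\hat{\tilde{X}}^n\mid X^n) = H_s(\hat{\tilde{X}}^n\mid X^n)$. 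This is the small but essential book-keeping bridge between the two frameworks.

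With that substitution and cancellation of the common $H_s(\hat{\tilde{X}}^n)$ term, the difference becomes
\begin{equation*}
I(X^n;\hat{\tilde{X}}^n) - I_s(\tilde{X}^n;\hat{\tilde{X}}^n) \;=\; H(X^n,\hat{X}^n) \;-\; H_s(\tilde{X}^n) \;-\; H_s(\hat{\tilde{X}}^n\mid X^n).
\end{equation*}
To show this quantity is nonnegative, I would invoke the left inequality of Theorem \ref{theorem2}, applied to the pair $(\tilde{X}^n,\hat{\tilde{X}}^n)$ with $\tilde{X}^n$ playing the role of $\tilde{U}$ and $\hat{\tilde{X}}^n$ playing the role of $\tilde{V}$, to obtain $H_s(\tilde{X}^n) + H_s(\hat{\tilde{X}}^n\mid X^n) \le H_s(\tilde{X}^n,\hat{\tilde{X}}^n)$, and then Lemma \ref{lemma3} to obtain $H_s(\tilde{X}^n,\hat{\tilde{X}}^n) \le H(X^n,\hat{X}^n)$. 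Chaining these two inequalities closes the argument.

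The main obstacle is not computational but conceptual: one must carefully distinguish in the notation which objects are being viewed as classical random variables (so that $H(\cdot)$ applies) versus as semantic variables (so that $H_s(\cdot)$ applies), and must verify that the ``Shannon entropy of the semantic random variable'' and the ``semantic entropy'' give the same number. Once that identification is made, no further computation is required; in particular the inequality holds for arbitrary $p(x^n)$, not only for memoryless sources, so Lemma \ref{lemma10} can be stated and used without invoking the memoryless assumption.
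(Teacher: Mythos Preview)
Your proposal is correct and the key book-keeping identification $H(\hat{\tilde{X}}^n)=H_s(\hat{\tilde{X}}^n)$ (and its conditional version) is exactly the bridge the paper also uses. After that cancellation, however, the two arguments diverge. The paper exploits the memoryless hypothesis to split the remaining difference into per-symbol terms $\bigl[H(X_k)-H_s(\tilde{X}_k)\bigr]+\bigl[H(X_k,\hat{X}_k)-H(X_k,\hat{\tilde{X}}_k)\bigr]$ and then bounds each piece separately via Lemma~\ref{lemma2} and an application of Theorem~\ref{theorem2}. You instead work directly at blocklength $n$, chaining the left inequality of Theorem~\ref{theorem2}, $H_s(\tilde{X}^n)+H_s(\hat{\tilde{X}}^n\mid X^n)\le H_s(\tilde{X}^n,\hat{\tilde{X}}^n)$, with Lemma~\ref{lemma3}, $H_s(\tilde{X}^n,\hat{\tilde{X}}^n)\le H(X^n,\hat{X}^n)$. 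Your route is a bit more streamlined and, as you correctly observe, does not actually require the i.i.d.\ assumption on $p(x^n)$; the paper's route, on the other hand, mirrors the structure of the companion Lemma~\ref{lemma8} and makes the per-letter contributions visible.
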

\begin{proof}
Due to the definitions of discrete memoryless source, we can write the sequential mutual information as
\begin{equation}
\begin{aligned}
&I(X^n; \hat{\tilde{X}}^n)-I_s(\tilde{X}^n;\hat{\tilde{X}}^n)\\
&=H(X^n)+H(\hat{\tilde{X}}^n)-H(X^n,\hat{\tilde{X}}^n)\\
&-\left[H_s(\tilde{X}^n)+H_s(\hat{\tilde{X}}^n)-H(X^n;\hat{X}^n)\right]\\
&= \sum_{k=1}^{n} \left[H(X_k)-H_s(\tilde{X}_k)+H(X_k,\hat{X}_k)-H(X_k,\hat{\tilde{X}}_k)\right]\geq 0.
\end{aligned}
\end{equation}
Due to $H(\hat{\tilde{X}}_n)=H_s(\hat{\tilde{X}}_n)$, $H(X_k)\geq H_s(\tilde{X}_k)$, and $H(X_k,\hat{X}_k)\geq H(X_k,\hat{\tilde{X}}_k)$ (Theorem \ref{theorem2}), we prove the lemma.
\end{proof}

\begin{proof}(Converse to Theorem \ref{Sem_RDT}, (Semantic Rate Distortion Coding Theorem)):

In order to prove $\lim_{n\to\infty} \mathbb{E} \left[d_s(\tilde{X}^n,\hat{\tilde{X}}^n)\right]\leq D$ for any sequence of $(2^{n(R+R_s)},n)$ codes, we must have $R\geq R_s(D)$. So we consider the following inequality,
\begin{equation}
\begin{aligned}
nR &\geq H(W_s) \geq I(W_s;X^n) \\
     &\geq I(\hat{\tilde{X}}^n(W_s);X^n)\\
     &\overset{(a)}{\geq} I_s(\tilde{X}^n;\hat{\tilde{X}}^n)\\
     &\overset{(b)}{=} \sum_{k=1}^{n} I_s(\tilde{X}_k;\hat{\tilde{X}}_k)\\
     &\overset{(c)}{\geq} \sum_{k=1}^{n} R_s\left(\mathbb{E}\left[d_s(\tilde{X}_k;\hat{\tilde{X}}_k)\right]\right)\\
     &\overset{(d)}{\geq} n R_s\left(\mathbb{E}\left[d_s(\tilde{X}^n;\hat{\tilde{X}}^n)\right]\right),
\end{aligned}
\end{equation}
where $(a)$ follows by Lemma \ref{lemma10}, (b) follows from the property of discrete memoryless source, $(c)$ follows by the definition of $R_s(D)$, and $(d)$ follows by the convexity of $R_s(D)$. So we conclude that $R_s(D)\leq R$ for sufficiently large $n$ and complete the proof of the converse.
\end{proof}

\begin {remark}
For the conventional lossy source coding, we can use quantization, linear prediction, and transform coding to approach the rate-distortion function. By now, many efficient methods based on deep learning are applied in lossy source compression, such as convolutional neural network, transformer based network and so on. Heuristically, these new coding methods sufficiently utilize the semantic information of source and demonstrate better performance than the conventional ones. However, there is not a mature theoretic framework to design and optimize these deep learning based source coding methods. Semantic rate distortion may reveal some insights for future lossy source coding. By integrating the synonymous set into the traditional lossy source coding or neural network model, we believe that the semantic lossy source coding will provide a new solution for source compression in speech, image, and video.
\end {remark}
\section{Semantic Information Measure of Continuous Message}
\label{section_IX}

In this section, we extend the semantic information measures, such as semantic entropy, semantic mutual information to the continuous message. First we give the definitions of semantic entropy and mutual information in the continuous case. Then we investigate the capacity of Gaussian channel in the semantic sense and obtain the semantic channel capacity formula of band-limited Gaussian channel. Finally, we derive the semantic rate-distortion function for the Gaussian source.

\subsection{Semantic Entropy and Semantic Mutual Information for Continuous Message}
In order to indicate the semantic entropy in the continuous case, we first define the synonymous mapping for the continuous variable as following.

\begin{definition}
Given a continuous random variable $U$ with a probability density distribution $p(u)$, $u\in \Omega$, the associated discrete semantic variable is $\tilde{U}$, the synonymous mapping between $\tilde{U}$ and $U$ is defined as
\begin{equation}
f: \tilde{\mathcal{U}}\to \Omega,
\end{equation}
where $\tilde{\mathcal{U}}=\left\{\tilde{u}_{i_s}\right\}_{i_s=1}^{\tilde{N}}$ is the semantic alphabet and $\Omega=\bigcup_{i_s=1}^{\tilde{N}}\Omega_{i_s}$ is the support set of the random variable with $\forall i_s\neq j_s, \Omega_{i_s}\bigcap \Omega_{j_s}=\varnothing$.
Specifically, for any $\tilde{u}_{i_s} \in \tilde{\mathcal{U}}$, it can be mapped into a subset $\Omega_{i_s}\subset \Omega$. Hence under the mapping $f$, the support set is partitioned into a series of synonymous intervals and $\left|\Omega_{i_s}\right|=L_{i_s}$ is named as the synonymous length of the $i_s$-th interval. %Furthermore, a mapping $f$ is named as the equipartition synonymous mapping if all the intervals have the same synonymous length, that is, $\forall i_s, \left|\Omega_{i_s}\right|=S$.
\end{definition}

\begin{definition}
Given a continuous random variable $U$ with a probability density function $p(u)$, $u\in \Omega$, under a synonymous mapping $f$, for the associated semantic variable $\tilde{U}$, the semantic entropy is defined as
\begin{equation}\label{conti_semantic_entropy}
H_s(\tilde{U})=-\int_{\Omega} p(u)\log p(u)du-\mathbb{E}(\log L),
\end{equation}
where $\mathbb{E}(\log L)=\sum_{i_s=-\infty}^{\infty}\int_{\Omega_{i_s}}p(u)du \log L_{i_s}$ is the expectation of logarithm of synonymous interval length.
\end{definition}

We now illustrate the relationship between the continuous semantic entropy and the discrete counterpart. Let $L_{[-\infty:i_s]}=\sum_{j_s=-\infty}^{i_s}L_{j_s}$. Suppose the range of $U$ is divided into synonymous intervals $\Omega_{i_s}=\left[L_{[-\infty:i_s-1]},L_{[-\infty:i_s]}\right]$ of length $L_{i_s}$ under the synonymous mapping $f$. Furthermore, each interval is divided into bins of length $\Delta$ so that $L_{i_s}=J_{i_s}\Delta$. Assume the density function is continuous in bins, by the integration mean value theorem, there is a value $u_{J_{[-\infty:i_s-1]}+j}$ such that
\begin{equation}
p(u_{J_{[-\infty:i_s-1]}+j})\Delta=\int_{L_{[-\infty:i_s]}+j\Delta}^{L_{[-\infty:i_s]}+(j+1)\Delta} p(u)du.
\end{equation}
Considering the synonymous mapping $f$, we can write
\begin{equation}
\sum_{j=0}^{J_{i_s}-1} p(u_{J_{[-\infty:i_s-1]}+j})\Delta =p(\tilde{u}_{i_s})J_{i_s}\Delta.
\end{equation}

So the semantic entropy of the quantized version is
\begin{equation}
\begin{aligned}
H_s(\tilde{U}^{\Delta})&=-\sum_{i_s=-\infty}^{\infty} p(\tilde{u}_{i_s})J_{i_s}\Delta \log \left[p(\tilde{u}_{i_s})J_{i_s}\Delta\right]\\
&=-\sum_{i_s=-\infty}^{\infty}\sum_{j=0}^{J_{i_s}-1} p(u_{J_{[-\infty:i_s-1]}+j})\Delta \log \left[\sum_{j=0}^{J_{i_s}-1} p(u_{J_{[-\infty:i_s-1]}+j})\Delta\right]\\
&\underset{\Delta \to 0}{\mathop{\Rightarrow }} -\sum_{i_s=-\infty}^{\infty}\int_{L_{[-\infty:i_s-1]}}^{L_{[-\infty:i_s]}} p(u) \log \left[\int_{L_{[-\infty:i_s-1]}}^{L_{[-\infty:i_s]}} p(v)dv\right]du,
\end{aligned}
\end{equation}
where the last equality is from the Riemann integrability.

By using the integration mean value theorem, there is a value $\zeta$ in the interval $\left[L_{[-\infty:i_s-1]},L_{[-\infty:i_s]}\right]$ such that $\int_{L_{[-\infty:i_s-1]}}^{L_{[-\infty:i_s]}} p(v)dv=L_{i_s}p(\zeta) $. So the quantized semantic entropy can be further approximated as
\begin{equation}
\begin{aligned}
&-\sum_{i_s=-\infty}^{\infty}\int_{L_{[-\infty:i_s-1]}}^{L_{[-\infty:i_s]}} p(u) \log \left[\int_{L_{[-\infty:i_s-1]}}^{L_{[-\infty:i_s]}} p(v)dv\right]du \\
&= -\sum_{i_s=-\infty}^{\infty}\int_{L_{[-\infty:i_s-1]}}^{L_{[-\infty:i_s]}} p(u) \log \left[L_{i_s} p(\zeta)\right]du \\
&\approx -\sum_{i_s=-\infty}^{\infty}\int_{\Omega_{i_s}} p(u) \log \left[L_{i_s} p(u)\right]du \\
&=-\int_{-\infty}^{\infty} p(u)\log p(u) du -\sum_{i_s=-\infty}^{\infty}\left[ \int_{\Omega_{i_s}} p(u) du \right] \log L_{i_s} =H_s(\tilde{U}).
\end{aligned}
\end{equation}

Although Eq. (\ref{conti_semantic_entropy}) is an approximation form, it has a concise expression and clear physical meaning. So we use this formula to present the semantic entropy in the continuous case.

\begin{corollary}\label{SemEntropy_LB}
Given a continuous random variable $U$ with a probability density function $p(u)$ and the associated semantic variable $\tilde{U}$ under a synonymous mapping $f$, the semantic entropy is lower bounded by
\begin{equation}\label{LB_semantic_entropy}
H_s(\tilde{U})\geq -\int_{\Omega} p(u)\log p(u)du-\log S,
\end{equation}
where $S=\mathbb{E}( L)$ is the average length of synonymous interval. The equality holds when the optimal mapping $f$ is a proportional partition based on the probability of synonymous interval.
\end{corollary}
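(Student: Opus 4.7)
The plan is to recognize that the gap between the stated lower bound and the definition in Eq.~(\ref{conti_semantic_entropy}) reduces entirely to bounding the expectation term $\mathbb{E}(\log L)$ from above by $\log \mathbb{E}(L)=\log S$, which is a direct Jensen inequality on the concave function $\log$. Concretely, I would view $L$ as a discrete random variable that takes value $L_{i_s}$ with probability $p(\tilde{u}_{i_s})=\int_{\Omega_{i_s}}p(u)\,du$, so that $\mathbb{E}(\log L)=\sum_{i_s}p(\tilde{u}_{i_s})\log L_{i_s}$ exactly matches the expression introduced right after Eq.~(\ref{conti_semantic_entropy}), and $\mathbb{E}(L)=\sum_{i_s}p(\tilde{u}_{i_s})L_{i_s}=S$.

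First I would isolate $\mathbb{E}(\log L)$ in the defining identity $H_s(\tilde{U})=-\int_{\Omega}p(u)\log p(u)\,du-\mathbb{E}(\log L)$. Next I would invoke Jensen's inequality for the concave function $\log(\cdot)$ on the probability space just described, yielding
\begin{equation}
\mathbb{E}(\log L)\;\leq\;\log \mathbb{E}(L)\;=\;\log S.
\end{equation}
Multiplying by $-1$ and adding $-\int p(u)\log p(u)\,du$ to both sides gives the desired inequality (\ref{LB_semantic_entropy}). Since each step is an equivalence except for the single Jensen application, no further machinery is needed to obtain the bound itself.

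For the equality clause, I would examine the case of equality in Jensen's inequality: equality holds when $\log L$ is affine on the support of the induced distribution, i.e.\ when $L_{i_s}$ is constant (equal to $S$) across all $i_s$ with $p(\tilde{u}_{i_s})>0$. I would then interpret this in the language of the corollary: if the partition $\{\Omega_{i_s}\}$ is chosen so that each synonymous interval carries the same probability mass while having length proportional to that mass (equivalently, a proportional partition driven by $p(\tilde{u}_{i_s})$), then $L_{i_s}$ collapses to the common value $S$ and the lower bound is attained. I would also briefly remark that when this proportional condition fails, the strict concavity of $\log$ forces strict inequality, making the bound meaningful as an optimum over admissible mappings with a prescribed $S$.

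The main obstacle is not the inequality itself, which is a one-line Jensen argument, but articulating the equality condition in a way consistent with the corollary's phrase ``proportional partition based on the probability of synonymous interval.'' The subtlety is that Jensen equality literally requires constancy of $L_{i_s}$ over the support, whereas the stated condition should be read as the design rule that simultaneously matches interval probability and interval length. I would therefore devote the closing sentences to reconciling these two viewpoints rather than to the algebra of the bound.
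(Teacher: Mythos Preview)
Your one-line Jensen argument is correct and proves the corollary exactly as it is stated: with $S=\mathbb{E}(L)=\sum_{i_s}p(\tilde u_{i_s})L_{i_s}$, the bound $\mathbb{E}(\log L)\le\log S$ is immediate, and substituting into Eq.~(\ref{conti_semantic_entropy}) gives (\ref{LB_semantic_entropy}).

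The paper, however, does not take this direct route. It first rewrites $H_s(\tilde U)$ using $L_{i_s}/|\Omega|$, then applies the nonnegativity of $D(p_s\|q_s)$ with $q_{s,i_s}=L_{i_s}/|\Omega|$ to replace the actual lengths $L_{i_s}$ by the ``proportional'' lengths $L^{*}_{i_s}=|\Omega|\,p_{s,i_s}$, and only afterwards invokes Jensen on $\sum_{i_s}p_{s,i_s}\log L^{*}_{i_s}$. In the paper's proof the quantity $S$ is therefore $\sum_{i_s}p_{s,i_s}L^{*}_{i_s}$, not $\sum_{i_s}p_{s,i_s}L_{i_s}$; the extra relative-entropy step is precisely what produces the ``proportional partition'' equality condition quoted in the corollary. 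Your approach is more elementary and matches the literal statement $S=\mathbb{E}(L)$, but its equality case is ``all $L_{i_s}$ equal on the support,'' whereas the paper's two-step argument is designed so that the first inequality becomes tight exactly at the proportional partition $p_{s,i_s}=L^{*}_{i_s}/|\Omega|$. You already spotted this tension; it is not a flaw in your argument but a genuine difference between the two derivations, and you should say plainly that direct Jensen yields constant-$L$ as the equality condition, while the paper's phrasing stems from the additional relative-entropy step it inserts before Jensen.
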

\begin{proof}
The semantic entropy can be written as
\begin{equation}
H_s(\tilde{U})=-\int_{\Omega} p(u)\log p(u) du -\sum_{i_s=1}^{\tilde{N}}\left[ \int_{\Omega_{i_s}} p(u) du \right] \log \frac{L_{i_s}}{|\Omega|}-\log |\Omega|.
\end{equation}

Let $\left[ \int_{\Omega_{i_s}} p(u) du \right]=p_{s,i_s}$ and $q_{s,i_s}=\frac{L_{i_s}}{|\Omega|}$. Since $D(p_s||q_s)\geq 0$, we have
\begin{equation}
-\sum_{i_s=1}^{\tilde{N}}\left[ \int_{\Omega_{i_s}} p(u) du \right] \log \frac{L_{i_s}}{|\Omega|}\geq -\sum_{i_s=1}^{\tilde{N}}\left[ \int_{\Omega_{i_s}} p(u) du \right] \log \left[ \int_{\Omega_{i_s}} p(u) du \right].
\end{equation}
The equality holds when the following condition is satisfied
\begin{equation}\label{condition1}
\int_{\Omega_{i_s}} p(u) du=\frac{L^{*}_{i_s}}{|\Omega|},  i_s=1,2,\cdots, \tilde{N}.
\end{equation}
This condition means that the synonymous length $L^{*}_{i_s}$ of interval $\Omega_{i_s}$ is proportional to the probability of synonymous interval $\int_{\Omega_{i_s}} p(u) du $ for the optimal mapping $f$.

%On the other hand, in order to minimize $-\sum_{i_s=1}^{\tilde{N}}p_{s,i_s} \log p_{s,i_s}$ subject to the constraint that $\sum_{i_s=1}^{\tilde{N}}p_{s,i_s}=1$, we can use Lagrange multipliers method. Writing the functional as
%\begin{equation}
%F(p_{s,1},\cdots, p_{s,\tilde{N}})=-\sum_{i_s=1}^{\tilde{N}}p_{s,i_s} \log p_{s,i_s}+\lambda\sum_{i_s=1}^{\tilde{N}}p_{s,i_s}
%\end{equation}
%and differentiating with respect to $p_{s,i_s}$, we have
%\begin{equation}
%-\log p_{s,i_s}-\log e + \lambda =0,
%\end{equation}
%or
%\begin{equation}
%p_{s,i_s}=2^{\lambda-\log e}.
%\end{equation}
%Since $\sum_{i_s=1}^{\tilde{N}}p_{s,i_s}=\tilde{N}2^{\lambda-\log e}=1$, we conclude that for all $i_s$, the probability of synonymous interval is equivalent, that is
%\begin{equation}\label{condition2}
%\int_{\Omega^{*}_{i_s}} p(u) du=\frac{1}{\tilde{N}}.
%\end{equation}

%Combing condition (\ref{condition1}) and (\ref{condition2}), it follows that the optimal mapping $f$ is an equiprobability partition of synonymous interval.

Furthermore, by using Jensen's inequality, we can derive that
\begin{equation}
\begin{aligned}
H_s(\tilde{U})&\geq -\int_{\Omega} p(u)\log p(u) du -\sum_{i_s=1}^{\tilde{N}}\left[ \int_{\Omega_{i_s}} p(u) du \right] \log L^{*}_{i_s}\\
&\geq -\int_{\Omega} p(u)\log p(u) du -\log \sum_{i_s=1}^{\tilde{N}}\left[ \int_{\Omega_{i_s}} p(u) du \right] L^{*}_{i_s}\\
&= -\int_{\Omega} p(u)\log p(u) du -\log S,
\end{aligned}
\end{equation}
where $S$ is the average length of synonymous interval, which is defined as
\begin{equation}
S=\mathbb{E}( L)=\sum_{i_s=1}^{\tilde{N}}\left[ \int_{\Omega_{i_s}} p(u) du \right] L^{*}_{i_s}.
\end{equation}
So we complete the proof.
\end{proof}

In fact, under the synonymous mapping, if $S\to 0$, $H_s(\tilde{U})=-\int_{-\infty}^{\infty} p(u)\log p(u) du -\log S\to \infty$, the first term is the differential entropy of $U$, that is, $h(U)=-\int_{-\infty}^{\infty} p(u)\log p(u) du$. Specifically, if $S=1$, then $H_s(\tilde{U})=h(U)$, that is, the semantic entropy is equal to the differential entropy. On the other hand, if $S\to \infty$, then $H_s(\tilde{U})\to -\infty$. That means if the synonymous length goes sufficiently large we can obtain no extra semantic information.

\begin{remark}
The average synonymous length $S$ indicates the identification ability of information. If $S=1$, the semantic variable $\tilde{U}$ obtains the same identification result as the random variable $U$. On the other hand, if $S>1$, the former loses some identification ability and attains a smaller semantic entropy than the latter.
\end{remark}

We now give the definition of joint/conditional synonymous mapping as following.
\begin{definition}
Given a continuous random variable pair $(U,V)$ with a probability density distribution $p(u,v)$, $u\in \Omega_u, v\in \Omega_v$, the associated discrete semantic variable pair is $(\tilde{U},\tilde{V})$, the joint synonymous mapping is defined as
\begin{equation}
f_{uv}: \tilde{\mathcal{U}}\times \tilde{\mathcal{V}} \to \Omega_u \times \Omega_v,
\end{equation}
where $(\tilde{\mathcal{U}},\tilde{\mathcal{V}})=\left\{(\tilde{u}_{i_s},\tilde{v}_{j_s})\right\}_{i_s,j_s=1}^{\tilde{N_u},\tilde{N_v}}$ is the semantic alphabet and $\forall i_s,j_s, \left|\Omega_{(i_s,j_s)}\right|=L_{i_s}L_{j_s}$.
Here $L_{i_s}$ and $L_{j_s}$ are the synonymous lengths.
Similarly, the conditional synonymous mapping is defined as
\begin{equation}
f_{v|u}: \tilde{\mathcal{V}}|U  \to   \Omega_v|U,
\end{equation}
where for all $ j_s, \left|\Omega_{j_s}|U\right|=L_{j_s}$.
\end{definition}

Thus we give the definitions of semantic conditional entropy and semantic joint entropy as following.
\begin{definition}
Given a pair of semantic variables $(\tilde {U},\tilde {V})$ and the associated continuous random variable pairs $\left(U,V\right)$ with a joint density function $p(u,v)$, under a joint mapping $f_{uv}$, the semantic joint entropy $H_s(\tilde {U},\tilde {V})$ is defined as
\begin{equation}
H_s(\tilde{U},\tilde{V})=-\int_{\Omega_v}\int_{\Omega_u} p(u,v)\log p(u,v)dudv-\mathbb{E}\left[\log (L_u L_v)\right],
\end{equation}
where $L_u$ and $L_v$ are the synonymous lengths of random variables $U$ and $V$ respectively.

Correspondingly, under an conditional mapping $f_{v|u}$, the semantic conditional entropy $H_s(\tilde {V}|U)$ is defined as
\begin{equation}
H_s(\tilde{V}|U)=-\int_{\Omega_v}\int_{\Omega_u} p(u,v)\log p(v|u)dudv-\mathbb{E}(\log L_v),
\end{equation}
where $L_v$ is the synonymous length of random variable $V$.
\end{definition}

\begin{example}
Consider a random variable $U$ with the uniform distribution $p(u)=\frac{1}{b-a},u\in[a,b]$. Assume the synonymous mapping $f$ evenly partitions the interval $[a,b]$ into $\tilde{N}$ parts, so the synonymous length is $S=\frac{b-a}{\tilde{N}}$. Hence the semantic entropy of the associated variable $\tilde{U}$ is
\begin{equation}
H_s(\tilde{U})=-\int_{a}^{b}\frac{1}{b-a}\log\frac{1}{b-a}du-\log \frac{b-a}{\tilde{N}}= \log{\tilde{N}} \text{ sebits}.
\end{equation}

\end{example}

\begin{example}
Let $U$ denote the Gaussian random variable with the density function $p(u)=\frac{1}{\sqrt{2\pi\sigma^2}}e^{-\frac{u^2}{2\sigma^2}}$. Under a proportional-partition mapping $f$ with the synonymous length $S$, the semantic entropy is written as
\begin{equation}
\begin{aligned}
H_s(\tilde{U})&=\mathbb{E} [-\log p(U)] -\log S \\
&=\mathbb{E} \left(\log \sqrt{2\pi\sigma^2} + \left(\log e\right)\frac{U^2}{2\sigma^2}\right) -\log S \\
&=\frac{1}{2}\log {2\pi\sigma^2}+ \left(\log e\right)\frac{\mathbb {E} U^2}{2\sigma^2} -\log S \\
&= \frac{1}{2}\log \frac{2\pi e \sigma^2}{S^2} \text{ sebits}.
\end{aligned}
\end{equation}
\end{example}

We now give the definition of up/down semantic mutual information as follows.
\begin{definition}
Given a pair of semantic variables $(\tilde {U},\tilde {V})$ and the associated continuous random variable pair $\left(U,V\right)$ with a joint density function $p(u,v)$, under a joint mapping $f_{uv}$, the up semantic mutual information $I^s(\tilde {U};\tilde {V})$ is defined as
\begin{equation}\label{up_SMI_conti}
\begin{aligned}
I^s(\tilde{U};\tilde{V})&=H(U)+H(V)-H_s(\tilde{U},\tilde{V})\\
                                    &=-\int_{\Omega_v}\int_{\Omega_u} p(u,v)\log \frac{p(u)p(v)}{p(u,v)}dudv+\mathbb{E}\left[\log (L_u L_v)\right],
\end{aligned}
\end{equation}
where $L_u$ and $L_v$ is the synonymous length of random variable $U$ and $V$.
Similarly, the down semantic mutual information $I_s(\tilde {U};\tilde {V})$ is defined as
\begin{equation}\label{down_SMI_conti}
\begin{aligned}
I_s(\tilde{U};\tilde{V})&=H_s(\tilde{U})+H_s(\tilde{V})-H(U,V)\\
                                    &=-\int_{\Omega_v}\int_{\Omega_u} p(u,v)\log \frac{p(u)p(v)}{p(u,v)}dudv-\mathbb{E}\left[\log (L_u L_v)\right].
\end{aligned}
\end{equation}
\end{definition}

Clearly, in (\ref{up_SMI_conti}) and (\ref{down_SMI_conti}), the first term is the classic mutual information. The main difference between semantic and syntactic mutual information is the logarithmic production of synonymous lengths. If $\mathbb{E}\left[\log (L_u L_v)\right]\geq 0$, we have
\begin{equation}
I_s(\tilde{U};\tilde{V}) \leq I(U;V) \leq I^s(\tilde{U};\tilde{V}).
\end{equation}
Similar to the discrete case, the down semantic mutual information may be negative. Consider the practical condition, we can set $(I_s(\tilde{U};\tilde{V}) )^{+}$.

Analog to the typical set for the discrete case, asymptotic equipartition property also holds for the continuous case and we can introduce the typical set for this case. Conceptually, the volume of continuous typical set for the semantic variable can be approximated as $2^{nH_s(\tilde{U})}=\frac{2^{nH(U)}}{S^n}$. So the synonymous length $S$ can be interpreted as the reduced proportion in each side length. Similar interpretation can be applied for the semantic conditional/joint entropy and mutual information. Due to the page length limitation, we will not discuss the details of the continuous typical set in the semantic sense.

\begin{remark}
In the algorithm of signal detection and estimation, we often make a decision based on observations in an interval. This process can be modeled as a synonymous mapping. Thus, we can handle some problems in the radar signal detection, hypothesis testing, integrated sensing and communication from the viewpoint of semantic information processing. Therefore, semantic information theory may provide a theoretical explanation and establish the fundamental limits for these signal processing problems.
\end{remark}
\subsection{Semantic Capacity of Gaussian Channel}
Consider a Gaussian channel model, the received signal $y_k$ at time $k$ can be written as
\begin{equation}
y_k=x_k+z_k,
\end{equation}
where $z_k\sim \mathcal{N}(0,\sigma^2)$ is the noise sample drawn i.i.d. from a Gaussian distribution with variance $\sigma^2$. Under a joint synonymous mapping $f_{xy}$, the associated semantic variables are $\tilde{Y}$, $\tilde{X}$ and $\tilde{Z}$ respectively. Given the average power constraint $\mathbb{E}X^2\leq P$, by using Jensen's inequality, the semantic capacity of the Gaussian channel can be derived as
\begin{equation}
\begin{aligned}
C_s&=\max_{f_{xy}}\max_{\left\{p(x): \mathbb{E}X^2\leq P\right\}} I^s(\tilde{X};\tilde{Y})\\
      &=\max_{p(x);f_{xy}} H(X)+H(Y)-H_s(\tilde{X},\tilde{Y}) \\
      &=\max_{p(x)} H(X)+H(Y)-H(X,Y)+\max_{f_{xy}}\mathbb{E}\left[\log (L_xL_y)\right]   \\
      %&=\max_{p(x)} h(Y)-h(Y|X)+\max_{f_x}\mathbb{E}\left[\log (L_x)\right]+\max_{f_y}\mathbb{E}\left[\log (L_y)\right]\\
      &=\frac{1}{2}\log\left(1+\frac{P}{\sigma^2}\right)+\log \left[\mathbb{E}(L_x) \mathbb{E}(L_y)\right]\\
      &=\frac{1}{2}\log\left(1+\frac{P}{\sigma^2}\right)+\log(S^2),
\end{aligned}
\end{equation}
where we assume $S_x=\mathbb{E}(L_x)=S_y=\mathbb{E}(L_y)=S$ and $S\geq 1$.

\begin{remark}
Like the classic information theory, the semantic capacity of Gaussian channel is achieved when $X\sim \mathcal{N}(0,P)$ and the synonymous mapping $f_{xy}$ is a proportional-partition mapping. Specifically speaking, if the transmitted variable $X$ (the received signal $Y$) is partitioned based on an equiprobability mapping, the semantic capacity may be achieved.
\end{remark}

Furthermore, we can also obtain a lower bound of the semantic capacity, that is
\begin{equation}
\begin{aligned}
C_s=\frac{1}{2}\log\left(\frac{P+\sigma^2}{\frac{\sigma^2}{S^4}}\right) & \geq \frac{1}{2}\log\left(\frac{P+\frac{\sigma^2}{S^4}}{\frac{\sigma^2}{S^4}}\right)\\
&=\frac{1}{2}\log \left(1+S^4\frac{P}{\sigma^2}\right)=\underline{C}_s.
\end{aligned}
\end{equation}

\begin{theorem}
Given a Gaussian channel with power constraint $P$, noise variance $\sigma^2$ and average synonymous length $S\geq 1$, the achievable semantic rate is
\begin{equation}
C_s=\frac{1}{2}\log\left(1+\frac{P}{\sigma^2}\right)+\log(S^2) \text{ sebits per transmission}.
\end{equation}
Similarly, the lower bound of semantic capacity is
\begin{equation}
\underline{C}_s=\frac{1}{2}\log\left(1+S^4\frac{P}{\sigma^2}\right) \text{ sebits per transmission}.
\end{equation}
\end{theorem}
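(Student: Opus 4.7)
The plan is to establish the semantic capacity formula by combining the decomposition of the up semantic mutual information from Definition~7 with the classical Gaussian channel capacity result, and then treat the synonymous-length contribution as a separable optimization. First, I would invoke the identity
\begin{equation}
I^s(\tilde{X};\tilde{Y}) = H(X) + H(Y) - H_s(\tilde{X},\tilde{Y}),
\end{equation}
and use the expression for the continuous semantic joint entropy to rewrite
\begin{equation}
I^s(\tilde{X};\tilde{Y}) = I(X;Y) + \mathbb{E}\bigl[\log(L_x L_y)\bigr].
\end{equation}
Because $I(X;Y)$ depends only on $p(x)$ and the channel transition $p(y|x)$ (fixed and Gaussian), while the term $\mathbb{E}[\log(L_xL_y)]$ depends only on the joint synonymous mapping $f_{xy}$, the double maximum in the definition of $C_s$ separates into two independent optimizations that can be solved in parallel.

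For the first optimization, I would directly invoke the classical result that under $\mathbb{E}X^2 \leq P$ and additive Gaussian noise of variance $\sigma^2$, the mutual information $I(X;Y)$ is maximized by $X \sim \mathcal{N}(0,P)$, attaining $\tfrac{1}{2}\log(1+P/\sigma^2)$. For the second optimization, I would apply Jensen's inequality to the concave logarithm to get $\mathbb{E}[\log L_x] \leq \log \mathbb{E}[L_x] = \log S_x$, with equality precisely when $L_x$ is constant across synonymous intervals, which by Corollary~\ref{SemEntropy_LB} corresponds to the proportional-partition mapping (equiprobability intervals of the optimal Gaussian input). Assuming the symmetric convention $S_x = S_y = S$, summing the two contributions yields $C_s = \tfrac{1}{2}\log(1+P/\sigma^2) + \log(S^2)$.

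For the lower bound $\underline{C}_s$, the strategy is simply algebraic. I would rewrite the achieved capacity as $\tfrac{1}{2}\log\bigl((P+\sigma^2)/(\sigma^2/S^4)\bigr)$ and then replace the numerator $P + \sigma^2$ with the smaller quantity $P + \sigma^2/S^4$ (valid because $S \geq 1$ implies $\sigma^2 \geq \sigma^2/S^4$). The resulting expression simplifies immediately to $\tfrac{1}{2}\log(1+S^4 P/\sigma^2)$, which is a bona fide achievable rate in the standard ``$1 + \mathrm{SNR}$'' form and thus a clean lower bound.

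The main obstacle I anticipate is rigorously justifying that the joint synonymous mapping optimization really does decouple into a product of marginal mappings $f_x, f_y$ each achieving $\mathbb{E}[\log L] = \log S$; this requires checking that for the capacity-achieving Gaussian input together with the induced Gaussian output, the proportional-partition construction of Corollary~\ref{SemEntropy_LB} can be applied independently to the $X$- and $Y$-axes without the joint cells $\Omega_{i_s,j_s}$ violating the product-structure assumption $|\Omega_{(i_s,j_s)}| = L_{i_s}L_{j_s}$ built into the definition of $H_s(\tilde{X},\tilde{Y})$. Once this tensorization is verified (by choosing $f_{xy}$ as the Cartesian product of the two optimal marginal mappings), the remaining steps are routine substitutions into the expressions already derived.
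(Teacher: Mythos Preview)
Your proposal is correct and follows essentially the same route as the paper: the decomposition $I^s(\tilde X;\tilde Y)=I(X;Y)+\mathbb{E}[\log(L_xL_y)]$, the separation of the two maximizations, the classical Gaussian optimizer for $I(X;Y)$, Jensen's inequality with equality under the proportional-partition mapping for the synonymous-length term, and the algebraic replacement $P+\sigma^2\geq P+\sigma^2/S^4$ for the lower bound all appear in the paper's derivation verbatim. The paper additionally remarks that achievability follows from random coding and joint typicality decoding in the continuous case (details omitted), which you may wish to mention for completeness, but the substantive argument is the one you have given.
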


By using random coding, synonymous mapping, and joint typicality decoding in the continuous case, we can prove this theorem. The details is omitted due to the limitation of page length.

\begin{remark}
We now give a geometric interpretation for this theorem as shown in Fig. \ref{Semantic_Sphere_packing}. Given signal power $P$ and noise variance $\sigma^2$, and a codeword of length $n$, for the classic channel coding, the transmitted codeword is normally scattered in a sphere of radius $\sqrt{nP}$ and the decoding region of each codeword is confined to a sphere of radius $\sqrt{n\sigma^2}$ (labeled by red circle) with high probability. Since the energy of received vectors is no more than $\sqrt{n(P+\sigma^2)}$, the volume of received vector sphere is $A_n(n(P+\sigma^2))^{\frac{n}{2}}$ with the radius $\sqrt{n(P+\sigma^2)}$. Therefore, in order to avoid the intersection of the decoding region, the maximum number of decoding spheres in this volume is limited as
\begin{equation}
\frac{A_n(n(P+\sigma^2))^{\frac{n}{2}}}{A_n(n\sigma^2)^{\frac{n}{2}}}=2^{\frac{n}{2}\log (1+\frac{P}{\sigma^2})}.
\end{equation}
In the syntactic sense, such a maximum number of codewords with no error probability decoding cannot be surpassed.

\end{remark}
\begin{figure}[htbp]
\setlength{\abovecaptionskip}{0.cm}
\setlength{\belowcaptionskip}{-0.cm}
  \centering{\includegraphics[scale=0.8]{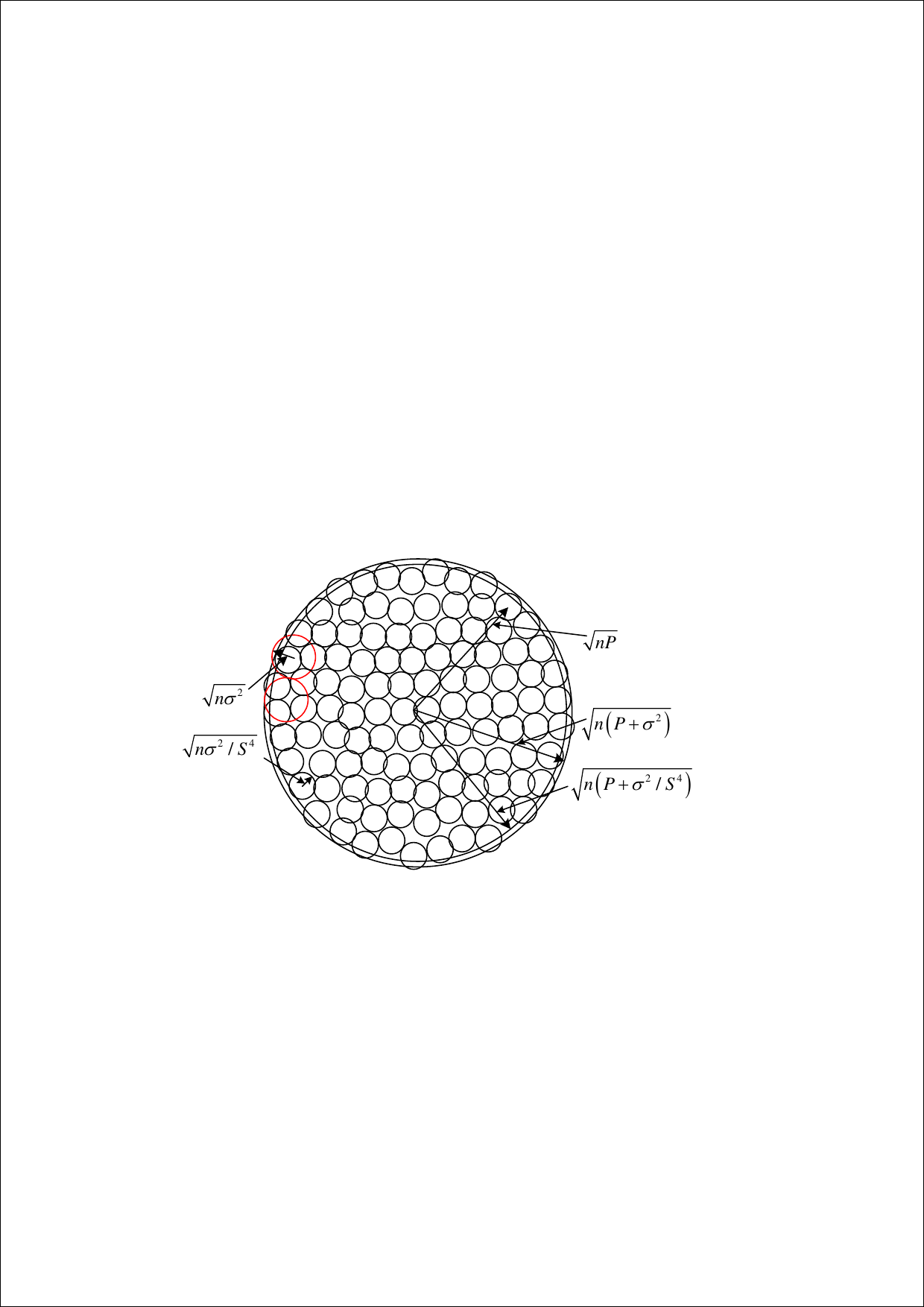}}
  \caption{Semantic Sphere packing for the Gaussian channel.}\label{Semantic_Sphere_packing}
\end{figure}

On the contrary, in the semantic sense, thanks to the synonymous mapping, the radius of a decoding sphere is further reduced to $\sqrt{n\sigma^2/S^4}$. Due to the radius decreasing, the semantic decoding sphere (labeled by black circle) has a smaller volume $A_n(n\sigma^2/S^4)^{\frac{n}{2}}$. Hence, the maximum number of semantic decoding spheres in received vector volume is limited as
\begin{equation}
\frac{A_n(n(P+\sigma^2))^{\frac{n}{2}}}{A_n(n\sigma^2/S^4)^{\frac{n}{2}}}=2^{\frac{n}{2}\log \left(S^4\left(1+\frac{P}{\sigma^2}\right)\right)}.
\end{equation}
Furthermore, if we consider a conservative estimate for the received vector volume, we can limit the energy of the received vector as $\sqrt{n(P+\sigma^2/S^4)}$. So the number of semantic decoding spheres is bounded by
\begin{equation}
\frac{A_n(n(P+\sigma^2/S^4))^{\frac{n}{2}}}{A_n(n\sigma^2/S^4)^{\frac{n}{2}}}=2^{\frac{n}{2}\log \left(1+S^4\frac{P}{\sigma^2}\right)}.
\end{equation}
Compare with the classic coding in Gaussian channel, when the synonymous length is equal to one, the semantic capacity is the same as the classic capacity. However, with the growth of synonymous length, we find that semantic coding based on the synonymous mapping can further reduce the uncertainty range of decoding region so that the volume of decoding sphere can be decreased. Therefore, semantic channel coding can pack larger number of codewords in the same volume of received vector sphere than the traditional method. In this sense, the capacity of Gaussian channel can be further improved by using semantic coding.

\subsection{Semantic Capacity of Band-limited Gaussian Channel}
We now investigate the semantic capacity of band-limited Gaussian channel. Given a Gaussian noise channel with limited bandwidth $B$ and two-sided power spectrum $N_0/2$, we transmit signals on this channel with a limited time interval $[0,T]$ and limited power $P$. So Shannon's channel capacity formula is written as
\begin{equation}
C=B\log\left(1+\frac{P}{N_0B}\right) \text{ bits per second.}
\end{equation}

\begin{theorem}
If a signal $S(t)$ in a time interval $[0,T]$ with the power constraint $P$ is transmitted on the Gaussian noise channel with limited bandwidth $B$, under a synonymous mapping $f$ with the average synonymous length $S\geq 1$, the semantic channel capacity can be written by
\begin{equation}
\begin{aligned}
C_s&=B\log\left[S^4\left(1+\frac{P}{N_0B}\right)\right] \\
      &=B\log \left(1+\frac{P}{N_0B}\right)+4B\log S  \text{ sebits per second.}
\end{aligned}
\end{equation}
Correspondingly, the lower bound of semantic capacity can be written by
\begin{equation}
\underline{C}_s=B\log \left(1+S^4\frac{P}{N_0B}\right)  \text{ sebits per second.}
\end{equation}
\end{theorem}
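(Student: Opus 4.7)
The plan is to reduce the band-limited continuous-time Gaussian channel to a sequence of independent discrete-time Gaussian channels, and then invoke the discrete-time semantic Gaussian capacity formula established in the preceding subsection. First I would appeal to Shannon's sampling theorem: a signal bandlimited to $B$ Hz observed over $[0,T]$ is represented with arbitrarily small error by its $2BT$ Nyquist samples, and the AWGN process with two-sided PSD $N_0/2$ produces i.i.d. Gaussian noise samples at the sampling instants. I would then identify the per-sample parameters in the standard way: the average signal energy $PT$ is distributed across $2BT$ samples, giving per-sample signal power $P_s=P/(2B)$, and the per-sample noise variance works out to $\sigma^2=N_0/2$ so that $P_s/\sigma^2 = P/(N_0 B)$.

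Next, I would apply the discrete-time result $C_s=\tfrac{1}{2}\log\!\left[S^4\!\left(1+\tfrac{P_s}{\sigma^2}\right)\right]$ sebits per channel use to each of the $2BT$ parallel samples, relying on the fact that Gaussian samples are independent and that a product synonymous mapping $f_{xy}^{2BT}$ with per-coordinate average synonymous length $S$ can be constructed by applying the optimal one-dimensional proportional-partition mapping coordinatewise. This preserves the per-sample semantic capacity, and the joint capacity is additive across independent uses. Multiplying by the number of samples and dividing by the time interval $T$ yields
\begin{equation}
C_s = \frac{2BT}{T}\cdot\frac{1}{2}\log\!\left[S^4\!\left(1+\frac{P}{N_0 B}\right)\right]
      = B\log\!\left[S^4\!\left(1+\frac{P}{N_0 B}\right)\right] \text{ sebits/s},
\end{equation}
and the expansion $B\log(1+P/(N_0B))+4B\log S$ follows immediately by splitting the logarithm. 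The lower bound is obtained in the same way, substituting the per-sample lower bound $\underline{C}_s=\tfrac{1}{2}\log(1+S^4 P_s/\sigma^2)$ and aggregating.

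The main obstacle I anticipate is making the first step fully rigorous: the sampling argument requires that the synonymous structure on the continuous waveform space be compatible with the Karhunen--Loève / Nyquist decomposition, so that the average synonymous length $S$ associated with each scalar sample is well defined and achievable by the joint mapping $f_{xy}$ on waveforms. Verifying this amounts to showing that the proportional-partition mapping attaining the discrete-time bound extends coordinatewise to the $2BT$-dimensional space without inflating $\mathbb{E}[\log(L_xL_y)]$ beyond $2\log S$; the geometric sphere-packing picture from the previous subsection provides the intuition, since shrinking each decoding sphere's radius by a factor $S^2$ per coordinate multiplies the packing count by $S^{4BT}$, which is exactly what the extra $4B\log S$ sebits/s represents. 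Once this consistency is in place, the achievability and converse of the sampled channel transport directly to the bandlimited setting by the standard Gelfand--Yaglom--Pinsker limiting argument, and the coding theorem follows.
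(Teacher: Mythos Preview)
Your proposal is correct and follows essentially the same approach as the paper: invoke the Shannon--Nyquist sampling theorem to decompose the band-limited channel into $2B$ i.i.d. Gaussian samples per second with per-sample SNR $P/(N_0B)$, apply the per-sample semantic capacity $\tfrac{1}{2}\log[S^4(1+P/(N_0B))]$, and multiply by the sample rate. The paper's proof is in fact considerably terser than yours and does not discuss the compatibility of the synonymous mapping with the sampled decomposition that you flag as a potential obstacle.
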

\begin{proof}
By using the Shannon-Nyquist sampling theorem, the signal $S(t)$ is decomposed into a series of i.i.d. samples. Hence the semantic capacity per sample is $\frac{1}{2}\log\left[S^4\left(1+\frac{P}{N_0B}\right)\right]$. Since there are $2B$ samples each second, the semantic capacity of the channel can be rewritten as $C_s=B\log\left[S^4\left(1+\frac{P}{N_0B}\right)\right]$. A similar method can also be applied to derive the lower bound of semantic capacity. So we complete the proof.
\end{proof}

Figure \ref{Channel_capacity_comp} depicts the comparison of semantic and classic capacity of band-limited Gaussian channel at various bit signal-to-noise ratios ($E_b/N_0$). For the semantic cases, we draw the semantic capacity and the lower bound for different synonymous lengths, such as $S=8$, $S=4$, and $S=2$. We can see that the semantic capacity of Gaussian channel is significantly larger than the classic capacity due to using the synonymous mapping and semantic coding. With the increasing of average synonymous length, the improvement of capacity will become more noticeable.

\begin{figure*}[htbp]
\setlength{\abovecaptionskip}{0.cm}
\setlength{\belowcaptionskip}{-0.cm}
  \centering{\includegraphics[scale=0.8]{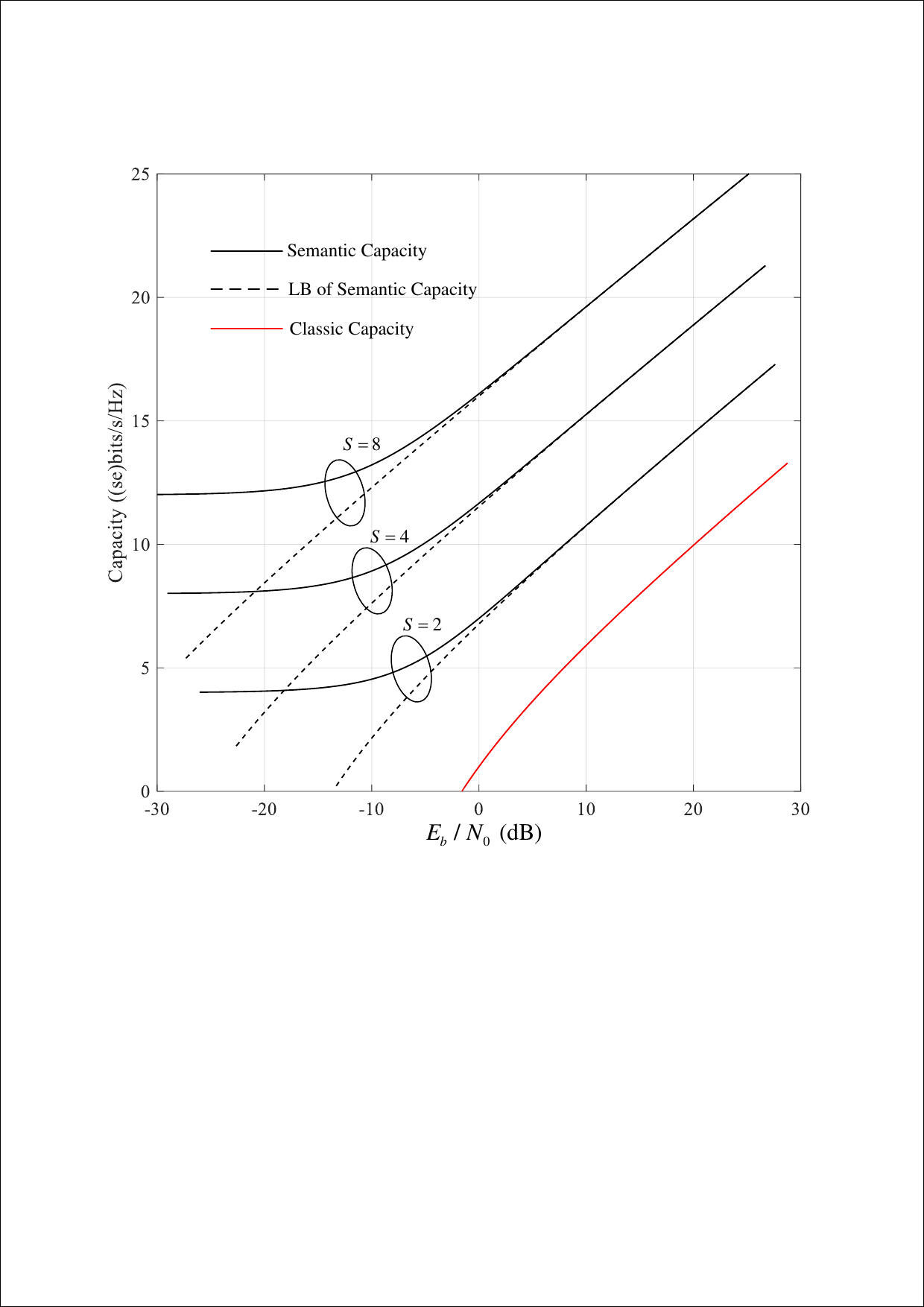}}
  \caption{Semantic and classic capacity of Gaussian channel.}\label{Channel_capacity_comp}
\end{figure*}

If the bandwidth tend to infinity, for the lower bound of semantic channel capacity, we obtain the limitation as following,
\begin{equation}
\lim_{B\to \infty} \underline{C}_s = \frac{S^4}{\ln2}\cdot\frac{ P}{ N_0} \approx 1.44S^4\frac{ P}{ N_0}.
\end{equation}
On the other hand, let $\eta=\frac{\underline{C}_s}{B}$ be the spectrum efficiency. When $\eta\to0$. we obtain the limitation of $E_b/N_0$ as
\begin{equation}
\lim_{\eta\to0}\frac{E_b}{N_0}=\frac{\ln 2}{S^4}\approx \frac{0.693}{S^4}.
\end{equation}

Next, we explore the minimum energy per sebit. Let $P=ER$ be the signal power and $\mu=\frac{R}{B}$ be the spectrum efficiency. Since $R\leq B\log(1+S^4\frac{ER}{N_0 B})$, we can derive the minimum energy needed for semantic communication at spectrum efficiency $\mu$, that is,
\begin{equation}
E(\mu)=\frac{1}{S^4\mu}(2^{\mu}-1).
\end{equation}
Here we assume the Gaussian noise power $N_0=1$. Figure \ref{Energyl_comp} shows the minimum energy needed for semantic and classic communication under various spectrum efficiency with the average synonymous lengths $S=2$ and $S=4$. Compared with the classic communication, we can observe that the minimum energy of semantic communication is dramatically reduced. So it follows that semantic communication may be an important method to implement the green communication.

\begin{figure*}[htbp]
\setlength{\abovecaptionskip}{0.cm}
\setlength{\belowcaptionskip}{-0.cm}
  \centering{\includegraphics[scale=0.8]{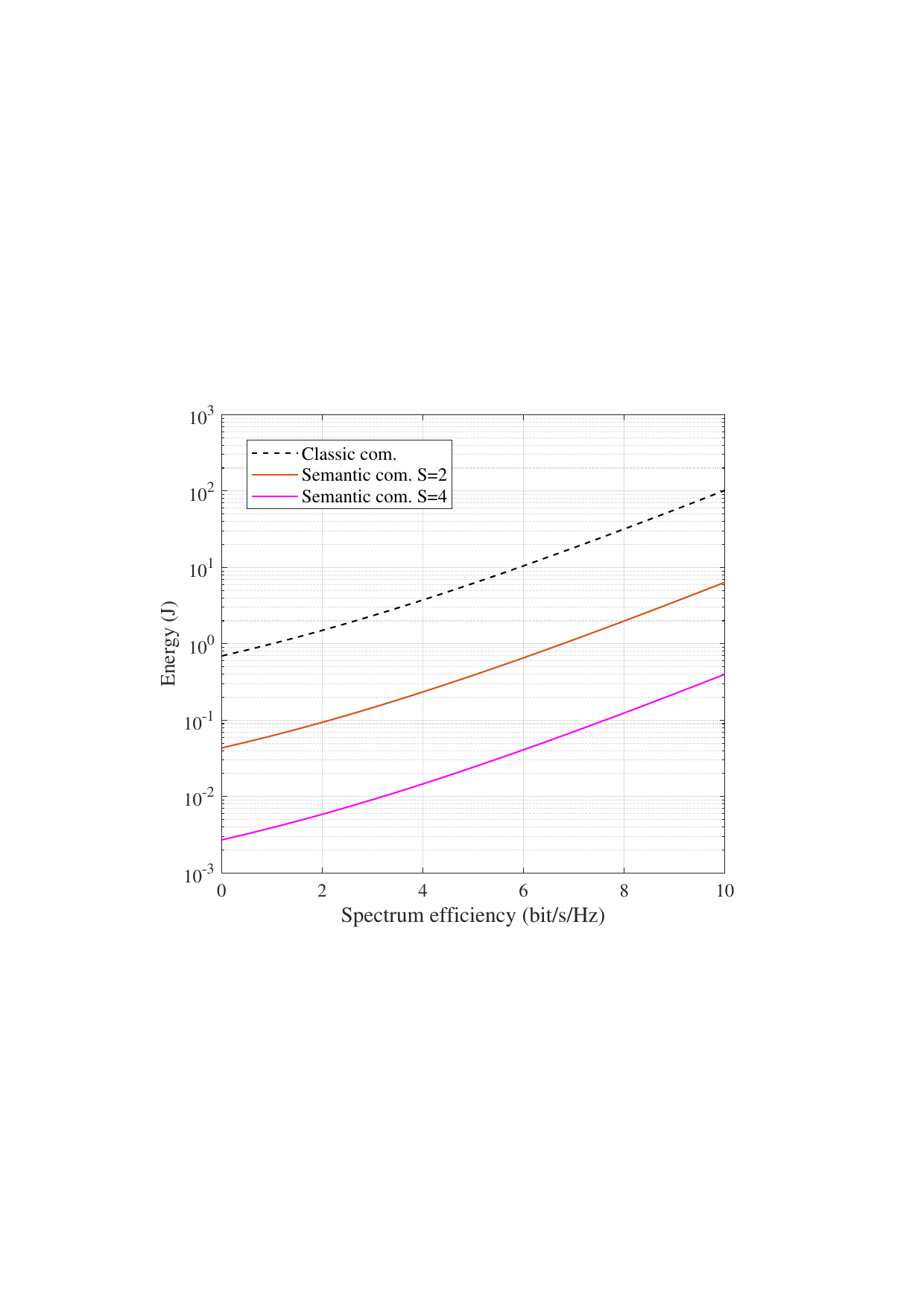}}
  \caption{Minimum energy versus spectrum efficiency of semantic and classic communication.}\label{Energyl_comp}
\end{figure*}

All these theoretic results show an extraordinary advantage of semantic coding over the classic coding. They reveal the tremendous potential of semantic channel coding in the communication application.

\subsection{Semantic Rate Distortion of Gaussian Source}
As a dual problem, we now consider the semantic rate-distortion of Gaussian source and have the following theorem.

\begin{theorem}
Given a Gaussian source $X\sim \mathcal{N}(0,P)$ and the reconstruction signal $\hat{X}$, under the synonymous mapping $f_x$ and $f_{\hat{x}}$, the associated semantic variables are $\tilde{X}$ and $\hat{\tilde{X}}$ respectively, with the mean squared error (MSE) distortion $\mathbb{E}[(\tilde{X}-\hat{\tilde{X}})^2]\leq D$, the signal model can be written as
\begin{equation}
\tilde{X}=\hat{\tilde{X}}+Z,
\end{equation}
where $Z$ is the noise sample drawn i.i.d. from a Gaussian distribution with variance $D$. So the semantic rate-distortion of the Gaussian source is
\begin{equation}
R_s(D)=\left\{
\begin{aligned}
&\frac{1}{2}\log \left( \frac{P}{S^4 D}\right), &0\leq D \leq \frac{P}{S^4},\\
&0,  &D > \frac{P}{S^4}.
\end{aligned}\right.
\end{equation}
where $S$ is the average synonymous length.
\end{theorem}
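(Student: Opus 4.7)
The plan is to reduce this theorem to the classic Gaussian rate--distortion problem by exploiting the continuous-case identity relating semantic and syntactic mutual information. First I would invoke the down semantic mutual information formula in Eq.~(\ref{down_SMI_conti}) to write
\begin{equation*}
I_s(\tilde{X};\hat{\tilde{X}}) \;=\; I(\tilde{X};\hat{\tilde{X}}) \;-\; \mathbb{E}\!\left[\log(L_x L_{\hat{x}})\right],
\end{equation*}
and then use Corollary~\ref{SemEntropy_LB} together with Jensen's inequality: since $\mathbb{E}[\log L_x]\le\log\mathbb{E}[L_x]=\log S$ and likewise for $L_{\hat{x}}$, the subtracted term is at most $2\log S$, with equality attained by the proportional-partition mappings $f_x^{*},f_{\hat{x}}^{*}$. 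This collapses the outer minimization over synonymous mappings into a constant offset $-2\log S$, so the remaining task is to minimize $I(\tilde{X};\hat{\tilde{X}})$ subject to the MSE distortion constraint $\mathbb{E}[(\tilde{X}-\hat{\tilde{X}})^{2}]\le D$.

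Next I would treat $(\tilde{X},\hat{\tilde{X}})$ as a continuous pair obeying the backward test channel $\tilde{X}=\hat{\tilde{X}}+Z$ given in the statement, with $Z\sim\mathcal{N}(0,D)$ independent of $\hat{\tilde{X}}$, and repeat the classic Gaussian rate--distortion argument. Specifically,
\begin{equation*}
I(\tilde{X};\hat{\tilde{X}}) \;=\; h(\tilde{X})-h(\tilde{X}\mid\hat{\tilde{X}}) \;=\; h(\tilde{X})-h(Z\mid\hat{\tilde{X}}) \;\ge\; \tfrac{1}{2}\log(2\pi e P)-\tfrac{1}{2}\log(2\pi e D) \;=\; \tfrac{1}{2}\log\tfrac{P}{D},
\end{equation*}
using $\mathrm{Var}(\tilde{X})=P$, conditioning reduces entropy, and the maximum-entropy property of the Gaussian under a variance constraint. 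Combining the two steps yields
\begin{equation*}
R_s(D) \;\ge\; \tfrac{1}{2}\log\tfrac{P}{D} - 2\log S \;=\; \tfrac{1}{2}\log\tfrac{P}{S^{4}D},
\end{equation*}
which is the desired converse on the regime $0\le D\le P/S^{4}$; beyond this regime the bound turns non-positive, consistent with the stated $R_s(D)=0$ branch (achieved trivially by $\hat{\tilde{X}}=0$).

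For achievability I would exhibit a jointly Gaussian forward test channel with $\hat{\tilde{X}}\sim\mathcal{N}(0,P-D)$ and independent Gaussian noise of variance $D$, so that all the Jensen and entropy inequalities above hold with equality, paired with the proportional-partition synonymous mappings $f_x^{*},f_{\hat{x}}^{*}$ that saturate Corollary~\ref{SemEntropy_LB}. This Gaussian input distribution additionally maximizes $h(\tilde{X})$ under the power constraint $P$, ensuring the converse bound is met. Invoking the forward part of Theorem~\ref{Sem_RDT} then transfers this single-letter characterization into an operational coding statement, completing both directions.

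The main obstacle is reconciling the discrete nature of the semantic variable $\tilde{U}$ in the paper's framework with the continuous treatment required here: the statement implicitly identifies $\tilde{X}$ with a continuous random variable admitting a differential entropy and variance $P$, whereas the formal definition casts $\tilde{X}$ as a label indexing a partition of $\Omega$. Justifying this identification needs the limiting argument used to derive Eq.~(\ref{conti_semantic_entropy}), taking the synonymous interval width $S$ as a fixed coarsening scale rather than $\Delta\to 0$. A secondary subtlety is ensuring that the proportional-partition mapping which saturates the Jensen bound is simultaneously compatible with the Gaussian backward channel achieving the MSE distortion $D$, so that the converse lower bound and the forward construction realize the same joint $(f_x^{*},f_{\hat{x}}^{*},p(\hat{x}\mid x))$.
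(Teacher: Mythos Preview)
Your proposal is essentially the paper's own argument: write $I_s=I(X;\hat X)-\mathbb{E}[\log(L_xL_{\hat x})]$ via Eq.~(\ref{down_SMI_conti}), bound the subtracted term by $2\log S$ through Jensen/Corollary~\ref{SemEntropy_LB}, invoke the classical Gaussian bound $I(X;\hat X)\ge\tfrac12\log(P/D)$, and assert achievability by the Gaussian backward channel together with proportional-partition mappings. The only slip is notational: the mutual information in Eq.~(\ref{down_SMI_conti}) is $I(X;\hat X)$ between the \emph{syntactic} continuous variables, not $I(\tilde X;\hat{\tilde X})$, so the classical rate--distortion step should be run on $(X,\hat X)$ directly---which is exactly what the paper does and which sidesteps the discrete/continuous identification you flag as your main obstacle, since $X$ genuinely has variance $P$.
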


\begin{proof}
For the Gaussian source $X\sim \mathcal{N}(0,P)$, by using Corollary \ref{SemEntropy_LB}, we can write the down semantic mutual information as
\begin{equation}
\begin{aligned}
I_s(\tilde{X};\hat{\tilde{X}})&= H_s(\tilde{X})+H_s(\hat{\tilde{X}})-H(X,\hat{X}) \\
                                            &= h(X)-\mathbb{E}\log(L_x)+h(\hat{X})- \mathbb{E}\log(L_{\hat{x}})-H(X,\hat{X})   \\
                                            &\geq I(X;\hat{X})-\log\mathbb{E}(L_x)-\log\mathbb{E}(L_{\hat{x}})  \\
                                            &\geq\frac{1}{2}\log\left(\frac{P}{D}\right)-\log(S_xS_y)\\
                                            &= \frac{1}{2}\log\left(\frac{P}{D}\right)-\log(S^2)\\
                                            &= \frac{1}{2}\log\left(\frac{P}{S^4 D}\right),
\end{aligned}
\end{equation}
where we assume $S_x=\mathbb{E}(L_x)=S_{\hat{x}}=\mathbb{E}(L_{\hat{x}}) =S$ and $S\geq 1$. Like the classic information theory, this semantic rate-distortion function is achieved when $Z\sim \mathcal{N}(0,D)$ and $f_{x}$ ($f_{\hat{x}}$) is an equiprobability partition synonymous mapping. If $0\leq D \leq  P/S^4$, $R_s(D)\geq 0$, otherwise $R_s(D)=0$.
\end{proof}

We now give a geometric interpretation for this theorem as shown in Fig. \ref{Semantic_Sphere_covering}. For the classic lossy source code, we should use a group of encoding spheres of radius $\sqrt{nD}$ to cover the source volume of radius $\sqrt{nP}$. So the minimum number of the source codewords required is $2^{nR(D)}=\left(\frac{P}{D}\right)^{n/2}$. On the contrary, due to the synonymous mappings for the source and reconstruction sequence, the equivalent volume of source space can be reduced to a sphere of radius $\sqrt{nP/S^4}$ so that the minimum number of codewords is $2^{nR_s(D)}=\left(\frac{P/S^4}{D}\right)^{n/2}$.

\begin{figure}[htbp]
\setlength{\abovecaptionskip}{0.cm}
\setlength{\belowcaptionskip}{-0.cm}
  \centering{\includegraphics[scale=0.8]{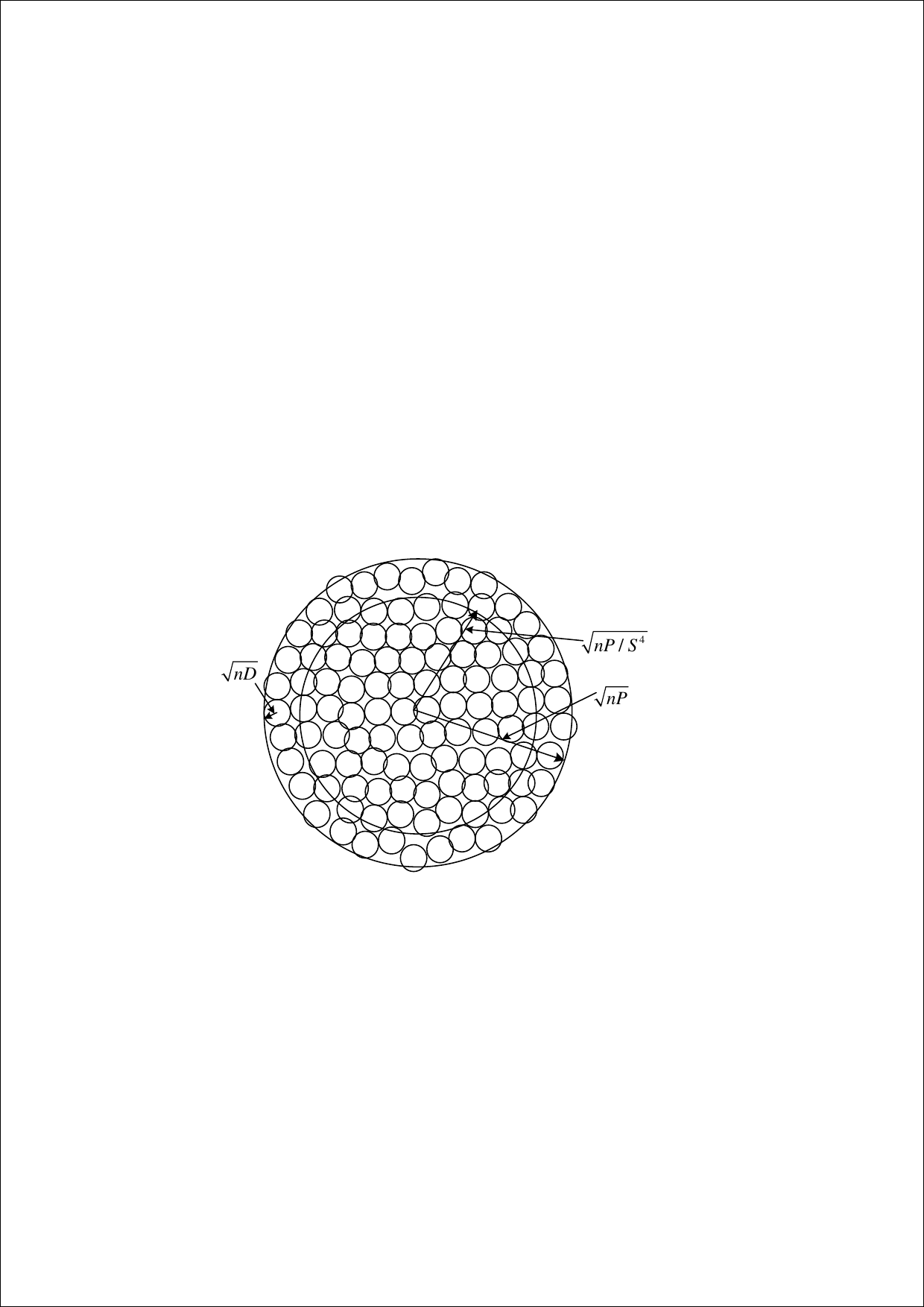}}
  \caption{Semantic Sphere covering for the Gaussian source.}\label{Semantic_Sphere_covering}
\end{figure}

Figure \ref{rate_distortion_comp} shows the semantic and syntactic rate distortion function of a Gaussian source. Here, the source signal power is $P=1$ and the synonymous lengths are set to $S=1.5$ and $S=2$ respectively. We observe that the semantic rate distortion functions dramatically decrease with the growth of synonymous length and become significantly lower than the classic counterparts. These results manifest that semantic lossy source coding has enormous potential for the source compression in the future.

\begin{figure*}[htbp]
\setlength{\abovecaptionskip}{0.cm}
\setlength{\belowcaptionskip}{-0.cm}
  \centering{\includegraphics[scale=0.8]{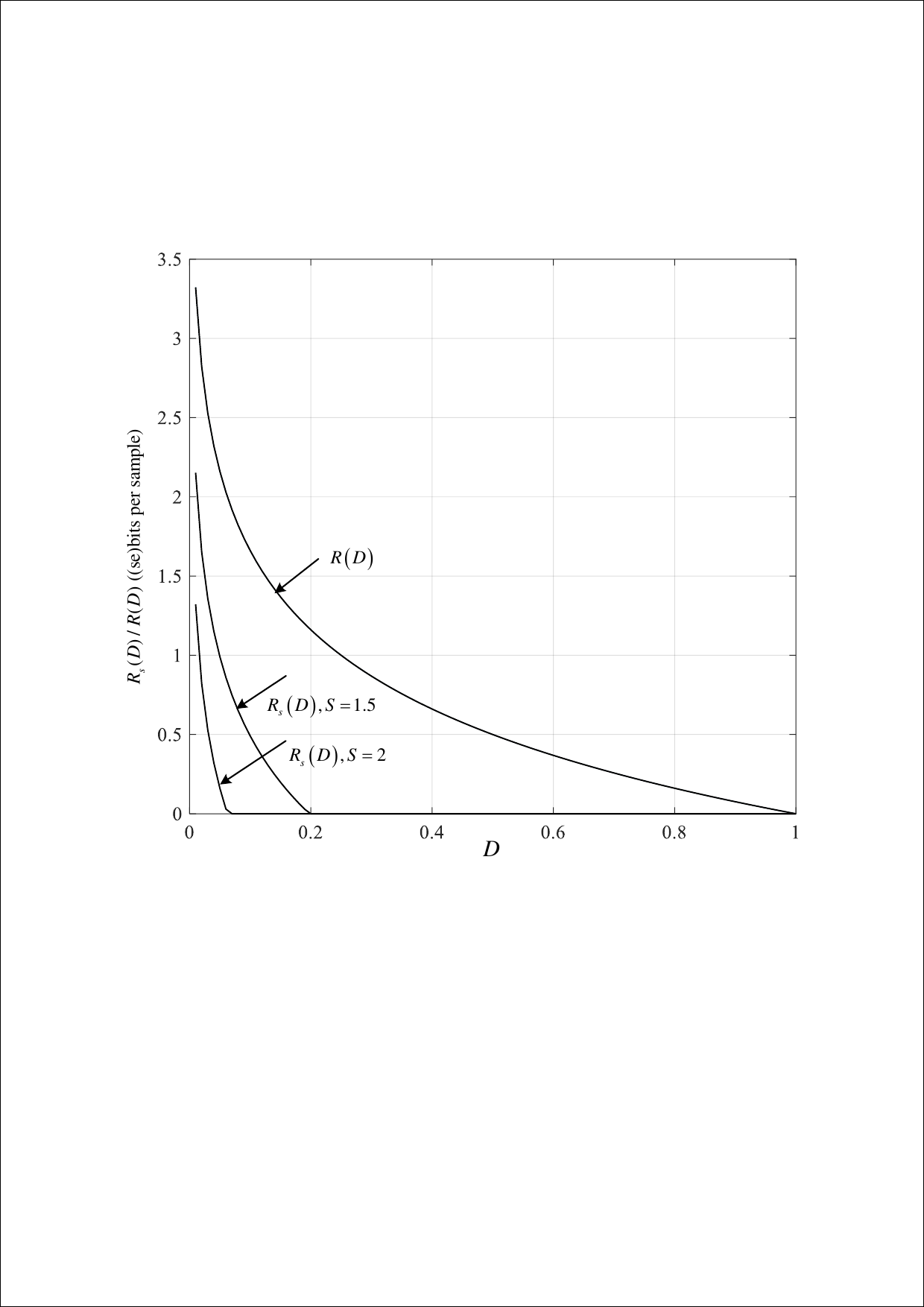}}
  \caption{Rate distortion function of Gaussian source.}\label{rate_distortion_comp}
\end{figure*}

\section{Semantic Joint Source Channel Coding}
\label{section_X}
In this section, we consider the semantic joint source channel coding. Similar to the classic information theory, we can tie together two basic methods of semantic communication: semantic source coding and semantic channel coding. Figure \ref{Semantic_JSCC} gives the system model of semantic joint source channel coding. Let $\tilde{U}$ denote the discrete memoryless semantic source with entropy $H_s(\tilde{U})$ and $d_s(\tilde{u},\hat{\tilde{u}})$ be a semantic measure with rate-distortion function $R_s(D)$. Furthermore, let $X$ denote the coded symbol after synonymous mapping $f$ and $\left\{\tilde{\mathcal{X}}(\tilde{\mathcal{U}}),\mathcal{X},\mathcal{Y},\tilde{\mathcal{Y}}(\hat{\tilde{\mathcal{U}}}), p(Y|X)\right\}$ be a discrete memoryless channel with semantic capacity $C_s$.

\begin{figure*}[htbp]
\setlength{\abovecaptionskip}{0.cm}
\setlength{\belowcaptionskip}{-0.cm}
  \centering{\includegraphics[scale=0.8]{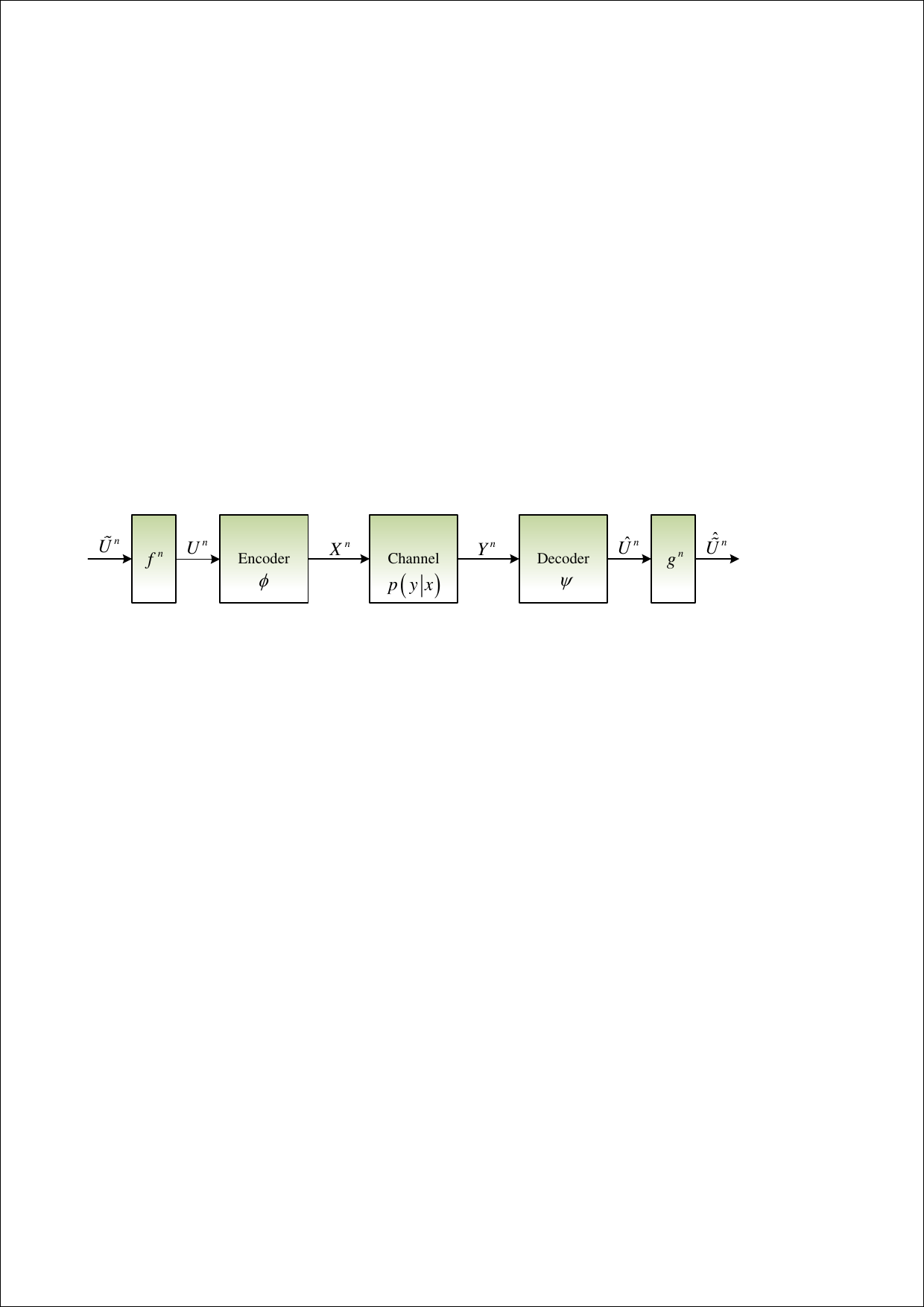}}
  \caption{Semantic joint source channel coding.}\label{Semantic_JSCC}
\end{figure*}

\begin{theorem}\label{Sem_SCCT_Lossy}
(Semantic Source Channel Coding Theorem in Lossy case):

Given a discrete memoryless source $U\sim p(u)$ and a discrete memoryless channel $p(y|x)$, the source is associated with a semantic variable $\tilde{U}$ under the synonymous mapping $f$ and the codeword $X^n$ is transmitted on the channel. In the case of lossy transmission, if the code rate satisfies $R_s(D)<R<C_s$, there exists a sequence of $\left(2^{nR},n\right)$ semantic source channel codes, when code length $n$ tends to sufficiently large, then semantic distortion satisfies $\mathbb{E}d_s(\tilde{X},\hat{\tilde{X}})<D$.

On the contrary, if $R_s(D)>C_s$, then for any $\left(2^{nR},n\right)$ code, with sufficiently large $n$, the semantic distortion meets $\mathbb{E}d_s(\tilde{X},\hat{\tilde{X}})>D$.
\end{theorem}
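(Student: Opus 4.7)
The plan is to adapt the classical source--channel separation argument to the semantic setting, cascading the two coding theorems already established in Sections \ref{section_VII} and \ref{section_VIII}. For achievability, fix $\epsilon>0$ small enough that $R_s(D)+\epsilon < R < C_s-\epsilon$. First, I would invoke the semantic rate-distortion coding theorem (Theorem \ref{Sem_RDT}) to construct a semantic lossy source code of length $n$ and rate $R_s(D)+\epsilon/2$ whose expected semantic distortion is at most $D-\delta$ for some $\delta>0$ when $n$ is large. Next, treating the $2^{n(R_s(D)+\epsilon/2)}$ source-code indices as the message set, I would apply the semantic channel coding theorem (Theorem \ref{Sem_CCT}) to build a semantic channel code of the same block length $n$ at rate below $C_s$, whose average decoding error probability $P_e^{(n)}\to 0$. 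Cascading the two codes and using the bounded-distortion decomposition
\begin{equation}
\mathbb{E} d_s(\tilde{U}^n,\hat{\tilde{U}}^n) \leq (1-P_e^{(n)})(D-\delta) + P_e^{(n)} d_{\max}
\end{equation}
then yields $\mathbb{E} d_s(\tilde{U}^n,\hat{\tilde{U}}^n) \leq D$ for $n$ sufficiently large, proving the forward part.

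For the converse, I would assume a sequence of $(2^{nR},n)$ codes achieving $\mathbb{E} d_s(\tilde{U}^n,\hat{\tilde{U}}^n)\leq D$ and chain together the inequalities already available in the paper. The crux is the estimate
\begin{equation}
nR_s(D)\;\leq\; nR_s\!\left(\mathbb{E} d_s(\tilde{U}^n,\hat{\tilde{U}}^n)\right)\;\leq\; I_s(\tilde{U}^n;\hat{\tilde{U}}^n)\;\leq\; I^s(\tilde{X}^n;\tilde{Y}^n)\;\leq\; nC_s,
\end{equation}
where the first inequality uses the non-increasing property of $R_s(\cdot)$, the second uses convexity of $R_s$ together with the memoryless-source decomposition as in the converse of Theorem \ref{Sem_RDT}, the third is the semantic data processing inequality obtained by combining Lemma \ref{lemma10} applied to the source end with Lemma \ref{lemma8} applied to the channel end (via the Markov chain $\tilde{U}^n\to \tilde{X}^n\to \tilde{Y}^n\to \hat{\tilde{U}}^n$), and the last is Lemma \ref{lemma9}. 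Dividing by $n$ and letting $n\to\infty$ gives $R_s(D)\leq C_s$, i.e., if $R_s(D)>C_s$ no such sequence can exist and the expected semantic distortion must exceed $D$.

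The main obstacle I anticipate is the third inequality in the converse chain: unlike the classic case, the semantic data processing inequality must pass through two different flavors of semantic mutual information (down-type $I_s$ at the source side and up-type $I^s$ at the channel side), and these do not satisfy the clean equality relations of their syntactic counterparts. I would have to carefully combine the sequential bounds of Theorem \ref{chainrule_SMI} with Lemmas \ref{lemma8} and \ref{lemma10}, ensuring that the jointly synonymous mappings $f_{uv}^n$ and $f_{xy}^n$ induced by the source and channel encoders are compatible with the Markov structure. A secondary technical point will be verifying that in the cascaded achievability construction the synonymous partition of the source code is respected by the channel code so that a semantic decoding error (not merely a syntactic bit-error) is what governs the final distortion bound; this is where the identification of the semantic index set produced by the source encoder with the semantic message set consumed by the channel encoder must be made explicit.
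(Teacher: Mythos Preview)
Your proposal is correct and follows essentially the same separation-based strategy as the paper: achievability by cascading Theorem~\ref{Sem_RDT} with Theorem~\ref{Sem_CCT}, and the converse by a chain of semantic mutual-information inequalities terminating in $nC_s$.

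The only noteworthy difference lies in the converse routing. The paper's own argument stays entirely on the source/reconstruction pair: it invokes the converse of Theorem~\ref{Sem_RDT} for $R_s(D)\le \tfrac{1}{n}I_s(\tilde{U}^n;\hat{\tilde{U}}^n)$, then applies Corollary~\ref{corollary3} directly to that pair to obtain $I_s(\tilde{U}^n;\hat{\tilde{U}}^n)\le I^s(\tilde{U}^n;\hat{\tilde{U}}^n)$, and finally asserts $I^s(\tilde{U}^n;\hat{\tilde{U}}^n)\le nC_s$ without detailing the passage through the channel variables. Your version instead routes explicitly through the Markov chain $\tilde{U}^n\to \tilde{X}^n\to \tilde{Y}^n\to \hat{\tilde{U}}^n$, using Lemmas~\ref{lemma10} and \ref{lemma8} to bridge the down-type $I_s$ at the source end and the up-type $I^s$ at the channel end, and then Lemma~\ref{lemma9} for the final bound. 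This is slightly more work but makes the last step transparent, since Lemma~\ref{lemma9} is stated for the channel pair $(\tilde{X}^n,\tilde{Y}^n)$ rather than for $(\tilde{U}^n,\hat{\tilde{U}}^n)$; the obstacle you flag about reconciling the two flavors of semantic mutual information across the Markov chain is real, and your plan to handle it via the sequential bounds is the right instinct.
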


\begin{proof}
We use separate semantic lossy source coding and semantic channel coding to prove the achievability.

(1) Source coding: Give any $\epsilon>0$, there exists a sequence of semantic lossy source codes with rate $R\geq R_s(D)+\epsilon$ that satisfies $\mathbb{E}d_s(\tilde{u},\hat{\tilde{u}})<D$. The index of each code can be regarded as a semantic message to be sent to the channel.

(2) Channel coding: The source indices can be encoded and reliably transmitted over the channel if $R\leq C_s-\epsilon$.

In the channel decoder, when $n\to \infty$, the error probability $P_e^{(n)}\to 0$ so that the source decoder can reconstruct sequence and de-mapping the semantic index $\hat{i}_s$ and the average distortion meets $\mathbb{E}d_s(\tilde{X},\hat{\tilde{X}})\leq D$.

Next we prove the converse.
By the converse proof of semantic rate distortion coding theorem (Theorem \ref{Sem_RDT}), we have $R_s(D)\leq I_s(\tilde{U}^n,\hat{\tilde{U}}^n)$. Furthermore, by Corollary \ref{corollary3}, we have $I_s(\tilde{U}^n,\hat{\tilde{U}}^n)\leq I^s(\tilde{U}^n,\hat{\tilde{U}}^n)$. Then we can derive $I^s(\tilde{U}^n,\hat{\tilde{U}}^n)\leq C_s$. So we complete the proof of the theorem.
\end{proof}

\begin{theorem}
(Semantic Source Channel Coding Theorem in Lossless case):

Given a discrete memoryless source $U\sim p(u)$ and a discrete memoryless channel $p(y|x)$, the source is associated with a semantic variable $\tilde{U}$ under the synonymous mapping $f$ and the codeword $X^n$ is transmitted on the channel. In the case of lossless transmission, for each code rate $H_s(\tilde{U})<R<C_s$, there exists a sequence of $\left(2^{nR},n\right)$ semantic source channel codes, when code length $n$ tends to sufficiently large, the error probability tends to zero.

On the contrary, if $H_s(\tilde{U})>C_s$, then for any $\left(2^{nR},n\right)$ code, with sufficiently large $n$, the error probability cannot achieve arbitrarily low.
\end{theorem}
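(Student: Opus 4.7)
The plan is to mirror the proof of Theorem \ref{Sem_SCCT_Lossy} and establish the lossless version by source--channel separation, concatenating the semantic source coding theorem (Section \ref{section_VI}) with the semantic channel coding theorem (Theorem \ref{Sem_CCT}). The key observation is that in the lossless setting the source encoder produces an index of rate arbitrarily close to $H_s(\tilde{U})$, while the channel encoder reliably transmits any message set whose rate is below $C_s$, so any $R$ strictly between these two quantities is simultaneously feasible for both.

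For the direct part, fix $\epsilon>0$ small enough that $H_s(\tilde{U})+\epsilon < R < C_s-\epsilon$. First I would invoke the semantic lossless source coding theorem to obtain a sequence of $(2^{n(H_s(\tilde{U})+\epsilon)},n)$ codes with $P_{e,\text{src}}^{(n)}\to 0$, producing a semantic index $W_s\in\{1,\ldots,2^{nR}\}$. Treating $W_s$ as the input message, I would then invoke Theorem \ref{Sem_CCT} to obtain a semantic channel code with $P_{e,\text{ch}}^{(n)}\to 0$. By the union bound, the end-to-end error probability satisfies $P_e^{(n)}\leq P_{e,\text{src}}^{(n)}+P_{e,\text{ch}}^{(n)}\to 0$, which establishes achievability.

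For the converse, I would use a Fano-type argument adapted to semantic variables. Let $\tilde{W}$ denote the semantic index of the source block and $\hat{\tilde{W}}$ its reconstruction, so that $\tilde{W}\to X^n\to Y^n\to\hat{\tilde{W}}$ forms a Markov chain. From the i.i.d. property of $\tilde{U}$ together with the sequential synonymous mapping, I would identify $H_s(\tilde{U}^n)=nH_s(\tilde{U})$. Fano's inequality applied to $\tilde{W}$ yields $H(\tilde{W}\mid\hat{\tilde{W}})\leq 1+P_e^{(n)}nR$, so that
\begin{equation}
nH_s(\tilde{U})=H(\tilde{W})\leq 1+P_e^{(n)}nR+I(\tilde{W};\hat{\tilde{W}}).
\end{equation}
Applying the data-processing inequality through the Markov chain together with Lemmas \ref{lemma8} and \ref{lemma9} gives $I(\tilde{W};\hat{\tilde{W}})\leq I(\tilde{X}^n;Y^n)\leq I^s(\tilde{X}^n;\tilde{Y}^n)\leq nC_s$. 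Dividing by $n$ and letting $n\to\infty$ forces $H_s(\tilde{U})\leq C_s$, so the hypothesis $H_s(\tilde{U})>C_s$ is incompatible with $P_e^{(n)}\to 0$.

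The main obstacle will be pinning down the Fano step at the semantic level rather than the syntactic one: one must be careful that the message whose uncertainty Fano bounds is the semantic index $\tilde{W}$, whose alphabet size is $2^{nR}$, and that the data-processing step pushes the inequality through the semantic side of the channel so that the upper bound is the semantic capacity $C_s$ rather than the classical capacity $C$. Once this identification is in place and the sequential inequalities of Lemmas \ref{lemma8}--\ref{lemma9} are invoked, the remainder reduces to the same limiting argument used in the classical source--channel separation theorem.
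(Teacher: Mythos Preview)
Your proposal is correct and matches the paper's approach. The paper's own proof is a single sentence stating that it is ``similar to Theorem \ref{Sem_SCCT_Lossy} by setting $D=0$'', and your spelled-out argument---separation into semantic lossless source coding plus semantic channel coding for achievability, and a Fano/data-processing chain through Lemmas \ref{lemma8}--\ref{lemma9} for the converse---is precisely the lossless specialization of that lossy proof. The only cosmetic difference is that the paper's converse (inherited from the lossy case) routes the inequality through $R_s(0)\leq I_s\leq I^s\leq nC_s$ via Corollary \ref{corollary3}, whereas you go directly from Fano to $I(\tilde{W};\hat{\tilde{W}})\leq nC_s$; both reach the same bound.
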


The proof of this theorem is similar to Theorem \ref{Sem_SCCT_Lossy} by setting $D=0$. In principle, the separate semantic source and channel coding is asymptotically optimal for lossless or lossy transmission. Therefore, the fundamental criteria for the semantic communication are summarized as following
\begin{equation}
\left\{
\begin{aligned}
& H_s({\tilde{U}})\leq R \leq C_s,   &\text{for lossless transmission,}\\
& R_s(D) \leq R \leq C_s,                &\text{for lossy transmission.}
\end{aligned}
\right.
\end{equation}
On the contrary, for classic communication, the code rate should be confined in $H(U)\leq R \leq C$ or $R(D)\leq R \leq C$. In these common intervals, both classic communication and semantic communication can work well. On the other hand, if $H(U)>C$ or $R(D)>C$, the classic communication cannot work yet the semantic communication still works well as long as $H_s(\tilde{U})\leq C_s$ or $R_s(D)\leq C_s$. Thus, we conclude that semantic source-channel coding can extend the range of code rates and provide new insight to improve the performance of the communication system.

\section{Conclusions}
\label{section_XI}
In this paper, we develop an information-theoretic framework of semantic communication. We start from the synonym, a fundamental property of semantic information, to build the semantic information measures including semantic entropy, up/down semantic mutual information, semantic channel capacity, and semantic rate distortion function. Then we extend the asymptotic equipartition property to the semantic sense and introduce the synonymous typical set to prove three significant coding theorems, that is, semantic source coding theorem, semantic channel coding theorem, and semantic rate distortion coding theorem. Additionally, we investigate the semantic information measures in the continuous case and derive the semantic capacity of Gaussian channel and semantic rate distortion of Gaussian source. All these works uncover the critical features of semantic communication and constitute the theoretic basis of semantic information theory.

For the theoretic analysis, the semantic information theory needs further development. In this paper, we only consider the semantic information measure and the fundamental limitation in the discrete or continuous memoryless case. In the future, we can further investigate the measure and limitation of semantic information in various memory source or channel cases, such as stationary and ergodic process (e.g. Markov process) or non-stationary non-ergodic process. Strong asymptotic equipartition property and strong typicality in the semantic sense should be further explored. On the other hand, the analysis of semantic capacity or semantic rate distortion with finite block length may also be an interesting research topic. In addition, in various multiuser communication scenarios, such as multiple access, broadcasting, relay etc., we can further analyze and derive the corresponding measure and performance limit of semantic information.

Guided by the classic information theory, in the past seventy years, the source coding and channel coding techniques have approached the theoretic limitation. On the contrary, the semantic information theory paves a new way for the coding techniques. From the viewpoint of semantic processing, with the help of synonymous mapping, the lossless source coding has much space to improve and the existing coding methods can be further modified and polished. The construction of semantic channel codes may be centered on the group Hamming distance and the optimization of decoding algorithms will be concentrated on the group decoding so that the information transmission techniques will usher in a new era that surpasses the classic limitation and approaches the semantic capacity. By the optimization of synonymous mapping, the classic lossy source coding techniques, such as vector quantization, prediction coding, and transform coding, will demonstrate new advantages to further improve the compression efficiency. Briefly, the performance bottleneck of classic communication will be broken and the traditional communication will naturally evolve to the semantic communication.

For the new coding techniques based on deep learning (DL), the semantic information theory will lift its mystery veil and provide a systematic design and optimization tool. The synonymous mapping will provide a reasonable explanation for the semantic information extracted by the deep neural network. The basic structures of mainstream DL models, such as convolutional neural networks, transformer model, variational auto-encoder and so on, may be analyzed and optimized based on the semantic information measures. Furthermore, the system architecture of semantic communication based on deep learning can be simplified or optimized guided by the semantic information theory.

In summary, the theoretic framework proposed in this paper may help understanding the essential features of semantic information and shed light on some ambiguity problems in semantic communication. We believe that the semantic information theory will uncover a new chapter of information theory and have a profound impact on many fields such as communication, signal detection and estimation, deep learning and machine learning, and integrated sensing and communication etc.

\appendix
\subsection{Proof of Lemma \ref{concave_sementropy}}\label{proof_concave_sementropy}
\begin{proof}
Suppose two probability distributions $p_1(u)$ and $p_2(u)$, for all $0 \leq \theta \leq 1$, we have $p_{\theta}(u)=\theta p_1(u)+(1-\theta) p_2(u)$. Using Jensen's inequality, we can write
\begin{equation}
\begin{aligned}
&\theta H_s(p_1(u))+(1-\theta) H_s(p_2(u))-H_s(p_{\theta}(u))\\
&=\theta \sum_{i_s}\sum_{i\in\mathcal{N}_{i_s}}p_1(u_i) \log \frac{\sum_{i\in\mathcal{N}_{i_s}}p_{\theta}(u_i)}{\sum_{i\in\mathcal{N}_{i_s}}p_1(u_i)}\\
&+(1-\theta)\sum_{i_s}\sum_{i\in\mathcal{N}_{i_s}}p_2(u_i) \log \frac{\sum_{i\in\mathcal{N}_{i_s}}p_{\theta}(u_i)}{\sum_{i\in\mathcal{N}_{i_s}}p_2(u_i)}\\
&\leq \theta \log \sum_{i_s} \sum_{i\in\mathcal{N}_{i_s}}p_{\theta}(u_i) +(1-\theta) \log \sum_{i_s} \sum_{i\in\mathcal{N}_{i_s}}p_{\theta}(u_i)\\
&=\theta \log 1 + (1-\theta) \log 1 =0.
\end{aligned}
\end{equation}
So we prove the concavity of semantic entropy.
\end{proof}

\subsection{Proof of Theorem \ref{convexity_crossentropy}}\label{proof_convexity_crossentropy}
\begin{proof}
To prove the first inequality, by using Jensen's inequality, we have
\begin{equation}
\begin{aligned}
&D_s(\theta p_{s,1} + (1-\theta) p_{s,2}\| \theta q_{s,1} + (1-\theta) q_{s,2}) \\
&- \theta D_s(p_{s,1}\|q_{s,1}) -(1-\theta) D_s(p_{s,2}\|q_{s,2})\\
&=\theta \sum_{i_s}\sum_{u_i\in \mathcal{U}_{i_s}}p_1(u_i) \\
&\cdot \log \frac{\sum_{u_i\in \mathcal{U}_{i_s}}\theta p_1(u_i) + (1-\theta) p_2(u_i)}{\sum_{u_i\in \mathcal{U}_{i_s}} \theta q_1(u_i) + (1-\theta) q_2(u_i)} \frac{\sum_{u_i\in \mathcal{U}_{i_s}} q_1(u_i)}{\sum_{u_i\in \mathcal{U}_{i_s}}p_1(u_i)}\\
&+(1-\theta) \sum_{i_s}\sum_{u_i\in \mathcal{U}_{i_s}}p_2(u_i)\\
&\cdot \log \frac{\sum_{u_i\in \mathcal{U}_{i_s}}\theta p_1(u_i) + (1-\theta) p_2(u_i)}{\sum_{u_i\in \mathcal{U}_{i_s}} \theta q_1(u_i) + (1-\theta) q_2(u_i)}\frac{\sum_{u_i\in \mathcal{U}_{i_s}} q_2(u_i)}{\sum_{u_i\in \mathcal{U}_{i_s}}p_2(u_i)}\\
&\leq \log \sum_{i_s}\sum_{u_i\in \mathcal{U}_{i_s}}\theta p_1(u_i) + (1-\theta) p_2(u_i)=\log 1=0.
\end{aligned}
\end{equation}
The other two inequalities can also be proved by using similar methods. So we prove the theorem.
\end{proof}

\subsection{Proof of Theorem \ref{SemMI_concave_convex}}\label{proof_SemMI_concabe_convex}
\begin{proof}
First, we prove $I^s(\tilde{U};\tilde{V})$ is a concave function of $p(u)$ for fixed $p(v|u)$. Since $I^s(\tilde{U};\tilde{V})=H(U)+H(V)-H_s(\tilde{U},\tilde{V})$, the entropies of $H(U)$ and $H(V)$ are concave functions of $p(u)$ for fixed $p(v|u)$. Furthermore, the semantic joint entropy $H_s(\tilde{U},\tilde{V})$ is also a concave function of $p(u)$. So we conclude that $I^s(\tilde{U};\tilde{V})$ is also a concave function of $p(u)$.

Second, we prove $I_s(\tilde{U};\tilde{V})$ is a convex function of $p(v|u)$ for fixed $p(u)$. Since $I_s(\tilde{U};\tilde{V})=D_s\left(p\left(u,v\right)\|p_s(u)p_s(v)\right)$, due to the convexity of semantic relative entropy, we conclude that $I_s(\tilde{U};\tilde{V})$ is a convex function of $p(v|u)$.

By using the similar methods, we can prove $I^s(\tilde{U};\tilde{V})$ is a convex function of $p(v|u)$ for fixed $p(u)$ and $I_s(\tilde{U};\tilde{V})$ is a concave function of $p(u)$ for fixed $p(v|u)$.
\end{proof}

% that's all folks

\begin{thebibliography}{99}

\bibitem{Classicpaper_Shannon}
C. E. Shannon, ``A Mathematical Theory of Communication," \emph{The Bell System Technical Journal}, vol. 27, pp. 379-423, 623-656, July, Oct., 1948.

\bibitem{Shannon_Weaver}
C. E. Shannon and W. Weaver, \emph{The Mathematical Theory of Communication}, The University of Illinois Press, 1949.

\bibitem{Semantic_Weaver}
W. Weaver, ``Recent Contributions to the Mathematical Theory of Communication," \emph{ETC: A Review of General Semantics}, pp. 261-81, 1953.

\bibitem{Book_Cover}
T. Cover and J. Thomas, \emph{Elements of Information Theory}, New York: Wiley, 1991.

\bibitem{Book_ElGamal}
A. El Gamal and Y. H. Kim, \emph{Network Information Theory}, Cambridge: Cambridge University Press, 2011.

\bibitem{Semantic_Carnap}
R. Carnap, Y. Bar-Hillel, ``An Outline of A Theory of Semantic Information," \emph{RLE Technical Reports 247}, Research Laboratory of Electronics, Massachusetts Institute of Technology, Cambridge MA,1952.

\bibitem{Semantic_Floridi}
L. Floridi, ``Outline of A Theory of Strongly Semantic Information," \emph{Minds and machines}, vol. 14, no. 2, pp. 197-221, 2004.

\bibitem{Logic_Nilsson}
N. J. Nilsson, ``Probabilistic Logic," \emph{Artificial Intelligence}, vol. 28, no. 1, pp. 71-87, 1986.

\bibitem{Semantic_Bao}
J. Bao, P. Basu, M. Dean, et al., ``Towards A Theory of Semantic Communication,'' IEEE Network Science Workshop, West Point, NY, USA, Jun. 2011.

\bibitem{Entropy_Luca}
A. D. Luca, S. Termini, ``A Definition of A Non-probabilistic Entropy In The Setting of Fuzzy Sets," \emph{Information and Control}, vol. 20, pp. 301-312, 1972.

\bibitem{Fuzzy_Luca}
A. D. Luca, S. Termini, ``Entropy of L-Fuzzy Sets," \emph{Information and Control}, vol. 24, pp. 55-73, 1974.

\bibitem{Wuweiling}
W. Wu, ``General Source and General Entropy," \emph{Journal of Beijing University of Posts and Telecommunications}, vol. 5, no. 1, pp. 29-41, 1982.

\bibitem{RateD_Liu}
J. Liu, W. Zhang, and H. V. Poor, ``A Rate-Distortion Framework for Characterizing Semantic Information," 2021 IEEE International Symposium
on Information Theory (ISIT), Melbourne, Australia,2021.

\bibitem{SideInfo_Guo}
T. Guo, Y. Wang, et al., ``Semantic Compression with Side Information: A Rate-Distortion Perspective," arXiv preprint arXiv:2208.06094, 2022.

\bibitem{Theory_Shao}
Y. Shao, Q. Cao, and D. Gunduz, ``A Theory of Semantic Communication," arXiv preprint arXiv:2212.01485, 2022.

\bibitem{Theory_Tang}
J. Tang, Q. Yang, and Z. Zhang, ``Information-Theoretic Limits on Compression of Semantic Information," arXiv preprint arXiv:2306.02305, 2023.

\bibitem{Semantic_ZhangPing}
P. Zhang, W. Xu, H. Gao, et al., ``Toward Wisdom-Evolutionary and Primitive-Concise 6G: A New Paradigm of Semantic Communication Networks," \emph{Engineering}, vol. 8, no. 1, pp. 60-73, 2022.

\bibitem{Survey_Shi}
G. Shi, Y. Xiao, Y. Li, et al., ``From Semantic Communication to Semantic-aware Networking: Model, Architecture, and Open Problems," \emph{IEEE Communications Magazine}, vol. 59, no. 8, pp. 44-50, 2021.

\bibitem{Survey_Qin}
Z. Qin, X. Tao, J. Lu, et al., ``Semantic Communications: Principles and Challenges," arXiv preprint arXiv:2201.01389, 2021.

\bibitem{Survey_Xie}
H. Xie, Z. Qin, X. Tao, et al., ``Task-oriented Multi-user Semantic Communications," \emph{IEEE Journal on Selected Areas in Communications}, vol. 40, no. 9, pp. 2584-2597, 2022.

\bibitem{Survey_Gunduz}
D. G\"ud\"uz, Z. Qin, et al., ``Beyond Transmitting Bits: Context, Semantics, and Task-oriented Communications," \emph{IEEE Journal on Selected Areas in Communications}, vol. 41, no. 1, pp. 5-41, Jan. 2023.

\bibitem{Semantic_Niu}
K. Niu, J. Dai, S. Yao, et al., ``A Paradigm Shift Towards Semantic Communications," \emph{IEEE Communications Magazine}, vol. 60, no. 11, pp. 113-119, 2022.

\bibitem{Polarcode_Arikan}
E. Ar{\i}kan, ``Channel polarization: a method for constructing capacity achieving codes for symmetric binary-input memoryless channels,'' \emph{IEEE Trans. Inf. Theory}, vol. 55, no. 7, pp. 3051-3073, July 2009.

\end{thebibliography}
\end{document}